\tikzset{>=latex, 
	point/.style = {circle,draw,thick,minimum size=2mm,inner sep=0pt}
}
\newcommand{\baseschema}{\kw{B}}
\newcommand{\connq}{\kw{Conn}}
\newcommand{\viewschema}{\views}
\newcommand{\configsep}{\kw{;}}
\newcommand{\inputbegin}{\kw{InpBegin}}
\newcommand{\inputend}{\kw{InpEnd}}
\newcommand{\runend}{\kw{RunEnd}}
\newcommand{\proofterm}{t}
\newcommand{\tminput}{\kw{input}}
\newcommand{\badlyshaped}{\kw{badly-shaped}}
\newcommand{\prerun}{\kw{pre-run}}
\newcommand{\tmsucc}{\kw{Succ}}
\newcommand{\tmaccept}{\kw{Accept}}
\newcommand{\ysucc}{\kw{YSucc}}
\newcommand{\xsucc}{\kw{XSucc}}
\newcommand{\xproj}{\kw{XProj}}
\newcommand{\yproj}{\kw{YProj}}
\newcommand{\false}{\kw{False}}
\newcommand{\true}{\kw{True}}
\newcommand{\ha}{\kw{HA}}
\newcommand{\va}{\kw{VA}}
\newcommand{\qstart}{\kw{start}}
\newcommand{\qverify}{\kw{verify}}
\newcommand{\qhelper}{\kw{helper}}
\newcommand{\producttest}{\kw{ProductTest}}
\newcommand{\tw}{\kw{tw}}
\newtheorem{theorem}{Theorem}
\newtheorem{proposition}{Proposition}
\newtheorem{lemma}{Lemma}
\newtheorem{example}{Example}
\newtheorem{claim}{Claim}
\newtheorem{fact}{Fact}
\newcommand{\adom}{\kw{adom}}
\newcommand{\node}{v}
\newcommand{\Nodes}{\kw{Nodes}}
\newcommand{\viewimageclass}{{\mathbb{V}}}
\newcommand{\yend}{\kw{YEnd}}
\newcommand{\xend}{\kw{XEnd}}
\newcommand{\factof}{\kw{FactOf}}
\newcommand{\ruleof}{\kw{RuleOf}}
\newcommand{\goal}{\kw{Goal}}
\newcommand{\CQAppr}{\mathsf{CQAppr}}
\newcommand{\q}{\mathbf{q}}
\newcommand{\goalpred}{\goal}
\newcommand{\code}{\kw{Code}}
\newcommand{\unpred}{\kw{UnPred}(\aschema, k)}
\newcommand{\binpred}{\kw{BinPred}(\aschema, k)}
\newcommand{\myeat}[1]{}
\newcommand{\tspan}{l}
\newcommand{\fgdatalog}{\kw{FGDL}}
\newcommand{\connquery}{\kw{Conn}}
\newcommand{\rewriting}{\kw{R}}
\newcommand{\outcome}[2]{\doutput{#1}{#2}}
\newcommand{\datalogarrow}{\leftarrow}
\newcommand{\datalogwedge}{,}
\newcommand{\datalogprog}{\Pi}
\newcommand{\canondb}{\kw{Canondb}}
\newcommand{\inst}{{\mathcal I}}
\newcommand{\jnst}{{\mathcal J}}
\newcommand{\vinst}{\jnst}
\newcommand{\coderel}{T}
\newcommand{\TD}{{T\kern-1.1mm{}D}}
\newcommand{\Adompred}{Adom}
\newcommand{\vertices}{{\textsc{vertices}}}
\newcommand{\edges}{{\textsc{edges}}}
\newcommand{\automaton}{\mathfrak{A}}
\newcommand{\fpeval}[2]{\mathsf{FPEval}(#1,#2)}
\newcommand{\doutput}[2]{\mathsf{Output}(#1,#2)}
\newcommand{\nb}[1]{\textcolor{red}{\bf!}%
\marginpar
 {\parbox{20mm}{\scriptsize\textcolor{red}{\raggedright #1}}}}
\renewcommand{\vec}[1]{\boldsymbol{#1}}
\newcommand{\aschema}{\mathbf{S}}
\newcommand{\schema}{\aschema}
\newcommand{\views}{\mathbf{V}}
\newcommand{\D}{\mathcal{D}}
\newcommand{\T}{\mathcal{T}}
\newcommand{\TreeAlphabet}{\kw{TreeAlph}}
\def\Q{{\mathcal{Q}}}
\def\F{{\mathcal{F}}}
\def\restr{\!\restriction\!}
\renewcommand{\C}{\mathbb{C}}
\newcommand{\dom}{{\textrm dom}}
\newcommand{\complexity}[1]{\textsc{#1}}
\newcommand{\ptime}{\complexity{PTime}\xspace}
\newcommand{\aczero}{\complexity{AC}^0 \xspace}
\newcommand{\expspace}{\complexity{ExpSpace}\xspace}
\newcommand{\exptime}{\complexity{ExpTime}\xspace}
\newcommand{\twoexptime}{\complexity{2ExpTime}\xspace}
\newcommand{\twoexp}{\twoexptime}
\newcommand{\threeexp}{\complexity{3ExpTime}\xspace}
\newcommand{\np}{\complexity{NP}\xspace}
\newcommand{\conp}{\complexity{co-NP}\xspace}
\newcommand{\decode}[1]{\mathfrak{D}(#1)}
\renewcommand{\varphi}{\phi}
\newcommand{\kw}[1]{\textsc{#1}}
\newcommand{\lemmagfpbwd}[1]{\hyperref[lemma:backwards-gfp]{#1}\xspace}
\newcommand{\lemmagnfpbwd}[1]{\hyperref[lemma:backwards-gnfp]{#1}\xspace}
\newcommand{\view}{V}
\newcommand{\myparagraph}[1]{{\textbf #1.}}
\title{On monotonic determinacy and rewritability \\ for recursive queries and views}
\author{Michael Benedikt, Stanislav Kikot, Piotr Ostropolski-Nalewaja,  and Miguel Romero}
\begin{document}
\begin{abstract}
A query $Q$ is monotonically determined over a set of views $\views$ if $Q$ can be expressed
as a monotonic function of the view image. In the case of relational algebra views and queries, monotonic determinacy
coincides with rewritability as a union of conjunctive queries, and it is decidable in important special cases,
such as for CQ views and queries \cite{NSV,thebook}. We investigate the situation for views and queries
in the recursive query language Datalog. We give both positive and negative results about the ability
to decide monotonic determinacy, and also about the co-incidence of monotonic determinacy with Datalog rewritability.
\end{abstract}

\maketitle

\section{Introduction}
View definitions allow complex queries to be represented by simple relation
symbols. They have many uses, including  as a means to protect access to data,
as a means to raise the level of abstraction available
to data users,   
and as a means to speed up
the evaluation of queries \cite{afrati2019answering}.
Views represent a restricted interface to a dataset, and thus an associated
question is what class of queries can be answered via accessing this interface.
More formally, given a query $Q$ expressed as a logical formula over the base
relations, can the answer to $Q$ be obtained via accessing the views.
There are several different formulations of this computational problem, 
depending on what one means by ``answering a query accessing the views''.
One can ask whether $Q$ is expressible as an arbitrary function of the views,
or
as an arbitrary monotone function of the views. Alternatively,
one can choose a particular
query language $L$ and ask whether $Q$ can be transformed to a query $Q'$ 
over the views, where $Q'$ is in $L$.
The first choice is that $Q$ is \emph{determined over the views},
the second that $Q$ is \emph{monotonically determined over the views},
and the last that $Q$ is \emph{$L$-rewritable over the views}.
Each of these notions can be relativized to finite instances.

These questions were studied initially in the case where
both queries and views are given by conjunctive queries (CQs).
It is known that:

\begin{compactitem}
\item  determinacy of CQ query over a collection of CQ views 
is equivalent to rewritability
of $Q$ over the views in relational algebra
 \cite{NSV}
\item determinacy of a CQ over CQ views does not agree neither with determinacy over finite instances \cite{redspider}
nor with monotone determinacy \cite{afratideterminacy}
\item determinacy of a CQ query over CQ views is undecidable \cite{redspider}, and
the same holds for determinacy over finite instances \cite{rainworm}
\item determinacy is decidable for queries and views given as \emph{path-CQs}  \cite{afratideterminacy}
\item monotonic determinacy of a CQ query over CQ views implies rewritability
of $Q$ as a CQ  \cite{thebook}, 
agrees with monotonic determinacy over finite instances
and is  NP-complete to decide \cite{lmss}
\end{compactitem}

These results have been generalized
to the case of queries and views  built up with more general constructs
of active-domain first-order logic (or equivalently, in relational algebra).
Then monotonic determinacy becomes, like determinacy,  undecidable,
 and monotonic determinacy, like determinacy, disagrees
with its variant over finite instances. But there is still
a relationship between determinacy/monotonic determinacy and
rewritability in a logic: 
determinacy is the same as  rewritability in first-order logic;  monotonic
determinacy is the same as rewritability as a UCQ \cite{NSV,thebook}.

Less is known where queries and views are \emph{recursive},
for example,
when views and queries are in the common recursive query language
Datalog. 
For specialized recursive queries and views over a graph schema, 
the \emph{regular path queries},
both the determinacy and monotonic determinacy problem have been studied.
For \emph{one- and two-way regular path queries and views}
monotonic determinacy (aka ``\emph{losslessness with respect to the sound view assumption}'') is decidable in \expspace 
(\cite{losslessregular} for 1-way,\cite{calvanese2007view} for 2-way),
and implies Datalog rewritability \cite{determinacyregularpath}, while
 plain determinacy is undecidable \cite{redchains}.
It follows from \cite{inverserules} that monotonic determinacy is undecidable
for Datalog queries and CQ views and implies rewritability in Datalog over views.

The status of these questions for more general recursive queries --- e.g., 
 queries and views in Datalog  over higher-arity relations --- is to the best of our knowledge unknown.

\begin{example} \label{ex:run}
Consider a schema with a ternary relation $T$, and binary relation $B$ and unary relations $U_1, U_2$.
Consider the Boolean Datalog query  $Q$ given as:
\begin{align*}
	\goal_{Q} \datalogarrow  ~ U_1(x) \datalogwedge W_1(x) \\
W_1(x) \datalogarrow  T(x,y, z) \datalogwedge B(z,w) \datalogwedge B(y,w) \datalogwedge W_1(w) \\
W_1(x) \datalogarrow U_2(x)
\end{align*}

Consider the following CQ views:
\begin{align*}
V_0(x,w) :=  T(x,y, z) \datalogwedge B(z,w) \datalogwedge B(y,w) \\
 V_1(x) := U_1(x) \quad\quad
V_2(x) := U_2(x) \\
 V_3(y,z) := U_1(x) \datalogwedge T(x,y,z) 
\end{align*}
and the binary Datalog view $V_4$:

\begin{align*}
\goal_{V_4}(y,z) \datalogarrow  T(x,y,z) \datalogwedge B(z,w) \datalogwedge \\
B(y, w) \datalogwedge T(w,q,r) \datalogwedge 
\goal_{V_4}(q,r) \\
\goal_{V_4}(y,z) \datalogarrow B(y,w) \datalogwedge B(z,w) \datalogwedge U_2(w)
\end{align*}

We can see that $Q$ is monotonically determined over the views $V_0$-$V_2$.
In fact there is a Datalog rewriting,  obtained
from $Q$ by first replacing the second rule by $W_1(x) \datalogarrow V_0(x,w) \datalogwedge W_1(w)$
and then replacing each $U_i$ by  $V_i$ in the other rules.
Further $Q$ is monotonically determined using views $V_3$ - $V_4$, since it can be rewritten as the CQ
$\exists y ~ z ~ V_3(y,z) \wedge  V_4(y,z)$.

Note that query $Q$ is not contained in any of the classes considered in past work (e.g. regular path queries).
\end{example}

\myparagraph{Our results} 
We give results on the complexity of deciding
monotonic determinacy and on the ability to rewrite monotonically-determined queries
into suitable languages, for views and queries expressed in Datalog or in sublanguages
such as Monadic Datalog (MDL), or frontier-guarded Datalog ($\fgdatalog$).

We provide new  positive results about rewritability, showing monotonic determinacy implies $L$-rewritability for some natural query languages $L$.
We show that monotonic determinacy implies rewritability in Datalog for Datalog queries and
$\fgdatalog$ views
(Theorem~\ref{thm:frgd-rewriting}), as well as
for MDL queries and a collection of $\fgdatalog$ and CQ views (Theorem~\ref{thm:moncq-rewriting}).
We observe that for CQ $Q$ and Datalog $\views$, monotonic determinacy implies rewritability as
a  CQ, and the same holds if CQ is replaced with UCQ.
Note that an analysis of the ``inverse rules'' algorithm \cite{inverserules}
implies that $\fgdatalog$ 
queries monotonically determined over
CQ views have $\fgdatalog$ 
rewritings.
On the negative side, we show that MDL queries monotonically determined  over CQ views
are not necessarily rewritable in MDL (Theorem ~ \ref{thm:no-monadic}). 
This contrasts with the observation from \cite{inverserules} mentioned above.
In contrast to Theorem ~\ref{thm:moncq-rewriting}, we give an example of an MDL query  monotonically determined over UCQ views
without a Datalog rewriting (Theorem~\ref{thm:no-datalog-mdl-ucq}).  Our results on rewritability are summarized in Figure \ref{fig:table:rewritability}
where ``nn'' stands for ``not necessarily''.

\begin{table*}
\vspace{10pt}
  \centering
  {
    \centering
\begin{tabular}{|c|c|c|c|c|c|}
\hline
Query\,$\backslash$\,Views& CQ   & MDL, FGDL & FGDL + CQ & UCQ & Datalog \\
\hline
CQ &\multicolumn{5}{c|}{CQ [Prop.\,\ref{prop:cqquery}, (a)]}   \\
\hline
UCQ &\multicolumn{5}{c|}{UCQ [Prop.\,\ref{prop:cqquery}, (b)]}   \\
\hline
MDL& FGDL, nn MDL           &  MDL                & Datalog, nn MDL   &\multicolumn{2}{c|}{}\\
   & \cite{inverserules} and [Th.\,\ref{thm:no-monadic}]
                                &  [Th.\,\ref{thm:frgd-rewriting}]        & [Th.\,\ref{thm:moncq-rewriting}] and [Th.\,\ref{thm:no-monadic}] & \multicolumn{2}{c|}{ not necessarily  } \\
\cline{1-4}
FGDL& FGDL  \cite{inverserules}             &  Datalog
                                                  & rewritability in        &\multicolumn{2}{c|}{Datalog [Th.\,\ref{thm:no-datalog-mdl-ucq}]}  \\
\cline{1-2}
Datalog   & Datalog  \cite{inverserules}  &[Th.\,\ref{thm:frgd-rewriting}] & Datalog is open            &\multicolumn{2}{c|}{} \\ 
          &                                       &                  &                            &\multicolumn{2}{c|}{} \\

\hline
\end{tabular}
\caption{Rewritability of Queries Monotonically Determined by the Views } \label{fig:table:rewritability}
}
\end{table*}

We now turn to results about deciding monotonic determinacy.
We show that monotonic determinacy is
\begin{compactitem}
\item[--] decidable in $\twoexp$ for CQ queries and Datalog views (Theorem~\ref{decidability:cq:equivalence}),
\item[--] decidable in $\twoexp$ for queries and views in frontier-guarded Datalog (Theorem~\ref{thm:decidemondetrewriting:frontierguarded}),
\item[--] decidable in $\threeexp$ for MDL queries and a collection of MDL and CQ views (Theorem~\ref{thm:decidemondetrewritingmdlandcqs}), 
\item[--] $\twoexp$-hard for CQ queries and MDL views and for MDL queries and CQ views (Proposition~\ref{prop:easyhardness}) 
\item[--] undecidable for MDL queries and UCQ views (Theorem~\ref{thm:undec})
\end{compactitem}

Known and new results  on decidability of monotonic
determinacy are presented in Figure~\ref{fig:table:decidability} where we use [upper bound]/[lower bound] notation for sources.
%

\begin{table*}
  \centering
  {
    \centering
  \begin{tabular}{|c|c|c|c|c|c|}
\hline
Query $\backslash$ Views & CQ   & MDL, FGDL & FGDL + CQ & UCQ & Datalog \\
\hline
CQ & NP-c       & \multicolumn{2}{c|}{$\twoexp$-c}                    & $\Pi_2^p$-c          & $\twoexp$-c \\
UCQ& \cite{lmss}& \multicolumn{2}{c|}{[Th.\,\ref{decidability:cq:equivalence}]/[Prop.\,\ref{prop:easyhardness}]}
                                                                     & \cite{lutz2018query} & [Th.\,\ref{decidability:cq:equivalence}]/[Prop.\,\ref{prop:easyhardness}]\\
\hline
   & in $\threeexp$ [Th.\,\ref{thm:decidemondetrewritingmdlandcqs}]
                                &                 & in $\threeexp$ [Th.\,\ref{thm:decidemondetrewritingmdlandcqs}] &\multicolumn{2}{c|}{}\\
MDL& $\twoexp$-hard              &                 & $\twoexp$-hard      &\multicolumn{2}{c|}{}\\
   & [Cor.\,\ref{prop:easyhardness}]        & $\twoexp$-c      & [Prop.\,\ref{prop:easyhardness}]&\multicolumn{2}{c|}{Undecidable [Th.\,\ref{thm:undec}]} \\
\cline{1-2}
\cline{4-4}
FGDL& decidability             & [Th.\,\ref{thm:decidemondetrewriting:frontierguarded}]/[Prop.\,\ref{prop:easyhardness}]    
                                                  & decidability       &\multicolumn{2}{c|}{}\\
   & is open                  &                   & is open            &\multicolumn{2}{c|}{}\\
\hline
Datalog &\multicolumn{5}{c|}{ undecidable for a fixed atomic view [Prop.\,\ref{prop:easyhardness}], see also \cite{inverserules}, Th.\,3.1 }\\
\hline
\end{tabular}
\caption{Decidability and Complexity of Monotonic Determinacy } \label{fig:table:decidability}
}
\end{table*}

Alongside with $L$-rewritability we can ask whether there are 
computable functions lying within a certain complexity
class which separate the images of instances where $Q$
is true  from images of those where $Q$ is false.
We call such a function  a \emph{separator} for $Q$ over $\views$.  Note that
Datalog rewritings give rise to $\ptime$ separators, while UCQ-rewritings
produce $\aczero$ separators.
Our additional observations on separators, outside of those that follow from rewritability results,
are:
\begin{inparaenum}
\item for Datalog queries and UCQ views there is always a separator in $\np$ as well as one
in $\conp$;
\item for any primitive recursive function $f$ there are
Datalog queries monotonically determined over Datalog views without a separator in $TIME(f(x))$ (Theorem~\ref{thm:nocomputable}).
\end{inparaenum}

\myparagraph{Techniques} A contribution of the paper is to show how techniques arising from earlier work
can be adapted for the analysis of monotone determinacy.
For our positive results, a key tool is an automata-theoretic technique, involving  bounds
on the treewidth of view images and  the \emph{forward-backward method} developed for analysis of
 guarded
logics  \cite{gho,forbackj}. For our negative results, we show how to adapt some of the coding ideas
used in showing undecidability of determinacy  \cite{redchains,rainworm,redspider} to the setting
of monotonic determinacy, and we also show how tools from constraint satisfaction 
\cite{atserias07:power}
can be used to provide monotonically-determined queries that have no Datalog rewriting.


\myparagraph{Organization}
Section \ref{sec:prelims} contains  preliminaries
about Datalog and monotonic determinacy, while Section \ref{sec:inf} presents key tools
that we  make use of in our positive results.
Section \ref{sec:rewrite} presents our rewritability results, while
Section \ref{sec:decide} gives results on deciding monotonic determinacy.
Section \ref{sec:lower} contains lower bounds on detecting monotonic determinacy, while
Section \ref{sec:nonrewrite} provides non-rewritability results.
The paper ends with  conclusions and some open questions in Section \ref{sec:conc}.
The details of many proofs are deferred to the appendix.

\section{Preliminaries} \label{sec:prelims}
We will work with relational schemas, consisting of a finite set of relations,
with each relation $R$ associated with a number 
the \emph{arity of $R$}.
For $R$ of arity $n$, an $R$-fact is an expression $R(c_1 \ldots c_n)$, where $c_1 \ldots c_n$
are elements. A fact over schema $\schema$ is an $R$ fact for some relation $R$ of $\schema$.
A \emph{database instance} (or simply \emph{instance} when it is clear that
we are discussing data) for a schema is a set of facts over the schema.
The \emph{active domain} of an instance $\inst$, denoted $\adom(\inst)$,  is the set of elements that occur
as $c_i$ in some fact $R(c_1 \ldots c_n)$ of $\inst$.
A \emph{query} of arity $n$ over schema $\aschema$
is a function from instances of $\aschema$ to relations of arity $n$.
A \emph{Boolean query} is a query of arity $0$.
The output of a query $Q$ on instance $\inst$ is  denoted as $\doutput{Q}{\inst}$.
We will also write $\inst \models Q(\vec c)$ or $\inst, \vec c \models Q$
to indicate that $\vec c$ is in the output
of $Q$ on input $\inst$.
A \emph{homomorphism} from  instance  $\inst$ to instance $\inst'$ is a mapping
$h$ such that $R(c_1 \ldots c_n) \in \inst$ implies $R(h(c_1) \ldots h(c_n)) \in \inst'$.
If there is a homomorphism from $\inst$ to $\inst'$ then we write $\inst\to \inst'$. 

The \emph{Gaifman graph} of an instance $\inst$ is the graph whose nodes are
the elements of $\adom(\inst)$ and whose edges connect any $c_i$ and $c_j$
in a $\vec c$ such that $R(\vec c)$ holds. The \emph{radius} of a graph
$G$ is defined as $\min_{u\in\vertices(G)}\max_{v\in\vertices(G)} dist_G(u,v)$
where $dist_G(u,v)$ is the distance between $u$ and $v$ in $G$.

\myparagraph{Conjunctive queries and Datalog}
A \emph{conjunctive query} (CQ) is a logical
formula of the form $\q(\vec x) = \exists \vec{y}\, \phi(\vec x,  \vec y)$, where
$\phi(\vec x,  \vec y)$ is a conjunction of atoms.
Given any CQ $Q$, its \emph{canonical database}, denoted $\canondb(Q)$, is the instance formed
by turning each atom $R(x_1 \ldots x_n)$ into a fact $R(c_{x_1} \ldots c_{x_n})$, where for each variable or
constant $x$ in $Q$ we
have a constant $c_x$.
Each CQ $Q$ with free variables ordered as
$x_1 \ldots x_n$ defines a query of arity $n$ in the obvious way:
a tuple $t_1 \ldots t_n$ is in the output of $Q$
on $\inst$ if there is a homomorphism of $\canondb(Q)$
into $\inst$  mapping each $x_i$ to $t_i$. The radius of a CQ is the radius of 
the Gaifman graph of its canonical database.

Datalog is a language for defining  queries over a relational schema $\aschema$.
Datalog rules are of the form:
\begin{align*}
P(\vec x)    \datalogarrow    \phi(\vec x)
\end{align*}
where $P(\vec x)$ is an atom  over a relation $P$ that is not in $\aschema$, $\phi$ is a conjunctive query
and every variable in $P(\vec x)$ 
occurs in $\phi$. The left side of the rule is the \emph{head}, while the right side is the \emph{body}
of the rule.
In a set of rules,
the relation symbols that occur in the head of a rule are the \emph{intensional
database predicates} (IDBs). The relations in $\aschema$ are called the \emph{extensional
relations} of the rule.  
A \emph{Datalog program} is a finite collection of rules.
For a database instance $\inst$ and a set of Datalog rules $\Pi$ by $\fpeval{\Pi}{\inst}$ we denote the minimal IDB-extension 
of $\inst$ satisfying $\Pi$.  
%
A \emph{Datalog query} $Q = (\Pi, \goal)$ is a Datalog program $\Pi$ together
with a distinguished intensional \emph{goal relation} $\goal$ of arity $k \ge 0$. 
The output of  Datalog query $Q$ on an instance $\inst$ (denoted as $\doutput{Q}{\inst}$ or simply $Q(\inst)$) consists of all tuples $\vec c$ such that
$\goal(\vec c) \in \fpeval{\Pi}{\inst}$. 

For example, consider a signature where there is a binary relation $R$
and unary relation $U$. The formula expressing that $x$ has a path consisting
of $R$ edges to an element in
$U$ would be written in Datalog as the following query $\connquery(x)  = (\Pi, \goal(x))$ where $\Pi$ consists of the following rules:
$$
\begin{array}{rcl}
P(x) &\datalogarrow &U(x)\\
P(x) &\datalogarrow& R(x,y) \datalogwedge P(x) \\
\goal(x) &\datalogarrow & P(x)
\end{array}
$$
Above, 
$P(x)$ and $\goal(x)$  are intensional relations while $R(x,y)$ and $U(x)$ are extensional.
We follow conventions concerning Datalog rules and omit the existential quantifiers
on the variables in the body that do not appear in the head; we also
use ``$\datalogwedge$'' for conjunction.

A Datalog query $Q_1$ is \emph{contained in} a Datalog query
$Q_2$ if $\doutput{Q_1}{\inst} \subseteq \doutput{Q_2}{\inst}$ for every instance $\inst$. 
Datalog containment is known
to be undecidable in general \cite{undeciddatalog}.

\myparagraph{Fragments of Datalog}
\emph{Monadic Datalog} (MDL) is the fragment of Datalog
where all intensional predicates are unary.
\emph{Frontier-guarded Datalog} ($\fgdatalog$) requires that in each rule
all the variables in the head co-occur in a single extensional atom of the body. 
Frontier-guarded Datalog does not contain MDL; for example, in an MDL program
we can have a rule $I_1(x) \datalogarrow I_2(x)$, where $I_1$ and $I_2$
are both intensional. However every MDL program can be rewritten
to be  in $\fgdatalog$, and thus
we declare, as a convention, that any MDL program is Frontier-guarded.
Frontier-Guarded Datalog containment is known to be decidable (e.g. \cite{gnfj}).

\myparagraph{Conjunctive queries and approximating Datalog}
A Datalog query $Q = (\Pi, \goal)$ can be approximated by CQs.
We define collections of CQs $\CQAppr(\Pi, U(\vec x), i)$ with free variables $\vec{x}$ for all atoms $U(\vec x)$ that occur in the head of a rule in 
$\datalogprog$ by induction on $i$. For the base case,
$\CQAppr(\Pi, U(\vec{x}), 1) $ consists of all CQs obtained by taking the body of a rule with the head $U(\vec x)$ in $\Pi$ which contains no intensional predicate.

For the inductive step, $\CQAppr(\Pi, U(\vec{x}), i + 1 )$ consists of all CQs obtained by taking any body of a rule
whose head is $U(\vec{x})$ and replacing all intentional atoms $V(\vec y)$ with 
$\q(\sigma(\vec{z}))$, where
$\q(\vec{z})$ is in $\CQAppr(\datalogprog,V(\vec z),k)$ for $k \le i$ and $\sigma$ unifies $V(\vec z)$ with $V(\vec y)$ by sending $\vec{z}$ to $\vec{y}$.

A \emph{CQ approximation} of a Datalog query $(\Pi, \goal(\vec{x}))$ is any element of $\CQAppr(\Pi, \goal(\vec{x}), i)$ for some $i$.

\begin{proposition} For any Datalog query $Q$, if  $\inst, \vec c \models Q$ then there is 
a CQ approximation  $Q_0$ of $Q$
such that $\inst \models Q_0(\vec c)$.
\end{proposition}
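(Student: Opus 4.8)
The plan is to exploit the fact that the least-fixpoint semantics $\fpeval{\Pi}{\inst}$ coincides with the set of IDB facts possessing a finite derivation (proof) tree over $\inst$, and to match the inductive structure of such trees with the inductive definition of $\CQAppr$. Recall that $\fpeval{\Pi}{\inst}$ is the union of the stages $\inst = T_0 \subseteq T_1 \subseteq \cdots$, where $T_{n+1}$ is obtained from $T_n$ by one round of rule applications; equivalently, $U(\vec a) \in \fpeval{\Pi}{\inst}$ iff there is a finite tree whose root is $U(\vec a)$, whose leaves are facts of $\inst$, and in which each internal node $P(\vec b)$ is obtained from its children by a single rule $P(\vec x) \datalogarrow \phi$ under a substitution $\theta$ with $\theta(\vec x) = \vec b$, the children supplying exactly the body atoms $\theta(\phi)$. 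Since $\inst, \vec c \models Q$ means $\goal(\vec c) \in \fpeval{\Pi}{\inst}$, such a derivation tree $\tau$ for $\goal(\vec c)$ exists.

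I would then prove the following claim by induction on the height of a derivation tree: for every IDB fact $U(\vec a)$ with a derivation tree of height $h$, there is an index $i$ and a CQ $\q \in \CQAppr(\Pi, U(\vec x), i)$ with $\inst \models \q(\vec a)$. For the base case $h = 1$, the fact $U(\vec a)$ is produced by a rule $U(\vec x) \datalogarrow \phi(\vec x, \vec y)$ whose body contains no intensional atom, under a substitution $\theta$ with $\theta(\vec x) = \vec a$; then $\exists \vec y\, \phi$ lies in $\CQAppr(\Pi, U(\vec x), 1)$ and $\theta$ is precisely a homomorphism of its canonical database into $\inst$ sending $\vec x$ to $\vec a$, so $\inst \models \q(\vec a)$. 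For the inductive step, $U(\vec a)$ is produced by a rule $U(\vec x) \datalogarrow \psi$ with intensional atoms $V_1(\vec y_1), \dots, V_m(\vec y_m)$ (and possibly EDB atoms), under a substitution $\theta$, where each $V_j(\theta \vec y_j)$ is the root of a strictly shorter subtree. The induction hypothesis yields $\q_j \in \CQAppr(\Pi, V_j(\vec z_j), k_j)$ with $\inst \models \q_j(\theta \vec y_j)$. Taking $i = \max_j k_j$ and substituting each $\q_j(\sigma_j(\vec z_j))$ for $V_j(\vec y_j)$ exactly as in the definition produces a CQ $\q \in \CQAppr(\Pi, U(\vec x), i+1)$.

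It remains to verify $\inst \models \q(\vec a)$, which is where the real work—and the main obstacle—lies: assembling a single homomorphism from $\canondb(\q)$ into $\inst$ out of $\theta$ and the witness homomorphisms $h_j$ for the $\q_j$, while keeping variable scopes disjoint. I would define the map on the frontier variables $\vec x$ and the EDB-body variables of $\psi$ by $\theta$, and on the freshly renamed existentially quantified variables of each embedded copy of $\q_j$ by $h_j$. The two assignments are required to agree only on the variables identified by $\sigma_j$—namely the arguments $\vec y_j$—and there they do agree, since $h_j$ sends $\vec z_j$ to $\theta(\vec y_j)$ by the induction hypothesis. The EDB atoms of $\psi$ map into $\inst$ because $\theta$ satisfies the rule body, and the atoms inherited from each $\q_j$ map into $\inst$ under $h_j$; hence the combined map is a homomorphism witnessing $\inst \models \q(\vec a)$. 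Applying the claim to the derivation tree $\tau$ for $\goal(\vec c)$ yields a CQ approximation $Q_0 = \q$ of $Q$ with $\inst \models Q_0(\vec c)$, as required. The only points demanding care are the bookkeeping of fresh variables (so that distinct subtrees do not accidentally share quantified variables) and the alignment of the approximation index, the latter handled by the ``$k \le i$'' slack already built into the definition of $\CQAppr$.
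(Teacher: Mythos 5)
Your proof is correct: the induction on derivation-tree height, with the ``$k \le i$'' slack absorbing the differing depths of subtrees and fresh renaming keeping the embedded witness homomorphisms disjoint, is exactly the standard unfolding argument for this fact. The paper states the proposition without proof, treating it as folklore, and your argument coincides with the proof-term semantics the paper itself invokes in the appendix (where $\inst \models Q(\vec d)$ holds exactly when a proof term witnesses it), so your approach is essentially the paper's own.
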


We  often identify an approximation $Q_0$ of a Datalog query $Q$ with
its canonical database; for example, for another Datalog query $Q'$, we can write $\doutput{Q'}{Q_0}$ to indicate
the output of $Q'$ on $\canondb(Q_0)$. 
We can also talk about the approximation of an atom $A$ in a Datalog program, which is defined
by considering the program with $A$ as the goal predicate.

\myparagraph{Views, determinacy, and rewritability} 
A \emph{view} over some relational schema $\aschema$ 
is a tuple $(V, Q_V)$ where $V$ is a view relation and $Q_V$ is an associated 
query over $\aschema$ whose arity  matches that of $V$. $Q_V$ is referred to as the \emph{definition} of view $V$.
By $\views$ we denote a collection of views over a schema
 $\aschema$. We sometimes refer to the vocabulary of the definitions $Q_V$ as the \emph{base schema} for $\views$, denoting
it as $\Sigma_{\baseschema}$, while the predicates  components $V$ are referred to as the \emph{view schema},
denoted $\Sigma_{\viewschema}$.
%
%
%
For an  instance $\inst$ and set of views $\views= \{(V, Q_V) \mid V \in \Sigma_\views \}$, the \emph{view image} of $\inst$, 
denoted by $\views(\inst)$, 
is the instance where each view predicate $V \in \Sigma_\views$ is  interpreted by $\doutput{Q_V}{\inst}$.
A query $Q$ over schema $\aschema$ is \emph{determined over $\views$}  if
\begin{quote}
for any two instances $\inst_1, \inst_2$ such that $\views(\inst_1) = \views(\inst_2)$ we have $\outcome{Q}{\inst_1} = \outcome{Q}{\inst_2}$.
\end{quote}
A query $Q$ over schema $\aschema$ is \emph{monotonically determined over $\views$}  if
\begin{quote}
for any two instances $\inst_1, \inst_2$ such that $\views(\inst_1) \subseteq 
\views(\inst_2)$ we have $\outcome{Q}{\inst_1} \subseteq \outcome{Q}{\inst_2}$.
\end{quote}

\myeat{
Monotone determinacy goes by a number of other names in the database literature,
including ``losslessness under the sound view assumption'' \cite{losslessregular} and 
``strong determinacy'' \cite{perezthesis}.
}

Given views $\views$ and a query $Q$, a query $\rewriting$ over the view schema $\Sigma_\views$ is a 
\emph{separator}
of $Q$ with respect to $\views$ if:
for each $\inst$ over $\aschema$, the output of $\rewriting$ on $\views(\inst)$ 
is the same as the output of $Q$ on $\inst$.
A separator  that can be specified in a particular
language $L$ (e.g. Datalog, CQs) is an \emph{$L$}-rewriting of $Q$ w.r.t. $\views$,
and if this exists we say $Q$ is \emph{$L$-rewritable} over $\views$.

It is clear that if $Q$ has a rewriting in a  language that defines only monotone queries,
like Datalog, then $Q$ must be monotonically determined. We will be concerned with the converse
to this question.
The main questions we will consider, fixing languages  $L_Q$ and $L_\views$  for the queries and views (e.g.
Datalog, fragments of Datalog) are:
\begin{compactitem}
\item can we decide whether a $Q$ in $L_Q$ is monotonically determined over $\views$?
\item fixing another language $L$ for rewritings,
if $Q$ is monotonically determined over $\views$, does it necessarily have a rewriting
in $L$?
\end{compactitem}

\myeat{
For succinctness of presentation, in all Lemmas, Propositions and Proofs (but not Theorems and Corollaries) we assume that the user query $Q$ is Boolean.
The statements and proofs can be easily extended to the general case by replacing instances with \emph{pointed instances} which
are instances together with bindings of answer variables of $Q$ and a reasonable modification 
of the homomorphism definition and coding procedure.
}

In this paper, \emph{for simplicity  we will always consider the determinacy and rewritability 
problems restricting to the case when the query $Q$ is Boolean.} But all of our 
results extend to the non-Boolean case. In addition, we allow our
instances to be finite or infinite, but \emph{all of the results extend when
the instances are assumed to be finite}. Ssee the appendix for details.

\section{Forward and backward between Datalog and automata} \label{sec:inf}
We overview an automata-theoretic technique that will prove useful in 
rewriting results. It involves
\emph{treewidth bounds}, along with the idea of combining
\emph{forward mappings} from Datalog to automata, projection of an automata onto a subvocabulary, and \emph{backward mappings}
from an automaton to Datalog. The approach derives
from work on guarded logics \cite{forbackj,gho}.

\myparagraph{Treewidth and tree codes}
For a number $k$
a \emph{tree decomposition of width $k$} for an
instance $\inst$ is a pair $\TD = (\tau, \lambda )$ consisting
of a rooted directed tree  $\tau = (V,E)$  and a 
map $\lambda$ associating a tuple
of distinct elements $\lambda(v)$ of length at most $k$ 
(called a \emph{bag}) to each vertex $v$ in $V$ 
such that the following conditions hold:
\begin{compactitem}
	\item[--] for any atom $R(\vec{c})$ in $\inst$, there is a vertex $v\in V$ with 
$\vec{c} \subseteq \lambda(v)$;
	\item[--] for any element $c$ in $\inst$, the set  $\{\, v\in V \mid c \in \lambda(v) \,\} $ is connected in~$\tau$.
\end{compactitem}

Above we abuse notation slightly by using $\lambda(v)$ also to refer to the underlying set of elements as
well as the tuple. Also in the literature the width associated to such a decomposition is $k-1$, but this distinction
will not be important for any of our results.
Will also talk about a tree decomposition of width $k$ for
a pair  $(\inst, \vec{a})$ consisting of an instance and a tuple. In this case
we add to the requirements above that
$\vec{a}$ is an initial segment of  $\lambda(r)$ for $r$ the root of the tree.

The \emph{treewidth} of an instance $\inst$, $\tw(\inst)$, is the minimum width
of a tree decomposition of $\inst$.
For a tree decomposition $\TD$ of data instance $\inst$ 
let $\tspan(\TD)$ be the maximum over elements $e$ of $\inst$ of the number of bags containing $e$.

We will now discuss how to represent tree decompositions by labeled
trees called codes. In this context, we will always  assume that in tree decompositions, \emph{all  vertices $v \in V$ have
outdegree at most $2$.} It is easy to show that if an instance has any tree decomposition
of width $k$, it has one with this property.

We represent such tree decompositions as instances
in a signature $\code(\aschema, k)$ which contains the following relations:
\begin{compactitem}
\item for every relation $R \in \aschema$ of arity $m$ and every sequence $\vec{n} = n_1, \dots, n_m$ of 
numbers of size at most $k$
there is a unary relation $\coderel^R_{\vec{n}}$ in $\code(\aschema, k)$ to mark the nodes $v$ in $\tau$ such that the atom 
$R(b_{n_1}, \dots, b_{n_m})$  is in $\inst$, 
where $\lambda(v)=(b_1, \dots, b_k)$.
\item for every partial 1-1 map $s$ from $\{1, \dots, k\}$ to $\{1, \dots, k\}$, there is a binary relation
$\coderel_s$ to indicate the ``same as'' relation between positions in neighboring bags.
For example, if $(u,v) \in \coderel_s$ and $s(3) = 1$, then the position $3$ in $u$ and the position $1$ in $v$ stand for 
the same element. All relations $\coderel_s$ are directed from a parent to a child.
\end{compactitem}
We use $\unpred$ and $\binpred$ to denote the sets of all unary and binary predicates in $\code(\aschema, k)$ respectively.
A tree over this signature will be referred to as a \emph{tree code of width $k$ for $\aschema$}.

It should be clear how each tree decomposition of $(\inst,\vec{a})$ of width
$k$ gives rise to a tree code of width $k$ for $\aschema$; 
if there are bags with less than $k$ elements, we fill them up with dummy elements to the length $k$. 
 We now show
how to \emph{decode} an instance from such a code $\T$. 
For nodes $u,v$ in a code $\T$, we write $(u,i) \equiv_0 (v,j)$ if 
 $(u,v) \in \coderel_s$ holds in $\T$ and $s(i)=j$.
For a node $u$ and position $i$ we let  $[u,i]$ be the equivalence class of $(u,i)$ in the equivalence
relation generated by $\equiv_0$.
In words, the position $i$ in the node $u$ corresponds to 
the position $j$ in the node $v$ if there is an undirected path leading from $u$ to $v$ with the edge labels that 
in a step-by-step manner establish a match between $i$ in $u$ and $j$ in $v$. 
The \emph{decoding} of $\T$, denoted $\inst = \decode\T$,
is the $\aschema$ database instance $\inst$ consisting 
of atoms  $R([v_1,i_1], \dots, [v_r, i_r])$ where each $R$ from $\aschema$ is applied to
\myeat{the universe is the set $\{[v,i] : v \in \dom(\T), i \in \{1, \dots, k\}\}$ and } 
exactly those tuples $([v_1,i_1], \dots, [v_r, i_r])$ for which there is some  node $w \in \dom(\T)$ such that $w \in \coderel^R_{j_1\dots j_r}$ and $[w,j_m] = [v_m,i_m]$
for all $m \in \{1, \dots, r\}$. In this case we also say the $\T$ \emph{is a code of} $\inst$.

\myparagraph{Monadic Datalog Normalisation}
A Monadic Datalog query is said to be \emph{normalized} if the body of any recursive rule
does not contain IDB atoms with the head variable.  A well-known and simple fact
is that any MDL query can be transformed into a normalized one.

\begin{proposition}[\cite{chaudhuri1997equivalence}] \label{prop:normalise} For each MDL query $Q$ 
there exists a normalized MDL query  $Q'$
which is equivalent to $Q$.
\end{proposition}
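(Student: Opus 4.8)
The plan is to eliminate, rule by rule, the \emph{self-atoms} of the rules --- the IDB atoms whose argument is the head variable --- by inlining their own definitions, exploiting the fact that in Monadic Datalog the ``vertical'' saturation at a single element has bounded depth.

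First I would fix notation: write $Q = (\Pi, \goal)$, let $I_1, \ldots, I_n$ be the unary IDBs of $\Pi$, and call an IDB atom $I(x)$ in a rule body a \emph{self-atom} if $x$ is the head variable of that rule. I would observe that any rule whose head is the goal relation $\goal$ (of arity $0$) has no head variable, so it vacuously satisfies the normalization condition; hence only the monadic rules need to be treated, and the goal rules may be left untouched apart from renaming their IDBs.

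Next I would define a depth-bounded unfolding of self-atoms. For each IDB $I_j$ and each $\ell \ge 0$ I define a finite set $B_j^\ell$ of conjunctive-query bodies with distinguished variable $x$: the set $B_j^0$ collects the bodies of rules $I_j(x) \datalogarrow \beta$ that contain no self-atom, while $B_j^{\ell+1}$ is obtained from each rule $I_j(x) \datalogarrow \beta$ with self-atoms $I_{i_1}(x), \ldots, I_{i_k}(x)$ by deleting those self-atoms and conjoining, for some choice $\gamma_s \in B_{i_s}^{\ell}$, the bodies $\gamma_1, \ldots, \gamma_k$ (all sharing the variable $x$, with their other variables renamed apart). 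The normalized program $\hat\Pi$ then has, for each $B \in B_j^n$, a rule $\hat{I}_j(x) \datalogarrow B$, in which every remaining IDB atom (necessarily on a variable other than $x$, since all self-atoms have been unfolded into extensional atoms on $x$ together with cross-atoms on fresh variables) is replaced by its hatted counterpart; the goal rules are copied with their IDBs hatted. By construction no monadic rule of $\hat\Pi$ contains an IDB atom on its head variable, so $Q' = (\hat\Pi, \goal)$ is normalized, and since only the IDB names in the goal rules change, the output of $Q'$ coincides with that of $Q$ once the IDB relations are shown to agree.

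The correctness argument, where the real work lies, is to show that $\hat{I}_j(a) \in \fpeval{\hat\Pi}{\inst}$ iff $I_j(a) \in \fpeval{\Pi}{\inst}$. The ``only if'' direction is routine: every unfolded body of $\hat\Pi$ corresponds, by reversing the inlining, to an honest sequence of applications of the original rules, so I would prove it by induction on $\hat\Pi$-derivations. The ``if'' direction is the main obstacle, and it rests on a \emph{bounded-nesting} lemma: if $I_j(a) \in \fpeval{\Pi}{\inst}$ then it has a proof tree in which, along every root-to-leaf path, the self-atoms relating facts at one and the same element witness pairwise distinct IDBs. Indeed, in a minimal proof tree, two self-atoms at the same element witnessing the same IDB would prove an identical fact, so the upper subtree could be replaced by the lower one, contradicting minimality. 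Consequently the same-element self-nesting never exceeds $n$ --- the cross-element recursion being carried faithfully by the hatted predicates, which is precisely what is allowed in normalized form --- and this is exactly what the depth-$n$ unfolding $B_j^n$ captures. An induction on the size of a minimal proof tree then yields the derivability of $\hat{I}_j(a)$ in $\hat\Pi$. The construction incurs an exponential blow-up, but the statement asks only for the existence of an equivalent normalized query, so this is harmless.
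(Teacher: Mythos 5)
Your proof is correct, and it matches the intended argument: the paper itself gives no proof of this proposition, citing \cite{chaudhuri1997equivalence} and calling the fact well known, and your construction --- depth-$n$ unfolding of the IDB atoms on the head variable, justified by the observation that a minimal proof tree repeats no fact along a root-to-leaf path, so contiguous same-element self-nesting uses pairwise distinct IDBs and hence has depth at most the number of IDBs --- is precisely the standard inlining argument behind that citation. The exponential blow-up you flag is indeed harmless here, since the statement asserts only the existence of an equivalent normalized query (and the paper's later treewidth bound in Lemma~\ref{lem:normalised:treedecomp} is stated in terms of the already-normalized program).
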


\myeat{
\begin{example}
The MDL program \nb{remove example?} 
$$
\begin{array}{rcl}
\goal(x) & \datalogarrow & P(x) \datalogwedge D(x)\\
P(x) & \datalogarrow & Q(x) \datalogwedge A(x,w) \\
Q(x) & \datalogarrow & P(x) \datalogwedge B(x,y) \\
Q(x) & \datalogarrow & C(x,y) \datalogwedge P(y) \\
P(x) & \datalogarrow & T(x) \\
\end{array}
$$
is first transformed into
$$
\begin{array}{rcl}
\goal(x) & \datalogarrow & P(x) \datalogwedge D(x) \\
P(x) & \datalogarrow & T(x) \datalogwedge B(x,y) \datalogwedge A(x,w) \\
P(x) & \datalogarrow & C(x,y) \datalogwedge P(y) \datalogwedge A(x,w) \\
P(x) & \datalogarrow & T(x) \\
\end{array}
$$
and then into
$$
\begin{array}{rcl}
\goal'(x) & \datalogarrow & D(x) \datalogwedge T(x) \datalogwedge B(x,y) \datalogwedge A(x,w) \\
\goal'(x) & \datalogarrow & D(x) \datalogwedge C(x,y) \datalogwedge P(y) \datalogwedge A(x,w) \\
\goal'(x) & \datalogarrow & D(x) \datalogwedge T(x)  \\
P(x) & \datalogarrow & T(x) \datalogwedge B(x,y) \datalogwedge A(x,w) \\
P(x) & \datalogarrow & C(x,y) \datalogwedge P(y) \datalogwedge A(x,w) \\
P(x) & \datalogarrow & T(x). \\
\end{array}
$$

\end{example}
}

Normalization is useful in connection with tree codes, since it is easy to see that the CQ approximations
of normalized queries have decompositions with small ``treespan'':
\begin{lemma}\label{lem:normalised:treedecomp}
Let $Q$ be a normalized Monadic Datalog query. Then there is a number $k = O(|Q|)$ such that all CQ-approximations
of $Q$ have tree decomposition $\TD$ of width $k$ with $\tspan(\TD) \le 2$.
\end{lemma}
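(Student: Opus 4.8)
The plan is to read a tree decomposition directly off the derivation tree that generates a CQ approximation, exploiting two features of $Q$: since $Q$ is monadic, the interface between a rule body and a plugged-in sub-approximation is a \emph{single} element; and since $Q$ is normalized, the head variable of a sub-approximation never reappears deeper inside it.

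First I would fix the recursive shape of a CQ approximation of an atom $U(x)$. It is obtained from the body $B$ of some rule $U(x) \datalogarrow B$ by replacing each intensional atom $I(z)$ of $B$ with a renamed copy $C_{I,z}$ of a CQ approximation of $I(z)$, where the only variable shared between $C_{I,z}$ and the rest is $z$, which is the head (root) variable of $C_{I,z}$. Because $Q$ is monadic, every such interface is a single element; because $Q$ is normalized, $z \ne x$, so the head variable $x$ of $U$ carries no intensional atom and is not itself an interface to any sub-approximation. I would then proceed by induction on this structure, maintaining the invariant that in the decomposition of a CQ approximation of $U(x)$ the root variable $x$ lies in the root bag only.

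Next I would describe the decomposition. The naive choice — one bag per rule body, with the tree being the derivation tree — almost works: each extensional atom lies inside the bag of the body it comes from, each interface variable $z$ lies in its parent's bag and in the root bag of its sub-approximation (two adjacent bags), and by the invariant $z$ occurs nowhere deeper. The one place this breaks $\tspan(\TD) \le 2$ is a variable $v$ of $B$ carrying several intensional atoms $I_1(v), \dots, I_p(v)$ at once: then $v$ would sit in the parent bag together with the root bags of all $p$ sub-approximations, i.e.\ in $p+1$ bags. I would repair this by merging: for each such $v$ I replace those $p$ root bags by a single bag $M_v$ equal to the union of the top rule bodies of $C_{I_1,v}, \dots, C_{I_p,v}$, make $M_v$ a child of the parent bag, and re-attach the deeper subtrees of each $C_{I_l,v}$ as children of $M_v$. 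Now $v$ occupies exactly the parent bag and $M_v$; every other variable of a top body occupies $M_v$ plus at most the one deeper bag in which it acts as an interface, so the treespan stays $\le 2$, and the invariant on the root variable is preserved because by normalization $x$ is never one of these $v$'s.

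Finally I would verify the two quantitative bounds. The width of an ordinary bag is the number of variables of a single rule body, which is $O(\size{Q})$. For a merged bag $M_v$ the key point is a counting argument: the atoms $I_1(v), \dots, I_p(v)$ use pairwise distinct intensional predicates (a monadic body has no repeated atom on $v$), so their top rules are pairwise distinct rules of $Q$, and hence the total size of the merged top bodies is at most the total size of all rules of $Q$, giving $|M_v| = O(\size{Q})$ as well. Taking $k$ to be this common $O(\size{Q})$ bound yields, for every CQ approximation, a tree decomposition of width $k$ with $\tspan(\TD) \le 2$. I expect the main obstacle to be exactly this multiplicity case — keeping both the width and the treespan bounded when one element is the gluing point of many sub-approximations — which is what forces the merging step together with the observation that distinct intensional predicates consume disjoint portions of the $O(\size{Q})$ budget.
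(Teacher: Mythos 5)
Your proposal is correct, and it takes the same basic route as the paper's own proof --- reading the decomposition off the derivation tree of the approximation, with normalization guaranteeing that the head variable of a sub-approximation never recurs below its root --- but it is strictly more careful. The paper's proof is two sentences: it sets $k$ to the maximal number of variables in a rule body, declares that the definition of a CQ-approximation itself yields a width-$k$ decomposition, and asserts that $\tspan(\TD)\le 2$ ``follows from normalization.'' As you correctly observe, that assertion does not literally hold for the naive one-bag-per-body decomposition: normalization, as defined, only forbids IDB atoms on the \emph{head} variable, so nothing prevents a body variable $v$ from carrying several IDB atoms $I_1(v),\dots,I_p(v)$, in which case $v$ lies in $p+1$ bags of the naive decomposition. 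Your merging step --- collapsing the $p$ root bags of the sub-approximations glued at $v$ into a single bag $M_v$ --- is exactly the repair this case calls for, and your counting argument (the $I_\ell$ are pairwise distinct predicates, hence their top rules are pairwise distinct rules of $Q$, so $|M_v|$ is bounded by the total size of the program) is what keeps the width within the stated $O(|Q|)$, albeit with a larger constant than the paper's ``maximal body size'' (harmless, since the lemma claims only $k=O(|Q|)$, and its uses in Lemma~\ref{lem:boundingtw:2} and Theorems~\ref{thm:moncq-rewriting} and~\ref{thm:decidemondetrewritingmdlandcqs} need nothing finer). In short, your proof matches the paper's intent while explicitly handling a multiplicity case that the paper's two-line argument silently glosses over.
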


\myparagraph{Bounding the treewidth of view images}
We will present  results showing that, for certain classes
of sets of views $\views$ and Datalog queries $Q$,  we can find
a uniform bound on the treewidth of the $\views$-image of
the approximations of $Q$.

It is easy to see that expanding an instance with the evaluation of
all intensional predicates of a frontier-guarded program does not blow-up treewidth:

\begin{lemma}\label{lem:boundingtw:1}
If ~ $\Pi \in \fgdatalog$ and $\inst$ is an instance of treewidth $k$, then 
$\fpeval{\Pi}{\inst}$ is of treewidth $k$.
\end{lemma}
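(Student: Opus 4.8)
The plan is to show that \emph{the very same} tree decomposition works for both instances: if $\TD = (\tau, \lambda)$ is a tree decomposition of $\inst$ of width $k$, then $\TD$ is already a tree decomposition of $\fpeval{\Pi}{\inst}$ of width $k$. Since we reuse the identical tree and the identical bags, the width does not change, so it suffices to re-verify the two defining conditions of a tree decomposition against the larger instance $\fpeval{\Pi}{\inst}$.

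First I would dispatch the connectivity condition. Every rule of a Datalog program requires each head variable to occur in the body, so the fixpoint evaluation never introduces elements outside the active domain; hence $\adom(\fpeval{\Pi}{\inst}) = \adom(\inst)$. The second tree-decomposition condition (that for each element $c$ the set of bags containing $c$ is connected in $\tau$) is a statement purely about elements, not atoms, so it transfers verbatim from $\TD$ being a decomposition of $\inst$. It remains to check the covering condition, namely that every atom of $\fpeval{\Pi}{\inst}$ has its underlying set of elements contained in a single bag. For extensional atoms this holds by hypothesis, since those atoms already belong to $\inst$. For an intensional fact $P(\vec c) \in \fpeval{\Pi}{\inst}$, I would argue as follows: $P(\vec c)$ is produced by some rule $P(\vec x) \datalogarrow \phi(\vec x)$ under a valuation $h$ with $h(\vec x) = \vec c$ whose body $\phi$ is mapped into the fixpoint. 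By frontier-guardedness there is an extensional atom $R(\vec y)$ occurring in $\phi$ in which all head variables appear, so the set of elements of $\vec c = h(\vec x)$ is contained in the set of elements of $h(\vec y)$. Because $R$ is extensional and extensional relations never appear in a rule head, the fact $R(h(\vec y))$ lies in $\inst$ itself; hence $h(\vec y)$, and therefore $\vec c$, is contained in some bag $\lambda(v)$.

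This argument is essentially bookkeeping, and the only points requiring care are the two I have flagged: that the guard fact $R(h(\vec y))$ resides in $\inst$ rather than merely in the extended instance---which holds precisely because $\fpeval{\Pi}{\inst}$ adds only intensional facts to $\inst$---and that the covering condition is about the underlying \emph{set} of elements of an atom, so that possibly repeated entries in an intensional tuple $\vec c$ cause no difficulty. Combining the three verifications shows $\TD$ is a width-$k$ tree decomposition of $\fpeval{\Pi}{\inst}$, so its treewidth does not exceed that of $\inst$, as required. The main conceptual obstacle, such as it is, is simply recognizing that frontier-guardedness localizes each derived fact's frontier inside a single extensional atom, which is exactly what makes the original bags suffice without any modification.
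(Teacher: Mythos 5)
Your proof is correct and takes essentially the same approach as the paper's, which likewise observes that a frontier-guarded rule only derives facts whose elements lie inside an extensional guard atom and hence inside an existing bag of the given decomposition; you have simply spelled out the bookkeeping (unchanged active domain, connectivity, covering) that the paper leaves implicit. The one case the paper's proof mentions that yours skips is the monadic rules admitted into $\fgdatalog$ by the paper's convention (e.g.\ $I_1(x) \datalogarrow I_2(x)$, whose head variable occurs in no extensional body atom), but such rules derive only unary facts over active-domain elements, each of which already appears in some bag, so your argument extends with one extra sentence.
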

A locality argument shows that applying connected CQ views preserves bounded treewidth:

\begin{lemma}\label{lem:boundingtw:2}
Let $\TD$ be a tree decomposition of a data instance $\inst$ of width $k$ with $\tspan( \TD) \le 2$.
Let $\views$ be a set of connected CQ views, and $\views(\inst)$ the view image of $\inst$ under $\views$.
Let $r$ be the greatest radius of a CQ in $\views$.
Then the treewidth of $\views(\inst)$ is at most $k' = \frac{k(k^{r+1}-1)}{k-1}$.
\end{lemma}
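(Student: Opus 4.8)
The plan is to exploit \emph{locality}: a connected CQ of radius at most $r$ can only map into a radius-$r$ ball of the Gaifman graph of $\inst$, so every fact of the view image is ``local'', and I can obtain a tree decomposition of $\views(\inst)$ by reusing the given decomposition $\TD=(\tau,\lambda)$ of $\inst$ and thickening each bag so that it absorbs such balls. Let $G$ be the Gaifman graph of $\inst$ and write $B_r(c)=\{d\in\adom(\inst):\operatorname{dist}_G(c,d)\le r\}$. First I would establish the locality claim: fix a view $(V,Q_V)$ with $Q_V$ a connected CQ of radius at most $r$. By the definition of radius there is a center variable $x_0$ of $\canondb(Q_V)$ with every variable within Gaifman-distance $r$ of $x_0$. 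Any fact of $\views(\inst)$ coming from $V$ arises from a homomorphism $h\colon\canondb(Q_V)\to\inst$, and since homomorphisms do not increase Gaifman distance, the whole image of $h$ — in particular the tuple witnessing the fact — lies within distance $r$ of $c:=h(x_0)$ in $G$. Hence every fact of $\views(\inst)$ is contained in $B_r(c)$ for a single element $c$ of $\inst$.

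Next I would define the candidate decomposition on the same tree $\tau$ by setting $\lambda'(v):=\bigcup_{c\in\lambda(v)}B_r(c)$, the $r$-neighborhood in $G$ of the bag $\lambda(v)$, and verify the two tree-decomposition axioms for $\views(\inst)$. Covering is immediate from locality: given a fact with center $c$, choose a node $v$ with $c\in\lambda(v)$ (one exists since $\TD$ covers $\inst$); then the fact's tuple lies in $B_r(c)\subseteq\lambda'(v)$. The connectivity axiom is the crux. For an element $d$ we have $d\in\lambda'(v)$ iff $\lambda(v)\cap B_r(d)\ne\emptyset$, so $\{v:d\in\lambda'(v)\}=\{v:\lambda(v)\cap B_r(d)\ne\emptyset\}$. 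Here I would invoke the standard fact that in any tree decomposition the set of bags meeting a \emph{connected} subset $S$ of $G$ induces a connected subtree — proved by walking a path inside $S$ and using that the index sets of two $G$-adjacent elements always overlap. Since each ball $B_r(d)$ induces a connected subgraph of $G$, the connectivity axiom follows, and $(\tau,\lambda')$ is a genuine tree decomposition of $\views(\inst)$.

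Finally I would bound the width. The hypothesis $\tspan(\TD)\le 2$ means each element of $\inst$ lies in at most two bags, so all of its $G$-neighbors lie in those (at most two) bags, each of size at most $k$; this bounds the degree of $G$ in terms of $k$, and a routine counting of an $r$-ball then shows that $B_r(c)$ has at most $1+k+\cdots+k^{r}$ elements. Summing over the at most $k$ elements of a bag $\lambda(v)$ gives $|\lambda'(v)|\le k(1+k+\cdots+k^{r})=\tfrac{k(k^{r+1}-1)}{k-1}$, as claimed. I expect the main obstacle to be the connectivity axiom for $\lambda'$: the thickened bags must still satisfy the ``connected subtree'' condition, and this is exactly where one must use that balls are connected together with the subtree-closure property of tree decompositions; the locality reduction, the covering check, and the final geometric count are then routine.
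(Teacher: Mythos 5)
Your construction is essentially the one in the paper: the paper thickens each bag $b$ to an $r$-fold extension $ext(b,r)$, obtained by iteratively adding every element that shares a bag with the current set, and since Gaifman-adjacent elements of $\inst$ always co-occur in some bag, your $\lambda'(v)=\bigcup_{c\in\lambda(v)}B_r(c)$ is contained in $ext(v,r)$. Your locality/covering argument (a connected CQ of radius $r$ maps into the $r$-ball around the image of a center variable) is exactly the paper's, and your connectivity argument is actually \emph{cleaner} than the paper's: where the paper runs an explicit case analysis on the relative positions of bags along tree paths, you correctly observe that $\{v: d\in\lambda'(v)\}=\{v:\lambda(v)\cap B_r(d)\neq\emptyset\}$ and invoke the standard fact that the bags meeting a connected subset of the Gaifman graph form a connected subtree.

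The genuine gap is in the width count. The claim that $|B_r(c)|\le 1+k+\cdots+k^r$ does not follow from your degree bound, and is in fact false. From $\tspan(\TD)\le 2$ you get that the closed neighborhood of $c$ lies in at most two bags of size $\le k$ both containing $c$, so the degree of $G$ is only bounded by $2(k-1)$; already $|B_1(c)|$ can be $2k-1 > 1+k$ for $k\ge 3$ (put $c$ in two bags of size $k$, with atoms making $c$ adjacent to every other element of both). A degree-based ball count thus yields roughly $(2k)^r$ per ball, and multiplying by the $k$ elements of a bag overshoots the target $k'$ --- a per-element union bound cannot work here, because the balls $B_r(c)$ for $c$ in a common bag overlap heavily, and exploiting that overlap is the whole point. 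The correct count, as in the paper, is collective: extend the \emph{entire bag} at once and induct on $r$. When $\tspan(\TD)\le 2$, two bags sharing an element are adjacent in $\tau$, and every element of the current extension already lies in one counted bag (its defining bag, or the bag through which it was added), hence contributes at most one new bag, i.e., at most $k$ new elements; this gives $|ext(b,n)|\le k+k^2+\cdots+k^{n+1}$ by induction, and since $\lambda'(v)\subseteq ext(v,r)$ the bound $k'=\frac{k(k^{r+1}-1)}{k-1}$ follows. So your decomposition and its verification are sound; only the final counting step must be replaced by this joint induction.
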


\myparagraph{Tree automata}
We describe our variant of tree automata that accept binary  trees $\T$ with 
edges labelled by binary relations from the set
$\binpred$ and nodes labelled with unary predicates from a set $\unpred$.
We consolidate node- and edge-labels
by considering a tree alphabet $\TreeAlphabet$.
$\TreeAlphabet$ contains  labels for internal nodes 
$\sigma_L^{s_1, s_2}$ indexed by  \emph{sets} 
of unary predicates $L \subseteq \unpred$ and pairs  of binary predicates $s_1, s_2 \in \binpred$. It also
contains leaf labels
$\sigma_L$ indexed by $L \subseteq \unpred$. 
We sometimes treat trees as terms over this alphabet: a tree with root
labeled $\sigma_L^{s_1, s_2}$ with children
$t_1$ and $t_2$ would be written as $\sigma_L^{s_1, s_2}(t_1, t_2)$.

A \emph{nondeterministic finite tree automaton} (NTA) \emph{over} $\TreeAlphabet$ is a tuple $\automaton = (\Q,\Q_f,\Delta_0, \Delta_2)$, where
\begin{compactitem}
\item $\Q$ is a finite set of \emph{states} 
\item $\Q_f \subseteq \Q$ is a set of \emph{final} states, 
\item $\Delta_0$ is a set of \emph{initial transitions} of the form $\sigma_L \to q$, and
\item $\Delta_2$ is a set of \emph{transitions} of the form $q_1,q_{2},\sigma_L^{s_1, s_2} \to q$.
\end{compactitem}
A run of $\automaton$ on a tree $\T$ is a label function $f: \Nodes(\T) \to \Q$ satisfying the following:
if $t_v = \sigma^{s_1, s_2}_L (t_{v_1}, t_{v_2})$ for $\sigma^{s_1, s_2}_L \in \TreeAlphabet$, then $(f(v_1), f(v_2), \sigma^{s_1, s_2}_L \to  f(t)) \in \Delta_2$
{\textbf and } if $f(v) = q$ for a leaf $v$ of $\T$ with  $t_v = \sigma_L$, then $\sigma_L \to q \in \Delta_0$. 
We say that $\T$ is \emph{accepted} by $\automaton$ if there is a run of $\automaton$ on $\T$ that labels the root of $\T$ with a final state.

\myparagraph{Forward from Datalog to NTA}
We now show how to create a tree automaton accepting the view images of approximations of a given
Datalog query.
We say that a class $\C$ of instances  is $k$-\emph{regular}  if  the treewidth of instances in $\C$ is at most $k$, and
there is an automaton $\automaton$ such that 
\begin{compactitem}
\item for codes $\T$ of width $k$, $\automaton$ accepts $\T$ implies $\decode\T \in \C$.
\item for each instance $\F \in \C$ there is a code $\T$ such that $\decode\T = \F$ and $\automaton$ accepts $\T$.
\end{compactitem}

In this case we say that $\automaton$ \emph{captures} $\C$. If the stronger condition ``for all codes $\T$,
$\automaton$ accepts $\T$ iff $\decode\T \in \C$'' holds, we say that $\automaton$ \emph{recognizes} $\C$. 

The following simple ``forward mapping'' proposition shows that we can capture the approximations of Datalog queries with an automaton:

\begin{proposition}\label{prop:cqappr:regular} 
 For any Datalog query $Q = (\Pi, \goal)$, there is an {\exptime} function that
outputs an NTA $\automaton_Q$ that captures the  set of canonical databases of CQ
approximations of $Q$. 
\end{proposition}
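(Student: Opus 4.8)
The plan is to use the standard correspondence between CQ approximations and finite \emph{derivation trees} of the program $\Pi$, together with the fact that such derivations carry a canonical tree decomposition of bounded width. By construction, every CQ approximation $Q_0 \in \CQAppr(\Pi,\goal,i)$ is obtained by repeatedly unfolding rules, and this unfolding is organized by a finite tree $P$: each node of $P$ is labeled by a rule of $\Pi$ together with the IDB atom it resolves, a leaf is labeled by a rule whose body has no IDB atom, and the children of an internal node are the sub-derivations of the IDB atoms in its body. Assigning to each node of $P$ the set of variables of the rule body applied there gives a tree decomposition of $\canondb(Q_0)$: the crucial point is that when an IDB atom $V(\vec y)$ is unfolded, only its frontier $\vec y$ is shared with the sub-derivation, all other variables being renamed fresh, so the connectedness condition holds and the width is at most the maximal number of variables in a rule body, $k = O(|Q|)$. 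Using the outdegree-$2$ normalization of tree decompositions mentioned above, I would binarize $P$, splitting a rule application with more than two IDB atoms into a short chain of auxiliary nodes that carry the relevant variables downward; this preserves the width bound $k$.

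First I would fix $k$ as above and build an NTA $\automaton_Q$ over the alphabet $\TreeAlphabet$ for $\code(\aschema,k)$ whose states are \emph{goal types} of the form $(U,\vec n)$, where $U$ is an IDB predicate and $\vec n$ is a tuple of positions in $\{1,\dots,k\}$ of length $\arity(U)$; the intended meaning of labeling a node $v$ by $(U,\vec n)$ is that the subtree rooted at $v$ codes a sub-derivation of the atom $U(b_{n_1},\dots)$, where $(b_1,\dots,b_k)$ is the bag at $v$. To this I add auxiliary states recording, for a rule being applied along a binarization chain, the rule together with the assignment of its variables to bag positions and the set of IDB atoms of its body still to be dispatched. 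The transitions encode one rule application: a leaf transition $\sigma_L \to (U,\vec n)$ is admitted exactly when there is a base rule with head $U(\vec x)$ and an assignment sending $\vec x$ to $\vec n$ under which the EDB atoms marked by $L$ are precisely the EDB atoms of the body; an internal transition additionally dispatches each IDB atom of the body to a child via the ``same as'' maps $s_1,s_2$, checking that the child's goal type matches the IDB atom's predicate and that its arguments are carried to the child's bag by $s_1,s_2$. The final states are the goal types $(\goal,\vec n)$.

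I would then verify the two halves of \emph{captures}. For completeness, given a CQ approximation $Q_0$ with its binarized derivation tree, encoding that decomposition as a tree code $\T$ (filling short bags with dummies) and labeling each node by the goal type it derives yields an accepting run, with $\decode{\T}=\canondb(Q_0)$. For soundness, an accepting run on any width-$k$ code reads off a derivation tree: since each transition forces the EDB atoms marked at a node to coincide \emph{exactly} with those of the applied rule body and forces the variable identifications through the edge maps, the decoded instance $\decode{\T}$ contains no spurious atoms and equals the canonical database of the corresponding CQ approximation. Finally, the number of goal types is $|Q|\cdot k^{k}$ and the number of auxiliary states is bounded by $|Q|\cdot k^{k}\cdot 2^{|Q|}$, both singly exponential in $|Q|$; the alphabet is of similar size and the admissibility of a transition is checkable in time polynomial in the state descriptions, so $\automaton_Q$ is produced in \exptime.

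The part I expect to be the main obstacle is making the soundness direction together with the binarization bookkeeping airtight: one must guarantee a faithful correspondence between accepting runs and genuine derivations, in particular that the auxiliary chain states reconstruct a single rule application with all of its IDB atoms accounted for exactly once, and that the ``exactly the body's EDB atoms'' condition, read across all nodes through the $\equiv_0$ gluing, produces precisely $\canondb(Q_0)$ with the correct fusion of variables and no extra facts.
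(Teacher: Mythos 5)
Your construction is essentially the paper's own proof: the same forward mapping where automaton states are IDB atoms (your ``goal types'') paired with assignments of their variables to bag positions, transitions simulate single rule applications with the node label required to match the rule body's EDB atoms exactly, and capturing is verified via the standard codes arising from the derivation-tree decomposition. The only difference is cosmetic --- the paper assumes without loss of generality that every rule has $0$ or $2$ IDB atoms (normalizing the program), whereas you binarize inside the automaton with auxiliary chain states --- and your size analysis and the soundness caveat you flag are at the same level of rigor as the paper's argument.
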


If we restrict to instances of a fixed treewidth, we can do better, obtaining
 an NTA that recognizes all trees that satisfy the Datalog program considered
as a set of Horn clauses:

\begin{proposition} \label{prop:datalog} 
For any Datalog program $\datalogprog$, the class $\{\F \mid \F \models \datalogprog, \tw(\F) \le k \}$ (here $\F$ are finite instances
 which contain both EDBs and IDBs of $\Pi$) is $k$-regular and is recognized by an NTA  at most doubly-exponential sized in $k$ and singly-exponential in $|\Pi|$.
\end{proposition}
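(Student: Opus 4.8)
The plan is to build, over the width-$k$ code alphabet $\TreeAlphabet$, a single NTA $\automaton$ that accepts a code $\T$ exactly when $\decode\T$ is a model of $\datalogprog$ read as a conjunction of Horn constraints (one implication, written $\mathrm{head}_r \datalogarrow \phi_r$, per rule). This suffices for the whole statement: every instance $\F$ with $\tw(\F)\le k$ has a width-$k$ code $\T$ with $\decode\T=\F$ (pad bags to length $k$ and split to outdegree $\le 2$ as in the preliminaries), and conversely $\decode\T$ always has treewidth $\le k$; so once $\automaton$ accepts precisely the codes of models, it recognizes the class and $k$-regularity follows immediately. All the content is therefore in the construction of $\automaton$ and in its size analysis.

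The condition to be checked is universal: for every rule $r$ and every homomorphism $h$ from $\canondb(\phi_r)$ into $\decode\T$ the fact $\mathrm{head}_r(h)$ must be present. Because $\decode\T$ has treewidth $\le k$ and each $\phi_r$ is a fixed CQ, the image of any such $h$ lies in a connected set of bags, and $h$'s trace on any one bag is just a partial assignment of the variables of $\phi_r$ to the at most $k$ positions of that bag. I would therefore run $\automaton$ bottom-up, at each node carrying enough information to (i) stitch partial body-matches together across the bag boundaries and (ii) recognize, at the bag where a body-match first becomes total, that the corresponding head atom already labels the node responsible for it. The head atom uses only the frontier elements, which by definition occur together in a single bag, so its presence can be read off locally once it is tracked in the state; $\automaton$ then rejects exactly when some completed body-match lacks its head. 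Correctness (no violation missed, none spuriously reported) is the standard soundness/completeness argument for this ``forward'' translation.

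The main obstacle is to fit this bookkeeping into the asymmetric budget: doubly exponential in $k$ but only singly exponential in $\size{\datalogprog}$. The naive state, which stores for each rule the set of all realizable partial assignments of its body variables to bag positions, is on the wrong side of this trade-off: that universe already has size $(k+1)^{n_r}$ for a body with $n_r$ variables, so its powerset is doubly exponential in $\size{\datalogprog}$. To recover the stated bound I would instead organize the state around the code signature and the fixed family of sub-bodies of $\datalogprog$: the summary a subtree must export is which code atoms over the current bag are forced and which sub-bodies are realizable with a given interface to the bag, quantities controlled by $\unpred$, $\binpred$ and the syntactic size of $\datalogprog$ rather than by the exponentially many variable assignments. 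The number of such summaries is singly exponential in $\size{\datalogprog}$ and, through the dependence of the code signature on $k$, doubly exponential in $k$. Proving that this coarser summary is still complete enough to catch every body-match and to localize every head check is the delicate step, and it is precisely here that the forward--backward method for guarded logics is needed.
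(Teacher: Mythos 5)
Your first two paragraphs set up the right reduction (an NTA recognizing exactly the width-$k$ codes of models of $\datalogprog$, read as Horn implications) and you correctly put your finger on the real difficulty: a direct bottom-up NTA must verify a universal condition, and the subset-style state recording all realizable partial body matches is doubly exponential in $|\datalogprog|$, on the wrong side of the stated trade-off. The gap is that your proposed repair does not exist as described. A ``summary'' recording which sub-bodies of $\datalogprog$ are realizable with which interface to the current bag still ranges over \emph{subsets} of a universe of (sub-body, interface) pairs, and that universe already has size exponential in $|\datalogprog|$ (up to $2^{|\phi_r|}$ sub-bodies per rule and up to $(k+1)^{n_r}$ interfaces); so the number of such summaries is again doubly exponential in $|\datalogprog|$ --- exactly the count you set out to avoid. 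You offer no argument that fewer summaries suffice, and the tool you invoke cannot supply one: in this paper the forward--backward method (Proposition~\ref{prop:backward} and its surroundings) is about translating an automaton back into a Datalog rewriting, not about compressing the state space of a recognizing NTA. The ``delicate step'' you defer is therefore not a known lemma but the entire content of the size bound. A secondary slip: for general (non-frontier-guarded) $\datalogprog$ the head variables of a rule need \emph{not} co-occur in a single bag, so the head fact's presence cannot be ``read off locally'' at the node where a body match completes; one must instead look for some bag, in the connected intersection of the subtrees of the head elements, whose label lists the head atom.

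The paper's actual proof takes a different, shorter route that sidesteps state compression altogether. It first constructs a \emph{two-way alternating} tree automaton which, for each rule $R(\vec x) \datalogarrow \phi(\vec x, \vec y)$, uses universal branching together with two-way navigation to range over all valuations $\vec a, \vec b$ and then checks $\neg\phi(\vec a,\vec b) \lor R(\vec a)$. Alternation expresses the universal Horn condition natively, so this automaton is small (linear in $\datalogprog$ and single-exponential in $k$). A single application of the conversion theorem (Theorem A.2 of \cite{cosmadakis1988decidable}) then removes two-wayness and alternation at the cost of one exponential, landing exactly on ``doubly exponential in $k$, singly exponential in $|\datalogprog|$''. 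The lesson, if you want to repair your direct construction, is that the asymmetric bound is bought by paying the exponential blow-up once, at the alternation-removal step, rather than through any bottom-up summary of partial matches.
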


We also note that if we have captured a class of codes of instances with an automaton, we can project away
some of the signature and still capture:

\begin{proposition} \label{prop:restriction} If $\C$ is a $k$-regular class in $\Sigma$ captured by NTA $\automaton$ 
and $\Sigma' \subseteq \Sigma$, then 
the class $$\C\restr \Sigma' = \{\F\restr \Sigma' \mid \F \in \C\}$$ is also $k$-regular, captured by an automaton of size at most $|\automaton|$. The same holds with ``captured'' replaced by ``recognized''.
\end{proposition}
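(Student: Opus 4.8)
The plan is to realize restriction as a letter-to-letter projection on codes and to push that projection through the automaton. First I would observe that the code signatures nest under restriction: the binary ``same as'' relations $\coderel_s$ of $\code(\Sigma,k)$ depend only on $k$, while the unary markers $\coderel^R_{\vec n}$ for $R\in\Sigma'$ are exactly those of $\code(\Sigma',k)$. Hence $\code(\Sigma',k)$ is a subsignature of $\code(\Sigma,k)$, and there is a natural alphabet map $\pi$ sending $\sigma_L^{s_1,s_2}\mapsto \sigma_{L\cap U'}^{s_1,s_2}$ and $\sigma_L\mapsto\sigma_{L\cap U'}$, where $U'=\kw{UnPred}(\Sigma',k)$. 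The key lemma I would then establish is that decoding commutes with this projection: $\decode{\pi(\T)} = (\decode\T)\restr\Sigma'$. This is immediate from the definitions, since the relation $\equiv_0$ and thus the classes $[v,i]$ are generated purely by the $\coderel_s$-edges, which $\pi$ leaves untouched, whereas the $R$-atoms of $\decode\T$ for $R\in\Sigma'$ arise exactly from the $\coderel^R$-markers that $\pi$ retains.

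Next I would build $\automaton'$ from $\automaton=(\Q,\Q_f,\Delta_0,\Delta_2)$ by keeping the states and projecting the label of each transition: place $\sigma_{L'}\to q$ in $\Delta_0'$ whenever $\sigma_L\to q\in\Delta_0$ with $L\cap U'=L'$, and $q_1,q_2,\sigma_{L'}^{s_1,s_2}\to q$ in $\Delta_2'$ whenever $q_1,q_2,\sigma_L^{s_1,s_2}\to q\in\Delta_2$ with $L\cap U'=L'$. As each transition of $\automaton'$ is the image of one of $\automaton$ and the state set is unchanged, $|\automaton'|\le|\automaton|$. The standard projection fact for NTAs then gives that a run of $\automaton'$ on a code $\T'$ is exactly a run of $\automaton$ on some $\T$ with $\pi(\T)=\T'$, the missing markers being reconstructed at each node from the transition used; hence $\automaton'$ accepts $\T'$ iff $\automaton$ accepts some $\T$ with $\pi(\T)=\T'$. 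Capturing then follows: for soundness, if $\automaton'$ accepts $\T'$, pick such a $\T$, so $\decode\T\in\C$ and by the commutation lemma $\decode{\T'}=(\decode\T)\restr\Sigma'\in\C\restr\Sigma'$; for completeness, given $\F\restr\Sigma'$ with $\F\in\C$, take a code $\T$ of $\F$ accepted by $\automaton$ and set $\T'=\pi(\T)$, so $\automaton'$ accepts $\T'$ and $\decode{\T'}=\F\restr\Sigma'$. Finally, restricting an instance to a subsignature only deletes atoms, so any width-$k$ decomposition of $\F$ is one of $\F\restr\Sigma'$, yielding the treewidth bound.

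For the ``recognizes'' strengthening I would reuse the same $\automaton'$. The forward implication is already covered above and holds for every code. The reverse implication --- that every code $\T'$ with $\decode{\T'}\in\C\restr\Sigma'$ is accepted --- is the step I expect to be the main obstacle: by the projection fact it reduces to lifting $\T'$ to a $\code(\Sigma,k)$-code $\T$ with $\pi(\T)=\T'$ and $\decode\T\in\C$, that is, to filling in the dropped $\Sigma\setminus\Sigma'$-markers \emph{inside the bags of the given decomposition} so that the result lands in $\C$. Since $\decode{\T'}$ is only known to be the $\Sigma'$-reduct of some member of $\C$ that may carry its $\Sigma\setminus\Sigma'$-atoms on an incompatible decomposition, this lifting cannot be argued purely formally and must appeal to the structure of the classes $\C$ at hand; this is the delicate point of the recognition case, whereas everything needed for capturing is routine.
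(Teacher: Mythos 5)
Your treatment of the ``captured'' half is correct and is essentially the paper's own proof: the paper constructs the same automaton $\automaton'$ (states and final states unchanged, each transition $q_1,q_2,\sigma_L^{s_1,s_2}\to q$ replaced by $q_1,q_2,\sigma_{L\restr\Sigma'}^{s_1,s_2}\to q$), and your two supporting observations --- that decoding commutes with the letter-to-letter projection $\pi$, and that runs of $\automaton'$ on $\T'$ are exactly runs of $\automaton$ on lifts $\T$ with $\pi(\T)=\T'$ --- are precisely what the paper's terser two-direction argument uses implicitly (its phrase ``there is a run $f$ of $\automaton$ which accepts some extension $\T$ of $\T'$'' is your projection fact). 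The size and treewidth claims are handled identically.

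Where you stop, you have caught a genuine gap --- and it sits in the paper, not only in your proposal: for the ``recognized'' variant the paper says merely that the proof is ``similar'', and it is not. The projected automaton accepts $\T'$ iff \emph{some} lift of $\T'$ (same tree, same $\coderel_s$-edges, extra $\Sigma\setminus\Sigma'$-markers placed inside the existing bags) decodes into $\C$, whereas recognition demands acceptance of \emph{every} code of every member of $\C\restr\Sigma'$; the witnessing $\F\in\C$ may carry its $\Sigma\setminus\Sigma'$-atoms on elements that never share a bag of the given code, or on elements outside $\adom(\decode{\T'})$ altogether, exactly as you say. This is not a hypothetical worry. With $\Sigma=\{P,E\}$, $\Sigma'=\{P\}$, and $\C$ the (recognizable, treewidth $\le k$) class of instances containing $P(a),P(b),E(a,b)$ for distinct $a,b$, a chain-like code whose bags are pairwise disjoint and whose only two $P$-markers sit at the two ends decodes into $\C\restr\Sigma'$ but admits no accepted lift, so $\automaton'$ captures but does not recognize. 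Worse, the statement itself fails for \emph{every} automaton: take $\C$ (over $\{P,Q,E\}$, $\tw\le k$) to be the recognizable class in which $P$ and $Q$ are disjoint and $E$ is the graph of a bijection from $P$ to $Q$; then $\C\restr\{P,Q\}$ is the class of instances with $|P|=|Q|$, and on codes with pairwise-disjoint bags this is a non-regular condition by a pumping argument, so no NTA recognizes the reduct class at all.

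Your closing diagnosis --- that the backward lifting must appeal to the structure of $\C$ --- is also exactly how the paper's actual uses of the ``recognized'' half are saved. In Proposition~\ref{prop:datalog:notaccept} the projected-away predicates are the IDBs of a \emph{frontier-guarded} program: every fact of the least fixpoint is guarded by an EDB atom, so its elements co-occur in a bag of \emph{any} code of the reduct instance, and the canonical extension can always be written into the given code, restoring the missing direction. A complete proof of the proposition's second sentence therefore needs such an additional hypothesis (guarded, bag-local extensions) made explicit; as stated in full generality it does not follow from the capturing argument, and your refusal to push the ``similar'' proof through was the right call.
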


Our next ``forward mapping'' result shows that we can recognize the 
 set of codes of small treewidth
which fail to satisfy clauses of a frontier-guarded program:

\begin{proposition} \label{prop:datalog:notaccept}
For a $\fgdatalog$ query $Q=(\Pi, \goal)$ the set $\{(\inst, \vec a) \mid \inst \not\models Q(\vec a), \tw(\inst) \le k \} $ is $k$-regular and recognized by
an NTA of size at most doubly-exponential in $k$.
\end{proposition}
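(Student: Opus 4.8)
The plan is to reduce non-derivability of the goal to the existence of a \emph{model} of $\Pi$ omitting the goal atom, and then to obtain the automaton by combining Proposition~\ref{prop:datalog} with the projection operation of Proposition~\ref{prop:restriction}. Write $\aschema^+ = \aschema \cup \mathrm{IDB}(\Pi)$ for the schema extended with the intensional predicates of $\Pi$ (including $\goal$). I claim
\[
\{\,(\inst,\vec a) \mid \inst \not\models Q(\vec a),\ \tw(\inst)\le k\,\}
=
\{\,(\F\restr\aschema,\vec a) \mid \F \models \Pi,\ \goal(\vec a)\notin\F,\ \tw(\F)\le k\,\}.
\]
For the forward inclusion, if $\inst\not\models Q(\vec a)$ then $\F := \fpeval{\Pi}{\inst}$ is a model of $\Pi$ with $\F\restr\aschema=\inst$ and $\goal(\vec a)\notin\F$, and, crucially, by Lemma~\ref{lem:boundingtw:1} it still has treewidth at most $k$. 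For the reverse inclusion, any model $\F$ of $\Pi$ with $\F\restr\aschema=\inst$ contains the least model $\fpeval{\Pi}{\inst}$, so $\goal(\vec a)\notin\F$ forces $\goal(\vec a)\notin\fpeval{\Pi}{\inst}$, i.e.\ $\inst\not\models Q(\vec a)$; restriction does not increase treewidth.

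With this reformulation, I would build the automaton over $\code(\aschema^+,k)$ in three steps. First, Proposition~\ref{prop:datalog} yields an NTA $\automaton_1$ recognizing the codes of models of $\Pi$ of treewidth at most $k$, of size doubly-exponential in $k$ (and, since $\Pi$ is fixed, this is the dominant term). Second, I construct a small NTA $\automaton_2$ recognizing the codes whose decoding does \emph{not} contain $\goal(\vec a)$. As $\vec a$ is the initial segment of the root bag, $\automaton_2$ need only propagate downward, via the ``same-as'' relations $\coderel_s$, a marking recording which of the $k$ bag positions currently stand for $a_1,\dots,a_r$, and reject whenever a node carries a $\coderel^{\goal}_{\vec n}$ label on exactly those positions; this costs only singly-exponentially many states in $k$ (and is trivial in the Boolean case). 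The product $\automaton_1\cap\automaton_2$ then recognizes the codes over $\aschema^+$ of the right-hand class. Finally, applying the projection of Proposition~\ref{prop:restriction} to discard the intensional predicates yields an NTA $\automaton'$ over $\code(\aschema,k)$, and the product-then-project construction keeps the size doubly-exponential in $k$.

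It remains to verify that $\automaton'$ \emph{recognizes} (not merely captures) the target set, and this is where frontier-guardedness is essential and forms the main obstacle. The projection automaton accepts a width-$k$ code $\T$ over $\aschema$ exactly when $\T$ can be extended, \emph{without altering its tree or its bags}, by intensional labels to a code accepted by $\automaton_1\cap\automaton_2$. The direction ``accepted $\Rightarrow \decode\T\not\models Q(\vec a)$'' is immediate from the reverse inclusion above. For the converse, suppose $\decode\T=\inst\not\models Q(\vec a)$; I must produce such an intensional labelling of the \emph{given} code $\T$. Here frontier-guardedness pays off: every fact $P(\vec c)\in\fpeval{\Pi}{\inst}$ has its tuple $\vec c$ contained in some extensional atom of $\inst$, which is covered by a bag of $\T$, so $\vec c$ lies in an existing bag and $P(\vec c)$ can simply be written into it, turning $\T$ into a code of $\fpeval{\Pi}{\inst}$ over the same underlying tree. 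This is precisely the content of Lemma~\ref{lem:boundingtw:1}: no new bags are needed. The extended code is accepted by $\automaton_1\cap\automaton_2$, hence $\automaton'$ accepts $\T$. The delicate point is that capturing would follow from projection for free, whereas recognition requires that \emph{every} code of $\inst$ — not just one convenient decomposition — lift to a code of the witnessing minimal model, which is exactly what frontier-guardedness guarantees.
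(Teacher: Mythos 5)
Your proof is correct and follows essentially the same route as the paper's: the same identity $\{(\inst,\vec a) \mid \inst\not\models Q(\vec a),\ \tw(\inst)\le k\} = \{(\F\restr\Sigma,\vec a) \mid \F\models\Pi,\ \F\not\models\goal(\vec a),\ \tw(\F)\le k\}$, combined with Proposition~\ref{prop:datalog}, the projection of Proposition~\ref{prop:restriction}, and Lemma~\ref{lem:boundingtw:1} (frontier-guardedness) supplying the lift of any given code of $\inst$ to a code of $\fpeval{\Pi}{\inst}$ over the same tree. The difference is purely expository: you spell out the goal-tracking automaton and the capture-versus-recognition subtlety in the projection step, both of which the paper leaves implicit in its one-line citation of these results.
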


\begin{proof}
Follows from Propositions \ref{prop:datalog} and \ref{prop:restriction} since for a frontier-guarded $Q=(\Pi, \goal)$ we have, using Lemma \ref{lem:boundingtw:1},
\begin{multline*}
\{(\inst, \vec a) \mid \inst \not\models Q(\vec a), \ \tw(\inst) \le k\} = \\
= \{(\F\restr\Sigma, \vec a) \mid \F \models \Pi, \ \F \not \models \goal(\vec{a}),\ \tw(\F) \le k\}
\end{multline*}
where $\Sigma$ is the signature of the EBDs in $\Pi$.

\end{proof}
In applying these results, we will sometimes use implicitly that \emph{if 
$\C_1$ is captured by  $\automaton$ and $\C_2$ is recognized by $\automaton'$, then
$\C_1 \cap \C_2$ is captured by the product of $\automaton$ and $\automaton'$}. Note that, in contrast,
classes of instances that are captured are \emph{not} closed under intersection.

\myparagraph{Homomorphic determinacy}
A query $Q$ is said to be \emph{homomorphically determined} by views $\views$ if:

\begin{quote}
Whenever we have two instances $\inst_1$ and  $\inst_2$ and a homomorphism $h$ from $\views(\inst_1)$ to $\views(\inst_2)$, then
for each tuple $(c_1, \dots, c_k) \in Q(\inst_1)$ we also have $(h(c_1), \dots, h(c_k)) \in Q(\inst_2)$. 
\end{quote}

Note that if $Q$ is rewritable over $\views$ in Datalog, or any other homomorphism-invariant
query language, then $Q$ must be homomorphically determined by $\views$.

Homomorphic determinacy of $Q$ over $\views$ always implies monotonic determinacy
of $Q$ over $\views$; monotonic determinacy is simply the case where $h$ is the identity.
Surprisingly, for Datalog queries and views the converse also holds:

\begin{lemma} \label{lem:homdet} For any Datalog query $Q$ and Datalog views $\views$,  if $Q$ is monotonically determined
 over $\views$ then it is homomorphically determined over $\views$.
\end{lemma}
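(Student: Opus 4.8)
The plan is to exploit the monotonicity of Datalog together with the characterization of query satisfaction via CQ approximations. The key observation is that a homomorphism $h \colon \views(\inst_1) \to \views(\inst_2)$ lets us pull back a relevant portion of $\inst_1$ along $h$ into a new instance whose view image sits \emph{inside} $\views(\inst_2)$, so that monotonic determinacy can be invoked. Concretely, suppose $\vec c \in Q(\inst_1)$. Since $Q$ is Datalog, by Proposition~1 there is a CQ approximation $Q_0$ with $\inst_1 \models Q_0(\vec c)$, witnessed by a homomorphism $g \colon \canondb(Q_0) \to \inst_1$ sending the distinguished tuple to $\vec c$. The finite substructure $g(\canondb(Q_0)) \subseteq \inst_1$ is the only part of $\inst_1$ that matters for deriving $\vec c$.

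First I would build an auxiliary instance $\inst_1'$ that is a homomorphic image of this relevant part of $\inst_1$ under $h$. The difficulty is that $h$ acts on $\views(\inst_1)$, i.e.\ on view facts, not directly on the base instance $\inst_1$; so $h$ supplies a map on active-domain elements, and I would transport the base facts of $g(\canondb(Q_0))$ forward by $h$ to obtain $\inst_1' \colonequals \{R(h(\vec a)) : R(\vec a) \in g(\canondb(Q_0))\}$, taking $\inst_1'$ as a sub-instance of (a disjoint extension of) $\inst_2$. Because $h$ is a homomorphism on view images and the views are monotone (Datalog is monotone), any view fact true on the small base instance is mapped by $h$ to a view fact of $\views(\inst_2)$; this is what gives the inclusion $\views(\inst_1') \subseteq \views(\inst_2)$ after identifying $\inst_1'$ appropriately with part of $\inst_2$. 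Then monotonic determinacy of $Q$ over $\views$ yields $Q(\inst_1') \subseteq Q(\inst_2)$, and since $Q$ is preserved under the homomorphism realized by $h \circ g$ into $\inst_1'$, we get $h(\vec c) \in Q(\inst_1') \subseteq Q(\inst_2)$, as required.

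The main obstacle, and the step I expect to require the most care, is making precise the passage from ``$h$ is a homomorphism of \emph{view images}'' to a usable statement about base instances, so that the hypothesis $\views(\inst_1') \subseteq \views(\inst_2)$ genuinely holds rather than merely $\views(\inst_1') \to \views(\inst_2)$. The natural route is to take $\inst_1'$ to be the base instance $\inst_2$ together with the extra base facts obtained by copying $g(\canondb(Q_0))$ along $h$ into the active domain of $\inst_2$; since these copied facts have all their elements already in $\adom(\inst_2)$ (being values of $h$), the added facts can only \emph{enlarge} the view image, giving $\views(\inst_2) \subseteq \views(\inst_1')$ — and one must instead arrange the inclusion in the direction demanded by monotonic determinacy. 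The clean way to reconcile this is to note that the relevant derivation of $\vec c$ uses only finitely many base facts, so I would let $\inst_1'$ consist \emph{exactly} of the $h$-images of those facts and verify directly that every view fact those generate already appears in $\views(\inst_2)$, using that $h$ respects all the view definitions because they are monotone queries and $h$ was a homomorphism on the full view images $\views(\inst_1)$.

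Finally I would record that the conclusion $h(\vec c) \in Q(\inst_2)$ follows because $Q$, being Datalog, is closed under homomorphisms: the homomorphism $h \circ g$ maps $\canondb(Q_0)$ into $\inst_2$ sending the distinguished tuple to $h(\vec c)$, and $Q_0$ is a CQ approximation of $Q$, so $\inst_2 \models Q_0(h(\vec c))$ implies $h(\vec c) \in Q(\inst_2)$ by soundness of the approximations. Assembling these pieces gives homomorphic determinacy from monotonic determinacy, which is the asserted direction; the reverse direction is the trivial identity-homomorphism case already noted in the text.
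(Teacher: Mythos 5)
Your overall skeleton (CQ approximation, then an auxiliary instance to which monotonic determinacy is applied, then closure of Datalog under homomorphisms) matches the paper, but the central construction goes in the wrong direction and the step you yourself flag as delicate is in fact false. Two concrete problems. First, $h$ is only defined on $\adom(\views(\inst_1))$: elements of $g(\canondb(Q_0))$ that occur in no view fact (for instance, witnesses of quantified variables inside view bodies, or \emph{all} elements when a view is Boolean) lie outside the domain of $h$, so the pushforward instance $\inst_1' = \{R(h(\vec a)) : R(\vec a) \in g(\canondb(Q_0))\}$ is not even well-defined, and there is no canonical way to extend $h$. Second, and fatally, even where $h$ is total, the inclusion you propose to ``verify directly'', namely $\views(\inst_1') \subseteq \views(\inst_2)$, does not follow from $h$ being a homomorphism of view images: monotonicity only yields $h(\views(A)) \subseteq \views(h(A))$, the opposite of what you need, because applying $h$ to \emph{base} facts can identify elements and create new base-level patterns, hence new view facts that were never present in $\views(\inst_1)$. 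A counterexample to the claimed verification: take unary views $W_1(x) = \exists y\, R(x,y)$ and $W_2(y) = \exists x\, R(x,y)$ together with a Boolean view $V = \exists x y z\, R(x,y) \wedge R(y,z) \wedge R(z,x)$; let the small base instance be the path $R(a,b), R(b,c), R(c,d)$ and let $\inst_2$ be a directed $4$-cycle on $u,v,w,t$. Then $h(a)=h(d)=u$, $h(b)=v$, $h(c)=w$ is a homomorphism of view images, yet pushing the base facts through $h$ produces a triangle, so $V$ holds in $\views(\inst_1')$ but not in $\views(\inst_2)$ (nor does $\views(\inst_1')$ even map homomorphically into $\views(\inst_2)$), and monotonic determinacy cannot be invoked.

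The paper avoids this by pulling witnesses \emph{back} from $\inst_2$ rather than pushing $\inst_1$ forward along $h$. For each fact $F = V(a_1 \ldots a_n)$ of $\views(Q_i)$, the composite $h \circ \alpha$ places $V(h\alpha(a_1) \ldots h\alpha(a_n))$ in $\views(\inst_2)$, so some CQ approximation $\rho$ of $Q_V$ is witnessed in $\inst_2$ at that tuple; one takes those witnessing facts, renames each $h\alpha(a_i)$ back to $a_i$, and replaces every existential witness by a \emph{fresh null}. The union $\inst'$ of these fact sets then satisfies $\views(Q_i) \subseteq \views(\inst')$ literally (the view image of $\canondb(Q_i)$ sits inside that of $\inst'$ with the same element names), so monotonic determinacy applied to the pair $(\canondb(Q_i), \inst')$ gives $\inst' \models Q$; and by construction $\inst'$ carries a base-level homomorphism into $\inst_2$ (send $a_i \mapsto h\alpha(a_i)$ and each null to its original witness), whence $\inst_2 \models Q$ by preservation of Datalog under homomorphisms. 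Your final paragraph about closure under homomorphisms and soundness of CQ approximations is correct, but it is applied to an instance that cannot be built; the missing idea is precisely this inverse chase of the view definitions, creating fresh nulls on the $Q_i$ side instead of taking images on the $\inst_2$ side.
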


\myparagraph{Backwards from NTAs to Datalog}
Consider arbitrary NTA $\automaton$ that works on tree codes of width $k$. 
From $\automaton$ we construct a Datalog program.
For every transition of the form $q_1, q_2, \sigma_{L}^{s_1,s_2} \to q$ with
$L = \{\coderel^{R^1}_{\vec{n}_1}, \dots, \coderel^{R^m}_{\vec{n}_m}\}$ we create
a rule
\begin{multline*}
P_q(x_1, \dots, x_k) \datalogarrow \\
\bigwedge_{i = 1}^{k} \Adompred(x_i) \land P_{q_1}(x_1^1, \dots, x_k^1)\land P_{q_2}(x_1^2, \dots, x_k^2)   \\ \land \!\!\!
\bigwedge_{i \in \dom(s_1)} \!\!\! x_i = x^1_{s_1(i)}\land \!\!\! \bigwedge_{i \in \dom(s_2)} \!\!\! x_i = x^2_{s_2(i)} \land  \bigwedge_{l = 1}^m R^l(\vec{x}_{\vec{n}_l})
\end{multline*}
where $j$ ranges over  $1$ and $2$,  $x_i^j$ are fresh variables for indices $j \in \{1,2\}$ and $i \in \{1, \dots, k\}$, and for $\vec{n}_l = (n_l^1, \dots, n_l^d)$
we have $\vec{x}_{\vec{n}_l} = (x_{n_l^1}, \dots, x_{n_l^d})$.
For initial transitions of the form $\sigma_{L} \to q$ with $L = \{\coderel^{R^1}_{\vec{n}_1}, \dots, \coderel^{R^m}_{\vec{n}_m}\}$ we have rules
\begin{equation*}
P_q(x_1, \dots, x_k) \datalogarrow\bigwedge_{i = 1}^{k} \Adompred(x_i) \land \bigwedge_{l = 1}^m R^l(\vec{x}_{\vec{n}_l}).
\end{equation*}

For accepting states $q$ we add the rules $\goal_{\automaton} \datalogarrow P_q(x_1, \dots, x_k)$
for the goal predicate $\goal_{\automaton}$; recall that we are assuming here that
the original query $Q$ is Boolean, so we are looking for a Boolean Datalog rewriting.
 We also add a standard set of rules which, when evaluated on any data instance $\inst$
under fixed-point semantics, guarantee that the interpretation of the IDB $\Adompred(x)$ is the active domain of $\inst$.
Denote the resulting \emph{backward map Datalog query} by $Q_{\automaton}$.


We now get to the main result of this section, which states that if
we assume homomorphic determinacy and begin with an automaton representing view images
of approximations of $Q$, then applying the backward mapping produces a Datalog rewriting
of $Q$ over $\views$. The proof is mostly a matter of working with the definitions.
Homomorphic determinacy is used in the direction from right to left.

\begin{proposition} \label{prop:backward}
Let $Q$ be homomorphically determined over $\views$ and $\automaton$ be any automaton 
working on $k$-codes such that
$\{\views(Q_i) \mid i \in \omega\} \subseteq \decode{L(\automaton)} \subseteq \{\jnst \mid \views(Q_i)\text{\textrm{} maps into } \jnst \text{\textrm{} for some } i \in \omega\}$. 
That is to say, we require that
\begin{compactitem}
\item[(1)] for each CQ approximation $Q_i$ of $Q$ there is a code $\T$ such that $\decode\T = \views(Q_i)$ and $\T$ is accepted by $\automaton$ (first inclusion);
\item[(2)] for each  $\T$ accepted by $\automaton$ there is a CQ approximation $Q_i$ of $Q$ and a homomorphism from
$\views(Q_i)$ into $\decode\T$ (second inclusion).
\end{compactitem}

Then
for each data instance $\inst$ we have $\inst \models Q$ iff $\views(\inst)
\models Q_{\automaton}(\vec{a})$ for some $\vec{a} \in \adom(\inst)^k$.
\end{proposition}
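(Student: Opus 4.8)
The plan is to isolate a single semantic characterization of the backward-map query and then chase homomorphisms through the two inclusions, invoking homomorphic determinacy only for the nontrivial implication. The characterization I would establish first is the standard reading of the backward-map construction: for any instance $\jnst$ over $\Sigma_\views$, we have $\jnst \models Q_\automaton(\vec a)$ for some $\vec a \in \adom(\jnst)^k$ if and only if there is a code $\T$ accepted by $\automaton$ together with a homomorphism $\decode\T \to \jnst$. This is proved by induction on the fixed-point stages of $Q_\automaton$: one shows that $P_q(\vec a)$ is derived over $\jnst$ exactly when some subtree whose run labels its root by $q$ decodes into an instance that maps into $\jnst$ with its root bag sent to $\vec a$. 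The $\Adompred$ atoms force the witnessing map to land in $\adom(\jnst)$, and the transition-shaped rules mirror exactly how $\decode\T$ is assembled from the bags of $\T$, so that the accepting rules $\goal_\automaton \datalogarrow P_q(\vec x)$ fire precisely when a fully accepted code decodes into something mapping to $\jnst$. This step is bookkeeping, but it is the technical heart of the argument, and the phrase ``mostly a matter of working with the definitions'' refers to it.

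For the forward direction, suppose $\inst \models Q$. By the CQ-approximation property there is a CQ approximation $Q_i$ of $Q$ with $\canondb(Q_i) \to \inst$. Since every view definition $Q_V$ is a Datalog (or CQ) query and is therefore preserved under homomorphisms, this map lifts to a homomorphism $\views(Q_i) \to \views(\inst)$. By inclusion (1) there is a code $\T$ accepted by $\automaton$ with $\decode\T = \views(Q_i)$, so composing gives $\decode\T \to \views(\inst)$. By the characterization above, $\views(\inst) \models Q_\automaton(\vec a)$ for some $\vec a$.

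For the backward direction, suppose $\views(\inst) \models Q_\automaton(\vec a)$. By the characterization there is an accepted code $\T$ with $\decode\T \to \views(\inst)$, and by inclusion (2) there is a CQ approximation $Q_i$ of $Q$ with $\views(Q_i) \to \decode\T$; composing yields a homomorphism $\views(\canondb(Q_i)) \to \views(\inst)$. Now $\canondb(Q_i) \models Q$, since a CQ approximation is contained in $Q$. Applying homomorphic determinacy of $Q$ over $\views$ with $\inst_1 = \canondb(Q_i)$ and $\inst_2 = \inst$, using the homomorphism just produced between their view images, transfers the (empty) output tuple and gives $\inst \models Q$.

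The only real obstacle is the semantic characterization of $Q_\automaton$ in the first paragraph; everything after it is composition of homomorphisms together with the two inclusions, with homomorphic determinacy used solely to pass from $\canondb(Q_i) \models Q$ to $\inst \models Q$. I would be careful about two routine points: that the witnessing tuples $\vec a$ genuinely range over $\adom(\views(\inst))^k$ (handled by the $\Adompred$ rules, so the composed maps are well defined), and that the relation ``maps into'' in inclusion (2) is exactly the homomorphism notion used throughout, so that the successive compositions above are legitimate.
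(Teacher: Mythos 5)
Your proposal is correct and takes essentially the same route as the paper: your semantic characterization of $Q_{\automaton}$ (accepted code plus homomorphism $\decode{\T}\to\jnst$) is exactly the paper's notion of a jointly-annotated term, which the paper proves via Datalog proof terms rather than fixed-point stages --- an immaterial difference, including your handling of dummy bag positions via the $\Adompred$ atoms, which mirrors the paper's choice of an arbitrary element $e_0$. Both directions then proceed by the same compositions of homomorphisms through inclusions (1) and (2), with homomorphic determinacy invoked at precisely the same point.
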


\section{Rewritability} \label{sec:rewrite}
We are now ready to present our main results about rewritings of queries
that are monotonically determined over views.
The following result exhibits  how the forward and backward mappings help us
obtain Datalog rewritings.

\begin{theorem} \label{thm:frgd-rewriting} Suppose $Q$ is a Datalog
query and $\views$ is a collection of $\fgdatalog$ views.
If $Q$ is monotonically determined by $\views$, then
$Q$ is rewritable over $\views$ in Datalog. The size of the rewriting 
is at most double-exponential in $|Q|$ and exponential in $|\views|$.
If $Q$ is MDL such a rewriting exists in MDL as well.
\end{theorem}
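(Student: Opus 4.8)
The plan is to combine the three machinery pieces developed in Section~\ref{sec:inf}: the forward mapping (Proposition~\ref{prop:cqappr:regular}), the treewidth bounds on view images (Lemmas~\ref{lem:boundingtw:1} and~\ref{lem:boundingtw:2}), and the backward mapping (Proposition~\ref{prop:backward}), glued together by the equivalence between monotonic and homomorphic determinacy (Lemma~\ref{lem:homdet}). The overall strategy is to build an automaton $\automaton$ working on tree codes of some fixed width $k$ that captures, up to homomorphism, the set of view images $\{\views(Q_i)\}$ of the CQ approximations $Q_i$ of $Q$, and then feed $\automaton$ into the backward map to produce the Datalog rewriting. Lemma~\ref{lem:homdet} is what licenses us to only capture the view images ``up to homomorphism'' (conditions (1) and (2) of Proposition~\ref{prop:backward}), which is exactly what the forward--project pipeline naturally yields.

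First I would fix the treewidth bound. By Proposition~\ref{prop:cqappr:regular}, the canonical databases of the CQ approximations $Q_i$ of $Q$ have bounded treewidth and are captured by an NTA $\automaton_Q$; say the bound is $k_0$. Applying the $\fgdatalog$ view definitions does not increase treewidth by Lemma~\ref{lem:boundingtw:1}, so each $\views(Q_i)$ has treewidth bounded by some $k$ depending only on $|Q|$ and $|\views|$. This gives a \emph{uniform} treewidth bound on all view images of approximations, which is the crucial finiteness needed to run an automaton over codes of a single fixed width $k$.

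Next I would construct the automaton capturing the view images. Working over the base schema $\Sigma_\baseschema$, I take the automaton $\automaton_Q$ for the approximations of $Q$ and compose it with the forward encoding of the $\fgdatalog$ view definitions: using Proposition~\ref{prop:datalog} each view predicate $V$ can be recognized on codes of the appropriate width, and by the product/intersection remark (capture $\cap$ recognize $=$ capture) I can form an automaton over the combined schema $\Sigma_\baseschema \cup \Sigma_\views$ that runs $\automaton_Q$ on the base part while simultaneously forcing each $V$ to be interpreted by $\doutput{Q_V}{\cdot}$. Then by Proposition~\ref{prop:restriction} I project away the base schema $\Sigma_\baseschema$, leaving an automaton $\automaton$ over the view schema $\Sigma_\views$ that captures $\{\views(Q_i) \mid i \in \omega\}$. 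Capturing (rather than recognizing) is enough here: condition (1) of Proposition~\ref{prop:backward} is the ``$\subseteq$'' direction of capture, and condition (2) follows because every code accepted by $\automaton$ decodes to the view image of some $Q_i$, which trivially maps into itself. Finally, applying the backward map to $\automaton$ yields $Q_\automaton$, and Proposition~\ref{prop:backward}—invoking Lemma~\ref{lem:homdet} to upgrade monotonic to homomorphic determinacy—gives that $\inst \models Q$ iff $\views(\inst) \models Q_\automaton$, i.e.\ $Q_\automaton$ is a Datalog rewriting. The size bound (double-exponential in $|Q|$, exponential in $|\views|$) is read off from the automaton sizes in Propositions~\ref{prop:cqappr:regular} and~\ref{prop:datalog}.

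The main obstacle I anticipate is making condition (2) of Proposition~\ref{prop:backward} hold cleanly after projection: projecting away $\Sigma_\baseschema$ can in principle produce accepted codes whose decoded view instances do not arise \emph{exactly} as some $\views(Q_i)$ but only receive a homomorphism from one, so I must check that the captured class sits inside the homomorphic-closure set required by the second inclusion—this is precisely where the weaker ``capture up to homomorphism'' formulation and Lemma~\ref{lem:homdet} earn their keep. For the MDL refinement, the extra ingredient is the normalization (Proposition~\ref{prop:normalise}) together with Lemma~\ref{lem:normalised:treedecomp}, which guarantees the approximations admit tree decompositions with $\tspan \le 2$; this controlled treespan is what lets the backward map produce a \emph{monadic} program rather than a general Datalog one, so I would verify that the width-$k$ codes arising from a normalized MDL query correspond to monadic IDBs and that the backward construction respects arity one.
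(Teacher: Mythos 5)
Your overall pipeline is exactly the paper's: capture the approximations with Proposition~\ref{prop:cqappr:regular}, intersect with the automaton of Proposition~\ref{prop:datalog} for the combined view program $\Pi_\views$, project onto $\Sigma_\views$ with Proposition~\ref{prop:restriction}, and feed the result to Proposition~\ref{prop:backward}, using Lemma~\ref{lem:homdet} to upgrade monotonic to homomorphic determinacy. But one step, as you literally wrote it, is false: Proposition~\ref{prop:datalog} recognizes codes of instances $\F$ with $\F \models \Pi_\views$, i.e.\ models of the rules read as Horn clauses, so the view predicates in an accepted $\F$ need only be \emph{closed} under the rules --- they may strictly contain $\doutput{Q_V}{\F \restr \Sigma_{\baseschema}}$. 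Hence your automaton does not ``force each $V$ to be interpreted by $\doutput{Q_V}{\cdot}$'', and your justification of condition (2) (``every code accepted by $\automaton$ decodes to the view image of some $Q_i$, which trivially maps into itself'') does not go through. The paper embraces the over-approximation instead of fighting it: the captured class is $\viewimageclass = \{\jnst \mid \tw(\jnst)\le k,\ \views(Q_i) \subseteq \jnst \text{ for some } i\}$, condition (2) of Proposition~\ref{prop:backward} holds because the inclusion $\views(Q_i) \subseteq \jnst$ is itself the required homomorphism, and condition (1) holds because the least fixpoint $\fpeval{\Pi_\views}{Q_i}$ is one particular model of $\Pi_\views$ whose treewidth is still $\le k$ by Lemma~\ref{lem:boundingtw:1} (this is where frontier-guardedness is actually used), so it genuinely occurs among the captured instances. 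Your closing paragraph anticipates essentially this repair, so the gap is localized rather than structural --- but note that the danger is not that projection creates spurious codes (projection leaves the $\Sigma_\views$-part of the decoding untouched); it is the Horn-clause superset semantics of Proposition~\ref{prop:datalog} itself.

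On the MDL refinement you cite the wrong structural ingredient: Lemma~\ref{lem:normalised:treedecomp} ($\tspan \le 2$) is the hypothesis feeding the radius-based bound of Lemma~\ref{lem:boundingtw:2}, which is needed for CQ views in Theorem~\ref{thm:moncq-rewriting} and plays no role here (for $\fgdatalog$ views, Lemma~\ref{lem:boundingtw:1} alone gives the treewidth bound). What makes the backward map monadic in the paper's argument is that approximations of MDL queries admit \emph{frontier-one} decompositions --- neighboring bags share at most one element --- and that applying frontier-guarded views and projecting do not enlarge these interfaces; one then restricts the automaton to frontier-one codes and modifies the backward mapping so that each state $q$ yields a \emph{unary} IDB $P_q(x_1)$ carrying only the single shared element. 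Your instinct to ``verify that the backward construction respects arity one'' is right, but the property you must verify is the size-one interface between bags, not the treespan bound.
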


\begin{proof}
Consider the class $\C$ of canonical databases of CQ approximations of $Q$. By Proposition~\ref{prop:cqappr:regular},
$\C$ is $k$-regular for some $k = O(|Q|)$ and is captured by an NTA $\automaton'$ of at most exponential size in $|Q|$.
By Lemma \ref{lem:boundingtw:1}, the treewidth of the class 
of view images of $\C$ is also bounded by $k$.
We claim that there is an automaton $\automaton$ that captures
$\viewimageclass = \{ \jnst \mid \tw(\jnst) \le k, \views(Q_i) \subseteq \jnst \text{ for some } i \in \omega\}$ of size
at most double-exponential in $|Q|$ and single-exponential in $|\views|$ (``of required size'' below)
and argue that it satisfies the conditions of  Proposition \ref{prop:backward}.

Without loss of generality we assume that the sets of IDBs of programs for different views are disjoint, and that their goal predicates are
identical with the view predicates. Denote by $\Pi_{\views}$ the union of all rules in Datalog queries in $\views$. Note that by definition, for any instance $\inst$,
the restriction of $\fpeval{\Pi_{\views}}{\inst}$ on the view signature is exactly $\views(\inst)$.

By Proposition~\ref{prop:datalog}, there is an NTA $\automaton^{\Pi_\views}$ of required size which
recognizes all codes of $\{\F \mid \F \models \Pi_{\views}, \tw(\F) \le k \}$. Therefore the class
$\mathbb{F} = \{\F \mid \F \restr \Sigma_{\baseschema} \in \C, \F \models \Pi_{\views}, \tw(\F) \le k \}$
is captured by the intersection of $\automaton'$ and $\automaton^{\Pi_\views}$, which is also of required size.
Observe that $\viewimageclass$ is the projection of $\mathbb{F}$ on the signature of view predicates
and so $\viewimageclass$ is captured by some NTA $\automaton$ of required size by Proposition~\ref{prop:restriction}.

By Lemma \ref{lem:homdet}, $Q$ is homomorphically determined
over $\views$.  
Since $\{\views(Q_i) \mid Q_i \text{ is a CQ approximation of } Q\}$  is the projection of  $\{ \fpeval{\Pi_{\views}}{\inst} \mid \inst \in \C \}$
on the signature of view predicates,  we have:
\begin{multline*}
 \{\fpeval{\Pi_{\views}}{\inst} \mid \inst \in \C  \}
\subseteq \\
\{\F \mid \F \restr \Sigma \in \C, \F \models \Pi_{\views}, \tw(\F) \le k \} \subseteq \\
\{\F \mid \F \restr \Sigma \in \C, \F \models \Pi_{\views} \}, 
 \end{multline*}
The inclusions above are preserved when projecting to the signature of view predicates. 
From this we can verify
that the condition of Proposition \ref{prop:backward} holds for $\automaton$.
Now applying  Proposition \ref{prop:backward}, we conclude that
$Q$ is Datalog rewritable, and that the rewriting is of required size.

If $Q$ is MDL the construction can be refined to produce an MDL rewriting;
see the appendix for details.
\end{proof}

We can use the same technique in the setting where the views are combinations of
Monadic Datalog and CQs, while the query is Monadic Datalog, using 
normalization  (Lemma \ref{lem:normalised:treedecomp}) and the bound of
Lemma \ref{lem:boundingtw:2}. Normalization is used to enforce the
bound on $\tspan$ required
 in Lemma \ref{lem:boundingtw:2}. Although Lemma \ref{lem:boundingtw:2} requires connectivity, 
we can show that disconnected views can be replaced by connected ones.

\begin{theorem} \label{thm:moncq-rewriting} Suppose $Q$ is a normalized Monadic Datalog
query and $\views$ is a collection of Monadic Datalog and  CQ views. 
If $Q$ is monotonically determined by $\views$, then
$Q$ is rewritable over $\views$ in Datalog. The size of the rewriting is at most double-exponential in $K = O(|Q|^{|\views|})$.
\end{theorem}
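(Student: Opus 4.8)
The plan is to reuse the forward/backward automata argument from the proof of Theorem~\ref{thm:frgd-rewriting}, changing only the treewidth accounting: $\fgdatalog$ views preserve treewidth, but CQ views do not, so the new work is to recover a uniform treewidth bound on the view images of the approximations of $Q$. The two tools that make this possible are normalization of $Q$, which via Lemma~\ref{lem:normalised:treedecomp} supplies tree decompositions of small $\tspan$, and Lemma~\ref{lem:boundingtw:2}, which turns such decompositions into a treewidth bound for connected CQ view images.

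First I would reduce to the case where every CQ view is connected, as required by Lemma~\ref{lem:boundingtw:2}; the MDL views are unary and cannot blow up treewidth, so they need no preprocessing. For a disconnected CQ view $V$ I would introduce the CQ views $V_1, \dots, V_m$ induced by its connected components. Since $V(\vec x_1, \dots, \vec x_m) \equiv V_1(\vec x_1) \wedge \cdots \wedge V_m(\vec x_m)$ is a monotone (indeed conjunctive) combination of the $V_j$, replacing $V$ by the $V_j$ yields a family $\views'$ to which monotone determinacy of $Q$ transfers, and a Datalog rewriting over $\views'$ can be pulled back to one over $\views$. I expect this last pull-back, and in particular the degenerate cases in which some component admits no match, to be the most delicate point of the reduction.

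Next I would establish the uniform treewidth bound. As $Q$ is normalized MDL, Lemma~\ref{lem:normalised:treedecomp} yields $k = O(|Q|)$ such that every CQ approximation $Q_i$ of $Q$ has a tree decomposition $\TD$ of width $k$ with $\tspan(\TD) \le 2$. Applying the (now connected) CQ views to $Q_i$, Lemma~\ref{lem:boundingtw:2} produces a decomposition of the CQ view image of width at most $k' = \frac{k(k^{r+1}-1)}{k-1}$, where $r$ is the greatest radius of a CQ view; since $r = O(|\views|)$ this is $k' = O(|Q|^{|\views|}) =: K$. This decomposition refines $\TD$ and hence also decomposes $Q_i$ itself, and as the MDL views are unary and by Lemma~\ref{lem:boundingtw:1} leave the treewidth unchanged, the MDL view facts fold into its bags without raising the width. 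Thus the full instance consisting of $Q_i$ together with $\fpeval{\Pi_\views}{Q_i}$ has treewidth at most $K$, where $\Pi_\views$ is the union of the MDL view programs and the one-rule programs of the connected CQ views.

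Finally I would run the automata construction on width-$K$ codes verbatim from Theorem~\ref{thm:frgd-rewriting}. Proposition~\ref{prop:cqappr:regular} gives an NTA $\automaton'$ capturing the class $\C$ of canonical databases of CQ approximations of $Q$, which I regard as an automaton over width-$K$ codes (valid since $k \le K$); Proposition~\ref{prop:datalog} gives an NTA $\automaton^{\Pi_\views}$, doubly exponential in $K$, recognizing $\{\F \mid \F \models \Pi_\views, \tw(\F) \le K\}$. Intersecting the two and projecting onto the view signature via Proposition~\ref{prop:restriction} produces an automaton $\automaton$ capturing $\viewimageclass = \{\jnst \mid \tw(\jnst) \le K,\ \views(Q_i) \subseteq \jnst \text{ for some } i\}$. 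Using that monotone determinacy implies homomorphic determinacy (Lemma~\ref{lem:homdet}), I would check the hypotheses of Proposition~\ref{prop:backward} for $\automaton$ and take the backward-map query $Q_{\automaton}$ as the rewriting; its size is doubly exponential in $K = O(|Q|^{|\views|})$, as claimed. The main obstacle is the middle step: correctly merging the $\tspan$-bounded decomposition, the connected-CQ blow-up of Lemma~\ref{lem:boundingtw:2}, and the unary MDL facts into one width-$K$ decomposition, together with the disconnected-to-connected reduction.
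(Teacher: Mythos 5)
Most of your plan coincides step for step with the paper's proof: normalization plus Lemma~\ref{lem:normalised:treedecomp} for the $\tspan\le 2$ decompositions, Lemma~\ref{lem:boundingtw:2} for the bound $K=O(|Q|^{|\views|})$, Lemma~\ref{lem:boundingtw:1} for the MDL views, and then the capture/recognize/intersect/project/backward-map pipeline of Propositions~\ref{prop:cqappr:regular}, \ref{prop:datalog}, \ref{prop:restriction} and \ref{prop:backward} together with Lemma~\ref{lem:homdet}. The genuine gap is in the one place where you deviate: the disconnected-to-connected reduction, which you correctly flag as delicate but do not resolve, and which in your bare-components form actually fails. Taking $V_j:=Q_j$ (the components themselves) is fine in the forward direction — since $V(\inst)$ is the product of the $V_j(\inst)$, componentwise containment of $\views'$-images implies containment of $\views$-images, so monotonic determinacy transfers to $\views'$. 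But the pull-back breaks: $V_j$ is \emph{not} recoverable from $V$, because the projection $\pi_j(V(\inst))$ equals $V_j(\inst)$ only when every other component of $V$ has a match in $\inst$, and is empty otherwise. Substituting projections of $V$ for the $V_j$ in a Datalog rewriting $R'$ over $\views'$ therefore evaluates $R'$ on a censored image, and the rewriting produced by Proposition~\ref{prop:backward} is then incomplete. Concretely: take unary $S,T,B$ and binary $E$, let $Q$ be the MDL query asserting an $E$-path from an $S$-node to a $T$-node, and let $\views$ consist of atomic views for $E,S,T$ together with the disconnected view $V(x,y):=S(x)\datalogwedge B(y)$. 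Then $Q$ is monotonically determined by $\views$ (the atomic views alone suffice), the bare components are $V_1(x):=S(x)$ and $V_2(y):=B(y)$, and on an instance $\inst$ that is an $S$-to-$T$ path with no $B$-facts we have $Q(\inst)=\true$ but $\pi_1(V(\inst))=\emptyset$, while every CQ approximation $Q_i$ of $Q$ contains an $S$-fact and hence $V_1(Q_i)\neq\emptyset$; so no $\views'(Q_i)$ maps homomorphically into the censored image, and the substituted backward-map query answers $\false$.

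The paper avoids this by enriching each component with the existential closures of the others: for $V(\vec x,\vec y)=Q_1(\vec x)\datalogwedge Q_2(\vec y)$ it uses $V_1(\vec x):=Q_1(\vec x)\datalogwedge\exists\vec y\,Q_2(\vec y)$ and $V_2(\vec y):=(\exists\vec x\,Q_1(\vec x))\datalogwedge Q_2(\vec y)$. With this choice $V_j$ is \emph{exactly} the $j$-th projection of $V$ and $V$ is exactly the conjunction of the $V_j$, so the two view families are pointwise interdefinable by conjunctive operations over the view schema: determinacy transfers in both directions, and a Datalog rewriting over $\views'$ pulls back to one over $\views$ by literal substitution, with your degenerate case handled automatically (if some $Q_i$, $i\neq j$, has no match, then $V_j$ is empty, exactly matching the projection — in the example above, $V_1(x)=S(x)\datalogwedge\exists y\,B(y)$ is empty both on $\inst$ and on every approximation, so the homomorphism argument goes through). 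If you replace your component decomposition with this enriched one, the rest of your argument — including the treewidth bookkeeping, which matches the paper's — stands as written.
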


The previous rewriting results involved restricting the views.
We now note that if we restrict the query to be a UCQ, monotonic determinacy
implies not only Datalog rewritability, but even UCQ rewritability, for arbitrary Datalog views:

\begin{proposition} \label{prop:cqquery}
For views $\views$ in arbitrary Datalog we have:\\
(1) if a CQ $Q$ is monotonically determined by $\views$, then there 
is a $CQ$-rewriting of $Q$ in terms of $\views$;\\
(2) if a UCQ $Q$ is monotonically determined by $\views$, then there 
is a UCQ-rewriting of $Q$ in terms of $\views$.\\
In both cases the rewritings are  polynomial size in $|Q|$ and $|\views|$.
\end{proposition}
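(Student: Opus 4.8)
The plan is to take as the candidate rewriting the result of \emph{applying the views to the canonical database of the query and freezing the outcome back into a CQ}. For a Boolean CQ $Q$, let $R$ be the Boolean CQ over $\Sigma_\views$ whose canonical database is the view image $\views(\canondb(Q))$: each element of $\adom(\views(\canondb(Q)))$ becomes an existentially quantified variable and each view fact becomes a conjunct. For a UCQ $Q = Q_1 \vee \cdots \vee Q_m$, set $R = R_1 \vee \cdots \vee R_m$, where $R_j$ is built in the same way from $\views(\canondb(Q_j))$. By construction $R$ is a CQ (resp.\ UCQ) over the view schema, so it suffices to prove $\inst \models Q \iff \views(\inst) \models R$ for every base instance $\inst$. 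Since a CQ or UCQ is in particular a Datalog query, Lemma~\ref{lem:homdet} lets me upgrade monotonic determinacy of $Q$ over $\views$ to \emph{homomorphic} determinacy over $\views$; this is the only place the hypothesis enters.

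For the completeness direction ($\inst \models Q \Rightarrow \views(\inst) \models R$), suppose $\inst \models Q_j$, witnessed by a homomorphism $g \colon \canondb(Q_j) \to \inst$. Since each view definition $Q_V$ is a Datalog query, it is preserved under homomorphisms, so $V(\vec c) \in \views(\canondb(Q_j))$ implies $V(g(\vec c)) \in \views(\inst)$. Thus $g$ restricts to a homomorphism $\views(\canondb(Q_j)) \to \views(\inst)$, i.e.\ $\canondb(R_j)$ maps into $\views(\inst)$, giving $\views(\inst) \models R_j$ and hence $\views(\inst) \models R$. This direction uses nothing beyond monotonicity of the views.

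For the soundness direction ($\views(\inst) \models R \Rightarrow \inst \models Q$), suppose $\views(\inst) \models R_j$, witnessed by a homomorphism $h \colon \canondb(R_j) = \views(\canondb(Q_j)) \to \views(\inst)$. Note that $\canondb(Q_j) \models Q$, since the identity witnesses the disjunct $Q_j$. Applying homomorphic determinacy with $\inst_1 = \canondb(Q_j)$, $\inst_2 = \inst$, and the homomorphism $h$ between their view images yields $\inst \models Q$, as required.

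For the size bound, $\canondb(Q_j)$ has at most $|Q_j|$ elements, so $\views(\canondb(Q_j))$ contains at most $\sum_{V \in \Sigma_\views} |\adom(\canondb(Q_j))|^{\arity(V)}$ facts, which is polynomial in $|Q|$ and $|\views|$ for bounded arity; hence $R$ is of polynomial size (its facts being obtained by evaluating the Datalog views on the fixed instance $\canondb(Q_j)$). I expect the only delicate point to be the UCQ case: one cannot argue disjunct by disjunct, since monotonic determinacy of the union $Q$ need not entail monotonic determinacy of any individual $Q_j$. The soundness argument circumvents this by invoking homomorphic determinacy of the \emph{whole} query $Q$ together with $\canondb(Q_j) \models Q$, so that no per-disjunct determinacy is needed.
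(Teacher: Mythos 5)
Your proof is correct and follows essentially the same route as the paper's: the candidate rewriting $\bigvee_j \views(\canondb(Q_j))$, the homomorphism-preservation argument for completeness, and the appeal to Lemma~\ref{lem:homdet} with $\inst_1=\canondb(Q_j)$, $\inst_2=\inst$ for soundness are all exactly the paper's argument. Your closing observation about the whole-query (rather than per-disjunct) use of homomorphic determinacy, and your caveat that the polynomial size bound presumes bounded view arity, are both accurate refinements of what the paper states.
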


\begin{proof}
This can be seen as a  ``degenerate'' variant of the forward-backward technique, which
is well-known in the DB and KR literature \cite{NSV,afratideterminacy,lutz2018query}.
Let  $Q$ be the disjunction of $Q_i: i \in S$.
Let $Q'=\bigvee_{i \in S} \views(Q_i)$ denote the query  that
holds on an instance $\inst'$ of the view schema exactly when for some
$i \in S$, there is a homomorphism of $\views(Q_i)$ into  $\inst'$.
Equivalently, this is the query obtained by applying the views to each canonical database
of a disjunct of $Q$, and then interpreting the resulting facts as a query.

We claim that if $Q$ is monotonically determined by $\views$, then $Q'$ is a rewriting of $Q$. 
In particular, if $Q$ is a CQ, then $Q'$ is just a CQ.
We need to show that for each instance $\inst$, $\inst \models Q$ iff $\views(\inst) \models Q'$. 

($\Rightarrow$) If  some  $Q_k$ maps into $\inst$, then $\views(Q_k)$ maps into $\views(\inst)$.

($\Leftarrow$) Suppose  some $\views(Q_k)$ maps into $\views(\inst)$. 
Monotonic determinacy implies homomorphic determinacy by Lemma \ref{lem:homdet}
In the definition of homomorphic determinacy, take $\inst_1 = Q_k$ and $\inst_2 = \inst$. It is easy to check that $\views(\inst_1)$ maps into $\views(\inst_2)$ and $\inst_1 \models Q$.
It follows that $\inst_2 \models Q$, in other words, that $\inst \models Q$.

\end{proof}

\section{Decidability} \label{sec:decide}

We move from rewritability results to decision procedures
for monotonic determinacy.

\myparagraph{Monotonic determinacy testing procedure}
Our decidability results will depend
upon an characterization of monotonic determinacy, which we  review here.
Given a Datalog query $Q$ and Datalog views $\views$, a \emph{canonical test for Monotonic
Determinacy} is a tuple $(Q_i, D')$ that consists of:

\begin{compactitem}
\item A CQ $Q_i$ that is a CQ-approximation of $Q$
\item An instance $D'$ of the input schema formed by taking each
fact $F=V(\vec c)$ in $\views(Q_i)$, choosing a CQ approximation  $Q'$ 
of $Q_V$, and replacing $F$ with fresh elements and facts from $Q'$
that witness $V(\vec c)$. That is, firing the rule
$\forall \vec x~ V(\vec x) \rightarrow Q'(\vec x)$. In this case we say that
\emph{$D'$ is obtained from $\views(Q_i)$ by applying inverses of view definitions.}
\end{compactitem}

Such a test \emph{succeeds} if $D'$ satisfies $Q$. It is easy to see that
monotonic determinacy is characterized using tests:

\begin{lemma} \label{lem:pipeline} $Q$ is monotonically determined over $\views$ if and only if
every test succeeds.
\end{lemma}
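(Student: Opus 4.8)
The plan is to prove both directions directly from the definitions, using the monotonicity of Datalog queries and views under homomorphisms together with the CQ-approximation property stated earlier (that $\inst \models Q$ holds iff the canonical database of some CQ approximation of $Q$ maps homomorphically into $\inst$). Throughout I would work in the Boolean case, following the paper's convention, and identify each CQ approximation with its canonical database.

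For the forward direction, assume $Q$ is monotonically determined over $\views$ and fix an arbitrary canonical test $(Q_i, D')$; I must show $D' \models Q$. First I would observe that $\views(Q_i) \subseteq \views(D')$: each fact $V(\vec c) \in \views(Q_i)$ was replaced in $D'$ by a CQ approximation $Q'$ of $Q_V$ whose free variables are identified with $\vec c$, so this witness maps into $D'$ and hence $V(\vec c) \in \views(D')$. Taking $\inst_1 = Q_i$ and $\inst_2 = D'$, monotonic determinacy yields $\outcome{Q}{\inst_1} \subseteq \outcome{Q}{\inst_2}$. Since $Q_i$ is a CQ approximation of $Q$, its canonical database satisfies $Q$, so $\inst_1 \models Q$, and therefore $D' = \inst_2 \models Q$; the test succeeds.

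For the converse, assume every test succeeds and take $\inst_1, \inst_2$ with $\views(\inst_1) \subseteq \views(\inst_2)$ and $\inst_1 \models Q$; I must show $\inst_2 \models Q$. By the CQ-approximation property there is a CQ approximation $Q_i$ of $Q$ and a homomorphism $h : Q_i \to \inst_1$. Because view definitions are monotone, $h$ restricts to a homomorphism $\views(Q_i) \to \views(\inst_1)$, and composing with the inclusion $\views(\inst_1) \subseteq \views(\inst_2)$ gives a homomorphism $\bar h : \views(Q_i) \to \views(\inst_2)$ sending each frontier tuple $\vec c$ to $h(\vec c)$. The crucial step is to build, from this data, the specific canonical test that witnesses the conclusion: for each fact $V(\vec c) \in \views(Q_i)$ we have $V(h(\vec c)) \in \views(\inst_2)$, so I select a CQ approximation $Q'$ of $Q_V$ whose witnessing homomorphism lands in $\inst_2$ with free variables mapped to $h(\vec c)$, and I use exactly these approximations to form $D'$.

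Finally I would assemble a homomorphism $g : D' \to \inst_2$ by sending each frontier element $\vec c$ according to $\bar h$ and each block of fresh elements according to the witnessing homomorphism chosen for its fact; these maps agree on the shared frontier, and the fresh elements of distinct facts are disjoint, so $g$ is well defined. Since $(Q_i, D')$ is a canonical test it succeeds by hypothesis, giving $D' \models Q$, and monotonicity of $Q$ along $g$ then yields $\inst_2 \models Q$. I expect the main obstacle to be precisely this last construction: among the many canonical tests built from $Q_i$ one must choose the one whose inverse-view witnesses are drawn from $\inst_2$, and then verify that the per-fact witness maps glue into a single homomorphism into $\inst_2$, which is exactly where disjointness of the fresh elements and consistency on the frontier are needed.
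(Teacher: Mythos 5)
Your proof is correct and follows essentially the same route as the paper's: the forward direction is the identical observation that $\views(Q_i)\subseteq\views(D')$ plus monotonic determinacy, and the backward direction builds the same tailored test whose inverse-view witnesses are drawn from $\inst_2$ and glues the per-fact witness maps (agreeing on the frontier, disjoint on fresh nulls) into a homomorphism $D'\to\inst_2$. The only cosmetic difference is that the paper formalizes the choice of witnesses via proof terms for $\inst_2\models V(\alpha(\vec c))$ and transforms them top-down, whereas you invoke the CQ-approximation property of the view definitions directly --- an equivalent formulation that the paper itself uses in its proof of Lemma~\ref{lem:homdet}.
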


We show that monotonic determinacy is decidable for some classes of views
by bounding the treewidth of all instances $D'$ that are the second component of some test.

\begin{theorem} \label{thm:decidemondetrewriting:frontierguarded} Suppose $Q$ and
$\views$ are Frontier-guarded Datalog queries.
Then there is an algorithm that decides if  $Q$ is monotonically determined by $\views$ in $\twoexp$.
\end{theorem}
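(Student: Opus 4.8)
The plan is to decide the complement: whether some canonical test fails. By Lemma~\ref{lem:pipeline}, $Q$ is monotonically determined over $\views$ iff every test $(Q_i, D')$ succeeds, i.e.\ $D' \models Q$; hence it suffices to decide whether there is a test whose instance satisfies $D' \not\models Q$. I would capture the set $\mathbb{D}$ of all instances $D'$ arising as the second component of some test by an NTA $\automaton_{test}$, recognize the class $\{D' \mid D' \not\models Q,\ \tw(D') \le k\}$ by an NTA $\automaton_{\neg Q}$ using Proposition~\ref{prop:datalog:notaccept}, and test the product automaton for non-emptiness. Since $\automaton_{test}$ only \emph{captures} $\mathbb{D}$ while $\automaton_{\neg Q}$ \emph{recognizes} its class, the remark after Proposition~\ref{prop:datalog:notaccept} gives that a code is accepted by the product exactly when it decodes to some $D' \in \mathbb{D}$ with $D' \not\models Q$; thus emptiness of the product is equivalent to monotonic determinacy, and the emptiness test is polynomial in the automaton size.

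The first ingredient is a uniform treewidth bound $k$ on every $D'$, and this is where frontier-guardedness of \emph{both} $Q$ and $\views$ is used. As for normalized MDL (Lemma~\ref{lem:normalised:treedecomp}), the CQ-approximations of a $\fgdatalog$ query have treewidth bounded by the maximal rule-body size, since in each rule the head variables co-occur in a single body atom, so the frontier shared between a rule and the approximations substituted for its intensional atoms is guarded and can be placed in a single bag. Consequently $Q_i$ has treewidth $O(|Q|)$, and by Lemma~\ref{lem:boundingtw:1} together with restriction to view predicates, $\views(Q_i)$ has treewidth $O(|Q|+|\views|)$. Replacing a view fact $V(\vec c)$ by a CQ-approximation of $Q_V$ attaches, at a bag containing $\vec c$, a tree decomposition of the approximation whose root bag carries the guarded frontier $\vec c$; since the internal expansion elements are fresh, gluing preserves the connectivity condition, and one obtains $\tw(D') \le k = O(|Q|+|\views|)$.

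The technical heart is the construction of $\automaton_{test}$, where I would build a combined structure $\F^{*}$ over the base schema $\Sigma_{\baseschema}$, the view schema $\Sigma_{\viewschema}$ together with the IDBs of $\Pi_{\views}$, and a fresh copy $\hat\Sigma_{\baseschema}$ holding the inverse expansions, and then project to $\hat\Sigma_{\baseschema}$ by Proposition~\ref{prop:restriction}. I would take the product of three automata of treewidth $\le k$: one capturing that $\F^{*}\restriction\Sigma_{\baseschema}$ is a canonical database of a CQ-approximation of $Q$ (Proposition~\ref{prop:cqappr:regular}); one recognizing that $\F^{*}$ restricted to $\Sigma_{\baseschema}\cup\Sigma_{\viewschema}\cup\mathrm{IDB}_{\views}$ is a model of $\Pi_{\views}$ (Proposition~\ref{prop:datalog}), which forces the view facts to include $\views(Q_i)$; and one, obtained by applying the forward map of Proposition~\ref{prop:cqappr:regular} to each view program $Q_V$ renamed to $\hat\Sigma_{\baseschema}$ and driven off the present view facts, capturing that $\F^{*}\restriction\hat\Sigma_{\baseschema}$ is a union of view-approximations glued at the guarded frontiers. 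Correctness of relaxing from the exact view image to arbitrary models, and from exact expansions to unions dominating them, rests on monotonicity of $Q$: any captured $D'$ is a superset of a genuine test instance built from the same $Q_i$ with the same approximation choices, so $D'\not\models Q$ implies that the genuine test also fails, while genuine tests are themselves captured. Tallying the sizes --- $\automaton_Q$ and the view forward maps are single-exponential, the $\Pi_{\views}$-recognizer and $\automaton_{\neg Q}$ are double-exponential in $k$ --- the product is double-exponential and the emptiness test runs in $\twoexp$.

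The step I expect to be the main obstacle is the construction and correctness proof of the expansion component of $\automaton_{test}$: it must compose the computation of the view image (which is a global homomorphism condition on $Q_i$) with the per-fact inverse expansion into one bounded-treewidth automaton, ensuring that every accepted $\hat\Sigma_{\baseschema}$-structure genuinely decodes to (a superset of) a legitimate test instance rather than to some spurious base instance. The bounded-treewidth tree-code framework is what makes this possible, but verifying that the frontier gluing is faithfully enforced by the state-passing of the product automaton, and that the monotonicity reduction exactly aligns the captured supersets with genuine tests, is the delicate part.
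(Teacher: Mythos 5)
Your proposal is correct and follows essentially the same route as the paper's proof: both reduce the problem, via Lemma~\ref{lem:pipeline} and monotonicity of $Q$, to emptiness of the product of an automaton capturing the bounded-treewidth (supersets of) test instances --- assembled from Propositions~\ref{prop:cqappr:regular}, \ref{prop:datalog} and \ref{prop:restriction} together with per-view forward automata verifying the inverse expansions of view facts at guarded frontiers --- with the automaton of Proposition~\ref{prop:datalog:notaccept} recognizing failure of $Q$, yielding the \twoexp{} bound. Your fresh schema copy $\hat\Sigma_{\baseschema}$ for the expansion layer is only a cosmetic variant of the paper's class $ETEST(Q,\views)$, which applies the inverse rules while keeping the view atoms.
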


\begin{proof} 
In this proof the words ``of required size'' mean ``doubly-exponential in $Q$ and single-exponential in $\views$'',
$\C$ stands for the class of all CQ approximations of $Q$, $\Sigma_\views$ is the view signature
and $\Sigma_{\baseschema}$ is the initial signature.

We must check whether $Q$ holds on all tests. As observed in the proof of Theorem~\ref{thm:frgd-rewriting}, there is an integer $k = O(|Q|, |\views|)$
bounding the treewidth of all CQ approximations of $Q$ and views in $\views$.
Let $\mathbb{V} = \{ \F \restr \Sigma_\views \mid \F \restr \Sigma_{\baseschema}  \in \C, 
\tw(\F) \le k, \F \models \Pi_\views\}$.
As argued in the proof of Theorem~\ref{thm:frgd-rewriting}, $\mathbb{V}$ is $k$-regular and captured by an NTA $\automaton_\views$ of required size.

Since $Q$ is a monotone query, instead of checking whether all tests succeed, we will check an equivalent condition that $Q$ holds on the class
$ETEST(Q, \views)$ which consists of all instances $D'$ which can be obtained from an instance in $\mathbb{V}$ by applying inverses of view definitions
while keeping the atoms of the view signature. Note that the treewidth of all instances in $ETEST(Q, \views)$ is also bounded by $k$. By Proposition~\ref{prop:cqappr:regular},
for each view $\view$ with definition $Q_\view$ there exists an automaton $\automaton'_{\view}$ running
on codes $\T$ which for each atom $\view(\vec c)$ at a node $n$  checks whether
$n$  has a descendant $n'$ such that $n'$ contains $\vec{c}$ and the subtree of $\T$ rooted at $n'$ is a code of some CQ approximation of $Q_\view$.
It should be clear that the automaton $\automaton_{ETEST}$  obtained as the product of $\automaton_\views$ and $\automaton'_{\view}$ for all $\view \in \views$
captures $ETEST(Q, \views)$.

By Proposition~\ref{prop:datalog:notaccept}, there is an NTA $\automaton''$ of required size 
which recognizes those codes which do not satisfy $Q$. 
So to check if $Q$ is monotonically determined by $V$ we construct the intersection 
of $\automaton_{ETEST}$ and $\automaton''$ (which is of required size) and check if it is empty. The latter check is linear in the size of the automaton.
\end{proof}

Using MDL normalization and the treewidth bounds of 
Lemma \ref{lem:boundingtw:2} we can
use the same proof technique to 
extend this to a mix of CQ and Frontier-guarded Datalog views, provided that $Q$ is in Monadic Datalog.

\begin{theorem} \label{thm:decidemondetrewritingmdlandcqs} Suppose $Q$ is in Monadic Datalog,
and $\views$ is a collection of CQ and Frontier-guarded Datalog views.
Then there is an algorithm that decides if  $Q$ is monotonically determined by $\views$ in $\threeexp$.
\end{theorem}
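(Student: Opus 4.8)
The plan is to adapt the proof of Theorem~\ref{thm:decidemondetrewriting:frontierguarded} almost verbatim, replacing the treewidth bound derived from frontier-guardedness (Lemma~\ref{lem:boundingtw:1}) with the combination of Monadic Datalog normalization (Proposition~\ref{prop:normalise}, Lemma~\ref{lem:normalised:treedecomp}) and the locality bound for connected CQ views (Lemma~\ref{lem:boundingtw:2}). First I would preprocess: by Proposition~\ref{prop:normalise} I may assume $Q$ is a normalized Monadic Datalog query, and by the remark preceding this theorem I may assume all CQ views are connected (disconnected views are replaced by connected ones). The reason normalization matters is that it is exactly what lets us invoke Lemma~\ref{lem:boundingtw:2}: Lemma~\ref{lem:normalised:treedecomp} gives a tree decomposition of every CQ approximation of $Q$ of some width $k_0 = O(|Q|)$ with $\tspan \le 2$, which is the hypothesis needed so that applying the connected CQ views blows the treewidth up only to a bounded $k_1 = \frac{k_0(k_0^{r+1}-1)}{k_0-1}$, where $r$ is the largest radius of a CQ view.

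Next I would fix the overall treewidth bound $k$ to be the maximum of $k_1$ and the bound arising from the Frontier-guarded views (which, by Lemma~\ref{lem:boundingtw:1}, do not increase treewidth). With $k$ in hand, the automata-theoretic pipeline of Theorem~\ref{thm:decidemondetrewriting:frontierguarded} runs through: using Proposition~\ref{prop:cqappr:regular} I capture the class $\C$ of canonical databases of CQ approximations of $Q$ by an NTA; intersecting with the NTA recognizing models of $\Pi_\views$ (Proposition~\ref{prop:datalog}) and projecting onto the view signature (Proposition~\ref{prop:restriction}) yields $\automaton_\views$ capturing $\mathbb{V} = \{\F\restr\Sigma_\views \mid \F\restr\Sigma_{\baseschema}\in\C,\ \tw(\F)\le k,\ \F\models\Pi_\views\}$. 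I then build the product $\automaton_{ETEST}$ with the per-view approximation-checking automata $\automaton'_{\view}$ exactly as before, so that $\automaton_{ETEST}$ captures $ETEST(Q,\views)$; by Lemma~\ref{lem:pipeline} it suffices to check that $Q$ holds on every instance in this class. Finally, using Proposition~\ref{prop:datalog:notaccept} I recognize the codes where $Q$ fails, intersect with $\automaton_{ETEST}$, and test emptiness. The extra exponential (yielding $\threeexp$ rather than $\twoexp$) comes precisely from the blow-up $k_1 = K = O(|Q|^{|\views|})$ in the treewidth bound, since the automaton sizes in Propositions~\ref{prop:cqappr:regular} and~\ref{prop:datalog} are (doubly-)exponential in $k$.

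The main obstacle, and the only place where the argument genuinely differs from Theorem~\ref{thm:decidemondetrewriting:frontierguarded}, is justifying the uniform treewidth bound when the views are a mix of CQ and Frontier-guarded Datalog. Lemma~\ref{lem:boundingtw:2} is stated for connected \emph{CQ} views applied to an instance with a decomposition of $\tspan\le 2$, whereas here the Frontier-guarded views act too. I would handle this by applying the view definitions in the right order: Lemma~\ref{lem:boundingtw:1} shows the Frontier-guarded part preserves treewidth (it only expands with intensional atoms guarded by existing extensional atoms), and the CQ part is governed by Lemma~\ref{lem:boundingtw:2}; since $Q$ is normalized I get the $\tspan\le2$ hypothesis on the approximations of $Q$ to feed into the CQ bound. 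Care is needed to confirm that the $\tspan\le 2$ property is available on the relevant decomposition and that reducing disconnected CQ views to connected ones does not disturb the bound, but these are routine once the normalization is in place. The rest of the construction, including the size accounting that produces the $\threeexp$ bound, is mechanical given the analogous computation in the previous theorem.
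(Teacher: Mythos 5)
Your proposal is correct and matches the paper's own proof essentially step for step: the paper likewise follows the template of Theorem~\ref{thm:decidemondetrewriting:frontierguarded} via Theorem~\ref{thm:moncq-rewriting}, reducing disconnected CQ views to connected ones, using normalization (Lemma~\ref{lem:normalised:treedecomp}) together with Lemmas~\ref{lem:boundingtw:1} and~\ref{lem:boundingtw:2} to get the treewidth bound $K = O(|Q|^{|\views|})$, capturing $\mathbb{V}$ and $ETEST(Q,\views)$ by the same product construction with per-view approximation-checking automata, and concluding with Proposition~\ref{prop:datalog:notaccept} and an emptiness test. Your complexity accounting (all automata doubly exponential in $K$, hence \threeexp{} in the input) is also exactly the paper's.
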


The previous cases of decidability required restricting
the views. We now observe that if we only restrict $Q$ to be a CQ, then we can reduce  monotonic determinacy
to checking equivalence between a recursive and a non-recursive query, the one created by
the ``simple forward backward method'' of Proposition \ref{prop:cqquery}.

\begin{theorem}\label{decidability:cq:equivalence}
If $Q$ is a CQ and $\views$ is a collection of Datalog views, then  the problem of monotonic determinacy 
of $Q$ over $\views$ is decidable in $\twoexp$.
\end{theorem}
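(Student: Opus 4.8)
The plan is to reduce monotonic determinacy of a CQ $Q$ over Datalog views $\views$ to an equivalence check between two queries over the view schema, one recursive and one non-recursive, both of which we can build explicitly and whose equivalence is decidable in the required time bound. The starting point is the ``simple forward-backward'' construction of Proposition~\ref{prop:cqquery}: since $Q$ is a single CQ, the candidate UCQ-rewriting is the query $Q' = \views(Q)$ over $\Sigma_\views$, which holds on a view-schema instance $\jnst'$ exactly when $\views(\canondb(Q))$ maps homomorphically into $\jnst'$. By that proposition, if $Q$ \emph{is} monotonically determined then $Q'$ is a genuine rewriting, i.e.\ for all $\inst$ we have $\inst \models Q \iff \views(\inst) \models Q'$. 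So the whole question becomes: is $Q'$ actually a rewriting? And the only thing that can fail is the direction establishing soundness on inputs that are genuine view images.

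First I would make the characterization via canonical tests (Lemma~\ref{lem:pipeline}) concrete for the CQ case. Because $Q$ is itself a CQ, the only CQ-approximation of $Q$ is $Q$ itself, so a canonical test is determined entirely by a choice, for each fact $V(\vec c) \in \views(Q)$, of a CQ-approximation $Q'_V$ of the view definition $Q_V$, together with the instance $D'$ obtained by firing the inverse view rules. Monotonic determinacy holds iff $D' \models Q$ for every such $D'$. The second query I would build is a Datalog query $\rewriting$ over $\Sigma_\views$ that, run on $\views(\inst)$, decides exactly whether $\inst \models Q$ for the ``worst-case'' instances witnessing each view fact --- concretely, I would use the backward map of Proposition~\ref{prop:backward} applied to an automaton capturing $\views$-images of approximations of $Q$, built exactly as in the proof of Theorem~\ref{thm:frgd-rewriting} and Theorem~\ref{thm:decidemondetrewriting:frontierguarded}. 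The key point is that this Datalog query $\rewriting$ is always a \emph{sound lower bound} on the monotone information recoverable from the views, and it is a correct rewriting of $Q$ precisely when $Q$ is monotonically determined. Thus $Q$ is monotonically determined iff the non-recursive query $Q'$ and the Datalog query $\rewriting$ are equivalent over the view schema.

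The reason this gives decidability in $\twoexp$ is that both queries live over the bounded-treewidth world: the treewidth of $\views$-images of the (single) approximation is bounded by some $k = O(|Q|,|\views|)$ via Lemmas~\ref{lem:boundingtw:1} and~\ref{lem:boundingtw:2} (or the general Datalog treewidth argument of Proposition~\ref{prop:datalog}), so $Q'$ is a fixed CQ while $\rewriting$ is a Datalog query recognized by an NTA of size doubly-exponential in $k$ and single-exponential in $|\views|$. One inclusion, $Q' \Rightarrow \rewriting$, always holds by construction, so I only need to test the reverse containment. Checking whether a Datalog query is contained in a fixed UCQ is decidable --- this is the classical boundedness-flavoured test that one performs by examining all CQ-approximations of the Datalog side up to a computable depth, or equivalently by intersecting the NTA for $\rewriting$ with the NTA recognizing codes that \emph{fail} $Q'$ (built from Proposition~\ref{prop:datalog:notaccept} applied to the fixed CQ $Q'$, viewed as a one-rule $\fgdatalog$ query) and testing emptiness. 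Emptiness of an NTA is linear in its size, and the product automaton is of doubly-exponential size, yielding the $\twoexp$ bound.

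The main obstacle is the soundness direction and the treewidth bound it rests on: unlike in Theorem~\ref{thm:decidemondetrewriting:frontierguarded}, here the \emph{views} are arbitrary Datalog, so I cannot appeal directly to frontier-guardedness (Lemma~\ref{lem:boundingtw:1}) to bound the treewidth of the instances $D'$ obtained by inverting the views. The resolution is that we never need to bound the treewidth of $D'$ itself; we only need to reason about the view image $\views(D')$ on the \emph{view schema}, and since $Q$ is a single fixed CQ there is a single canonical view image $\views(\canondb(Q))$ whose homomorphic behaviour we must compare against all possible inverse-expansions. I would therefore phrase the automaton construction entirely on the view signature, capturing $\{\jnst \mid \tw(\jnst)\le k,\ \views(Q)\subseteq \jnst\}$ exactly as in Theorem~\ref{thm:frgd-rewriting}, and carry out the containment test there. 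The delicate step that warrants care in the full proof is verifying that restricting attention to bounded-treewidth view images loses no generality for a \emph{monotone} query --- this is exactly the role played by the remark that captured classes behave well under the forward-backward method, and it must be checked that the general Datalog views still permit the NTA of Proposition~\ref{prop:datalog} to recognize $\{\F \mid \F \models \Pi_\views,\ \tw(\F)\le k\}$ at the claimed doubly-exponential size.
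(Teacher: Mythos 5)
Your opening move matches the paper's: by Proposition~\ref{prop:cqquery}, $Q'=\views(Q)$ is the only candidate rewriting (a CQ is its own sole approximation), so monotonic determinacy reduces to the question of whether $Q'$ is in fact a rewriting. But the decision procedure you build on top of this is circular. The Datalog query $\rewriting$ you obtain by backward-mapping an automaton capturing $\{\jnst \mid \tw(\jnst)\le k,\ \views(Q)\subseteq\jnst\}$ is, by the semantics of Proposition~\ref{prop:backward}, equivalent to the query ``$\views(Q)$ maps homomorphically into the input'' --- that is, it is equivalent to $Q'$ itself, whether or not $Q$ is monotonically determined. So the equivalence $Q'\equiv\rewriting$ over the view schema that you propose to test \emph{always} holds, and testing it decides nothing. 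The root cause is your claim that ``we never need to bound the treewidth of $D'$ itself; we only need to reason about the view image $\views(D')$ on the view schema.'' This discards exactly the condition that Lemma~\ref{lem:pipeline} asks you to verify: whether every inverse-expansion $D'$ satisfies $Q$ \emph{on the base schema}. Every test $D'$ reproduces the view facts of $\views(Q)$ by construction, so on the view schema all tests look alike and trivially satisfy $Q'$; the recursive structure of the arbitrary Datalog view definitions --- the entire source of difficulty --- is never confronted by any query living purely over $\Sigma_\views$.

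The missing idea, and the paper's actual proof, is to compose back to the base schema: let $Q''=(\Pi,\goal)$ where $\Pi$ consists of all rules defining $\views$ together with the rule $\goal(\vec{x})\datalogarrow Q'$, so that $Q''(\inst)=Q'(\views(\inst))$. Since $Q$ is always contained in $Q''$ (Datalog views are preserved under homomorphisms), monotonic determinacy is equivalent to the single containment $Q''\subseteq Q$ of a Datalog query in a CQ over the base schema, which is decidable in \twoexp by Theorem~5.12 of \cite{chaudhuri1997equivalence}; no automata construction is needed in the paper beyond that citation. Incidentally, an automata-theoretic route in the style of Theorem~\ref{thm:decidemondetrewriting:frontierguarded} is salvageable here, but only by doing the opposite of what you propose: bound $\tw(D')$ rather than sidestep it. Because $Q$ is a single CQ, the interface $\adom(\views(Q))$ has at most $|\mathrm{vars}(Q)|$ elements, so adding it to every bag of decompositions of the per-fact expansions shows that every test $D'$ has treewidth at most $|\mathrm{vars}(Q)|$ plus the maximal rule-body size of the view programs, even for arbitrary (non-frontier-guarded) Datalog views; one could then capture the tests by an NTA and intersect with the automaton of Proposition~\ref{prop:datalog:notaccept} for $Q$. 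But that argument must run on the base schema, where $D'\models Q$ is actually stated, not on the view schema where your proposal sits.
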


\section{Lower bounds on testing monotonic determinacy} \label{sec:lower}
We now begin our negative results, starting
with lower bounds for testing  monotonic determinacy.
We first note some lower bounds on monotonic determinacy that can be obtained
through straightforward reductions from containment or equivalence:
\begin{proposition}\label{prop:easyhardness}
Monotonic determinacy is 
\begin{compactitem}
\item $\np$-hard for CQ queries and views \cite{lmss,thebook}
\item $\Pi_2^p$-hard for UCQ queries and UCQ views
\item $\twoexp$-hard for CQ queries and MDL views
\item $\twoexp$-hard for MDL queries and a fixed atomic view
\item undecidable for Datalog queries and a fixed atomic view (cf \cite{inverserules})
\end{compactitem}
\end{proposition}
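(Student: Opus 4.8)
The plan is to establish each of the five hardness/undecidability claims through reductions from well-understood containment and equivalence problems, exploiting the characterization of monotonic determinacy via canonical tests (Lemma~\ref{lem:pipeline}) together with the reduction of monotonic determinacy to query containment that is implicit in the forward-backward framework. The unifying idea is that monotonic determinacy of $Q$ over $\views$ is a statement comparing $Q$ with the ``round-trip'' query obtained by applying the views and then their inverses; thus lower bounds for containment/equivalence of the relevant query classes should transfer.

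First I would handle the CQ-over-CQ case: $\np$-hardness follows directly from \cite{lmss,thebook}, and the $\Pi_2^p$-hardness for UCQ-over-UCQ views can be obtained by a standard reduction from UCQ containment (a $\Pi_2^p$-complete problem), encoding a containment instance $Q_1 \subseteq Q_2$ as a monotonic determinacy question where the views expose just enough of the structure to make the canonical test succeed iff the containment holds. For the two $\twoexp$-hardness claims (CQ queries with MDL views, and MDL queries with a fixed atomic view), the natural source is the $\twoexp$-completeness of containment problems for Monadic Datalog. The plan is to reduce from MDL containment $Q_1 \subseteq Q_2$: I would build views whose definitions and inverses reproduce the CQ approximations of the containing/contained query, so that every canonical test succeeds precisely when the containment holds. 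The case of a \emph{fixed atomic view} is the tightest: here the view is essentially the identity on some relation, so the base data is fully visible, and the reduction must push all the recursive complexity into $Q$ itself while keeping $\views$ trivial --- this follows the pattern of \cite{inverserules}.

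The undecidability claim for Datalog queries over a fixed atomic view is the strongest statement and would be derived from the undecidability of Datalog containment \cite{undeciddatalog}, again via the mechanism of \cite{inverserules}: with an atomic (identity-like) view, monotonic determinacy of a Datalog query degenerates into a containment-type question between the query and a Datalog-definable ``inverse-rules'' query, and undecidability of the latter forces undecidability of monotonic determinacy. I would make explicit, using Lemma~\ref{lem:pipeline}, that since the view is atomic the only freedom in forming a canonical test $D'$ lies in choosing CQ approximations of $Q$, so the family of tests directly encodes a Datalog containment instance.

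The main obstacle I anticipate is the $\twoexp$-hardness for MDL queries with a fixed atomic view: one must simultaneously keep the view both \emph{fixed} and \emph{atomic} while still transferring the full doubly-exponential complexity of Monadic Datalog reasoning into the monotonic determinacy instance. Concretely, the difficulty is ensuring that the inverse-of-view step in each canonical test introduces no slack --- with an atomic view the inverse is essentially trivial, so the reduction must encode the hard containment entirely within the CQ-approximation structure of $Q$, and I would need to verify carefully that the resulting tests succeed exactly on the ``yes'' instances of the source problem. The remaining reductions are comparatively routine adaptations of known constructions, so I would present them briefly and concentrate the detailed argument on this atomic-view case.
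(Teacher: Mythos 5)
Your overall instinct---reduce containment/equivalence of the underlying query classes to monotonic determinacy via the canonical-test characterization of Lemma~\ref{lem:pipeline}---is exactly the paper's strategy, and the first bullet is indeed just a citation. For bullets two and three the paper proves one clean lemma: $Q$ is monotonically determined over the \emph{single} Boolean view $\views=\{(V,Q_V)\}$ iff $Q\equiv Q_V$. (If $Q_V$ fails on some CQ approximation of $Q$, the view image is empty and the resulting test is the empty instance, which fails $Q$; if $Q_V$ holds, the inverse step may replace the nullary view fact by \emph{any} CQ approximation of $Q_V$, which must then satisfy $Q$.) Your sketch of ``views that expose just enough structure'' is compatible with this, but note a concrete misstep in bullet three: you propose reducing from MDL-in-MDL containment, which cannot yield a \emph{CQ} query. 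The source problem has to be equivalence of a CQ with an MDL query, which is $\twoexp$-hard by \cite{mdlcq}; with the lemma above this gives the CQ-query/MDL-view bound directly.

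The genuine gap is in the two fixed-atomic-view bullets, where your stated premise---``the base data is fully visible, and the only freedom in forming a test lies in choosing CQ approximations of $Q$''---would make the reduction vacuous rather than hard. If atomic views expose \emph{every} relation occurring in $Q$, then $\views(\inst_1)\subseteq\views(\inst_2)$ just says $\inst_1\subseteq\inst_2$, so every Datalog query is monotonically determined by monotonicity alone; equivalently, each canonical test reconstructs $Q_i$ itself (atomic inverses introduce no fresh elements) and succeeds unconditionally. The paper's reduction, following \cite{inverserules}, hinges on a \emph{hidden} predicate: take a fresh nullary extensional predicate $e$, set $Q = (Q_1\datalogwedge e)\vee Q_2$, and provide atomic views for all extensional predicates of $Q$ \emph{except} $e$. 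A test built from an approximation $Q_i^1\datalogwedge e$ of the first disjunct then reconstructs $Q_i^1$ with the $e$-fact erased, so it succeeds iff $Q_i^1\models Q_2$; hence $Q$ is monotonically determined iff $Q_1\subseteq Q_2$. Plugging in $\twoexp$-hardness of MDL containment \cite{cosmadakis1988decidable} and undecidability of Datalog containment \cite{undeciddatalog}---both of which already hold over a single extensional predicate, which is what lets the view be a single \emph{fixed} atomic view---gives the last two bullets. Without the hidden predicate $e$, the construction you describe proves nothing, since all tests pass regardless of whether the containment holds.
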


It is more challenging to get undecidability results in settings
where the equivalence  problem for the views and queries is decidable,
as is the case for UCQs and Monadic Datalog \cite{cosmadakis1988decidable}.
The remainder of this section will be devoted to developing techniques for this case.

A \emph{tiling problem} is a tuple $TP =(Tiles, HC, VC, IT, FT)$ where $Tiles = \{T_1, \dots, T_k\}$,
  $HC$ and $VC$ are binary relations (``horizontal and vertical compatibility''),
and $IT$ and $FT$ are subsets of tiles that must be placed at the bottom left and top right corner respectively.

A \emph{solution} to a tiling problem consists of   numbers
 $n$ and $m$, and  map $\tau: \{1, \dots, n\} \times \{1, \dots, m\} \to Tiles$ such that 
\begin{compactitem}
\item[(T1)] $(\tau (i,j),\tau (i+1,j)) \in HC$ for $1 \le j \leq m$ and  $1 \le i < n $ ;
\item[(T2)] $(\tau (i,j),\tau (i,j + 1 )) \in VC$ for  $1 \le j < m  $ and  $1 \le i \le n$.
\item[(T3)] $\tau (1,1) \in IT$ and (T4) $\tau (n,m) \in FT$.
\end{compactitem}
By a standard reduction from the halting problem for Turing machines,
it is easy to show that the problem ``given a tiling problem TP, tell if it has a solution''
is undecidable. 
By reducing this tiling problem to the problem of monotonic determinacy for MDL queries
and UCQ views we obtain
\begin{theorem} \label{thm:undec}
\label{thm:undec-mdl-ucq}
The problem of monotonic determinacy for 
MDL queries and UCQ views is undecidable.
\end{theorem}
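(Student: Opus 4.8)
The plan is to reduce the tiling problem to monotonic determinacy, constructing for each tiling problem $TP$ an MDL query $Q$ and a collection of UCQ views $\views$ such that $Q$ is monotonically determined over $\views$ if and only if $TP$ has \emph{no} solution. The heart of the construction is a coding scheme in which an instance encodes a candidate grid together with a run of the tiling. By Lemma \ref{lem:pipeline}, monotonic determinacy fails precisely when some canonical test succeeds: i.e.\ some CQ-approximation $Q_i$ of $Q$ has a view image $\views(Q_i)$ whose inverse-rule expansion $D'$ satisfies $Q$. The strategy is to rig the views so that the only CQ-approximations whose view images are ``consistent'' (cannot be refuted) are exactly those coding a genuine tiling solution, and to rig $Q$ so that it succeeds on the expanded instance $D'$ precisely when the candidate grid has a \emph{defect} (a compatibility violation, or a missing initial/final tile). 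Then a test succeeds iff there is a grid that is simultaneously consistent as a view image but defective as an expansion --- and the delicate point is to arrange that this can happen iff $TP$ has a solution.

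Concretely, I would first fix a schema with relations encoding horizontal and vertical grid adjacency, tile-type predicates, and auxiliary ``scaffolding'' relations used to force the intended geometry. The MDL query $Q$ would be built to detect, via a monadic recursion that walks along the grid, any of the violations (T1)--(T4); since $Q$ is MDL it can propagate a single unary ``error'' marker along adjacency edges but cannot compare two coordinates simultaneously, which is exactly the expressive limitation we must respect. The UCQ views are the subtle part: each view is a finite disjunction of conjunctive queries over short neighborhoods, and their role is twofold. First, the views must be \emph{lossy enough} that the view image of a coding of a grid loses the fine information $Q$ would need to detect a defect, so that on the view image $Q$ appears to succeed only on genuine solutions. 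Second, the inverse expansion $D'$ must reintroduce fresh witnesses in a controlled way, so that $Q$ evaluated on $D'$ can re-detect a defect exactly when the original candidate was flawed. Because UCQ equivalence is decidable, the undecidability cannot come from the views alone; it must come from the unbounded grids that CQ-approximations of the recursive $Q$ range over.

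The key lemma to establish will be a correctness statement of the form: a canonical test $(Q_i, D')$ succeeds iff $Q_i$ codes a grid and the associated expansion exhibits a tiling defect, and moreover every defective grid of the right shape arises this way. From this, ``every test succeeds'' (monotonic determinacy) becomes equivalent to ``every grid coded by a CQ-approximation is defective,'' which in turn is equivalent to $TP$ having no solution. I would prove the two directions separately: if $TP$ has a solution, I would exhibit a specific CQ-approximation $Q_i$ coding that solution and verify directly that its test \emph{fails} (the expansion has no defect, so $D'\not\models Q$), witnessing failure of monotonic determinacy; conversely, if monotonic determinacy fails, I would extract from a failing test a defect-free grid and argue it yields a genuine solution to $TP$, taking care that the finiteness and corner conditions (T3),(T4) are correctly enforced by the start/accept markers in $Q$.

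The main obstacle, and where I would spend the most care, is enforcing the global geometry of the grid using only UCQ views and an MDL query, both of which are homomorphism-invariant and hence cannot directly forbid ``folding'' or spurious identifications of grid elements. The danger is that an adversarial instance collapses the intended grid or adds shortcut edges so that either a defective grid masquerades as consistent, or a consistent grid spuriously triggers $Q$. The standard remedy, which I would adapt from the coding ideas in \cite{redchains,rainworm,redspider} cited in the techniques paragraph, is to use a rigid scaffolding subinstance together with carefully chosen UCQ views whose images pin down the grid up to homomorphism, so that any homomorphic image still exhibits the same defects. Getting the monadic recursion of $Q$ to detect defects through such images --- using only unary error propagation, never a binary coordinate comparison --- while simultaneously ensuring the views lose just enough information, is the crux; verifying that the construction survives arbitrary homomorphisms (not merely the identity) is precisely what distinguishes the monotonic-determinacy argument from the earlier determinacy undecidability proofs.
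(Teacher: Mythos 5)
Your skeleton matches the paper's: reduce tiling to monotonic determinacy via the canonical-test characterization (Lemma~\ref{lem:pipeline}), with a query that fires on tiling \emph{defects}, so that a test fails exactly when its expansion carries a valid tiling, and hence $Q$ is monotonically determined iff $TP$ has no solution. But the proposal has genuine gaps. First, a polarity slip: you open by saying monotonic determinacy fails precisely when some test \emph{succeeds} ($D'\models Q$); by Lemma~\ref{lem:pipeline} it fails when some test \emph{fails} — your later sentences use the correct polarity, but the confusion obscures the real missing idea. Second, and more seriously, your framing ``the only CQ-approximations whose view images are consistent are exactly those coding a genuine tiling solution'' cannot work as stated: CQ-approximations of an MDL query have treewidth bounded by the program size (cf.\ Lemma~\ref{lem:normalised:treedecomp}), while $n\times m$ grids have unbounded treewidth, so no approximation of $Q$ can ever \emph{be} a grid. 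In the paper the approximations of the recursive part $Q_{\qstart}$ are merely two marked chains (the axes); the grid is manufactured by the \emph{grid-generating view} $S$, whose UCQ definition has one disjunct $C(x)\datalogwedge D(y)$ (so the view image of the axes is the full cross product of $S$-facts) and one disjunct $\xproj(x,z)\datalogwedge T_i(z)\datalogwedge \yproj(y,z)$ (so the inverse-rule expansion nondeterministically replaces each grid point by a freshly tiled point). The disjunction inside a single view definition is exactly what makes the inverse expansion nondeterministic and lets one fixed view image expand into \emph{every} candidate tiling; your generic ``lossy views reintroducing fresh witnesses'' never identifies this mechanism, and without it the tests do not range over tilings at all.

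Third, to prove the ``no solution implies monotonically determined'' direction you must show \emph{every} test succeeds, and your plan ignores two whole families of tests the paper has to neutralize: side tests built from approximations of the non-recursive disjuncts $Q_{\qhelper}$ and $Q_{\qverify}$ (made to pass trivially by including essentially-identity special views $V_{\ha}$, $V_{\va}$, $V_I$, $V_F$, $V^{\qhelper}_C$, $V^{\qhelper}_D$ and atomic views for the EDBs), and mixed main tests in which both disjuncts of $S$ fire (handled by the helper disjunct $Q_{\qhelper}$, which fires as soon as a $C$- or $D$-fact coexists with an $\xproj$/$\yproj$ join). Finally, the obstacle you designate as the crux — preventing homomorphic ``folding'' of the grid via rigidity gadgets — is a non-issue in this framework: Lemma~\ref{lem:pipeline} quantifies only over canonical tests, whose expansions $D'$ are built with fresh nulls, and in the relevant case $D'$ is \emph{isomorphic} to a tiled grid (see the proof of Proposition~\ref{prop:tpnotmondet}); no scaffolding is needed, and spending the construction's budget there while leaving the grid-generation and side-test mechanisms unspecified leaves the proof unworkable as proposed.
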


The idea of the reduction is, given TP, to construct $Q_{TP}$ and $\views_{TP}$ which generate 
tests for monotonic determinacy that look like $(n,m)$-grids with assignments of tiles. 
The query $Q_{TP}$ will have disjuncts that return ``true'' when they detect violations of conditions 
(T1)--(T4). Thus  $Q_{TP}$ and $\views_{TP}$ will have a failing test for 
monotonic determinacy iff the tiling problem
$TP$ has a solution.

\begin{figure}
\scalebox{0.7}{\includegraphics{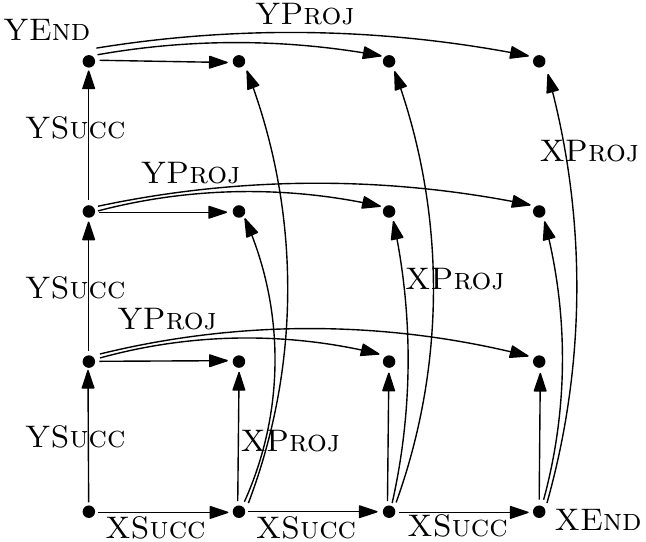}}\\[.1cm]\scalebox{0.6}{\includegraphics{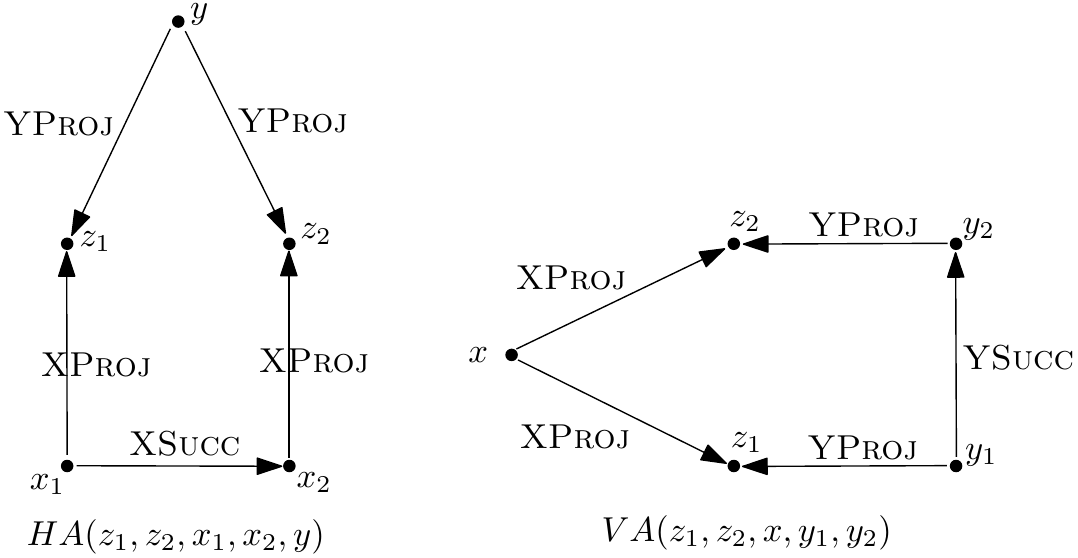}}
\caption{A grid-like test for monotonic determinacy (a) and CQs for checking horizontal and vertical adjacency between grid points (b)}
\label{figure:grid}
\end{figure}

Figure~\ref{figure:grid}, (a) shows such a test. 
We code the grid using four binary relations $\ysucc$, $\xsucc$, $\xproj$, $\yproj$ and unary markers
$\xend$ and $\yend$. Vertical and horizontal axes are represented as chains of $\ysucc$- and $\xsucc$-atoms 
respectively.  
The ``grid points'' are linked via $\xproj$- and $\yproj$-edges to their
projections on the axes. The unary predicates $\xend$ and $\yend$ mark the 
ends of the axes.

Note how CQs
$\ha(z_1, z_2,x_1,x_2,y) = 
\yproj(y, z_1)\land \yproj(y,z_2) \land \xproj(x_1, z_1) \land \xproj(x_2,z_2) \land \xsucc(x_1, x_2)$
and
$\va(z_1,z_2,x, y_1,y_2) = \yproj(y_1, z_1)\land \yproj(y_2,z_2) \land \xproj(x, z_1) \land \xproj(x,z_2) \land \xsucc(y_1, y_2)$
(see Figure~\ref{figure:grid}, (b)) can be used to check vertical and horizontal adjacency between grid points.
For example, $\ha(z_1, z_2,x_1,x_2,y)$ says  $z_1$ and $z_2$ have the same $y$-projection, while the $x$-projection of $z_2$ is 
next to the $x$-projection of $z_1$. 
Query $H(z_1, z_2) = \exists y \exists x_1 \exists x_2 \,\ha(z_1, z_2, x_1, x_2, y)$ 
holds of grid points $z^0_1$ and $z^0_2$ iff $z^0_2$ is the right neighbour of $z^0_1$.

Given a tiling problem $TP$, we define the query $Q_{TP}$
as a disjunction $Q_{\qstart} \lor Q_{\qhelper} \lor Q_{\qverify}$ where Monadic Datalog query $Q_{\qstart}$ and UCQs 
$Q_{\qhelper}$ and $Q_{\qverify}$ are defined by the following programs:

\begin{compactenum}
\item $Q_{\qstart} ~ \datalogarrow ~ A(x) \datalogwedge B(x)$  
\item $A(x) ~ \datalogarrow ~ \xsucc(x,x') \datalogwedge A(x') \datalogwedge C(x')$
\item $A(x) ~ \datalogarrow ~ \xend(x)$
\item  $B(y) ~ \datalogarrow ~ \ysucc(y,y') \datalogwedge B(y') \datalogwedge D(y')$
\item  $B(y) ~ \datalogarrow ~ \yend(y)$
\item $Q_{\qhelper} ~ \datalogarrow ~ C(u) \datalogwedge \yproj(y,z) \datalogwedge \xproj(x,z)$
\item $Q_{\qhelper} ~ \datalogarrow ~ D(u) \datalogwedge \yproj(y,z) \datalogwedge \xproj(x,z)$\\
\item $Q_{\qverify} ~ \datalogarrow ~ \ha(z_1,z_2,y,x_1,x_2) \datalogwedge T_i(z_1) \datalogwedge T_j(z_2)$ \\
 $\mbox{ for all pairs}$ $(T_i, T_j)\notin HC$ 
\item  $Q_{\qverify} ~ \datalogarrow ~ \va(z_1,z_2,y_1,y_2,x) \datalogwedge T_i(z_1) \datalogwedge T_j(z_2)$ \\
$\mbox{ for all pairs}$ $(T_i, T_j)\notin VC$
\item  $Q_{\qverify} ~ \datalogarrow ~ \ysucc(o,y) \datalogwedge \ysucc(y,z) \datalogwedge \xsucc(o,x) \datalogwedge \\
\xproj(x,z) \datalogwedge T_i(z)$ 
$\mbox{ for all }$ $T_i \notin IT$
\item $Q_{\qverify} ~ \datalogarrow ~ \yend(y) \datalogwedge \yproj(y,z) \datalogwedge  \\
T_i(z) \datalogwedge \xproj(x,z) \datalogwedge \xend(x)$ 
$\mbox{ for all }$ $T_i \notin FT$
\end{compactenum}

The set of views $\views_{TP}$ consists of
\begin{compactitem}
\item[--] the \emph{grid-generating
view}
$$
\begin{array}{rcl}
S(x,y) & \datalogarrow & C(x) \datalogwedge D(y)\\
S(x,y) & \datalogarrow & \xproj(x,z) \datalogwedge T_i(z) \datalogwedge \yproj(y,z) \mbox{ for all } T_i \mbox{ in } Tiles;\\
\end{array}
$$
\item[--] 
the \emph{atomic views} $V_\ysucc$, $V_\xsucc$, $V_\yend$, $V_\xend$ and $V_{T_i}$ for EDBs $\ysucc,\xsucc, \yend$, $\xend$ and each
$T_i$ in $Tiles$;
\item[--] the following \emph{special} views
$$
\begin{array}{crcl}
& V^{\qhelper}_C(u,x,y,z) & \datalogarrow & C(u) \datalogwedge \xproj(x,z) \datalogwedge \yproj(y,z) \\
& V^{\qhelper}_D(u,x,y,z) & \datalogarrow & D(u) \datalogwedge \xproj(x,z) \datalogwedge \yproj(y,z) \\
&  V_{\ha}(z_1, z_2, y,x_1,x_2) & \datalogarrow & \ha (z_1, z_2, y,x_1,x_2)\\
& V_{\va}(z_1, z_2, y_1,y_2,x) & \datalogarrow & \va (z_1, z_2, y_1,y_2,x)\\
& V_{I}(o,x,y,z) & \datalogarrow & \xsucc(o,x)\datalogwedge \xproj(x,z) \datalogwedge \\
&  & &  \ysucc(o,y) \datalogwedge \yproj(y,z) \\
& V_{F}(x,y,z) & \datalogarrow & \xproj(x,z) \datalogwedge \xend(x) \datalogwedge \\
& & & \yend(y) \datalogwedge \yproj(y,z) 
\end{array}
$$
\end{compactitem}

A typical CQ-approximation of $Q_{\qstart}$ is shown in Figure~\ref{fig:unfold} (a), and it generates the axes of the grid
which are marked with unary predicates $C$ and $D$. The view-image of such CQ is shown in Figure~\ref{fig:unfold} (b).
This view image for each grid-point contains an $S$-atom, and so a grid-like test as in Figure~\ref{figure:grid} (a)
can be constructed out of this view image by replacing each of these $S$-atoms with any of the disjuncts other than the first disjunct
in the definition of the grid-generating view. 

When we run $Q$ on the tests, $Q_{\qverify}$ comes into play. Note the correspondence between rules 8) -- 11) for $Q_{\qverify}$ and 
the negations of conditions (1) -- (4) in the definition of a solution of a tiling problem.
 Thus, when executed on a grid-test from Figure~\ref{figure:grid}~(a),
$Q_{\qverify}$ returns $\false$ iff a grid test is a solution to TP. 
The query $Q_{\qhelper}$ ensures that we are not harmed
in the case where the grid-generating views are applied with the first rule.

\begin{figure}
  \includegraphics[width=.5 \textwidth]{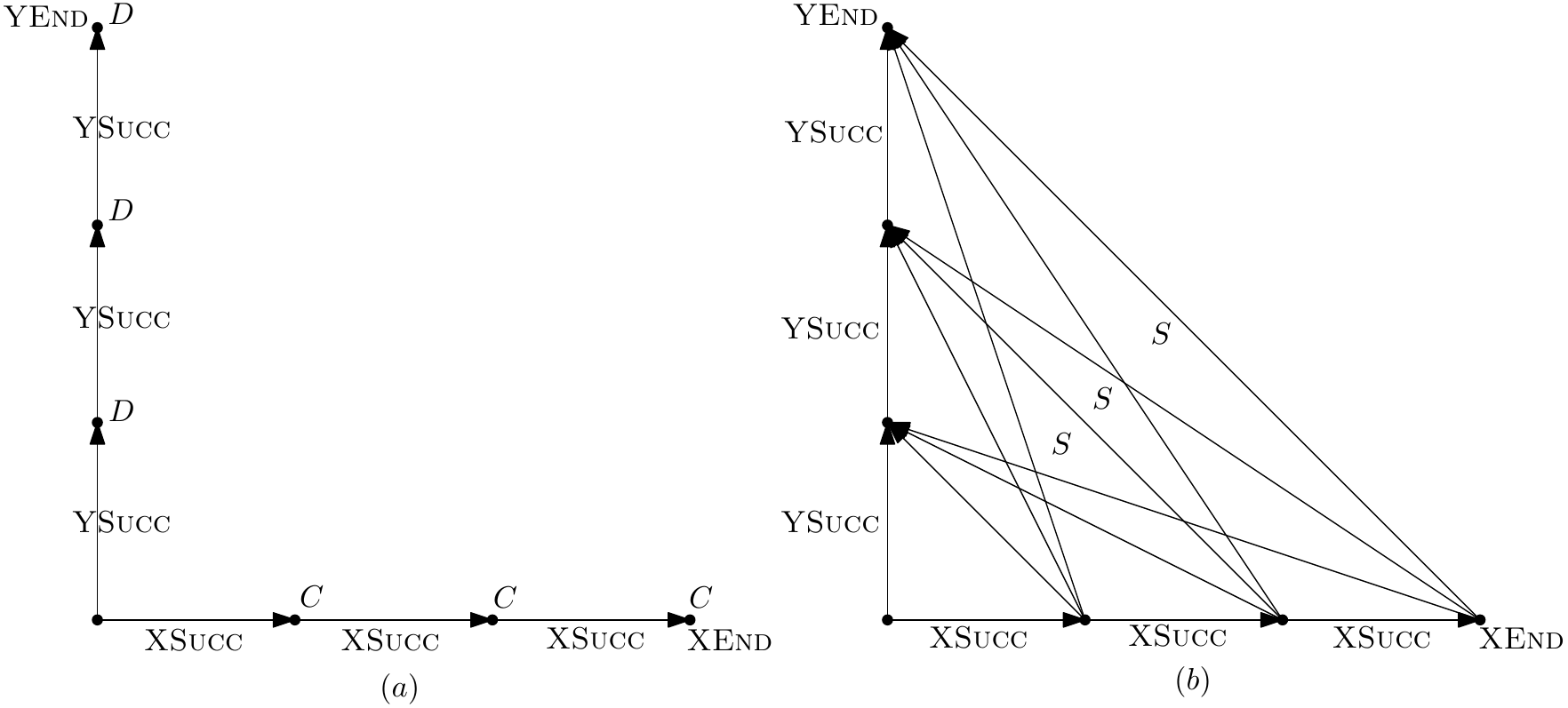}
  \caption{A typical approximation of a $\qstart$ atom (a) and its view image (b). (b) is obtained from (a) by replacing $C$ and $D$
  with their cross-product $S = C \times D$ }
  \label{fig:unfold}
\end{figure}

We can verify that a solution of our tiling problem corresponds to
monotonic determinacy, which will prove useful in both
our undecidability and non-rewritability results:

\begin{proposition} \label{prop:tpnotmondet}
$Q_{TP}$ is not monotonically determined by $\views_{TP}$ iff TP has a solution.
\end{proposition}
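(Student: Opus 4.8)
The plan is to establish the biconditional in Proposition~\ref{prop:tpnotmondet} by connecting failing monotonic-determinacy tests (in the sense of Lemma~\ref{lem:pipeline}) with solutions to $TP$. By Lemma~\ref{lem:pipeline}, $Q_{TP}$ is \emph{not} monotonically determined by $\views_{TP}$ iff some canonical test $(Q_i, D')$ \emph{fails}, i.e.\ there is a CQ-approximation $Q_i$ of $Q_{TP}$ and an instance $D'$ obtained from $\views_{TP}(Q_i)$ by applying inverses of the view definitions such that $D' \not\models Q_{TP}$. So the whole argument reduces to: \emph{there is a failing test iff $TP$ has a solution.} Since $Q_{TP} = Q_{\qstart} \lor Q_{\qhelper} \lor Q_{\qverify}$, a test fails exactly when $D'$ satisfies none of these three disjuncts.

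For the direction ($\Leftarrow$), assume $TP$ has a solution $\tau$ on an $n\times m$ grid. First I would identify the relevant CQ-approximation $Q_i$ of $Q_{\qstart}$: the one whose canonical database, as in Figure~\ref{fig:unfold}(a), lays out the two axes as an $\xsucc$-chain of length $n$ marked with $C$ and a $\ysucc$-chain of length $m$ marked with $D$, terminating in $\xend$ and $\yend$. Applying the views yields $\views_{TP}(Q_i)$, which by the grid-generating view contains an $S$-atom for every pair of axis points (Figure~\ref{fig:unfold}(b)). I would then build $D'$ by choosing, for each such $S$-atom, the \emph{second} disjunct of the grid-generating view's inverse, introducing a genuine grid point $z_{ij}$ with $\xproj$, $\yproj$ links and tile label $T_{\tau(i,j)}$ dictated by the solution $\tau$. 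The key verification is that this $D'$ satisfies no disjunct of $Q_{TP}$: (i) $Q_{\qstart}$ fails because no single element is simultaneously in $A$ and $B$ once the axes are finite and correctly terminated; (ii) $Q_{\qhelper}$ fails because every grid point carrying $\xproj/\yproj$ links was introduced \emph{without} any stray $C$ or $D$ marker (this is precisely what the helper disjuncts 6)--7) guard against, and why using the non-first disjunct of the grid view matters); and (iii) $Q_{\qverify}$ fails because rules 8)--11) exactly negate conditions (T1)--(T4), so a legitimate tiling $\tau$ triggers none of them. Hence this is a failing test.

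For the direction ($\Rightarrow$), suppose some test $(Q_i, D')$ fails, so $D' \not\models Q_{\qstart}\lor Q_{\qhelper}\lor Q_{\qverify}$. The main work is a \emph{shape analysis}: from the fact that $Q_{\qstart}$ and $Q_{\qhelper}$ are both false, I would argue that $D'$ must encode a well-formed finite grid with tile labels. Concretely, $Q_i$ must be an approximation of $Q_{\qstart}$ (the only other approximations come from $Q_{\qhelper}$ or $Q_{\qverify}$, which are CQs whose own view-images immediately re-satisfy $Q_{TP}$, so they cannot yield failing tests — this needs to be checked). Failure of $Q_{\qstart}$ forces the $A$- and $B$-chains to terminate properly via $\xend/\yend$, giving finite axes of some dimensions $n,m$; failure of $Q_{\qhelper}$ forces that the grid points produced by inverting $S$-atoms carry no spurious $C$/$D$ labels, so that they are ``clean'' grid points linked by $\xproj$/$\yproj$ to the axes. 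Reading off the tile labels $T_i(z)$ then defines a candidate map $\tau$. Finally, failure of $Q_{\qverify}$ on rules 8)--11), via the adjacency-detecting queries $H$ and its vertical analogue together with $V_I, V_F$, guarantees that $\tau$ respects $HC$, $VC$, and the initial/final tile constraints, i.e.\ $\tau$ is a genuine solution to $TP$.

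The main obstacle I anticipate is the shape analysis in the ($\Rightarrow$) direction: one must rule out ``badly-shaped'' tests where $D'$ does \emph{not} look like a clean grid yet still evades all disjuncts of $Q_{TP}$. The delicate point is that $D'$ is assembled by freely inverting view definitions, so it could in principle introduce extraneous elements, incomplete projections, or misaligned $\xsucc/\ysucc$ chains. The roles of $Q_{\qhelper}$ and of the \emph{first} disjunct of the grid-generating view are exactly to tame these degenerate cases — the helper disjuncts fire whenever a $C$ or $D$ marker coexists with a projected grid point, and the first ($C\times D$) disjunct of $S$ is what the view-image of a $Q_{\qstart}$-approximation actually produces, so that the only way to avoid re-triggering $Q_{\qhelper}$ is to invert $S$ via its \emph{tile} disjunct, thereby manufacturing exactly one clean grid point per $S$-fact. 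I would therefore spend most of the proof carefully enumerating the possible inverse choices for each view atom and showing that any choice other than the intended ``grid'' reconstruction causes one of the disjuncts of $Q_{TP}$ to become true, so that a failing test is forced into the canonical grid shape from which a tiling solution can be extracted. The correspondence between rules 8)--11) and the negated conditions (T1)--(T4) is then a routine, if tedious, check.
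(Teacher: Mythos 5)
Your proposal is correct and takes essentially the same route as the paper's proof: reduce via Lemma~\ref{lem:pipeline} to the existence of a failing test, dispose of side tests (tests built on approximations of $Q_{\qhelper}$ or $Q_{\qverify}$, whose special and atomic views force every such test to pass), and then classify the main tests by which disjuncts of the grid-generating view $S$ are used during inversion: using only the first ($C\times D$) disjunct restores the $C$/$D$ markers and re-triggers $Q_{\qstart}$; mixing the two disjuncts triggers $Q_{\qhelper}$ via $V^{\qhelper}_C$/$V^{\qhelper}_D$; and using only the tile disjunct yields exactly a labelled grid, on which failure of $Q_{\qverify}$ (rules 8)--11) negating (T1)--(T4)) reads off a solution of $TP$. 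This is precisely the paper's three-case analysis, and the ``shape analysis'' obstacle you anticipate is resolved in the paper exactly as you predict.

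One local justification in your ($\Leftarrow$) direction needs repair. In point (i) you claim $Q_{\qstart}$ fails on $D'$ ``because no single element is simultaneously in $A$ and $B$ once the axes are finite and correctly terminated''; this cannot be the reason, since the Green instance itself has finite, correctly terminated axes and satisfies $Q_{\qstart}$ (that is why it is an approximation of it). The actual reason is that $D'$ contains no $C$- or $D$-facts at all: $\views_{TP}$ deliberately has no atomic views for $C$ and $D$, so these predicates survive into the view image only inside $S$-atoms, and since you invert every $S$-atom via the tile disjunct, the recursive rules $A(x) \datalogarrow \xsucc(x,x') \datalogwedge A(x') \datalogwedge C(x')$ and its $B$-analogue can never fire, while the common origin of the two axes is marked by neither $\xend$ nor $\yend$. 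With that justification corrected, your argument coincides with the paper's.
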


\section{Non-rewritability} \label{sec:nonrewrite}
We now turn to negative results concerning rewritability. 

\myparagraph{Pebble games}
In order to prove non-definability in Datalog and Monadic Datalog, we use the well-known tool of \emph{existential pebble games}.
A \emph{partial homomorphism} from $\inst$ to $\inst'$ is a mapping $h$ from a subset $D\subseteq \adom(\inst)$ to $\adom(\inst')$ such that           
$R(c_1 \ldots c_n) \in \inst$ implies $R(h(c_1) \ldots h(c_n)) \in \inst'$, provided each $c_i\in D$.
Let $k\geq 2$. In the \emph{existential $k$-pebble game}  we have two players, the \emph{Spoiler} and the \emph{Duplicator}, each having a set of pebbles $\{{\texttt p}_1,\dots,{\texttt p}_k\}$ and 
$\{{\texttt q}_1,\dots,{\texttt q}_k\}$, respectively. 
The game is played on two instances $\inst$ and $\inst'$ over the same schema. In each round, 
the Spoiler either places a pebble ${\texttt p}_i$ on some element of $\inst$ or removes ${\texttt p}_i$ from 
$\inst$, to which the Duplicator responds by placing its corresponding pebble ${\texttt q}_i$ on some element of $\inst'$ or by removing 
${\texttt q}_i$ from $\inst'$, respectively. 
The Duplicator wins the game if he has a \emph{winning strategy}, i.e., if he can 
indefinitely continue playing the game in such a way that after each round, 
if $a_1,\dots,a_k$ are the elements in $\inst$ marked by the 
Spoiler's pebbles $\{{\texttt p}_1,\dots,{\texttt p}_k\}$, and $a'_1,\dots,a'_k$ are the 
elements in $\inst'$ marked by the Duplicator's pebbles $\{{\texttt q}_1,\dots,{\texttt q}_k\}$, then the 
relation $\{(a_1,a'_1),\dots,(a_k,a'_k)\}$ is a partial homomorphism from $\inst$ to $\inst'$.

Recall that if there is a homomorphism from $\inst$ to $\inst'$, we write $\inst\to \inst'$. 
Similarly, if Duplicator wins the game on $\inst$ and $\inst'$, then we write $\inst\to_k \inst'$. 
Observe that $\inst\to \inst'$ implies $\inst \to_k \inst'$, for every $k\geq 2$.

The following property relates the game to homomorphisms from structures of bounded treewidth:


\begin{fact}\cite{AKV-cp04}
\label{fact:pebble-tw}
Let $k\geq 2$. Let $\inst$ and $\inst'$ be two instances over the same schema. Then the following are equivalent:
\begin{compactenum}
\item $\inst \to_k \inst'$, 
\item for every instance $\inst''$ of treewidth $\leq k-1$, if $\inst''\to \inst$, then $\inst''\to \inst'$. 
\end{compactenum}
\end{fact}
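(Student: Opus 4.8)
The plan is to prove the two implications separately, using as the main workhorse the standard reformulation of the existential $k$-pebble game in terms of a family of partial homomorphisms. Recall that Duplicator wins on $\inst$ and $\inst'$, i.e.\ $\inst \to_k \inst'$, if and only if there is a nonempty family $\cH$ of partial homomorphisms from $\inst$ to $\inst'$, each with domain of size at most $k$, that is (i) closed under restriction and (ii) has the \emph{forth property}: whenever $h \in \cH$, $|\dom(h)| < k$, and $a \in \adom(\inst)$, there is some $h' \in \cH$ with $h \subseteq h'$ and $a \in \dom(h')$. Intuitively $\cH$ records the ``safe'' positions from which Duplicator can keep responding. I would first record this reformulation (or take it as known), since the convention that treewidth $\le k-1$ means tree decompositions with bags of size at most $k$ matches exactly the $k$-element domains appearing in $\cH$.

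For $(1) \Rightarrow (2)$, fix such a family $\cH$, an instance $\inst''$ with $\tw(\inst'') \le k-1$, and a homomorphism $g \colon \inst'' \to \inst$; I must produce a homomorphism $\inst'' \to \inst'$. Take a tree decomposition $(\tau,\lambda)$ of $\inst''$ with all bags of size $\le k$, and process $\tau$ from the root downwards, attaching to each node $t$ a map $h_t \in \cH$ whose domain contains $g(\lambda(t))$. At the root I build $h_r$ by starting from the empty map (in $\cH$ by (i) and nonemptiness) and adding the elements of $g(\lambda(r))$ one at a time via the forth property, which applies because $|g(\lambda(r))| \le k$. Passing from a node $t$ to a child $t'$, I restrict $h_t$ to $g(\lambda(t) \cap \lambda(t'))$ using (i) and re-extend it over $g(\lambda(t'))$ using the forth property; this keeps domains of size $\le k$ and guarantees that $h_t$ and $h_{t'}$ agree on the separator $g(\lambda(t)\cap\lambda(t'))$. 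Setting $f(c) = h_t(g(c))$ for any bag $t$ containing $c$ gives a well-defined map by the connectivity axiom of tree decompositions, and $f$ is a homomorphism because every atom of $\inst''$ lies in some bag $\lambda(t)$, on which $f = h_t \circ g$ is a composition of (partial) homomorphisms.

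For $(2) \Rightarrow (1)$ I would argue by contraposition: assuming Duplicator loses, hence Spoiler has a winning strategy, I construct an instance $\inst''$ of treewidth $\le k-1$ with $\inst'' \to \inst$ but $\inst'' \not\to \inst'$, contradicting (2). Spoiler's winning strategy is a well-founded tree of game positions; I would unfold it into $\inst''$ by creating, for each pebble placement along the strategy, a fresh element of $\inst''$ carrying copies of exactly those atoms of $\inst$ that hold among the currently pebbled elements. The map $g$ sending each fresh element to the element of $\inst$ that Spoiler pebbled is then a homomorphism $\inst'' \to \inst$ by construction, and the strategy tree itself is a tree decomposition whose bags are the live pebbled elements, so it has width $\le k-1$ (the connectivity axiom holds because each fresh element is live over a contiguous interval of play). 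Finally, any homomorphism $\inst'' \to \inst'$ would supply Duplicator with consistent responses defeating Spoiler's strategy, so no such homomorphism exists. I expect this last direction to be the main obstacle: the bookkeeping for pebble removal and reuse --- making sure the connectivity axiom of the tree decomposition is respected, and that a homomorphism into $\inst'$ really translates back into a legal Duplicator strategy --- is where the care is needed, whereas the gluing argument for $(1) \Rightarrow (2)$ is essentially routine once the winning-family reformulation is in hand.
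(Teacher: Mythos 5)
The paper does not prove this Fact: it is imported verbatim from \cite{AKV-cp04}, so there is no in-paper proof to compare your attempt against; it has to be judged against the standard argument and the paper's surrounding toolkit. Measured that way, your proposal is correct in substance and follows the classical route. Your direction $(1)\Rightarrow(2)$ is exactly right: the winning-family characterization you invoke is recorded in the paper itself as Fact~\ref{fact:win-duplicator}, and the top-down gluing of maps $h_t\in\cH$ along a decomposition with bags of size at most $k$ --- restricting to $g(\lambda(t)\cap\lambda(t'))$, re-extending by the forth property, and getting well-definedness of $f$ from the connectivity axiom --- is the complete standard argument for that half.

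For $(2)\Rightarrow(1)$, your strategy-tree unfolding works, but note that the paper's own machinery gives a much shorter proof: take $\inst''$ to be a $k$-unravelling $U$ of $\inst$. By definition $U\to\inst$ and $\tw(U)\le k-1$, so item (2) yields $U\to\inst'$, and Fact~\ref{fact:pebble-unravellings}(2) then yields $\inst\to_k\inst'$ directly; your hand-rolled unfolding in effect re-proves that Fact. If you keep your construction, three points need patching. First, passing from ``Duplicator has no winning strategy'' to ``Spoiler has a well-founded winning strategy'' requires determinacy; this holds because Spoiler's winning condition is open (Gale--Stewart), but since instances in this paper may be infinite it should be said explicitly. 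Second, functionality: if Spoiler places a second pebble on an element of $\inst$ that is already pebbled, your construction creates two distinct fresh elements of $\inst''$ with the same $g$-image, and a homomorphism $f:\inst''\to\inst'$ need not identify them, so the derived pebble relation may fail to be a function and hence fail to be a partial homomorphism; fix this by assuming without loss of generality that Spoiler never double-pebbles, or by reusing the existing fresh element in that case. Third, your witness $\inst''$ may be infinite, since the strategy tree branches over all of $\adom(\inst')$; this is harmless here because the paper explicitly allows infinite instances, but a finite-instances version of the Fact would require an additional compactness or canonical-witness argument.
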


Existential pebble games with $k$ pebbles preserve
truth of Boolean Datalog queries with rule bodies of size at most $k$.
Thus games can be used to show non-definability in Datalog:

\begin{fact}\cite{KolaitisVardi95}
\label{fact:pebble-datalog}
Let $Q$ be a Boolean query. 
Suppose there exists two instances $\inst_k$ and $\inst'_k$ such that $Q(\inst_k)=\true$, $Q(\inst'_k)=\false$ and $\inst_k\to_k \inst'_k$, for infinitely many $k$'s. 
Then $Q$ is not definable in Datalog. 
\end{fact}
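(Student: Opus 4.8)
The plan is to contrapose and exploit the CQ-approximation machinery together with Fact~\ref{fact:pebble-tw}. Concretely, I will show that truth of any Datalog-definable Boolean query is preserved by the relation $\to_k$ as soon as $k$ exceeds the number of variables used in a single rule of a defining program, and then read off the Fact by contradiction.

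First I would prove the key preservation lemma: if $Q$ is defined by a Datalog program $\Pi$ in which every rule uses at most $w$ variables, and $\inst \to_k \inst'$ with $k \ge w$, then $Q(\inst)=\true$ implies $Q(\inst')=\true$. By the Proposition on CQ approximations, $Q(\inst)=\true$ yields a CQ approximation $Q_0$ of $Q$ with $\canondb(Q_0) \to \inst$. The crux is that every CQ approximation of $\Pi$ has treewidth at most $w-1$: its canonical database is built by unfolding $\Pi$ along a derivation tree, and labelling each node of that tree with the (at most $w$) variables of the rule instance used there produces a tree decomposition whose bags have size at most $w$. The coverage condition holds because every atom of $\canondb(Q_0)$ comes from some rule body and hence sits inside one bag, and the connectivity condition holds because the only variables a parent rule instance shares with the sub-derivation expanding one of its intensional atoms are the frontier variables of that atom, which therefore persist through every intervening bag. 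Thus $\canondb(Q_0)$ has treewidth at most $w-1 \le k-1$, so Fact~\ref{fact:pebble-tw}, applied with $\inst'' = \canondb(Q_0)$ to the hypothesis $\inst \to_k \inst'$, gives $\canondb(Q_0) \to \inst'$; hence $Q_0$ maps into $\inst'$ and $Q(\inst')=\true$.

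With the lemma established, the Fact follows at once. Suppose toward a contradiction that $Q$ is Datalog-definable, say by a program $\Pi$, and let $w$ be the (constant) maximum number of variables occurring in any single rule of $\Pi$. The hypothesis supplies instances $\inst_k, \inst'_k$ for infinitely many $k$, so pick one with $k \ge w$. Applying the lemma to $\inst_k \to_k \inst'_k$ together with $Q(\inst_k)=\true$ forces $Q(\inst'_k)=\true$, contradicting $Q(\inst'_k)=\false$. Hence $Q$ is not definable in Datalog.

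I expect the treewidth bound on the CQ approximations to be the only substantive point: one must verify carefully that the derivation tree of an unfolding really is a tree decomposition of width bounded by the per-rule variable count, in particular that the connectivity condition holds for shared (frontier) variables. Everything else is routine --- the passage from $Q(\inst)$ to a homomorphism from a CQ approximation is given by the earlier Proposition, the transfer across $\to_k$ is exactly Fact~\ref{fact:pebble-tw}, and the final step only uses that a fixed program has a constant bound on variables per rule, so that the infinitely many available values of $k$ certainly include one that is large enough. The precise off-by-one in the treewidth-versus-pebble-count relationship is immaterial, since we are free to take $k$ as large as we like.
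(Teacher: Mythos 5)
Your proof is correct, but it reaches the fact by a different route than the paper, which gives no argument of its own: the fact is imported from Kolaitis--Vardi, and the paper's implicit justification is the one-line remark preceding it, namely that Duplicator winning strategies in the existential $k$-pebble game directly preserve truth of Boolean Datalog queries whose rule bodies use at most $k$ variables (classically proved by induction on fixpoint stages, or via the embedding of $k$-variable Datalog into existential infinitary $k$-variable logic). You instead assemble that preservation from two ingredients the paper states independently: the homomorphism-transfer characterization of $\to_k$ for instances of bounded treewidth (Fact~\ref{fact:pebble-tw}) and the observation that CQ approximations of a program with at most $w$ variables per rule have tree decompositions with bags of size at most $w$, obtained from the derivation tree itself --- the same decomposition that underlies the automaton construction in Proposition~\ref{prop:cqappr:regular}. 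Your connectivity check for frontier variables is the one substantive point and it is right, granted the standard convention (implicit in the paper's definition of $\CQAppr$) that unfolding renames all non-frontier variables fresh; without fresh renaming distinct sub-derivations could share variables and the bag structure would break. One small step you use silently is the converse of the paper's approximation proposition --- that a homomorphism from $\canondb(Q_0)$ into $\inst'$ forces $Q(\inst')=\true$ --- which holds because each approximation is contained in $Q$; this is routine but worth a sentence. As for what each approach buys: yours is self-contained given facts already stated in the paper and needs no induction over the game, while the cited Kolaitis--Vardi route is shorter and quantitatively sharper (preservation already at $k$ equal to the per-rule variable count, with no off-by-one slack); as you correctly note, the discrepancy is immaterial here since the hypothesis supplies instances for infinitely many $k$, so one may always choose $k$ comfortably larger than the variable bound of any putative defining program.
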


Let $\inst$ be an instance and $k\geq 2$ be an integer. An instance $\inst'$ is a \emph{$k$-unravelling} of $\inst$ if there is a homomorphism $\Phi$ from $\inst'$ to $\inst$ and a tree decomposition $(\tau,(\lambda(u))_{u\in \vertices(\tau)})$ of $\inst'$ of
width at most $k$, such that:

\begin{compactenum}
\item For each $u\in \vertices(\tau)$, the mapping $\Phi|_{\lambda(u)}$ is a partial isomorphism from $\inst'$ to $\inst$. 
\item For $u\in \vertices(\tau)$ with children $u_1,\dots,u_\ell$,
 the set \\
$\{\Phi(\lambda({u_1})),\dots,\Phi(\lambda({u_\ell}))\}$ contains the collection of all non-empty subsets of $\inst$ of size $\leq k$.
\end{compactenum}

If, in addition we have $|\lambda(u)\cap \lambda(v)| \le 1$ for all non-equal $u$ and $v$ in $\vertices(\tau)$, then we say that $\inst'$ is $(1,k)$-unravelling of $\inst$. 
Duplicator has a winning strategy between an instance
and its $(1,k)$-unravelling in a variation
of the $k$-pebble games in which at most  one pebble can remain in place 
in each move.
Such games preserve Boolean Monadic Datalog queries with bodies of size $k$, and
hence each Boolean Monadic Datalog query is preserved under $(1,k)$-unravellings for sufficiently large $k$. So we have the following variant of 
Fact \ref{fact:pebble-datalog}:
\begin{fact}
\label{fact:pebble-datalog:1}
Let $Q$ be a Boolean query. Suppose there exists two instances $\inst_k$ and $\inst'$ such that $Q(\inst_k)=\true$, $Q(\inst'_k)=\false$ and $\inst'_k$ is a (1,$k$)-unravelling of $\inst_k$, for infinitely many $k$'s. 
Then $Q$ is not definable in Monadic Datalog. 
\end{fact}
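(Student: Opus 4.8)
The plan is to argue by contradiction, following exactly the template of the Datalog case in Fact~\ref{fact:pebble-datalog}, but with the restricted pebble game and $(1,k)$-unravellings playing the roles of the ordinary $k$-pebble game and $\to_k$. Suppose toward a contradiction that $Q$ were definable by a Monadic Datalog query $(\datalogprog, \goal)$. The crucial finite parameter is the maximum size $m$ of a rule body occurring in $\datalogprog$; since $\datalogprog$ is a fixed finite program, $m$ is a fixed constant independent of $k$.

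The key step is to invoke the preservation property established in the paragraph preceding the statement. Because the Duplicator wins the variant of the $k$-pebble game in which at most one pebble may remain in place per move, played between any instance and its $(1,k)$-unravelling, and because this restricted game preserves the truth of Boolean Monadic Datalog queries whose rule bodies have size at most $k$, every Boolean MDL query of body size at most $m$ is preserved under $(1,k)$-unravellings whenever $k \ge m$. Concretely, if $\inst'_k$ is a $(1,k)$-unravelling of $\inst_k$ with $k \ge m$, then $Q(\inst_k)=\true$ forces $Q(\inst'_k)=\true$; the reverse implication is automatic, since the unravelling maps homomorphically into $\inst_k$ via the map $\Phi$ and MDL queries are monotone under homomorphisms.

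I would then use the hypothesis that the required pair exists for infinitely many $k$ to select a single index $k \ge m$. For this $k$ we have $Q(\inst_k)=\true$, $Q(\inst'_k)=\false$, and $\inst'_k$ a $(1,k)$-unravelling of $\inst_k$. Applying the preservation property of the previous paragraph to this $k$ yields $Q(\inst'_k)=\true$, directly contradicting $Q(\inst'_k)=\false$. Hence no such MDL program $\datalogprog$ can exist, and $Q$ is not definable in Monadic Datalog.

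I expect the genuine mathematical content to lie not in this corollary but in the two assertions it rests on — that the Duplicator wins the restricted pebble game against a $(1,k)$-unravelling, and that this restricted game preserves bounded-body MDL queries — both stated just before the claim and which I would treat as given. The only point requiring care in the present argument is aligning the two roles of $k$: the body-size bound $m$ of the hypothetical program must be dominated by the game/unravelling parameter $k$, which is precisely why the hypothesis is phrased for infinitely many $k$ rather than for a single fixed value.
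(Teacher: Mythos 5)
Your proposal is correct and matches the paper's own (implicit) argument: the paper derives this fact in exactly the same way, citing that the Duplicator wins the restricted one-pebble-stays game against a $(1,k)$-unravelling and that such games preserve Boolean MDL queries with bodies of size at most $k$, then choosing $k$ larger than the body size of the hypothetical program — which is precisely your use of the ``infinitely many $k$'' hypothesis. Nothing is missing.
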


Note that the treewidth of any $k$-unravelling is at most $k-1$. Observe also that all $k$-unravellings of an
instance $\inst$ are homomorphically equivalent. The following facts about unravellings will  be
useful (see the appendix):

\begin{fact}
\label{fact:pebble-unravellings}
Let $k\geq 2$. Let $\inst$ be an instance and $U$ be any $k$-unravelling of $\inst$. Then the following hold:
\begin{compactenum}
\item $U\to \inst$ and $\inst \to_k U$.
\item For every instance $\inst'$, we have $\inst\to_k \inst'$ iff $U\to \inst'$. 
\end{compactenum}
\end{fact}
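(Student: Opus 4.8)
The plan is to prove part (1) by establishing its two halves separately, and then to deduce part (2) from part (1) together with the bounded-treewidth characterization of the pebble game (Fact~\ref{fact:pebble-tw}). Fix the homomorphism $\Phi$ and the tree decomposition $(\tau,\lambda)$ witnessing that $U$ is a $k$-unravelling of $\inst$. The relation $U\to\inst$ is immediate, since $\Phi$ is by definition a homomorphism from $U$ to $\inst$.

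The heart of the argument is $\inst\to_k U$, which I would prove by exhibiting a winning strategy for the Duplicator in the existential $k$-pebble game on $(\inst,U)$. Throughout the game the Duplicator maintains the invariant that all of its currently placed pebbles lie inside a single bag $\lambda(u)$, and that $\Phi$ sends the Duplicator's pebble on $b$ to the Spoiler's pebble on $a$, i.e. $\Phi(b)=a$ for each matched pair. Condition~1 of the unravelling (that $\Phi\!\restriction\!\lambda(u)$ is a partial isomorphism) then guarantees that the induced relation is a partial homomorphism from $\inst$ to $U$ after each round: a tuple among the Spoiler's pebbles holds in $\inst$ exactly when the corresponding tuple among the Duplicator's pebbles holds in $U$. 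When the Spoiler removes a pebble and places it elsewhere, the set $S$ of Spoiler-pebbled elements is a subset of $\adom(\inst)$ of size at most $k$; by Condition~2 the children of $u$ realize every such subset as a $\Phi$-image, so the Duplicator can move to a child bag $\lambda(u')$ with $\Phi(\lambda(u'))=S$ and re-place its pebbles there via $\Phi^{-1}$.

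I expect the one delicate point to be exactly this navigation step: the pebbles that the Spoiler did \emph{not} move must keep their positions, so the Duplicator needs the child bag $\lambda(u')$ to still contain the already-placed elements whose $\Phi$-values survive in $S$. This is where the precise construction of the unravelling is used — child bags inherit from the parent bag exactly those elements projecting into the target subset, so that stationary pebbles remain inside the new bag and the connectivity condition of the tree decomposition keeps $\Phi$ consistent across the move. Equivalently, one can avoid the game entirely and invoke Fact~\ref{fact:pebble-tw}: it suffices to show that every instance $\inst''$ of treewidth at most $k-1$ admitting a homomorphism $g\colon\inst''\to\inst$ also satisfies $\inst''\to U$, which is obtained by threading a homomorphism $h\colon\inst''\to U$ along a width-$(k-1)$ tree decomposition of $\inst''$, using Condition~2 to pick, bag by bag, a node of $\tau$ whose $\Phi$-image matches the $g$-image of the current bag and whose overlap with the previously chosen node carries the shared elements. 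Either way the covering property supplies the target bags and Condition~1 supplies local correctness; the bookkeeping that makes $h$ well defined (resp. keeps the stationary pebbles legal) across overlapping bags is the main obstacle.

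Finally I would derive part (2) from part (1). For the direction $U\to\inst'\Rightarrow\inst\to_k\inst'$, compose the Duplicator strategy for $\inst\to_k U$ with the homomorphism $U\to\inst'$: replaying the $(\inst,U)$ strategy and pushing each Duplicator response forward along $U\to\inst'$ yields a winning strategy for $(\inst,\inst')$, since the composite of a partial homomorphism with a total homomorphism is again a partial homomorphism. For the converse $\inst\to_k\inst'\Rightarrow U\to\inst'$, recall that any $k$-unravelling has treewidth at most $k-1$ and that $U\to\inst$ by part (1); applying Fact~\ref{fact:pebble-tw} to $\inst\to_k\inst'$ with the witness $\inst'':=U$ then gives $U\to\inst'$ directly. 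This disposes of all cases.
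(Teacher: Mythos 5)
Your proof is correct, and for part (1) together with the left-to-right half of part (2) it coincides with the paper's own argument: $U\to\inst$ is read off the definition, $\inst\to_k U$ comes from the Duplicator invariant that all of his pebbles lie in a single bag (with condition~2 of the unravelling supplying the child bag at each step), and $\inst\to_k\inst'$ is obtained by pushing that strategy forward along a homomorphism $U\to\inst'$. Where you genuinely diverge is the converse direction of part (2). The paper proves $\inst\to_k\inst'\Rightarrow U\to\inst'$ by a direct construction: for each $u\in U$ it lets the Spoiler play down the branch of the tree decomposition leading to $u$, never returning to an abandoned element, and defines $h(u)$ as the Duplicator's response under a winning strategy witnessing $\inst\to_k\inst'$; verifying that $h$ is well defined and a homomorphism is precisely the bookkeeping you describe. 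You instead observe that $\tw(U)\le k-1$ and $U\to\inst$ (part (1)), so Fact~\ref{fact:pebble-tw} applied with $\inst'':=U$ yields $U\to\inst'$ in one line. This is a legitimate and arguably cleaner route, since Fact~\ref{fact:pebble-tw} is already available; the only caveat is that it applies the characterization to a witness $U$ that may be infinite, which is consistent with the paper's own usage (it applies Fact~\ref{fact:pebble-tw} to infinite unravellings, e.g.\ in Claim~\ref{claim:wl}). Finally, the ``delicate point'' you flag in part (1) is real: the paper's literal definition of $k$-unravelling constrains only the $\Phi$-\emph{images} of child bags, not their intersection with the parent bag, so the stationary pebbles are not guaranteed by the stated conditions to survive the move to the child; the paper's proof silently relies on the canonical construction (child bags inheriting the shared elements), exactly the assumption you make explicit, so your treatment is if anything more careful than the paper's on this step.
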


\myparagraph{Non-rewritability in Monadic Datalog}
We recall that Monadic Datalog queries monotonically determined
over CQ views always have $\fgdatalog$ rewritings (e.g. \cite{inverserules},
or Thm \ref{thm:moncq-rewriting} ).
We show that they may not be rewritable in MDL:
\begin{theorem}
\label{thm:no-monadic}
There exists a Monadic Datalog query $Q$ and a set of CQ views $\views$ such that
$Q$ is rewritable with respect to $\views$ in Datalog, but not in Monadic Datalog.
\end{theorem}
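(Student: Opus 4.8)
The plan is to construct an explicit witness and then use the pebble-game machinery from Facts~\ref{fact:pebble-datalog} and~\ref{fact:pebble-datalog:1} to certify that the unique Datalog rewriting cannot be expressed in Monadic Datalog. First I would fix a Monadic Datalog query $Q$ and CQ views $\views$ so that $Q$ is monotonically determined over $\views$; by \cite{inverserules} (or Theorem~\ref{thm:moncq-rewriting}) this guarantees a $\fgdatalog$ --- in particular a Datalog --- rewriting $\rewriting$ exists over $\Sigma_\views$. The crucial design choice is that the \emph{rewriting} $\rewriting$, as a query over the view schema, should compute something inherently non-monadic, such as a ``grid-like'' reachability that provably requires remembering a pair (or tuple) of elements simultaneously. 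A natural candidate is a query whose rewriting over the views expresses transitive closure of a binary relation witnessed through a higher-arity view atom, so that on the view image one is forced to track two moving endpoints at once.

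The heart of the argument is a non-definability lower bound for the rewriting. I would build, for infinitely many $k$, a pair of view-image instances $\jnst_k$ and $\jnst'_k$ over $\Sigma_\views$ with $\rewriting(\jnst_k)=\true$ and $\rewriting(\jnst'_k)=\false$, such that $\jnst'_k$ is a $(1,k)$-unravelling of $\jnst_k$ (Fact~\ref{fact:pebble-datalog:1}). The standard recipe is to take $\jnst_k$ to be a long cycle-like gadget on which $\rewriting$ succeeds, and $\jnst'_k$ its $(1,k)$-unravelling, which by Fact~\ref{fact:pebble-unravellings} satisfies $\jnst'_k \to \jnst_k$ and $\jnst_k \to_k \jnst'_k$; since the intersection of bags is bounded by $1$, the $(1,k)$ game is winnable by Duplicator. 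The content is to show $\rewriting(\jnst'_k)=\false$: the unravelling ``cuts'' the cycle into a tree, destroying the global pairing that $\rewriting$ needs, so the monadic view of the unravelling no longer satisfies the rewriting. Because every step of this construction lives on the view schema, these $\jnst_k,\jnst'_k$ are bona fide view images $\views(\inst_k),\views(\inst'_k)$ of base instances, so $\rewriting$ and $Q$ agree on them via the separator property; hence $Q$ itself takes value $\true$ on $\inst_k$ and $\false$ on $\inst'_k$ with $\views(\inst_k)\to_k \views(\inst'_k)$. Applying Fact~\ref{fact:pebble-datalog:1} then yields that no Monadic Datalog query over $\Sigma_\views$ separates these images, so $Q$ has no MDL rewriting.

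Two technical points need care, and I expect the main obstacle to be the first. I must ensure the unravelled instance $\jnst'_k$ is genuinely the view image of some base instance $\inst'_k$ over $\Sigma_{\baseschema}$, i.e.\ that the gadgets are realizable under the inverse view definitions; this is where the CQ views must be chosen so that both the ``folded'' and ``unravelled'' view configurations are achievable, and where I would invoke the characterization of tests (Lemma~\ref{lem:pipeline}) to keep the construction faithful to how $Q$ evaluates. The second point is bookkeeping: verifying that $\rewriting$, the concrete Datalog rewriting produced by the forward-backward method, really does distinguish $\jnst_k$ from its $(1,k)$-unravelling --- this reduces to arguing that the binary ``memory'' in $\rewriting$'s IDBs cannot be simulated by any width-$k$ monadic computation, which is exactly what the $(1,k)$-game rules out once the pairing is shown to be essential. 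I would conclude by packaging the witnesses for infinitely many $k$ and citing Fact~\ref{fact:pebble-datalog:1} to rule out MDL definability, while the existence of the Datalog rewriting follows from \cite{inverserules} or Theorem~\ref{thm:moncq-rewriting}, completing the separation asserted in Theorem~\ref{thm:no-monadic}.
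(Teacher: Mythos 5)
Your plan follows the same architecture as the paper's proof: a concrete MDL query and CQ views whose rewriting must carry binary ``memory,'' a $(1,k)$-unravelling of the view image for infinitely many $k$, a chase with inverse rules to realize the unravelling as a genuine view image, and Fact~\ref{fact:pebble-datalog:1} to rule out an MDL rewriting (with the Datalog rewriting coming for free from \cite{inverserules} or Theorem~\ref{thm:moncq-rewriting}). However, as written it is a blueprint rather than a proof: you never exhibit $Q$ and $\views$, and the two ``technical points needing care'' you defer are exactly the mathematical content. Moreover, your guiding intuition --- a ``long cycle-like gadget'' whose unravelling ``cuts the cycle'' --- is the wrong mechanism for this theorem. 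Over binary relations a long path maps homomorphically into its own unravelling (cycles are what one unravels in the classical transitive-closure argument), but the paper's gadget is a \emph{chain} of diamonds from an $M$-marked source to a $U$-marked sink, and what makes the argument work is the \emph{arity} of the view atom $R(y,z,y',z')$ combined with the $(1,k)$ restriction: consecutive $R$-facts in the view image share \emph{two} elements, so since distinct bags of a $(1,k)$-unravelling intersect in at most one element, any homomorphic image of a chain of $n$ $R$-rectangles must lie inside a single bag --- impossible once $2n+2$ exceeds the bag size $k$; short chains ($n\le k$) are excluded by a separate distance argument in the Gaifman graph. Nothing in your sketch substitutes for this, and your reduction to ``the $(1,k)$-game rules out binary memory once the pairing is shown to be essential'' is circular: showing the pairing is essential \emph{is} the lower bound.

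The realizability step you flag as the main obstacle is resolved in the paper by a specific design feature you do not supply. The base instance $\inst'_k$ is obtained by chasing the unravelled view image $\jnst'_k$ with the inverse rules, introducing fresh \emph{anonymous} witnesses for the existential $v$ in the body of $R$; one then verifies $\views(\inst'_k)=\jnst'_k$ by arguing that any homomorphism from a view body into $\inst'_k$ must send $v$ to an anonymous element (so $R$-facts of $\views(\inst'_k)$ are already in $\jnst'_k$), while the unary guards $M$ in the body of $S$ and $U$ in the body of $T$ block spurious $S$- and $T$-facts. Without such guards the chase can create extra view facts, $\views(\inst'_k)$ can strictly exceed $\jnst'_k$, and the transfer $\views(\inst_k)\to_k\views(\inst'_k)$ together with $Q(\inst'_k)=\false$ collapses. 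Your appeal to Lemma~\ref{lem:pipeline} is also misplaced --- the paper's proof of this theorem does not route through the test characterization. In short: the approach is the right one and matches the paper, but the proposal has a genuine gap in that the witness is never constructed and both verifications whose success depends on delicate features of that witness (arity of $R$, unary markers, anonymity of chase witnesses) are only promised, not performed.
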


\begin{proof}
Consider the following Monadic Datalog query $Q$
$$
\begin{array}{rcl}
W(x) & \datalogarrow & A(x,y) \datalogwedge B(y,v) \datalogwedge C(x,z) \datalogwedge D(z,v) \datalogwedge U(v)\\
W(x) & \datalogarrow & A(x,y) \datalogwedge B(y,v) \datalogwedge C(x,z) \datalogwedge D(z,v) \datalogwedge W(v)\\
Goal    & \datalogarrow & W(x) \datalogwedge  M(x)
\end{array}
$$ and a set of views $\views$
$$
\begin{array}{rcl}
S(x,y,z) & \datalogarrow & M(x) \datalogwedge A(x,y) \datalogwedge C(x,z)\\
R(y,z,y',z') & \datalogarrow & B(y,v) \datalogwedge D(z,v) \datalogwedge A(v,y') \datalogwedge C(v,z') \\
T(y,z,v)    & \datalogarrow & U(v) \datalogwedge B(y,v) \datalogwedge D(z,v).\\
\end{array}
$$

$Q$ checks whether the instance $\inst$ contains the points $s \in M^\inst$ and $t \in U^\inst$
which are connected by a sequence of ``diamonds'' (see Figure~\ref{fig:diamond}, (a)).

\begin{figure}
\begin{center}
\scalebox{0.5}{\includegraphics{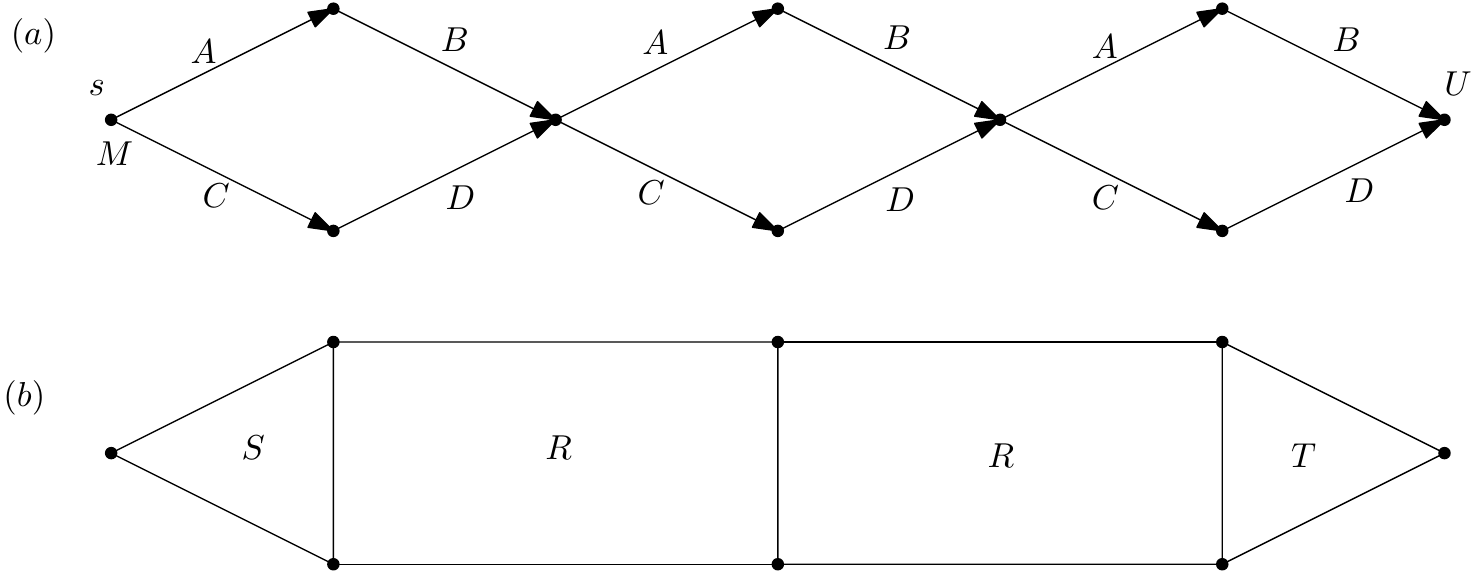}}\\
\end{center}
\caption{An unravelling of Q (a) and its view image (b)}
\label{fig:diamond}
\end{figure} 

\myeat{
Relying on Proposition~\ref{prop:infinity-rewriting} and Figure~\ref{fig:diamond}, (b) and  
it is easy to see that the query $Q'$
$$
\begin{array}{rcl}
P(y,z) & \datalogarrow & T(y,z,v)\\
P(y,z) & \datalogarrow & R(y,z,y',z') \datalogwedge P(y',z')\\
Goal    & \datalogarrow & S(x,y,z) \datalogwedge  P(y,z)
\end{array}
$$
is a Datalog rewriting of $Q$ with respect to the views. 
}

We claim that there is no Monadic Datalog rewriting of $Q$ in terms of these views.
To prove this, given an integer $k$, we construct two instances $\inst_k$ and $\inst'_k$ such that
$\inst_k \models Q$, $\inst'_k \not \models Q$, but Duplicator wins in the $(1,k)$-game for the view images 
$\views(\inst_k)$ and $\views(\inst'_k)$.

Let $\inst_k$ be a sequence of $k+1$ diamonds  from Figure~\ref{fig:diamond}, (a) and $\jnst_k$ be the view image of $\inst_k$
from Figure~\ref{fig:diamond}, (b). Let $\jnst'_k$ be the (infinite) $(1,k)$-unravelling of $\jnst_k$. 
Let $\inst'_k$ be the result of applying ``inverse rules'': 
$$
\begin{array}{rcl}
S(x,y,z) & \to & M(x) \land A(x,y) \land C(x,z)\\
R(y,z,y',z') & \to & \exists v \, B(y,v) \land D(z,v) \land A(v,y') \land C(v,z') \\
T(y,z,v)    & \to & U(v) \land B(y,v) \land D(z,v).\\
\end{array}
$$
to $\jnst'_k$ and removing the view predicates. There are two types of 
elements  in $\inst'_k$,
those that were present in $\jnst'_k$  and those introduced by the existential
quantifier over $v$ in the second rule which are called \emph{anonymous}.

We first claim that the view image of $\inst'_k$ is $\jnst'_k$. To see this, consider a homomorphism $h$ from
the body of 
$q(y,z,y',z') =  B(y,v), D(z,v), A(v,y'), C(v,z')$ 
into $\inst'_k$. Note that $h$ must map $v$ into an anonymous point, 
and so $y,z,y'$ and $z'$ must
be mapped to the points $y_0,z_0, y'_0, z'_0$ for some $R(y_0,z_0,y'_0,z'_0) \in \jnst'_k$.
If follows that any $R$-atom in $V(\inst'_k)$ is in $\jnst'_k$. This also holds for $S$-atoms
thanks to the unary predicate $M$; and  also for $R$-atoms, thanks to the unary predicate $U$.

Our next claim is that $\inst'_k \models Q$ iff $\jnst'_k \models Q'$. Indeed, $\inst_n$ maps into $\inst'_k$ iff 
$\jnst_n$ maps into $\jnst'_k$.

Finally, we claim that $\jnst'_k \not\models Q'$. We show that for any $n$ there is no homomorphism from 
$\jnst_n$ to $\jnst'_k$. Indeed, as $\jnst'_k$ maps homomorphically onto $\jnst_k$, for any points
$s, t$ in $\jnst'_k$ with $s \in \pi_1(S^{\jnst'_k})$, $t \in \pi_3(T^{\jnst'_k})$, the distance between $s$ and $t$
(measured in the Gaifman graph for $\jnst'_k$) cannot be less then $k+1$. This implies the claim
for $1 \le n \le k$. For $n \ge k+1$ the claim follows from the observation
that there is no homomorphism from the query describing the pattern in Figure \ref{fig:patt}
into $\jnst'_k$.  Indeed, it can be easily shown by induction that if such a homomorphism
existed, then there would be a single bag in $\jnst'_k$ containing its whole image.
This  is a contradiction, as all bags are of size $k$, and the query in question 
has $2k+2$ variables. 
\begin{figure}
\begin{center}
\scalebox{0.65}{\includegraphics{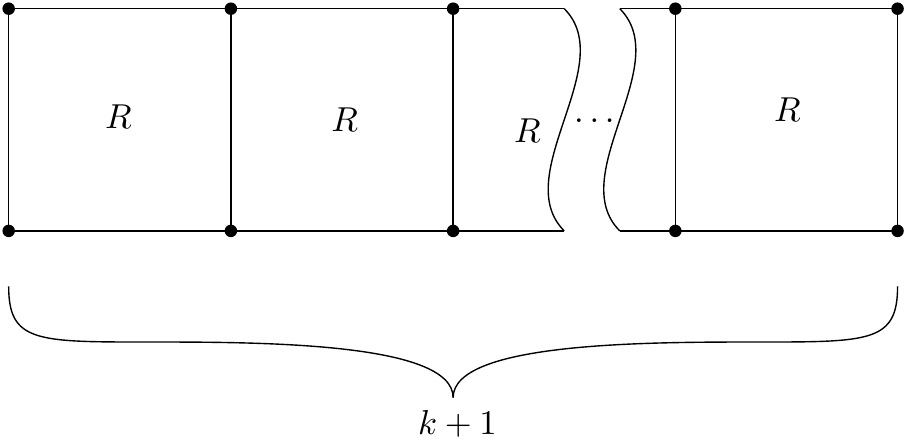}}\\
\end{center}
\caption{A long row of $R$ rectangles}
\label{fig:patt}
\end{figure}

It follows that $\inst_k$ and $\inst'_k$ are as required.
\end{proof}

\myparagraph{Non-rewritability in Datalog}
Now we show that monotonic determinacy does not imply Datalog rewritability, even for 
Monadic Datalog queries and UCQ views:

\begin{theorem}
\label{thm:no-datalog-mdl-ucq}
There exists a Monadic Datalog query $Q$ and a set of UCQ views $\views$ such that $Q$ is monotonically determined by $\views$ but there is no Datalog rewriting  of $Q$ over $\views$.
\end{theorem}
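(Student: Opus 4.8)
The plan is to reuse the tiling machinery of Section~\ref{sec:lower} together with the pebble-game tools just introduced. Recall that Proposition~\ref{prop:tpnotmondet} already gives a uniform construction $(Q_{TP}, \views_{TP})$ with $Q_{TP}$ in Monadic Datalog and $\views_{TP}$ in UCQ, whose failing tests for monotonic determinacy are exactly the grid-like structures encoding solutions to $TP$. The idea is to instantiate this with a \emph{fixed} tiling problem $TP_0$ that has \emph{no} solution but whose ``approximate'' solutions (partial or defective grids) grow without bound. By Proposition~\ref{prop:tpnotmondet}, ``no solution'' guarantees that $Q := Q_{TP_0}$ is monotonically determined over $\views := \views_{TP_0}$, so the first half of the statement is free. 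The whole content is then to show that no Datalog program can serve as a separator, and for this I would invoke Fact~\ref{fact:pebble-datalog}.

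Concretely, I would build, for each $k$, two view-image instances $\jnst_k$ and $\jnst'_k$ over the view schema $\Sigma_\views$ such that the separator $R$ (if it existed) must accept $\jnst_k$ and reject $\jnst'_k$, yet $\jnst_k \to_k \jnst'_k$. Since a separator $R$ satisfies $R(\views(\inst)) = Q(\inst)$, the natural choice is $\jnst_k = \views(\inst_k)$ and $\jnst'_k = \views(\inst'_k)$ for base instances $\inst_k \models Q$ and $\inst'_k \not\models Q$. The accept/reject requirement then reduces to choosing $\inst_k$ to be (the inverse-rule expansion of) a genuine grid-like \emph{test} that triggers one of the verification disjuncts $Q_{\qverify}$ --- i.e.\ a structure that looks like a grid but contains a tiling defect forced by the unsolvability of $TP_0$ --- while $\inst'_k$ is a ``stretched'' or unraveled variant large enough that no short defect pattern is visible, so that $Q_{\qverify}$, $Q_{\qstart}$, and $Q_{\qhelper}$ all fail on it. The delicate point is that $Q$ is a \emph{disjunction}, so I must ensure \emph{every} disjunct fails on $\inst'_k$; the start-query $Q_{\qstart}$ and helper $Q_{\qhelper}$ should be engineered (or the base instances chosen) so they cannot accidentally fire on $\jnst'_k$.

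The main obstacle is establishing the pebble-game inequality $\jnst_k \to_k \jnst'_k$ at the level of the \emph{view images}, not the base instances. This is exactly the subtlety that appeared in the proof of Theorem~\ref{thm:no-monadic}: one cannot simply play the game on $\inst_k, \inst'_k$, because the separator runs on $\views(\inst_k)$ and $\views(\inst'_k)$, and applying $\views$ can collapse or duplicate structure. The clean route is Fact~\ref{fact:pebble-tw}: it suffices to show that every instance of treewidth $\le k-1$ that maps into $\jnst_k$ also maps into $\jnst'_k$. I would therefore take $\jnst'_k$ to be a $k$-unravelling (or $(1,k)$-unravelling) of $\jnst_k$ and appeal to Fact~\ref{fact:pebble-unravellings}(1), which yields $\jnst_k \to_k U$ for $U$ any $k$-unravelling of $\jnst_k$, giving the game inequality essentially for free --- provided I verify the accept/reject condition survives unravelling. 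The hard part is arranging that $Q$ is \emph{true} on the base instance underlying $\jnst_k$ yet \emph{false} on the base instance underlying the unravelling $\jnst'_k$: truth of $Q$ depends on global grid structure (a long enough chain of $\xsucc$/$\ysucc$ forced by $Q_{\qstart}$ together with a defect detected by $Q_{\qverify}$), and I must confirm that unravelling destroys exactly the global pattern needed by the accepting disjunct while preserving local acceptance of the views, so that $\views(\inst'_k) = \jnst'_k$ holds as in Theorem~\ref{thm:no-monadic}. Once these checks are in place, Fact~\ref{fact:pebble-datalog} delivers non-definability in Datalog and completes the proof.
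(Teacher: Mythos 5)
Your setup is right as far as it goes (monotonic determinacy is free from Proposition~\ref{prop:tpnotmondet} once the tiling problem is unsolvable; the game must be played on view images; Facts~\ref{fact:pebble-tw}--\ref{fact:pebble-unravellings} are the right tools), but the core of your accept/reject plan fails, and for a structural reason. You propose to make $Q$ true on $\inst_k$ by triggering a \emph{verification} disjunct $Q_{\qverify}$, i.e.\ by planting a tiling defect, and false on $\inst'_k$ because after unravelling ``no short defect pattern is visible.'' This is exactly backwards: each defect detected by $Q_{\qverify}$ is a \emph{fixed-size, low-treewidth} CQ, and the views $V_{\ha}$, $V_{\va}$, $V_{T_i}$ export it faithfully to the view image. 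Since $\jnst_k \to_k \jnst'_k$ and the defect pattern has treewidth far below $k$, Fact~\ref{fact:pebble-tw} forces the defect to map into $\jnst'_k$ as well; and because $V_{\ha}$ and $V_{T_i}$ are (essentially) atomic, any base instance $\inst'_k$ whose view image contains that pattern again satisfies $Q_{\qverify}$. So your rejected instance cannot reject. Unravelling destroys \emph{global} obstructions, not local ones, which is why the paper's proof inverts your design: $Q$ is true on $\inst_\ell$ via the global disjunct $Q_{\qstart}$ (the bare $C$/$D$-marked axes, with no grid points and hence no tiles or defects at all, and with $C,D$ deliberately not exported as atomic views), and falsity of $Q$ on $\inst'_\ell$ is achieved by expanding every $S$-fact of the unravelled view image with the second (tile-introducing) inverse rule, so no $C$/$D$ facts exist and $Q_{\qstart}$, $Q_{\qhelper}$ cannot fire.

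This exposes the missing key idea: once you expand the $S$-facts by inverse rules you \emph{must} assign a tile to every unravelled grid point, and to keep $Q_{\qverify}$ false that assignment must be a \emph{valid} tiling of the unravelled grid. Your vague requirement that ``approximate solutions grow without bound'' does not supply this; what is needed is precisely Lemma~\ref{lemma:special-tiling} (adapted from \cite{atserias07:power}): a single tiling problem $TP^*$ that tiles no rectangular grid yet tiles every $k$-unravelling of every sufficiently large grid. The paper then needs two further steps you do not anticipate: interpreting the unravelled \emph{view image} $U_\ell$ as an unravelling $W_\ell$ of the grid over the schema $\{\texttt{H},\texttt{V},\texttt{I},\texttt{F}\}$ (grid points are pairs $(w,z)$ with $S(w,z)\in U_\ell$), and the treewidth-squaring bookkeeping that forces unravelling at width $\lfloor\sqrt{\ell-1}\rfloor$ rather than $\ell$, since bags of size $\sqrt{\ell}$ in $U_\ell$ yield bags of size up to $\ell$ in $W_\ell$. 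Without the special tiling problem and this grid-reinterpretation, the ``checks'' you defer cannot be completed, so the proposal has a genuine gap rather than a repairable rough edge.
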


The query $Q$ and views $\views$ we will use 
have the form $Q_{TP^*}$ and $\views_{TP^*}$ for a particular tiling problem $TP^*$ as defined in Section~\ref{sec:lower}. 
We define a schema $\delta:=\{\texttt H, V, I, F\}$  where $\texttt H, V$ are binary and $\texttt I, F$ are unary relations.
Let $TP=(Tiles, HC,VC, IT, FT)$ be a tiling problem. 
Given a database instance $\inst$ for schema $\delta$, we say that $\inst$ \emph{can be  tiled
by  $TP$} if there is an assignment of each element of $\inst$ to a tile where $\texttt H, V, I, F$
satisfy the horizontal, vertical, initial and final constraints. 
We denote by $\inst_{TP}$ the tiling problem $TP$ viewed as a relational structure for $\delta$, with $Tiles$ the domain.
Then an instance for $\delta$
can be tiled by $TP$ exactly when it has a homomorphism into $\inst_{TP}$.
For  $n,m\geq 1$, we denote by $\inst_{n,m}^{grid}$ the database instance with 
domain $\{(i,j):\text{$1\leq i\leq n$ and $1\leq j\leq m$\}}$ and facts ${\texttt I}((1,1))$, ${\texttt F}((n,m))$, 
${\texttt H}((i,j), (i+1,j))$, for every $1\leq i<n$ and $1\leq j\leq m$, and ${\texttt V}((i,j),(i,j+1))$, for every $1\leq i\leq n$ and $1\leq j< m$. Then $TP$ has a solution in the usual sense 
if  $\inst_{n,m}^{grid}$ can be tiled with $TP$.

We can adapt techniques of \cite{atserias07:power} to show that
there is a tiling problem for which no $n \times m$ rectangular grid can be tiled, but where for each
$k$ large enough grids can be ``$k$-approximately tiled'', in the sense of having $k-$unravellings
that can be tiled.
\begin{lemma}
\label{lemma:special-tiling}
There is a tiling instance $TP^*$ such that $\inst_{n,m}^{grid}$ can not be tiled with $TP^*$ for each $n,m \geq 1$
but for each $n,m \geq 3$ and each $k$ with $2 \leq k < \min\{n,m\}$ any  $k$-unravelling
of $\inst_{n,m}^{grid}$  can be tiled with ${TP^*}$.
\end{lemma}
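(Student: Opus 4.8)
The plan is to reduce the statement, via the pebble-game machinery of Facts~\ref{fact:pebble-tw} and~\ref{fact:pebble-unravellings}, to the construction of a single finite tiling template $\inst_{TP^*}$ with two properties: (a) $\inst_{n,m}^{grid}\not\to\inst_{TP^*}$ for every $n,m\geq 1$, and (b) for $n,m\geq 3$ and $2\le k<\min\{n,m\}$, Duplicator wins the existential $k$-pebble game from $\inst_{n,m}^{grid}$ to $\inst_{TP^*}$. Indeed, by Fact~\ref{fact:pebble-unravellings}(2), any $k$-unravelling $U$ of $\inst_{n,m}^{grid}$ satisfies $U\to\inst_{TP^*}$ iff $\inst_{n,m}^{grid}\to_k\inst_{TP^*}$, so (b) is exactly the tileability of the unravellings, while (a) is the non-tileability of the grids themselves. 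By Fact~\ref{fact:pebble-tw}, (b) is in turn equivalent to the assertion that every instance of treewidth $\le k-1$ that maps into $\inst_{n,m}^{grid}$ also maps into $\inst_{TP^*}$.

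For the construction I would adapt the linear-equation width lower bounds of \cite{atserias07:power}. Take $TP^*$ to encode a system of $\mathbb{Z}_2$-linear equations (a Tseitin-style contradiction) whose variable--constraint incidence is laid out along the grid: each tile carries a $\mathbb{Z}_2$-value together with enough bookkeeping so that the $\texttt H$- and $V$-compatibility relations propagate a running parity, while the unary relations $\texttt I, F$ pin down boundary values. The equations are chosen so that their sum over the whole $n\times m$ grid is the contradictory equation $0=1$; this is arranged exactly as in a Tseitin contradiction, where each variable is counted an even number of times while the total charge is odd.

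Property (a) is then immediate: any tiling of $\inst_{n,m}^{grid}$ would yield a solution of the linear system, but summing all of its equations gives $0=1$, so no grid of any dimensions tiles --- in particular this covers the degenerate thin grids with $n$ or $m$ at most $2$, for which (b) is not claimed. The content of the lemma is property (b). Here I would argue that any fragment of the grid of treewidth $\le k-1$ carries only a proper sub-system of the equations, and that every such sub-system is satisfiable because it omits part of the global parity: a structure of treewidth $<\min\{n,m\}$ mapping into the grid cannot contain a separating row or column, so one can always leave a variable free and propagate a consistent solution. Equivalently, one exhibits Duplicator's strategy directly, maintaining a partial solution of the linear system on the at most $k<\min\{n,m\}$ pebbled positions; since any $k$ positions fail to cut the grid, the maintained partial solution always extends to Spoiler's next move.

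The main obstacle is precisely this \emph{width lower bound} for (b): proving that the genuine contradiction in the grid has pebble-game width at least $\min\{n,m\}$, i.e.\ that the obstruction is truly global and invisible to $k$-consistency for $k<\min\{n,m\}$. This is where the grid's separator and expansion structure must be combined with the combinatorics of the chosen $\mathbb{Z}_2$-system, adapting the arguments of \cite{atserias07:power}; everything else --- the reduction through the pebble games and the $0=1$ argument for (a) --- is routine.
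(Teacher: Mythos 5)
Your high-level route is the paper's route: reduce the lemma via Fact~\ref{fact:pebble-unravellings}(2) (and Fact~\ref{fact:pebble-tw}) to exhibiting a single finite template $\inst_{TP^*}$ with $\inst_{n,m}^{grid}\not\to\inst_{TP^*}$ and $\inst_{n,m}^{grid}\to_k\inst_{TP^*}$, realized as a $\mathbb{Z}_2$-parity (Tseitin-style) contradiction adapted from \cite{atserias07:power}; your double-counting argument for part (a) is exactly the paper's first claim. However, for part (b) you have explicitly left the central step open, and the sketch you do give contains a step that would fail as stated: it is not true that a structure of treewidth $\le k-1$ mapping into the grid ``carries only a proper sub-system of the equations'' --- a long path has treewidth $1$ yet can map onto the grid covering every vertex and edge, pulling back the \emph{entire} system. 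So satisfiability of the pullback cannot be argued by omission of equations; what actually makes it work is that walk-like assignments satisfy all parity constraints except at one \emph{movable defect}.

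That movable defect is precisely the mechanism the paper supplies and your proposal lacks, in two concrete respects. First, to obtain one fixed tile set serving all $n,m$ simultaneously, the paper types grid points by the $3\times 3$ grid $G_{3,3}$ (corner, border, and interior types, via a projection $\Psi$), each tile being a type together with a $0/1$ labelling of its incident edges, with odd parity required only at type $(1,1)$; your ``enough bookkeeping'' is exactly this collapse of the family $\{\inst^*_{n,m}\}$ into one instance, and it needs to be spelled out. Second, Duplicator's strategy is not a greedy local extension of partial solutions: the paper builds, for each walk $P$ starting at $(1,1)$, the partial homomorphism $h^P$ given by edge-visit parities, defined everywhere except at the walk's endpoint (the defect), and takes as the strategy family, in the sense of Fact~\ref{fact:win-duplicator}, the restrictions $h^P|_S$ with $|S|\le k$ such that the defect lies on a cross $C_{p,q}$ disjoint from $S$. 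Extension is achieved by rerouting the walk along crosses, which changes edge values only inside the old cross and hence not on $S$; the hypothesis $k<\min\{n,m\}$ guarantees a cross avoiding the pebbled set, giving the required closure conditions. Since you flagged exactly this step as ``the main obstacle'' without resolving it, your proposal is a faithful outline of the paper's argument but not yet a proof of the lemma.
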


\begin{proof}[Proof of Theorem~\ref{thm:no-datalog-mdl-ucq}]
Let $TP^*$ be the tiling instance from Lemma~\ref{lemma:special-tiling} and let $Q_{TP^*}$ and $\views_{TP^*}$ be the MDL query and UCQ views from Theorem~\ref{thm:undec-mdl-ucq}. 
Recall that $Q_{TP^*}$ and $\views_{TP^*}$ are defined over the schema 
\[
\sigma:=\{\xsucc,\ysucc,C,D,\xend,\yend, \xproj,\yproj, T_1,\dots, T_p\}
\]
where $\{T_1,\dots,T_p\}$ is the tile set of $TP^*$. 
Since $\inst_{n,m}^{grid}$ cannot be tiled with $TP^*$, for each $n,m\geq 1$, the tiling instance $TP^*$ has no solution, and hence $Q_{TP^*}$ is monotonically determined by $\views_{TP^*}$. 

Fix $\ell\geq 10$. We shall define instances $\inst_\ell$ and $\inst'_\ell$ over $\sigma$ such that $\views_{TP^*}(\inst_\ell)\to_{\lfloor\sqrt{\ell-1}\rfloor}\views_{TP^*}(\inst'_\ell)$, $Q_{TP^*}(\inst_\ell)=\true$ and $Q_{TP^*}(\inst'_\ell)=\false$. 
By applying Fact~\ref{fact:pebble-datalog}, this implies that $Q_{TP^*}$ has no Datalog rewriting over $\views_{TP^*}$, as required.  
The instance $\inst_\ell$ has domain $z_0\cup X\cup Y$, where $X:=\{x_1,\dots,x_\ell\}$ and $Y:=\{y_1,\dots,y_\ell\}$, and facts $D(x_i), C(y_i)$, for all $1\leq i\leq \ell$, along with
\[
\xsucc(x_i, x_{i+1}), \ysucc(y_i,y_{i+1}) \mbox{ for all } 1\leq i< \ell
\]
along with $\xend(x_\ell)$, $\yend(y_\ell)$, $\ysucc(z_0,y_1)$ and $\xsucc(z_0,x_1)$.
Figure~\ref{fig:unfold} (a) depicts $\inst_3$. Informally,
$\inst_\ell$ is the expansion of $Q_{TP^*}$ (more precisely of $Q_{\qstart}$) representing the $(\ell\times\ell)$-grid. In particular, $Q_{TP^*}(\inst_\ell)=\true$.

Intuitively, we would now like to define $\inst'_\ell$ so that its view
image contains a $\lfloor\sqrt{\ell-1}\rfloor$ unravelling of the view image of $\inst_\ell$.
By Fact \ref{fact:pebble-unravellings} 
we would have  $\views_{TP^*}(\inst_\ell)\to_{\lfloor\sqrt{\ell-1}\rfloor}\views_{TP^*}(\inst'_\ell)$ as required.
But using Lemma \ref{lemma:special-tiling} and the definition of $Q_{TP^*}$ we hope to show
$Q_{TP^*}(\inst'_\ell)=\false$. We will follow this intuition, but to define
the appropriate  $\inst'_\ell$ we will need to construct several auxiliary instances.
Let $E_\ell:=\views_{TP^*}(\inst_\ell)$. Figure~\ref{fig:unfold} (b) depicts $E_{3}$. 
Recall that view images are defined over schema  $\tau$:
\begin{flalign*}
& \{V_\xsucc,V_\ysucc, V_\xend,V_\yend, V_{T_1},\dots, V_{T_p},  V^\qhelper_C,V^\qhelper_D, & \\
&V_{HA}, V_{VA}, V_I, V_F, S\} &
\end{flalign*}
Intuitively, $E_\ell$ copies the $\xsucc,\ysucc,\xend$ and $\yend$-facts from $\inst_\ell$, while the $S$-facts correspond to the product $Y\times X$.  
Let $U_\ell$ be a $\lfloor\sqrt{\ell-1}\rfloor$-unravelling of $E_\ell$, which is witnessed by a 
homomorphism $\Phi:U_\ell\to E_\ell$ and a tree decomposition $(\tau,(\lambda(u))_{u\in \vertices(\tau)})$ of $U_\ell$.

In order to exploit Lemma~\ref{lemma:special-tiling}, we need to interpret $U_\ell$ as an unravelling of the grid $\inst_{\ell,\ell}^{grid}$.
The idea is to define a new instance $W_\ell$ over schema $\delta=\{\texttt H, V, I, F\}$ (recall that $\delta$ is the schema of $\inst_{\ell,\ell}^{grid}$) whose domain consists of all the $S$-facts
of $U_\ell$ and the horizontal and vertical successor relations are interpreted in the natural way.
Thus we can think of $W_\ell$ as an unravelling of the $S$-facts of $E_\ell$, which in turn correspond to grid points of $\inst_{\ell,\ell}^{grid}$ (the fact $S(y_j,x_i)$ corresponds to the point $(i,j)$).
Formally, $W_\ell$ is defined as follows:
\begin{compactenum}
\item The domain of $W_\ell$ contains all pairs $(w,z)$ such that $S(w,z)$ is a fact in $U_\ell$.
\item ${\texttt I}((w,z))$ is a fact iff $\Phi(w)=y_1$ and $\Phi(z)=x_1$. Similarly, ${\texttt F}((w,z))$ is a fact iff $\Phi(w)=y_\ell$ and $\Phi(z)=x_\ell$.
\item ${\texttt H}((w,z),(w',z'))$ is a fact iff $w=w'$ and
$V_\xsucc(z,z')$ is a fact in $U_\ell$.
Similarly, ${\texttt V}((w,z),(w',z'))$ is a fact iff $z=z'$ and $V_\ysucc(w,w')$ is a fact in $U_\ell$.
\end{compactenum}

\begin{claim}
 \label{claim:wl}
$W_\ell$ can be tiled by $TP^*$.
\end{claim}
\begin{proof}
We use the characterization for tilings of $W_\ell$  as homomorphisms into $\inst_{TP^*}$.
Then by Lemma \ref{lemma:special-tiling} and Fact~\ref{fact:pebble-tw}, it suffices
to show (a) $W_\ell\to \inst_{\ell,\ell}^{grid}$ and (b)  $\tw(W_\ell) \leq \ell-2$. 

For (a), we can take the homomorphism $\psi$ such that for every $(w,z)$ in $W_\ell$, we have $\psi((w,z))=(i,j)$ iff $\Phi(w)=y_j$ and $\Phi(z)=x_i$, for $1\leq i,j\leq \ell$. 
Let us argue that $\psi$ is a homomorphism. If ${\texttt I}((w,z))$ is a fact in $W_\ell$, by definition we have $\Phi(w)=y_1$ and $\Phi(z)=x_1$, 
and then ${\texttt I}(\psi((w,z)))={\texttt I}((1,1))$, which is a fact in $\inst_{\ell,\ell}^{grid}$. 
If ${\texttt H}((w,z),(w',z'))$ is a fact in $W_\ell$, then $w=w'$ and $S(w,z), S(w',z'), V_\xsucc(z,z')$ are facts in $U_\ell$. 
It follows that $\Phi(w)=\Phi(w')=y_j$, $\Phi(z)=x_i$ and $\Phi(z')=x_{i+1}$, for some $1\leq j\leq \ell$ and $1\leq i<\ell$. 
Hence ${\texttt H}(\psi((w,z)),\psi((w',z')))={\texttt H}((i,j),(i+1,j))$, which is a fact in $\inst_{\ell,\ell}^{grid}$. The argument is analogous for $\texttt F$ and $\texttt V$-facts. 

For condition (b), recall that $(\tau,\lambda)$
is a  decomposition of 
$U_\ell$ with $|\lambda(u)|\leq \lfloor\sqrt{\ell-1}\rfloor$, for all $u\in \vertices(\tau)$. 
We define a decomposition 
$(\tau',\lambda')$
 for $W_\ell$ with
$\tau':=\tau$ and, for each $u\in \vertices(\tau')$, we have 
\[
\lambda'(u):=\{(w,z): \text{$\{w,z\}\subseteq \lambda(u)$ and $S(w,z)$ is a fact in $U_\ell$}
\]
The connectedness condition is  inherited from $\tau$.
Suppose that we have a fact ${\texttt H}((w,z),(w',z'))$ in $W_\ell$. Then $w=w'$, and 
$S(w,z), S(w',z'), V_\xsucc(z,z')$ are facts in $U_\ell$. There must exist $u\in \vertices(\tau)=\vertices(\tau')$ such that  
$\{w=w',z,z'\}\subseteq \lambda(u)$ (as every clique is always contained in a bag). 
It follows that $\{(w,z),(w',z')\}\subseteq \lambda'(u)$. 
The argument for $\texttt V$-facts is analogous. 
Finally, note that $|\lambda'(u)|\leq |\lambda(u)|^2\leq \ell-1$, for all $u\in \vertices(\tau')$. We conclude that the treewidth of $W_\ell$ is $\leq \ell-2$ as required.
\renewcommand{\qedsymbol}{$\blacksquare$}
\end{proof}

Using the tiling solution $\chi$ for $W_\ell$ given by Claim~ \ref{claim:wl} and ``chasing with the inverse
rules of the view definitions''
we  can move to the desired instance $\inst'_\ell$ for the base schema $\sigma$.
The instance $\inst'_\ell$ is obtained from $U_\ell$ by replacing each fact $V_\xsucc(w,z)$, $V_\ysucc(w,z)$, $V_\xend(w)$ and $V_\yend(w)$, 
by facts $\xsucc(w,z)$, $\ysucc(w,z)$, $\xend(w)$ and $\yend(w)$, respectively; and by replacing each fact $S(w,z)$ by facts $\xproj(z,s_{w,z})$, $\yproj(w,s_{w,z})$,  
and 
$T_i(s_{w,z})$, where $s_{w,z}$ is a fresh element and $\chi((w,z))=T_i$. By construction, all facts of $U_\ell$ are contained in 
those of $\views_{TP^*}(\inst'_\ell)$ and 
hence $U_\ell\to \views_{TP^*}(\inst'_\ell)$. 
By Fact~\ref{fact:pebble-unravellings} (2), we have $\views_{TP^*}(\inst_\ell)\to_{\lfloor\sqrt{\ell-1}\rfloor} \views_{TP^*}(\inst'_\ell)$. 

It remains to show that $Q_{TP^*}(\inst'_\ell)=\false$. Since there are no $C$ or $D$-facts in $\inst'_\ell$, $Q_{\qstart}$ and $Q_{\qverify}$ cannot hold in $\inst'_\ell$. 
Towards a contradiction, suppose some rule (8)--(11) holds in $\inst'_\ell$. 
If rule (8) holds then there are elements $w,z,z'$ in $\inst'_\ell$ and facts 
\begin{align*}
\yproj(w,s_{w,z}),\yproj(w,s_{w,z'}),\xproj(z,s_{w,z}), \\
\xproj(z',s_{w,z'}), \xsucc(z,z')
\end{align*}
along with $T_i(s_{w,z})$, $T_j(s_{w,z'})$, for tiles $(T_i,T_j)\not\in HC$ for $TP^*$. By construction of 
$\inst'_\ell$, 
we know $S(w,z)$, $S(w,z')$ and $V_\xsucc(z,z')$ are in $U_\ell$. 
In particular, ${\texttt H}((w,z),(w,z'))$ is a fact in $W_\ell$. On the other hand, 
by definition of $\inst'_\ell$, we know  $\chi((w,z))=T_i$ and $\chi((w,z'))=T_j$. 
Since $\chi$ is a valid tiling of  $W_\ell$ for ${TP^*}$, 
 $(T_i,T_j)\in HC$ in $TP^*$; a contradiction. 
The case of rule (9) is symmetric. 
If rule (10) holds,  there are  $u,w,z$ in $\inst'_\ell$ and facts 
\[
\ysucc(u,w), \xsucc(u,z), X(z,s_{w,z}), Y(w,s_{w,z})
\] 
along with $T_i(s_{w,z})$, for some tile $T_i$ not an initial tile of $TP^*$ . 
It follows that $V_\ysucc(u,w), V_\xsucc(u,z)$ and $S(w,z)$ are facts in $U_\ell$. 
Note that ${\texttt I}((w,z))$ is a fact in $W_\ell$ since $\Phi$ is a homomorphism from the unravelling $U_\ell$ to $E_\ell$ and then we must have $\Phi(u)=z_0$, $\Phi(w)=y_1$ and $\Phi(z)=x_1$. 
Now by definition of $\inst'_\ell$, we know that $\chi((w,z))=T_i$. Since $\chi$ is a valid
tiling of  $W_\ell$ for  ${TP^*}$, $T_i$ is an initial tile of
$TP^*$, which is a contradiction.
The argument for rule (11) is analogous. We conclude that $Q_{TP^*}(\inst'_\ell)=\false$. 
\end{proof}

\myparagraph{Complexity of separators}
Thus far we have seen that there may be no Datalog rewriting even in the case of UCQ views.
What about separators, which are like rewritings, but not required
to be in a logic?
It is easy to see that for UCQ queries and views, there is always
a rewriting in $\conp$ and a rewriting in $\np$. This is true because every view image
is the view image of a small instance; basically the same observation was made for regular path
queries in \cite{determinacyregularpath}. Thus if we want really strong lower bounds, we need
to deal with recursive queries, and we need to look beyond regular path queries.

We show that when we turn to general Datalog queries and views, there may be no separator
in $\ptime$. In fact, 
we can find monotonically determined examples  with no separator that can be performed within any given computable time bound. 
\begin{theorem}  \label{thm:nocomputable}
There is no function $F$ such that
for all $Q, \views$ such that $\views$ and $Q$ are in Datalog and
$Q$ is monotonically determined over $\views$, there is a separator
of $Q$ over $\views$
that runs in time
$F(\views(I))$.
\end{theorem}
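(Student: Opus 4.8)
The plan is to show that separator complexity can be pushed above any prescribed computable bound by diagonalizing against it, using a construction in which a \emph{single} monotonically determined pair has a whole family of view images whose (well-defined) separator value encodes a computationally hard decision problem. Since the statement only asks us, for each candidate $F$, to exhibit one bad pair, it suffices to prove the following non-uniform version, which is how I read the theorem: for every computable $f$ there is a monotonically determined pair $(Q_f,\views_f)$ of a Datalog query and Datalog views such that no separator of $Q_f$ over $\views_f$ runs in time $f$. As $f$ ranges over all computable functions, this rules out any computable $F$ bounding the separators of all monotonically determined pairs uniformly.

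Fix a computable $f$. By the time hierarchy theorem there is a language $L$, decidable by a fixed deterministic Turing machine $D$, with $L\notin \mathsf{TIME}(g)$ where $g(n)=f(n^{c})$ and the constant $c$ is chosen to absorb the polynomial blow-ups introduced below. The core is a reduction turning $D$ into a Datalog pair. I would encode, as a database instance, a full computation tableau of $D$ on an input $x$ (columns are tape cells, rows are successive configurations), exactly along the lines of the tiling encodings of Section~\ref{sec:lower}: the tiles are the local transition windows of $D$, and the horizontal/vertical compatibility relations enforce a legal step. Crucially, the views do not simply read off facts; following the inverse-rules construction of Theorem~\ref{thm:no-datalog-mdl-ucq}, I use a grid-generating view so that each view image $\jnst_x$ has size polynomial in $|x|$ and reveals only the boundary data fixing $x$ and $D$, while the (much larger) tableau lives only in the preimages obtained by applying the inverses of the view definitions. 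The query $Q_f$ then checks, in Datalog, that a legal accepting tableau is present, so that on any instance whose view image is $\jnst_x$ the value of $Q_f$ is forced to be $[x\in L]$.

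The crux is to make $(Q_f,\views_f)$ genuinely monotonically determined while keeping $\jnst_x$ small: compression pulls against monotonic determinacy, since many distinct preimages of one small view image could in principle disagree on $Q_f$. I would resolve this precisely as in Proposition~\ref{prop:tpnotmondet} and Theorem~\ref{thm:no-datalog-mdl-ucq}, writing the query to detect \emph{violations} of the tableau constraints and designing the grid-generating view so that, up to homomorphism, every preimage of $\jnst_x$ contains the canonical $D$-on-$x$ tableau. Invoking Lemma~\ref{lem:homdet} (monotonic determinacy coincides with homomorphic determinacy for Datalog) together with the forward--backward and CQ-approximation analysis, one verifies that the value of $Q_f$ is constant across all instances with a common view image and equals $[x\in L]$; this yields both monotonic determinacy and the identity $\doutput{R}{\jnst_x}=[x\in L]$ for every separator $R$.

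Finally, since $x\mapsto \jnst_x$ is computable in polynomial time with $|\jnst_x|=\mathrm{poly}(|x|)$, any separator $R$ running in time $f(|\jnst_x|)$ would decide $L$ in time $f(\mathrm{poly}(|x|))\le g(|x|)$, contradicting $L\notin\mathsf{TIME}(g)$. Hence no separator of $Q_f$ over $\views_f$ runs in time $f$, completing the diagonalization. I expect the main obstacle to be exactly the monotonic-determinacy verification of the compressed encoding --- reconciling small view images with a single well-defined but computationally hard determined value --- which is the point where the grid/tiling and inverse-rules machinery of Sections~\ref{sec:lower}--\ref{sec:nonrewrite} must be adapted from ``detecting a solution'' to ``computing the outcome of a fixed long computation.''
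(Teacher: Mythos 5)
Your high-level skeleton---fix a computable bound, invoke the time hierarchy theorem to get a hard total deterministic machine, encode its computations so that view images stay small while the computation itself lives only in the preimages, and conclude that a fast separator would decide the hard language---is exactly the paper's strategy, and your non-uniform reading of the statement matches how the paper actually argues. But your instantiation has a genuine gap, located precisely at the step you yourself defer (``one verifies\dots''). A Datalog query cannot ``check that a legal accepting tableau is present'': every Datalog query is the union of its CQ approximations, whose treewidth is bounded by the maximal rule-body size, so any query true on all legal grid tableaux is already true on bounded-treewidth unravellings of grids that carry no legal tableau---this is exactly the phenomenon the paper itself exploits in Lemma~\ref{lemma:special-tiling} and Theorem~\ref{thm:no-datalog-mdl-ucq}, so reusing the Section~\ref{sec:lower} grid machinery for a \emph{positive} tableau check is self-defeating. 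Your fallback, letting $Q_f$ ``detect violations,'' breaks monotonic determinacy instead of saving it: if the views reveal only boundary data, then a corrupted tableau (containing a violation, hence $Q_f$ true) and the canonical rejecting tableau ($Q_f$ false) are both preimages of the same small view image $\jnst_x$, contradicting determinacy whenever $x\notin L$. And the grid-generating view of Section~\ref{sec:lower} cannot be ``designed so that every preimage contains the canonical tableau up to homomorphism''---its defining feature is that preimages realize \emph{arbitrary} tilings.

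The paper resolves exactly this tension by moving both the error check and the run-presence check \emph{into the views}, and by using linear (string) rather than grid encodings. Its views are: identity views on the input segment, a nullary Datalog view $V^{\badlyshaped}$ that is true iff some badly-shaped (erroneous) encoding occurs anywhere, and a unary view $V^{\prerun}$ asserting that a complete \emph{halting} run is present; the query is then ``badly-shaped $\vee$ accepting pre-run.'' On a view image with $V^{\badlyshaped}$ false, $V^{\prerun}$ true, and input $w$ exposed, every preimage contains an error-free halting run on $w$, and \emph{determinism} of the machine forces that run to be the canonical one, pinning $Q$ to the value ``$M$ accepts $w$''---this is where monotonic determinacy actually comes from, not from Lemma~\ref{lem:homdet} or the forward--backward machinery you invoke, which play no role here. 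The 1D encoding is also what makes the required checks Datalog-expressible at all (path-shaped approximations; a binary IDB walking two consecutive configuration segments in lockstep detects mismatches), and determinism makes a linear run suffice, so no grid is needed. Finally, note the paper's hierarchy machine halts on all inputs, which is what guarantees that the small view instance you feed to the separator really is a view image of some instance; your proposal needs, and should state, the same totality assumption.
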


The proof  is 
inspired by  a construction in  \cite{determinacyregularpath} which obtained
Datalog views and queries where the certain answers are difficult to 
compute.  
Roughly speaking, we modify this by considering a query verifying that the base data
represent an input and a valid computation of a high-complexity deterministic
Turing Machine, while the views verify that the computation is halting and return the input.
Determinism of the machine will imply monotonic determinacy of the query over the views.
An efficient separator will contradict the high-complexity of the machine.
Details are in the appendix.

\section{Conclusion} \label{sec:conc}
We have taken some basic steps in understanding monotonic determinacy
for recursive queries. We leave quite a number of gaps in both the understanding of
rewritability and decidability/complexity of testing monotonic determinacy, as one can see
from  Figures  \ref{fig:table:rewritability} and  \ref{fig:table:decidability}.
 To highlight
just one, while we have shown that monotonic determinacy of a Datalog query over  Datalog
views does not imply a rewriting in any reasonable  complexity class, we do not know what
can be said when the query is restricted; e.g. to be in Frontier-guarded Datalog.
While  we have shown that when a Monadic Datalog query is monotonically
determined over UCQ views, it may
not be Datalog rewritable, we do not know whether a rewriting can be obtained by expanding
the language, e.g. to stratified Datalog; this is true of the particular rewritings constructed
in Theorem \ref{thm:no-datalog-mdl-ucq}. See the appendix.

\bibliographystyle{abbrv}
\bibliography{detarxiv}
\newpage
\appendix
\onecolumn
\noindent {\large\textbf{APPENDIX}}
\section*{Finite variants}
In the body of the paper we used a semantics
in terms of arbitrary instances, but we claimed that
all the results hold when instances are restricted to be finite.
One can relativize all of the definitions to finite instances.
We say $Q$ is monotonically determined over $\views$ \emph{with respect to finite
instances} if whenever two finite instances agree on $\views$ they must agree on $Q$.
Similarly we can talk about  $Q$ being rewritable in logic $L$ with
respect to views $\views$ over finite instances
if there is a query $R \in L$ such that for every finite instance $\inst$
evaluating $R$ on the $\views$-image of $\inst$ gives the same result
as evaluating $Q$ on $\inst$.

For the results about deciding monotonic determinacy and the positive
results about  rewriting monotonically determined queries in Datalog,
the equivalence of the finite and unrestricted
variants follows from the following well-known fact:

\begin{proposition}
If Datalog query $Q$ is monotonically determined over Datalog views $\views$ for
finite instances, then $Q$ is monotonically determined over $\views$  (over all instances). 

For all the languages $L$ we consider
here (Datalog, MDL, etc.) if   $Q$ is $L$-rewritable over $\views$
with respect to finite instances, it is $L$-rewritable with respect
to all instances.
\end{proposition}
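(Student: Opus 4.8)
The plan is to reduce the unrestricted case to the finite case by exploiting the \emph{finite-witness property} of Datalog: by the proposition stating that every output tuple of a Datalog query is already produced by one of its CQ approximations, whenever $\inst \models Q(\vec c)$ there is a finite subinstance $\inst_0 \subseteq \inst$ with $\inst_0 \models Q(\vec c)$ (take the image of the canonical database of a witnessing CQ approximation). Both statements of the proposition will follow from this observation together with the fact that Datalog views are \emph{monotone} and send finite instances to finite view images. I would treat both languages and all $L \in \{\text{CQ}, \text{UCQ}, \text{MDL}, \fgdatalog, \text{Datalog}\}$ uniformly, since each defines monotone queries whose truth on an instance is reflected in a finite subinstance.

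For the monotonic-determinacy claim, I would take arbitrary, possibly infinite, instances $\inst_1, \inst_2$ with $\views(\inst_1) \subseteq \views(\inst_2)$ and a tuple $\vec c \in Q(\inst_1)$, and aim to show $\vec c \in Q(\inst_2)$. First, by the finite-witness property there is a CQ approximation $Q_i$ of $Q$ and a homomorphism $h\colon \canondb(Q_i) \to \inst_1$ sending the distinguished variables to $\vec c$; let $\inst_1^{\mathrm{fin}} := h(\canondb(Q_i))$ be its finite image, a finite subinstance of $\inst_1$ with $\vec c \in Q(\inst_1^{\mathrm{fin}})$. Since the views are Datalog, hence monotone, and finite instances have finite view images, $\views(\inst_1^{\mathrm{fin}})$ is a finite subset of $\views(\inst_1) \subseteq \views(\inst_2)$. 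Next I would manufacture a finite subinstance of $\inst_2$ realizing it: for each of the finitely many facts $V(\vec a) \in \views(\inst_1^{\mathrm{fin}})$ we have $\inst_2 \models Q_V(\vec a)$, so the finite-witness property applied to the Datalog definition $Q_V$ yields a finite $\inst_2^{V,\vec a} \subseteq \inst_2$ with $\inst_2^{V,\vec a} \models Q_V(\vec a)$. Setting $\inst_2^{\mathrm{fin}}$ to be the (finite) union of these $\inst_2^{V,\vec a}$ gives a finite subinstance of $\inst_2$ with $\views(\inst_1^{\mathrm{fin}}) \subseteq \views(\inst_2^{\mathrm{fin}})$ by monotonicity of the $Q_V$. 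Now monotonic determinacy over finite instances applies to $\inst_1^{\mathrm{fin}}$ and $\inst_2^{\mathrm{fin}}$, giving $\vec c \in Q(\inst_2^{\mathrm{fin}})$, and monotonicity of $Q$ with $\inst_2^{\mathrm{fin}} \subseteq \inst_2$ yields $\vec c \in Q(\inst_2)$.

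For the rewritability claim I would show that the very witness $R \in L$ that works over finite instances works over all instances, by verifying $\vec c \in Q(\inst) \iff \vec c \in R(\views(\inst))$ for arbitrary $\inst$ using the same finite pushdowns. For the forward direction, a witness for $\vec c \in Q(\inst)$ lives in a finite $\inst_0 \subseteq \inst$; finite rewritability gives $\vec c \in R(\views(\inst_0))$, and since $\views(\inst_0) \subseteq \views(\inst)$ and $R$ is monotone we get $\vec c \in R(\views(\inst))$. For the converse, a witness for $\vec c \in R(\views(\inst))$ is a finite subset $J_0 \subseteq \views(\inst)$; I realize it as $J_0 \subseteq \views(\inst_0)$ for a finite $\inst_0 \subseteq \inst$ by collecting finite view witnesses exactly as above, and then finite rewritability together with monotonicity of $Q$ deliver $\vec c \in Q(\inst)$.

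I expect the only genuine obstacle to be the backward collection step: extracting from the possibly infinite larger instance a \emph{single} finite subinstance that already realizes every relevant view fact. This is precisely where the finiteness of $\views(\inst_1^{\mathrm{fin}})$ (respectively of $J_0$) and the finite-witness property of the Datalog view definitions $Q_V$ are both essential, with monotonicity of the views guaranteeing that assembling the individual witnesses preserves the required containment. Everything else is routine bookkeeping about monotone, finitely-witnessed queries.
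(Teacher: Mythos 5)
Your proposal is correct, and for the first claim it is essentially the paper's own argument: reduce to finite instances by taking the image of a witnessing CQ approximation as a finite subinstance $\inst_1^{\mathrm{fin}}$, observe its view image is finite and contained in $\views(\inst_2)$, realize each of those finitely many view facts by a finite witness inside $\inst_2$, and finish with finite monotonic determinacy plus monotonicity of $Q$; you merely spell out the witness-assembly step that the paper compresses into one sentence. For the second claim you genuinely diverge. The paper composes the finite-instance rewriting $R$ with the view program $\Pi_\views$ (treating the view predicates as IDBs) to obtain a single Datalog query over the base schema that agrees with $Q$ on all finite instances, and then invokes the known fact that equivalence of Datalog queries over finite instances implies equivalence over all instances. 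You instead verify directly that the same $R$ works on an arbitrary $\inst$, pushing each direction down to a finite subinstance: a finite witness for $\vec c\in Q(\inst)$ in the forward direction, and in the backward direction a finite $J_0\subseteq\views(\inst)$ witnessing $\vec c\in R(\views(\inst))$, realized as $J_0\subseteq\views(\inst_0)$ for a finite $\inst_0\subseteq\inst$ by the same view-fact collection as in part one. Both routes ultimately rest on the finite-witness (compactness) property of Datalog, but they package it differently: yours is more elementary and self-contained, avoiding the composition step and the cited equivalence fact, at the cost of explicitly needing that every $L$ under consideration is monotone with finite witnesses (true, since CQ, UCQ, MDL and $\fgdatalog$ all sit inside Datalog); the paper's version is shorter, localizes all infinitary content in one standard fact, and only needs that $L$ composed with Datalog views stays within Datalog. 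Your closing remark correctly identifies the one non-routine point — assembling a \emph{single} finite subinstance realizing all relevant view facts — which is exactly the step the paper's proof also hinges on.
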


\begin{proof}
We prove the first statement.
Assume $Q$ is monotonically determined over Datalog views $\views$ for
finite instances, and suppose we have two instances $\inst$ and $\inst'$, perhaps
infinite
with $\views(\inst) \subseteq \views(\inst')$ and $Q(\inst)$ not contained
in $Q(\inst'$). Fix $\vec t \in Q(\inst)-Q(\inst')$. There is a finite
subinstance $\inst_0$ of $\inst$ with $\vec t \in Q(\inst_0)$.
$\views(\inst_0) \subseteq \views(\inst')$ and $\views(\inst_0)$ is finite, so there
is a finite subinstance $\inst'_0$ of $\inst'$ with $\views(\inst_0) \subseteq \views(\inst'_0)$. 
Now $\inst_0$ and $\inst'_0$ contradict the hypothesis on $Q$ and $\views$.

For the the second statement, we use the fact that equivalence of Datalog
queries over finite instances implies equivalence over all instances.
\end{proof}

It remains to  consider our negative results about rewritings and separators.
An easy case is Theorem \ref{thm:nocomputable}, showing that no computable
function bounds the time of a separator for Datalog queries
monotonically-determined over 
Datalog views.
The query and views in the example are monotonically
determined over all instances, hence over finite instances;
 the argument that the query does not have separators growing at a given
time bound does makes no use of infinitary methods, and hence holds to show
that there is no such separator over finite instances.

The proofs of Theorems \ref{thm:no-monadic} and \ref{thm:no-datalog-mdl-ucq} both make use
of unravellings, which can be infinite. For Theorem
\ref{thm:no-monadic}, we can argue just by looking at  the statement: if
the query $Q$ can be rewritten to a Monadic Datalog query $R$ over the
views $\views$ with respect to finite instances, then consider the Datalog query
$R_\views$ formed by composing the rules for $\views$ and $R$, treating each
view predicate as an intensional predicate of $R_\views$. Then $Q$ is equivalent
to $R'$ over all finite instances. But then, using the fact that
a witness to non-containment of Datalog must be finite, we see that
$Q$ is equivalent to $R'$ over all instances, a contradiction of
the theorem. The same argument holds for Theorem \ref{thm:no-datalog-mdl-ucq}.

\newpage
\section*{Rewritability  results inherited from prior work}

In the body of the paper we claimed
that
by simply applying the ``inverse rules'' algorithm \cite{inverserules}
we can show that frontier-guarded Datalog queries monotonically determined over
CQ views have frontier-guarded Datalog rewritings
 over CQ views.
We now explain why this is the case.

We recall some basics about the inverse rules algorithm, which works
by first constructing a logic program and then ``de-functionalizing the program'': mimicking the
function symbols with annotated predicates.
Logic programs are generalizations of Datalog programs that
allow function symbols in the head of rules. The semantics is via fixed point
as with Datalog. If we have in addition a distinguished Boolean intensional predicate,
the goal predicate, we can talk about a logic program query, projecting the output
of the fixpoint onto the goal predicate.

Consider a collection of CQ views $\views = \{(V, Q_V) \mid V \in \Sigma_\views\} $ over
a base schema $\schema$.
we associate a
set of TGDs:
$\Gamma_\views = \{V(\vec{x}) \to \exists \vec{y} \, Q_V(\vec{x},\vec{y})\}$ consisting of \emph{inverse rules}.
By replacing the existential quantifiers  with skolem functions,
we can consider  $\Gamma_{\views}$ as a logic program with input signature
the view schema and the base schema as intensional relations.
\begin{example} \label{ex:inv}
Suppose we have just one view, $V(x,y,z) = \exists u  ~ S(x,y,u) \wedge S(u,y,z)$

Then the corresponding inverse rules are:
\begin{align*}
S(x, y, f(x,y,z)) \datalogarrow  V(x,y,z) \\
S(f(x,y,z), y, z) \datalogarrow V(x,y,z)
\end{align*}
where $f$ is a skolem function.
\end{example}
Note that these rules have a single atom in each rule body. We call such rules \emph{atomic}.
We refer to the intensional predicates of the rules as \emph{extensionally-based
IDBs}.

Let $Q$ be a Boolean Datalog query over the base signature, with goal predicate $\goal_Q$.
We write $Q \cup \Gamma_\views$, the \emph{inverse rules logic program} to indicate the query
that unions the rules of $Q$ and those of $\Gamma_\views$, using the goal predicate $\goal_Q$.
This is a
a Boolean logic program query over the base signature.

\begin{example} \label{ex:invruleslogicprogram}
Consider the frontier-guarded Datalog query $Q$ with goal predicate $\goal$
and rules:
$$
\begin{array}{rcl}
\goal & \datalogarrow & \connq(x,x) \\
\connq(x,y) & \datalogarrow & S(x,y, z) \datalogwedge  \connq(x,z) \datalogwedge \connq(z,y) \\
\connq(x,y) & \datalogarrow & S(x,y,z)
\end{array}
$$
Then the inverse rules logic program $Q \cup \Gamma_{\views}$ for query $Q$
and $\views$ the single view in Example \ref{ex:inv} would have all the rules for $Q$
along with the two inverse rules  for $V$ coming from Example \ref{ex:inv}.
Observe that $S$ is now an intensional predicate along with  $\goal$
and $\connq$.
\end{example}

When $Q$ is frontier-guarded,
the inverse-rules logic program is not necessarily frontier-guarded. Indeed,
the rules in $\Gamma_{\views}$ contain only intensional predicates.
However, it is immediate that the head variables of each rule are contained
in some atom with an extensionally-based intensional predicate.


One can characterize the output of the inverse rules logic program on an arbitrary
 instance of the view schema, not just those which are view images of an instance
of the base schema.
Given a query $Q$ over the  input schema and an instance $\vinst$ of the view
schema, the  \emph{certain answers} of $Q$ with respect to $\views$ over $\vinst$ is
the intersection of $Q(\inst)$ over all $\inst$ such that $\views(\inst) \subseteq \vinst$.
It is easy to see that:

\begin{theorem} \cite{inverserules}  For any such $\vinst$, $Q \cup \Gamma_\views$ evaluated
on $\vinst$ gives the certain answers of $Q$ with respect to $\views$ over $\vinst$.
\end{theorem}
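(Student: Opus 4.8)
The plan is to argue through a canonical (universal) base instance obtained from the skolemised program. First I would let $C$ denote the restriction to the base schema $\aschema$ of $\fpeval{\Gamma_{\views}}{\vinst}$, where $\Gamma_{\views}$ is read as the skolemised logic program. Since no rule of $\Gamma_{\views}$ has a view predicate in its head, this fixpoint stabilises after a single round: $C$ is obtained from $\vinst$ by replacing each view fact $V(\vec c) \in \vinst$ with a fresh homomorphic copy of $\canondb(Q_V)$ in which the frontier is identified with $\vec c$ and the remaining variables are realised by the skolem terms of that firing. In particular $\adom(\vinst) \subseteq \adom(C)$, and by construction the canonical copies witness every view atom, so $C$ is one of the base instances quantified over in the certain-answer semantics (equivalently, a model of $\Gamma_{\views}$ whose view relations are interpreted by $\vinst$).

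The heart of the argument is universality: for every base instance $\inst$ occurring in the certain-answer intersection I would build a homomorphism $h : C \to \inst$ that is the identity on $\adom(\vinst)$. For each view fact $V(\vec c) \in \vinst$, the assumption on $\inst$ supplies a homomorphism $g_{V(\vec c)}$ from $\canondb(Q_V)$ into $\inst$ sending the frontier to $\vec c$. I would set $h$ to be the identity on $\adom(\vinst)$ and send each skolem term introduced by the firing on $V(\vec c)$ to the corresponding image under $g_{V(\vec c)}$. Because every skolem term of $C$ is produced by exactly one rule firing, $h$ is well defined; and it is a homomorphism since each base fact of $C$ lies in some copy of $\canondb(Q_V)$, which $g_{V(\vec c)}$ maps into $\inst$.

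Next I would observe that the combined least fixpoint decomposes: the rules of $\Gamma_{\views}$ populate only the extensionally-based IDBs (the base relations), which are exactly the extensional relations consumed by $Q$, whereas the IDBs of $Q$ are disjoint from them and none of them feeds back into a $\Gamma_{\views}$-rule. By monotonicity of the fixpoint, evaluating $Q \cup \Gamma_{\views}$ on $\vinst$ and projecting onto $\goal_Q$ yields precisely $\doutput{Q}{C}$. Both inclusions then follow. For ($\supseteq$), any certain answer holds on every instance in the family, in particular on $C$, hence is computed by $Q \cup \Gamma_{\views}$. For ($\subseteq$), since $Q$ is a Datalog query and therefore preserved under homomorphisms, the universal map $h : C \to \inst$ fixing $\adom(\vinst)$ carries each answer of $Q$ on $C$ (a tuple over $\adom(\vinst)$, or simply truth in the Boolean case) to the same answer of $Q$ on $\inst$; thus every answer computed by $Q \cup \Gamma_{\views}$ is certain. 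I expect the universality step to be the main obstacle: one must verify both that $h$ can be chosen to fix the named elements of $\vinst$ and that it is globally consistent across all firings. The fixpoint decomposition and the homomorphism-preservation of Datalog are then routine.
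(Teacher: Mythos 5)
Your proof is correct in substance, and note that the paper itself offers no argument here: the theorem is imported from \cite{inverserules} with only the remark that it is easy to see. Your route — chase $\vinst$ with the skolemised inverse rules to obtain a canonical base instance $C$ in one round, show $C$ is universal among the possible worlds via a homomorphism fixing $\adom(\vinst)$, split the combined least fixpoint by the evident stratification (view predicates feed the extensionally-based IDBs, which feed $Q$, with no feedback), and close with preservation of Datalog under homomorphisms — is precisely the standard argument behind the cited result, so there is no competing proof in the paper to contrast it with.

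One point you must make explicit, because it is where the only real danger lies. Your universality step extracts, for every possible world $\inst$ and every fact $V(\vec c)\in\vinst$, a homomorphism $g_{V(\vec c)}$ of $\canondb(Q_V)$ into $\inst$ sending the frontier to $\vec c$. Such a homomorphism exists exactly when $V(\vec c)\in\views(\inst)$, i.e.\ when the possible worlds are those with $\vinst\subseteq\views(\inst)$ — the sound-view semantics of \cite{inverserules}. The paper's sentence defining certain answers writes the inclusion the other way, $\views(\inst)\subseteq\vinst$; read literally, that family contains the empty instance, the intersection collapses, the theorem becomes false, and the paper's subsequent corollary for monotonically determined $Q$ (which needs $Q(\inst)\subseteq Q(\inst')$ whenever $\views(\inst)\subseteq\views(\inst')$, i.e.\ exactly the sound reading with $\vinst=\views(\inst)$) would not follow. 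So the reading you silently adopted is the intended one, but under the printed inclusion your claim that ``the assumption on $\inst$ supplies $g_{V(\vec c)}$'' is where the proof would break, and a careful write-up should state which semantics is used. A cosmetic quibble: a skolem term $f(\vec c)$ is generally produced by several rule firings (one inverse rule per atom of $Q_V$ containing the corresponding existential variable), not exactly one; well-definedness of $h$ holds anyway, since the term's function symbol and arguments determine the view fact and the existential variable, hence the image $g_{V(\vec c)}(y)$.
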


It follows from the theorem that
if $Q$ is monotonically determined by $\views$ and $\inst$ is any instance of the base
schema,
$Q \cup \Gamma_\views$ evaluated over $\views(\inst)$ is the same as $Q(\inst)$. 
Using
the terminology from the body of the paper this can be restated as:
$Q \cup \Gamma_\views$ is 
 a separator for $Q$ with respect to $\views$.

We now turn to the de-functionalization step of the inverse rules algorithm.
This replaces the inverse rules logic program with an ordinary Datalog program over
a different set of intensional predicates. These predicates are obtained by \emph{annotating} the intensional
predicates of the logic program. The idea is that 
an IDB atom $R(x,f(x,y))$ containing skolem term $f$ created during the construction
of the fixpoint of the logic program 
will be mimicked by an  atom $R^{1, f(1,2)}(x,y)$ created during the fixpoint
of the de-functionalized program.
We refer to the original paper \cite{inverserules} for the details,
but  illustrate
the idea with an example.

\begin{example} \label{ex:defunct}
We give part of the de-functionaliziation of the inverse rules logic program
from Example \ref{ex:invruleslogicprogram}.
The inverse rules themselves will translate to the rules:
\begin{align*}
S^{1,2, f(1,2,3)} (x, y, z) \datalogarrow  V(x,y,z) \\
S^{f(1,2,3), 2, 3} (x, y, z) \datalogarrow V(x,y,z)
\end{align*}
The following rule in the inverse rules logic program:
\begin{align*}
\connq(x,y) \datalogarrow S(x,y, z) \datalogwedge  \connq(x,z) \datalogwedge \connq(z,y) 
\end{align*}
 will generate many annotated rules.
One can consider the substitution $x=f(x_1, y_1, z_1),  y=y_1 z=f(x_2,y_2,z_2)$
 into the rule above, which gives the rule:
\begin{align*}
\connq(f(x_1, y_1, z_1), y_1) \datalogarrow 
S(f(x_1,y_1, z_1), y_1, z_1) \datalogwedge \connq(f(x_1,y_1,z_1), f(x_2,y_2,z_2) ) 
\datalogwedge \connq( f(x_2,y_2,z_2), y_1 )
\end{align*}

The corresponding annotated rule would be:
\begin{multline*}
\connq^{f(1,2, 3), 2} (x_1,y_1, z_1)  \datalogarrow  \\
S^{f(1,2,3), 2, 3} (x_1, y_1, z_1)  \datalogwedge
\connq^{f(1,2,3), f(4,5, 6)} (x_1, y_1, z_1, x_2, y_2,z_2) \datalogwedge
\connq^{f(1,2,3), 4 } (x_2, y_2,z_2, y_1)
\end{multline*}
\end{example}

We can see that annotated rules as produced by the standard inverse-rules algorithm
are not frontier-guarded. However, we note that:
\begin{itemize}
\item  each IDB  that is an annotation of an extensionally-based IDB appears in exactly one
rule, and that rule is atomic. Hence in particular the unique
such rule has a view atom as a frontier-guard.
\item the head variables of each  rule
co-occur in an atom that  is an annotation of an extensionally-based relation.
\end{itemize}
From this it follows that we can conjoin to each rule a view  atom that makes the rule
frontier-guarded:  namely, we can conjoin the view atom
corresponding to the annotated  extensionally-based relation in the second item.

\begin{example} \label{ex:betterdefunct}
Continuing the example above, the annotated rule:
\begin{align*}
\connq^{f(1,2, 3), 2} (x_1,y_1, z_1)  \datalogarrow \\
S^{f(1,2,3), 2, 3} (x_1, y_1, z_1)  \datalogwedge
\connq^{f(1,2,3), f(4,5, 6)} (x_1, y_1, z_1, x_2, y_2,z_2) \datalogwedge
\connq^{f(1,2,3), 4 } (x_2, y_2,z_2, y_1)
\end{align*}
is converted to the frontier-guarded rule:

\begin{align*}
\connq^{f(1,2, 3), 2} (x_1,y_1, z_1)  \datalogarrow  \\
V(x_1, y_1, z_1) \datalogwedge
S^{f(1,2,3), 2, 3} (x_1, y_1, z_1)  \datalogwedge
\connq^{f(1,2,3), f(4,5, 6)} (x_1, y_1, z_1, x_2, y_2,z_2) \datalogwedge
\connq^{f(1,2,3), 4 } (x_2, y_2,z_2, y_1)
\end{align*}
\end{example}

\newpage
\section*{Proofs for Section \ref{sec:inf}: treewidth bounds and the forward-backward method}
\subsection*{Proof of Lemma \ref{lem:normalised:treedecomp}}
Recall the statement:

\medskip

Let $Q$ be a normalized Monadic Datalog query. Then there is a number $k = O(|Q|)$ such that all CQ-approximations
of $Q$ have tree decomposition $\TD$ of width $k$ with $\tspan(\TD) \le 2$.

\medskip

\begin{proof}
In fact, $k$ is the maximal number of variables in a body of $Q$. Then the definition of a CQ-approximation
gives rise to a tree decomposition $\TD$ of width $k$. The property $\tspan(\TD) \le 2$ follows from the fact that $Q$
is normalized.
\end{proof}

\subsection*{Proof of Lemma \ref{lem:boundingtw:1} }
Recall the statement:

\medskip

If $\Pi$ is a Datalog program such that all its rules are frontier-guarded, and $\inst$ is an instance of treewidth $k$, then
$\fpeval{\Pi}{\inst}$ is of treewidth $k$.

\medskip

\begin{proof}
Monadic rules introduce only monadic predicates which do not increase the treewidth of the instance.
Guarded rules introduce atoms which are wholly inside the EDB guards. So treewidth does not increase
when the rules are fired.
\end{proof}

\section*{Proof of Lemma \ref{lem:boundingtw:2}}
Recall the statement:

\medskip

Let $\TD$ be a tree decomposition of a data instance $\inst$ of treewidth $k$ with $\tspan(\TD) \le 2$.
Let $\views$ be a set of connected CQ views, and $\views(\inst)$ the view image of $\inst$ under $\views$.
Let $r$ be the greatest radius of a CQ in $\views$.
Then the treewidth of $\views(\inst)$ is at most $k' = \frac{k(k^{r+1}-1)}{k-1}$.

\medskip

\begin{proof}
For a bag $b$ of $\TD$ and an integer $n$ define recursively its \emph{$n$-extension} by setting $ext(b,0) = b$ and $ext(b,n) = \{u \mid \exists v \in ext(b,n-1) \text{ such that }
u \text{ and } v \text{ belong to a same bag of } \TD\}$. Since $\tspan(\TD) \le 2$, it is easy to see by induction that $|ext(b,n)| \le k + k^2 + \dots + k^{n+1} =
\frac{k(k^{n+1}-1)}{k-1}$. Let $\TD'$ be a tree of bags whose set of nodes
is $ext(b,r)$, with an edge between $ext(b,r)$ and $ext(b',r)$ exactly when
there is an edge from $b$ to $b'$ in $T$.
We claim that $\TD'$ is a tree decomposition of $\views(\inst)$.

First we show that for any element $v$ the set of all bags in $\TD'$ containing $v$ is connected. Suppose that two nodes $n_1$ and $n_2$ of $\TD'$ contain $v$.
Then there are bags $b_1$ and $b_2$ in $\TD$ such that $n_1 = ext(b_1, r)$ and $n_2 = ext(b_2, r)$. Thus $v$ belong in some bags $b_1'$ and $b_2'$ which are
at most $r$ steps away from $b_1$ and $b_2$ respectively. Let $\pi$ be a unique simple path connecting $b_1$ and $b_2$ in $\TD$. Now we have a number cases depending 
on the length of $\pi$ and relative positions of $b_1$ and $b_2$ with respect to $\pi$ (see Figure~\ref{fig:cases}).
 
\begin{figure}
\scalebox{0.95}{\includegraphics{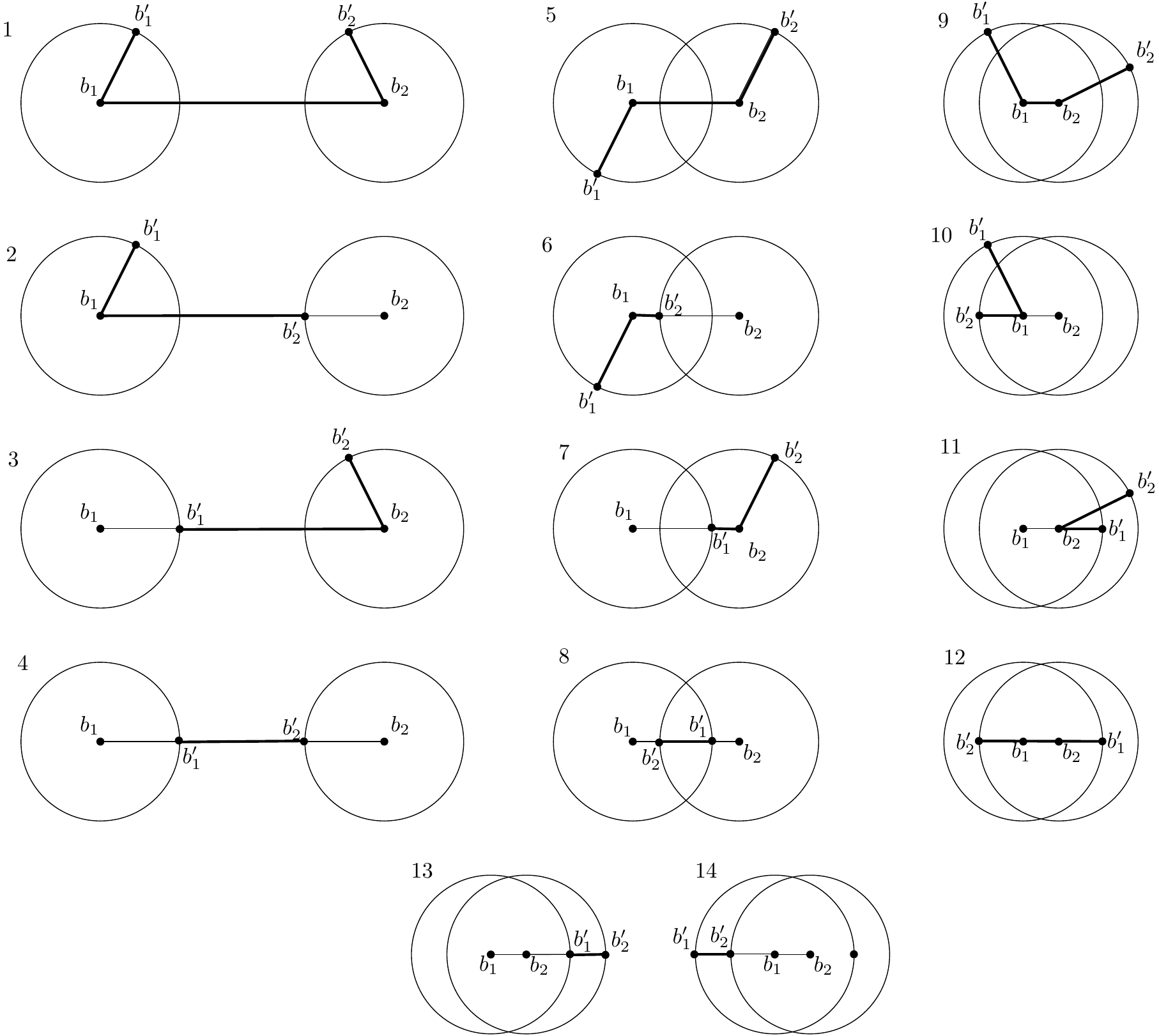}}
\caption{}
\label{fig:cases}
\end{figure}

In each of the cases we use the fact that $v$ must belong to all bags on a unique simple path between $b_1'$ and $b_2'$ (highlighted by bold lines) to conclude that
$v$ must also belong to all $r$-extensions of bags on $\pi$. For example, in Case 4, $v$ belongs to all bags between $b'_1$ and $b'_2$, and so to their extensions.
But also $v$ belongs to all $r$-extensions of bags between $b_1$ and $b'_1$, because all such bags are within distance $r$ from $b'_1$ and $v$ belongs to $b'_1$.
Similarly $v$ belongs to all $r$-extensions of all bags between $b'_2$ and $b_2$. In Case 14 since $v$ belongs to $b'_2$, by the same argument 
it follows that it belongs to  all $r$-extension of all bags between $b_1$ and $b_2$. Other cases are similar.

Secondly we show that for each atom $S(\vec{c})$ from $\views(\inst)$ there is a node in $\TD'$ containing $\vec{c}$. 
Suppose that $S(\vec{c})$ was 
generated by the view definition
$S(\vec{x}) \leftarrow \phi(\vec{x}, \vec{y})$ for a connected CQ $\phi(\vec{x}, \vec{y})$ with free variables $\vec{x}$ and quantified variables $\vec{y}$ under some assignment $\eta$ defined on both $\vec{x}$ and $\vec{y}$.
As $\phi$ is of radius at most $r$, 
it should have a variable $z \in \vec{x} \cup \vec{y}$ such that all other variables are at distance at most $r$ from $z$
in the Gaifman graph of $\phi(\vec{x}, \vec{y})$. Therefore the range of $\eta$ lies 
within distance $r$ from $\eta(z)$.  
Let $b$ be any bag of $\TD$ containing $\eta(z)$. If follows that $\vec{c}$ is contained in the $r$-extension of $b$. 

Finally, it is  easy to see that the sizes of bags of $\TD'$ are as required.
\end{proof}

\subsection*{Proof of Proposition \ref{prop:cqappr:regular} }
Recall the statement:

\medskip

 For any Datalog query $Q = (\Pi, \goal)$, there is an {\exptime} function that
outputs an NTA $\automaton_Q$ that captures the  set of canonical databases of CQ
approximations of $Q$. 

\medskip

\begin{proof}
Without any loss of generality we assume that all rules of $\Pi$ have
either $0$ or $2$ IDB atoms.

The states of $\automaton_Q$  will be rule heads of $\Pi$ paired
with an injective mapping $m$ from the head variables to $\{1, \dots, k\}$.
For example, if our state in node $v$ is $( P(x,y), \{x \mapsto 1, y \mapsto 3 \} )$
this means that we are looking for witnesses to the fact $P([v,1],[v,3] )$.

In a state $(U(\vec x), m )$ we non-deterministically choose a rule
body with the head $U(\vec x)$ and a consistent extension $m'$ of $m$ to all of the variables in the body.
Consistent here means that for every EDB atom $R(\vec{y})$ in the body of the rule, the unary predicate
$\coderel^R_{m(\vec x)}$ is in the label of the current node, and conversely each atom in the label
of the current node corresponds to some EDB atom.

Now let's turn to intensional predicates. If there are no intensional atoms in the body, we accept.
If $F_1(\vec{y}_1)$ and $F_2(\vec{y}_2)$ are the intensional atoms in
the rule body in some canonical order, then we have a transition which for $i = 1, 2$
goes to the  $i$-{th} child of the current node and switches the state into $(F_i(\vec{y}_i), m_i)$, where
$m_i$ is the restriction of $m$ to $\vec{y}_i$. We also check that the edge label leading to the $i$-th child is the restriction of the identity
map on $\{1, \dots, k\}$ to the image of $m_i$.

To see that the conditions for capturing hold, note that all CQ approximations of $Q$ have a standard tree
decomposition of width $k$, where there is one-to-one correspondence between bags and rule bodies.
Therefore we have a standard $k$-code, where variables in each rule body are ordered in such a way that
common variables in two adjacent bags occur in exactly same positions. 
This code is accepted by $\automaton_Q$, which gives the second requirement for capturing.  And all codes
that are accepted by $\automaton_Q$ are one of these standard codes, which gives the first required
property of capturing.
\end{proof}

\subsection*{Proof of Proposition \ref{prop:datalog} }

Recall the statement

\medskip

For any Datalog program $\datalogprog$, the class $\{\F \mid \F \models \datalogprog, \tw(\F) \le k \}$ (here $\F$ are finite instances
 which contain both EDBs and IDBs of $\Pi$) is $k$-regular and is recognized by an NTA of at most double-exponential size in $k$ and single-exponential size in $|\Pi|$.

\medskip

\begin{proof}
First we construct a \emph{two-way alternating tree automaton} which, for each of the rules of $\Pi$ of the form $R(\vec x) \datalogarrow \phi(\vec x, \vec y)$,
guesses non-deterministically moving in both directions the valuations
$\vec a$ and $\vec b$ of $\vec x$ and $\vec y$, respectively, and then checks whether 
$\neg \phi(\vec a, \vec b) \lor R(\vec a)$ holds. Note that its size is linear in $\Pi$ and single-exponential in $k$. Then using Theorem A.2
of \cite{cosmadakis1988decidable} we convert it into an NTA with an  exponential blow-up.
\end{proof}

\subsection*{Proof of Proposition \ref{prop:restriction}}
Recall the statement:

\medskip

If $\C$ is a $k$-regular class in $\Sigma$ captured by NTA $\automaton$ 
and $\Sigma' \subseteq \Sigma$, then 
the class $$\C\restr \Sigma' = \{\F\restr \Sigma' \mid \F \in \C\}$$ is also $k$-regular and is captured by an automaton of size at most $|\automaton|$. The same holds with ``capture'' replaced by ``recognize''.

\medskip

\begin{proof}
We prove only the first part, with the second part being similar.

Consider an automaton $\automaton$ for $\C$. It has transitions of the form $q_1,q_2,\sigma_L^{s_1, s_2} \to q$. Let $\automaton'$ have the
same states and accepting states as $\automaton$, the alphabet $\sigma_{L'}^{s_1, s_2}$ with $L' \subseteq \Sigma'$, and the transition
table 
$\{q_1,q_2,\sigma_{L\,\restr\, \Sigma'}^{s_1, s_2} \to q \mid q_1,q_2,\sigma_L^{s_1, s_2} \to q \text{\textrm{} is a transition of } \automaton\}$. 
We claim that
$\automaton'$ is of required size and captures $\C\restr \Sigma'$. Indeed, take $\F' \in \C \restr \Sigma'$. Then there is
$\F \in \C$ such that $\F' = \F \restr \Sigma'$. As $\automaton$ captures $\C$, there is a code $\T$ of $\F$ such that $\automaton$ accepts $\T$.
From the definition of $\automaton'$ it follows that $\automaton'$ accepts $\T \restr\Sigma'$ which is a code of $\F'$.
And the other way round, if $\automaton'$ accepts $\T'$ in $\Sigma'$ via a run $f'$, then there is a run $f$ of $\automaton$ which accepts 
some extension $\T$ of $\T'$. As $\automaton$ captures $\C$, it follows that there is a database instance $\F \in \C$ such that $\T$ is a code of $\F$, and so 
$\T'$ is a code of $\F \restr \Sigma'$.  Thus $\decode{\T'} \in \C \restr \Sigma'$.
\end{proof}

\subsection*{Proof of Lemma \ref{lem:homdet}}
Recall the statement:

\medskip

 For any Datalog query $Q$ and Datalog views $\views$,  if $Q$ is monotonically determined
 over $\views$ then it is homomorphically determined over $\views$.

\medskip

\begin{proof}
Assume monotonic determinacy and consider  instances  $\inst_1, \inst_2$ with $\inst_1 \models Q$, and a homomorphism $h$ from 
$\views(\inst_1)$ into  $\views(\inst_2)$.

It follows that there is a CQ $Q_i$ that is an approximation of $Q$, and  a homomorphism $\alpha$ from $Q_i$ into $\inst_1$.
Note that $\alpha$ is also a homomorphism from $\views(Q_i)$ into $\views(\inst_1)$.
Thus $\alpha$ followed by $h$, denoted $h(\alpha)$, is a homomorphism from $\views(Q_i)$ into $\views(\inst_2)$.

We create an instance $\inst'$ such that $\views(Q_i) \subseteq \views(\inst')$, along with a
homomorphism $h'$ taking $\inst'$ into $\inst_2$. We will construct
 $\inst'$ as the union of a set of facts $S_F$
obtained by chasing each fact $F$ in $\views(Q_i)$ with the  inverse of the
view definitions 
(see also the proof of Lemma \ref{lem:pipeline}). More precisely,
consider a fact $V(a_1 \ldots a_n)$ in $\views(Q_i)$. Then
$V(h(\alpha)(a_1) \ldots h(\alpha)(a_n))$ is in $\views(\inst')$, and
thus there is some CQ approximation $\rho$ of $Q_V$ 
such that $\rho(h(\alpha)(a_1) \ldots h(\alpha)(a_n))$ holds
in $\inst'$.
We let $S_F$ be obtained from $\rho(h(\alpha)(a_1) \ldots h(\alpha)(a_n))$
by replacing each $h(\alpha)(a_i)$  with
$a_i$ and each existentially quantified variable with a fresh null.
One can easily check  that
$\inst'$ is as required.
Thus by monotonic determinacy we have  $\inst' \models Q$. But since there is a homomorphism of $\inst'$ into
$\inst_2$, we conclude that $\inst_2 \models Q$ as required.
\end{proof}

\subsection*{Jointly-annotated terms}

We recall the definition of the backward mapping query $Q_\automaton = (\Pi_\automaton, \goal_\automaton)$:

For every transition of the form $q_1, q_2, \sigma_{L}^{s_1,s_2} \to q$ with
$L = \{\coderel^{R^1}_{\vec{n}_1}, \dots, \coderel^{R^m}_{\vec{n}_m}\}$ we create
a rule
\begin{equation} \label{rule-i}
P_q(x_1, \dots, x_k) \datalogarrow \\
\bigwedge_{i = 1}^{k} \Adompred(x_i) \land P_{q_1}(x_1^1, \dots, x_k^1)\land P_{q_2}(x_1^2, \dots, x_k^2)    \land \!\!
\bigwedge_{i \in \dom(s_1)} \!\!\! x_i = x^1_{s_1(i)}\land \!\!\! \bigwedge_{i \in \dom(s_2)} \!\!\! x_i = x^2_{s_2(i)} \land  \bigwedge_{l = 1}^m R^l(\vec{x}_{\vec{n}_l})
\end{equation}
where $j$ ranges over  $1$ and $2$,  $x_i^j$ are fresh variables for indices $j \in \{1,2\}$ and $i \in \{1, \dots, k\}$, and for $\vec{n}_l = (n_l^1, \dots, n_l^d)$
we have $\vec{x}_{\vec{n}_l} = (x_{n_l^1}, \dots, x_{n_l^d})$.
For initial transitions of the form $\sigma_{L} \to q$ with $L = \{\coderel^{R^1}_{\vec{n}_1}, \dots, \coderel^{R^m}_{\vec{n}_m}\}$ we have rules
\begin{equation} \label{rule-ii}
P_q(x_1, \dots, x_k) \datalogarrow\bigwedge_{i = 1}^{k} \Adompred(x_i) \land \bigwedge_{l = 1}^m R^l(\vec{x}_{\vec{n}_l}).
\end{equation}

For accepting states $q$ we add the rules $\goal_{\automaton}(x_1, \dots, x_{k}) \datalogarrow P_q(x_1, \dots, x_k)$
for the goal predicate $\goal_{\automaton}$. We also add a standard set of rules which, when evaluated on any data instance $\inst$
under fixed-point semantics, guarantee that the interpretation of the 
IDB $\Adompred(x)$ is the active domain of $\inst$.

\myparagraph{Proof terms and annotated proof terms}
To show correctness of the backward mapping construction (Proposition \ref{prop:backward})
we will need the notion of a ``proof certificate''
for backward mappings of an automaton.

When a Datalog query $Q = (\Pi, \goal)$ holds for a tuple $\vec d$ in an instance
$\inst$, there is a derivation that witnesses this, which has a tree-like structure.
A \emph{proof term} for $\inst \models Q(\vec d)$ is a labelled finite tree in which  every 
node $\node$ is labelled with a ground fact $\factof(\node)$ over the predicates mentioned in $\Pi$, and
every non-leaf node $\node$ is additionally labelled with a rule $\ruleof(\node)$ of $\Pi$ such that:
\begin{compactitem}
\item If $\node$ is the root, $\factof(\node)= \goalpred(\vec d)$
\item If $\node$ is a leaf then $\factof(\node)$ is a fact over the extensional predicates
of $Q$, and this fact holds in $\inst$
\item If $\node$ is not a leaf,  let $\inst_{\node}$ be the instance consisting
of $\factof(\node)$ and all facts $\factof(c)$ for $c$ a child of $\node$. 
Then there is a map $h_{\node}$ from the variables in the body 
of $\ruleof(\node)$ into the active domain of $\inst_{\node}$ that maps the facts
in the body of $\ruleof(\node)$ onto the facts of $\inst_{\node}$, and
 maps the head of $\ruleof(\node)$ to $\factof(\node)$.
\end{compactitem}

It is well-known \cite{AHV} and easy to see that proof terms represent a semantics for Datalog:
 $\inst \models Q(\vec d)$ exactly when there is a proof term that
witnesses this.

We now give a notion of a witness for acceptance of an automaton running over codes.
 A \emph{jointly-annotated term} for automaton $\automaton$, instance $\inst$,
and $k$-tuple $\vec{a}$ is a pair $(\T, \vec{b})$ where
\begin{itemize}
\item[--] $\T$ is a tree code accepted by $\automaton$;
\item[--] the map $\vec{b}$ assigns  each vertex of $\T$ to a $k$-tuple of elements from $\inst$, with
 the root of $\T$ mapped to $\vec{a}$;
\end{itemize}
which satisfy the following condition: {\textbf if}
$t_v = \sigma^{s_1, s_2}_L(t_{v_1}, t_{v_2})$ with
$L = \{\coderel^{R^1}_{\vec{n}_1}, \dots, \coderel^{R^m}_{\vec{n}_m}\}$,
$\vec{b}(v) = (b_1, \dots, b_k)$, $\vec{b}(v_j) = (b^j_1, \dots, b^j_k)$ for $j = 1,2$
{\textbf then}
\begin{equation}\label{rule-iii}
\inst\models
\bigwedge_{j = 1}^2 \hspace{4mm} \bigwedge_{i \in \dom(s_j)} \!\!\! b_i = b^j_{s_j(i)} \,\land \bigwedge_{l = 1}^m R^l(\vec{b}_{\vec{n}_l}).
\end{equation}

We also require that 

\begin{equation}\label{rule-iv}
\mbox{
{\textbf if} $t = \sigma_L$ is a leaf symbol in $T$ with
$L = \{R^1_{\vec{n}_1}, \dots, R^m_{\vec{n}_m}\}$
 {\textbf then }
the atoms $R^l(\vec{b}_{\vec{n}_l})$ are in $\inst$ for $l = 1, \dots, m$.}
\end{equation}

In other words, 
$\vec{b}$ can be considered as a homomorphism from $\decode T$ into $\inst$.

We  now verify the key  property of a jointly-annotated term:
\begin{proposition}\label{jointly-annotated-terms}
 For each data instance $\inst$, $\inst, \Pi_\automaton \models \goal_\automaton(\vec a)$ if
and only if there is a jointly-annotated term for
$\automaton$,$\inst$, and $\vec{a}$.\\
\end{proposition}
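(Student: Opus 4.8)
The plan is to prove a strengthened biconditional at the level of individual IDB atoms, from which the proposition follows by peeling off the goal rule. Precisely, for every state $q \in \Q$ and every tuple $\vec{d} \in \adom(\inst)^k$ I would show that $P_q(\vec{d}) \in \fpeval{\Pi_\automaton}{\inst}$ holds if and only if there is a tree code $\T_0$ carrying a run $f$ of $\automaton$ with $f(\text{root}) = q$, together with an assignment $\vec{b}$ of the nodes of $\T_0$ to tuples in $\adom(\inst)^k$ with $\vec{b}(\text{root}) = \vec{d}$, satisfying the local conditions \eqref{rule-iii} and \eqref{rule-iv}. Given this, the proposition is immediate: by the accepting rules $\goal_\automaton(x_1,\dots,x_k) \datalogarrow P_q(x_1,\dots,x_k)$, we have $\inst,\Pi_\automaton \models \goal_\automaton(\vec{a})$ iff $P_q(\vec{a}) \in \fpeval{\Pi_\automaton}{\inst}$ for some final $q \in \Q_f$, and an annotated run rooted at a final state is exactly a jointly-annotated term for $\automaton$, $\inst$, and $\vec{a}$. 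Both directions of the strengthened claim are structural inductions that mirror the transition-to-rule correspondence built into the backward map.

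For the direction from annotated runs to derivations, I would argue bottom-up over $\T_0$ that $P_{f(v)}(\vec{b}(v)) \in \fpeval{\Pi_\automaton}{\inst}$ for every node $v$. At a leaf $v$ with $t_v = \sigma_L$, the run uses an initial transition $\sigma_L \to f(v) \in \Delta_0$, whose backward-map rule is \eqref{rule-ii}; condition \eqref{rule-iv} supplies the extensional atoms $R^l(\vec{b}_{\vec{n}_l})$ in $\inst$, and membership of the components of $\vec{b}(v)$ in $\adom(\inst)$ discharges the $\Adompred$ atoms, so the rule fires on the valuation $x_i \mapsto b_i$. At an internal node $v$ with $t_v = \sigma^{s_1,s_2}_L(t_{v_1},t_{v_2})$ the run uses $f(v_1),f(v_2),\sigma^{s_1,s_2}_L \to f(v) \in \Delta_2$, whose rule is \eqref{rule-i}; the induction hypothesis gives $P_{f(v_j)}(\vec{b}(v_j))$ for $j=1,2$, and condition \eqref{rule-iii} furnishes exactly the equalities $b_i = b^j_{s_j(i)}$ and the atoms needed to fire \eqref{rule-i} on the valuation sending $x_i \mapsto b_i$ and $x_i^j \mapsto b^j_i$. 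Applying this at the root and then the goal rule yields $\goal_\automaton(\vec{a})$.

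For the converse I would read a jointly-annotated term off a proof term for $\inst \models Q_\automaton(\vec{a})$, whose existence is the standard proof-term semantics of Datalog \cite{AHV}. The root of the proof term applies a goal rule, so its unique child is a $P_q$-fact with $q \in \Q_f$; I traverse the subproof rooted there and turn each node carrying a $P_{q'}$-fact into a node of $\T_0$, labelled by the leaf symbol $\sigma_L$ if the fact was derived by an instance of \eqref{rule-ii}, or by $\sigma^{s_1,s_2}_L$ with the two $P$-children as subtrees if it was derived by an instance of \eqref{rule-i}. Recording the arguments of each $P_{q'}$-fact as $\vec{b}$ and the state index $q'$ as $f$ gives a run, which is accepting because the transitions producing each rule lie in $\Delta_0$ or $\Delta_2$ and the root state is final. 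The witnessing valuation $h_{\node}$ that makes each rule fire forces precisely the equalities and extensional atoms of \eqref{rule-iii} and \eqref{rule-iv}, so $(\T_0,\vec{b})$ is a jointly-annotated term and $\vec{b}$ is a homomorphism of $\decode{\T_0}$ into $\inst$.

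The main obstacle, and the only point demanding real care, is the interaction with the auxiliary $\Adompred$ machinery. A proof term for $\Pi_\automaton$ contains extra subderivations witnessing the $\Adompred$ side-atoms of \eqref{rule-i} and \eqref{rule-ii}, which have no counterpart among the nodes of $\T_0$; I must argue these are pure membership checks, so that they can be stripped in the ($\Rightarrow$) direction and reconstructed (from $\vec{b}(v) \in \adom(\inst)^k$) in the ($\Leftarrow$) direction, using that the standard rules compute $\Adompred$ as exactly $\adom(\inst)$. Dually, this is where one uses that $\vec{b}$ takes values in the active domain, including on any padded ``dummy'' positions of a bag, so that the atoms $\Adompred(x_i)$ for all $i \in \{1,\dots,k\}$ are satisfiable; maintaining this invariant across the induction is the one genuinely fiddly part of the argument.
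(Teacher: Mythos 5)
Your proposal follows essentially the same route as the paper's own proof: your converse direction converts a Datalog proof term into a jointly-annotated term by retaining only the $P_q$-labelled derivation nodes and reading the code labels and run off the rule instances (each of which corresponds to a unique transition in $\Delta_0$ or $\Delta_2$), and your other direction is the same bottom-up induction showing $\inst, \Pi_\automaton \models P_{f(v)}(\vec{b}(v))$ via rules (\ref{rule-i}) and (\ref{rule-ii}). Your explicit handling of the $\Adompred$ side-derivations and of padding dummy bag positions with active-domain elements is correct and, if anything, slightly more careful than the paper's text, which pads with fresh dummy elements in the forward direction even though the induction for the other direction needs all components of $\vec{b}(v)$ to satisfy $\Adompred$.
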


\begin{proof}
We prove the two directions of the if and only if separately.

($\Rightarrow$)
Take a proof term $\proofterm$ that witnesses $\inst, \Pi_\automaton \models \goal_\automaton(\vec a)$. 
We transform $\proofterm$ into a jointly-annotated term $(\T, \vec{b})$ on the set of all vertices of 
$\proofterm$
with $\factof(\node)$ being an IDB. Note that this gives us a binary tree since all rule bodies in 
$\Pi_\automaton$ have either $0$
or $2$ IDBs by assumption.
For each vertex $v$ we take some ordering $(u_1^\node, \dots u_k^\node)$ of elements in $\inst_\node$ 
without duplicates;
we use fresh dummy elements to fill up the tuple if $\inst_\node$ has less then $k$ elements. Now 
we define  unary
labels of $\T$ by setting $\coderel^R_{n_1, \dots, n_m}(\node) \in \T$  iff 
$R(u^\node_{n_1}, \dots, u^\node_{ n_m}) \in \inst_\node$.
We define edge labels $s$ between a parent $\node$ and its child $w$ by setting $s(n) = m$ if $u^\node_n$ is 
the same
element as $u^w_m$; it should be clear that $s$ is a partial bijection. This constitutes the definition of  
$\T$.
It remains to define $\vec b$ by setting $\vec b (\node)$ to be $(u_1^\node, \dots u_k^\node)$.
We can create an accepting run
  $f$ by setting
$f(\node)$ to be the state of the automaton $q$ such that $\factof(\node)$ is labelled by and IDB $P_q$.

($\Leftarrow$) It is easy to show by induction that if $v$ is a vertex of a jointly annotated-term
$(\T, \vec{b})$ for $\automaton$, $\inst$, and $\vec{a}$ and  $f$ is an accepting
run for $\automaton$ on $\T$ with $f(v) = q$, then $\inst, \Pi_\automaton \models P_q(\vec b(v))$. It follows that
$\inst, \Pi_\automaton \models \goal_\automaton(\vec a)$.

Indeed, if $v$ is a leaf, then then $\inst, \Pi_\automaton \models P_q(\vec b(v))$ by the rule 
(\ref{rule-ii}) because its body holds due to condition (\ref{rule-iv}) and the fact that for all $u$ in $\vec{b}(v)$
we have $\inst, \Pi_\automaton \models \Adompred (u)$.

If $v$ has children $v_1$ and $v_2$, then there must be $q_1$ and $q_2$ such that 
$f(v_1) = q_1$, $f(v_2) = q_2$, production $(q_1, q_2, \sigma^{s_1, s_2}_L \to q)$ is a transition of $\automaton$,
and the vertex label of $v$ is $L$ while edge labels between $v$, $v_1$ and $v_2$ are $s_1$ and $s_2$.

We claim that $\inst, \Pi_\automaton \models  P_q(\vec b(v))$ can be inferred by the rule (\ref{rule-i}) for this production
under assignment $\{(x_1, \dots, x_k):= \vec{b}(v), (x^1_1, \dots, x^1_k):= \vec{b}(v_1), (x^2_1, \dots, x^2_k):= \vec{b}(v_2) \}$. 
Indeed, we have  $\inst, \Pi_\automaton \models \Adompred (u)$ for
all elements in the body of the rule, we have  
$\inst, \Pi_\automaton \models P_{q_1}(\vec b(v_1))$ and
$\inst, \Pi_\automaton \models P_{q_2}(\vec b(v_2))$ by the induction hypothesis, and the rest of the rule by (\ref{rule-iii}). 

\end{proof}

\subsection*{Proof of Proposition \ref{prop:backward}}

Recall the statement:

\medskip

Let $Q$ be homomorphically determined over $\views$ and $\automaton$ be any automaton 
working on $k$-codes such that
$\{\views(Q_i) \mid i \in \omega\} \subseteq \decode{L(\automaton)} \subseteq \{\D \mid \views(Q_i)\text{\textrm{} maps into } \D \text{\textrm{} for some } i \in \omega\}$. More precisely,
we require that
\begin{compactitem}
\item[(1)] for each CQ approximation $Q_i$ of $Q$ there is a code $\T$ such that $\decode\T = \views(Q_i)$ and $\T$ is accepted by $\automaton$;
\item[(2)] for each tree code $\T$ accepted by $\automaton$ there is a CQ approximation $Q_i$ of $Q$ and a homomorphism from
$\views(Q_i)$ into $\decode\T$.
\end{compactitem}

Then
for each data instance $\inst$ we have $\inst \models Q$ iff 
$\views(\inst) \models Q_{\automaton}(\vec{a})$ for some $\vec{a} \in \adom(\inst)^k$.

\medskip

\begin{proof}
Suppose $\Pi$ is a Datalog program containing
intensional predicate  $A$,
 $\inst$ is an instance for the extensional (input) signature of
$\Pi$,  and $\vec a$ is a tuple of elements from $\inst$. Below we write 
\[
\inst, \Pi \models A(\vec a)
\]
to indicate that the least fixpoint of $\Pi$ on $\inst$ contains
$A(\vec a)$.

$(\Rightarrow)$  Suppose that  $\inst \models Q$.  Then there is an approximation $Q_i$ of $Q$ and a homomorphism $h$
from $Q_i$ into $\inst$, which is also a homomorphism from $\views(Q_i)$ to $\views(\inst)$. 
As $Q_i$ is an approximation of $Q$, by the first inclusion for $L(\automaton)$, $\automaton$ must accept some code 
$\T$ of $\views(Q_i)$. Choose an arbitrary element $e_0$ from $\adom(\views(\inst))$.  
For a vertex $v$ of $\T$ we define $\vec{b}(v)$ to be the tuple $(e_1, \dots, e_k)$ of elements of $\views(\inst)$
where each $e_i$ is defined as follows:
$$
e_i = \begin{cases} h([v,i]), \mbox{ if } [v,i] \in \adom(\inst) & \\ 
                    e_0, \mbox{otherwise.} 
       \end{cases}
$$
We claim that $(\T, \vec{b})$ is a jointly-annotated term for $\automaton$, $\views(\inst)$ and the $\vec{b}$-image 
of the root of $\T$. Indeed, if equation (\ref{rule-iii}) contains an equality $[v,i] = [u,j]$, it follows that
$[v,i]$ and $[u,j]$ are indeed equivalent. The R-atoms of equations (\ref{rule-iii}) and (\ref{rule-iv}) hold because
$h$ is a homomorphism, and also because they are never applied to dummies.
 It follows (by Proposition~\ref{jointly-annotated-terms}) that 
$\views(\inst) \models Q_\automaton(\vec{a})$ for some $\vec{a}$.

$(\Leftarrow)$ Suppose that $\views(\inst), \Pi_{\automaton} \models \goal_{\automaton}(\vec{a})$. 
Let $(\T, \vec{b})$ be a jointly-annotated term for the inference of $\goal_\automaton(\vec{a})$ for 
$\views(\inst), \Pi_{\automaton}$ and $\vec{a}$ (which exists by Proposition \ref{jointly-annotated-terms}),
and $f$ be an accepting run of $\automaton$ on $\T$.
Thus, by the second inclusion for $L(\automaton)$, there must be a homomorphism $h$ from 
$\views(Q_i)$ for some $i$ into $\decode T$. 
Note that by Proposition~\ref{jointly-annotated-terms},
we know that $\vec b$ can be considered as a homomorphism from $\decode T$ into $\views(\inst)$.
By composing $h$ with $\vec b$,
we obtain a homomorphism $g$ from
$\views(Q_i)$ into $\views(\inst)$.
Now we have a data instance $\inst' = Q_i$ such that $\inst' \models Q$ and 
a homomorphism $g$ from $\views(\inst')$ into $\views(\inst)$.
Therefore, as $Q$ is homomorphically determined by $\views$, we have $\inst \models Q$.
\end{proof}

\newpage
\section*{Proofs for Section \ref{sec:rewrite}: rewritability results}
\subsection*{Proof of the last part  of Theorem \ref{thm:frgd-rewriting}}
Recall that Theorem \ref{thm:frgd-rewriting} stated that
if $Q$ is in MDL,   $\views$ are a collection of $\fgdatalog$ views,
and $Q$ monotonically determined by $\views$
then $Q$ has a rewriting in MDL.
We sketch how to modify the prior argument for this claim.

A tree decomposition is \emph{frontier-one} if the intersection of any
two neighboring bags has at most one element. It is clear that approximations
of MDL queries have such decompositions, provide that we now
allow decompositions that have arbitrary outdegree, not necessarily
binary. We can further normalize so that in each bag other than
the root, the element that is shared with its parent (if such
exists) has the first local name in the code.

When we apply frontier-guarded  views, we annotate
the bags of the tree decomposition with view predicates, but we do not change the intersection
of neighboring bags. And when we project such a decomposition onto the view predicates,
we do not change this intersection either.
Thus in the proof of Theorem \ref{thm:frgd-rewriting}, we can consider
an automaton $\automaton$ that enforces the frontier-one restriction.

We can modify the backward mapping for frontier-one decompositions so that
it produces an MDL query; our modification will have only unary intensional predicates $P_q$ for
each state $q$ of the automaton, corresponding only to the element coded in the frontier. 

More formally,
for every transition of the form $q_1 \ldots q_r \sigma_{L}^{s_1,\ldots s_r } \to q$ with
$L = \{\coderel^{R^1}_{\vec{n}_1}, \dots, \coderel^{R^m}_{\vec{n}_m}\}$,
we know that for the $i^{th}$ child node, the label $L$
contains at most one equality of a local name $n_i$ with the
first local  name of the child.

We create a rule of the form:
\begin{align*}
P_q(x_1) \datalogarrow
\bigwedge_{i = 1}^{k} \Adompred(x_i) \wedge
\bigwedge_{i=1}^{r} P_{q_i}(x_{n_i}) \wedge 
\bigwedge_{l = 1}^m R^l(\vec{x}_{\vec{n}_l})
\end{align*}

Similar modifications are applied to the leaf rules.

\subsection*{Proof of Theorem \ref{thm:moncq-rewriting}}
Recall the statement:

\medskip

Suppose $Q$ is a normalized Monadic Datalog
query and $\views$ is a collection of Monadic Datalog and 
CQ views. 
If $Q$ is monotonically determined by $\views$, then
$Q$ is rewritable over $\views$ in Datalog. The size of the rewriting is at most double-exponential in $K = O(|Q|^{|\views|})$ (``of required size'' below).

\medskip

\begin{proof}
We first argue that without any loss of generality 
we can assume that all CQ views are connected.
If $V$ is a CQ view which is not connected, then it can be replaced by
a few connected CQs. For example, the disconnected view
$V(\vec{x},\vec{y}) = Q_1(\vec{x})\land Q_2(\vec{y})$ can be replaced by the free-variable-connected
views $V_1(\vec{x}) = Q_1(\vec{x})\land \exists \vec{y}\, Q_2(\vec{y})$ and
$V_2(\vec{y}) = (~ \exists \vec{x} \,Q_1(\vec{x})~ )\land  Q_2(\vec{y})$. Indeed, given $V$, we can restore
$V_1$ and $V_2$ as its projections on $\vec{x}$ and $\vec{y}$ respectively.
And the other way round, given $V_1$ and $V_2$, we
can restore $V$ as their product since
$V_1(\vec{x})\land V_2(\vec{y}) = Q_1(\vec{x})\land (\exists \vec{y}\, Q_2(\vec{y}))\land(\exists \vec{x} \,Q_1(\vec{x}))\land  Q_2(\vec{y})$
is equivalent in first-order logic to $Q_1(\vec{x})\land Q_2(\vec{y}) = V(\vec{x},\vec{y})$.

We need to show that there is an automaton $\automaton$ such that
\begin{align*}
\{\views(Q_i) \mid i \in \omega\} \subseteq \decode{L(\automaton)} \subseteq
\{\jnst \mid \views(Q_i)\text{\textrm{} maps into } \jnst \text{\textrm{} for some } i \in \omega\}
\end{align*}

Consider the class $\C$ of canonical databases of CQ approximations of $Q$.
By Lemma \ref{lem:boundingtw:2} (applied with the maximal radius $r$ of the CQ views in $\views$ where  $r= O(|V|)$), the treewidth of the class $\views(\C) = \{ \views(\F) \mid \F \in \C\}$
of view images of $\C$ is also bounded by some $K = O(|Q|^{|\views|})$.
We can strengthen Proposition~\ref{prop:cqappr:regular} to show that
for any treewidth $K$ greater than or equal to the maximal number of variables
in the rules of $Q$, the class  
$\C$ of approximations is $K$-regular and there is an 
NTA $\automaton^{\mathsf{base}}$ of at most exponential size in $K$ that
captures $\C$.

Without any loss of generality we assume that the sets of IDBs of programs for different views are disjoint, and that their goal predicates are
identical with the view predicates. Denote by $\Pi_{\views}$ the union of all rules in Datalog queries in $\views$, including the rules for the CQ views.
By Proposition~\ref{prop:datalog}, there is an NTA $\automaton^{\Pi_\views}$ of required size which
recognizes all codes of $\{\F \mid \F \models \Pi_{\views}, \text{tw}(\F) \le K \}$.

We claim that the automaton $\automaton' = \automaton^{\mathsf{base}} \cap \automaton^{\Pi_\views}
$ satisfies
\begin{equation*}
\{\fpeval{\Pi_{\views}}{Q_i} \mid i \in \omega\} \subseteq \decode{L(\automaton')} \subseteq \{\F \mid \F \restr \Sigma \in \C, \F \models \Pi_{\views} \}
\end{equation*}

Now the automation $\automaton$ that is the projection of $\automaton'$ on the signature of view predicates
(which exists by Proposition~\ref{prop:restriction}) captures
$\mathbb{V} = \{ \F \restr \Sigma_\views \mid \F \restr \Sigma_{\baseschema}  \in \C, \tw(\F) \le K, \F \models \Pi_\views\}$
and so satisfies two conditions of Proposition \ref{prop:backward}.
Now applying  Proposition \ref{prop:backward}, we conclude that
$Q$ is Datalog rewritable over views, and that the rewriting is of required size.

Another observation  will be useful later (see proof of Theorem~\ref{thm:decidemondetrewritingmdlandcqs}) is
that
$\automaton$ captures $\mathbb{V} = \{ \F \restr \Sigma_\views \mid \F \restr \Sigma_{\baseschema}  \in \C, \tw(\F) \le K, \F \models \Pi_\views\}$.

Now applying  Proposition \ref{prop:backward}, we conclude that
$Q$ is Datalog rewritable over views, and that the rewriting is of required size.
\end{proof}

\newpage
\section*{Proofs for Section \ref{sec:decide}: decidability results on monotonic determinacy}
\subsection*{Proof of Lemma \ref{lem:pipeline}}
Recall the statement:

\medskip

$Q$ is monotonically determined over $\views$ if and only if
every test succeeds.

\medskip 

We first need a bit of infrastructure.
When a Datalog query $Q = (\Pi, \goal)$ holds for a tuple $\vec d$ in an instance
$\inst$, there is a derivation that witnesses this, which has a tree-like structure.
A \emph{proof term} for $\inst \models Q(\vec d)$ is a labelled finite tree in which  every 
node $\node$ is labelled with a ground fact $\factof(\node)$ over the predicates mentioned in $\Pi$, and
every non-leaf node $\node$ is additionally labelled with a rule $\ruleof(\node)$ of $\Pi$ such that:
\begin{compactitem}
\item If $\node$ is the root, $\factof(\node)= \goalpred(\vec d)$
\item If $\node$ is a leaf then $\factof(\node)$ is a fact over the extensional predicates
of $Q$, and this fact holds in $\inst$
\item If $\node$ is not a leaf,  let $\inst_{\node}$ be the instance consisting
of $\factof(\node)$ and all facts $\factof(c)$ for $c$ a child of $\node$. 
Then there is a map $h_{\node}$ from the variables in the body 
of $\ruleof(\node)$ into the active domain of $\inst_{\node}$ that maps the facts
in the body of $\ruleof(\node)$ onto the facts of $\inst_{\node}$, and
 maps the head of $\ruleof(\node)$ to $\factof(\node)$.
\end{compactitem}

It is well-known \cite{AHV} and easy to see that proof terms represent a semantics for Datalog:
 $\inst \models Q(\vec d)$ exactly when there is a proof term that
witnesses this.


We are now ready for the proof of the lemma.
\begin{proof}
We assume $Q$ is Boolean for simplicity.
In one direction, assume $Q$ is monotonically determined over $\views$, and consider
a test $(Q_i, D')$.  By virtue
of $(Q_i, D')$ being a test, we have  $\views(D) \subseteq \views(D')$.
Monotonic determinacy and $Q_i \models Q$ thus imply that $D' \models Q$.

In the other direction, assume every test succeeds, and consider instance
$\inst_1$ and $\inst_2$ with $\inst_1$ satisfying $Q$ and $\views(\inst_1) \subseteq \views(\inst_2)$.
As $\inst_1 \models Q$, there is a homomorphism $\alpha$ from some $Q_i$ into $\inst_1$.
 Since the views are preserved under homomorphism, $\alpha$ is also a homomorphism from
$\views(Q_i)$ into $\views(\inst_1)$.

We will now create a $D'$ such that $(Q_i, D')$
forms a test, along with an
 extension of $\alpha$ that is a homomorphism taking $D'$ into $\inst_2$. 
$D'$ will be the union of a set of facts 
$S_F$ (defined below)  for every fact $F$ from $\views(Q_i)$.
For a fact $F= V(\vec c)$ from $\views(Q_i)$
let $F'=\alpha(F)$. Note that $F'$ is in $\views(\inst_1)$.
By assumption,  $F'$ is also in $\views(\inst_2)$.
Thus there is a proof term $\tau_{F'}$  witnessing
that  $\inst_2 \models F'$. 
Moving top-down on $\tau_{F'}$, we form a proof term
for $F$.  The root of the term $\tau_{F'}$ is labelled with the fact $\goal_V(\alpha(c_1) \ldots \alpha(c_n))$ for the goal predicate
$\goal_V$ of the Datalog program $Q_V$. Since $\alpha$ is not injective,
$c_i$ may not be unique, but we  choose one such tuple $c_1 \ldots c_n$ and fix it for the transformation
of $\tau_{F'}$. This choice will impact the proof term that we create, but will
not impact the homomorphism extending $\alpha$.
We first
transform $\tau_{F'}$ by replacing any element $\alpha(c_i)$ 
occurring in $\tau_{F'}$ by $c_i$.
We then  continue our transformation by proceeding top-down on the partially-transformed
term. At the root of the term we do nothing more.
In the inductive step, we consider an intensional fact
$U(\vec d)$ in $\tau_F$ witnessed by a set of facts $J$ that are a substitution instance of some rule body $B$.
In $J$, we uniformly
replace any witness $w$ to an existentially quantified variable $x$ of $B$ by a fresh element $d_w$, and extend
the homomorphism to take $d_w$ to $w$. We set $S_F$ to be the union of all EDB facts
occurring in the proof term we have constructed for $F$.

It is easy to see that the union of the facts $S_F$ forms an appropriate $D'$ giving a test.
By assumption this test succeeds, so $D' \models Q$. But since $D'$ is homomorphically
embedded into $\inst_2$, this means that $\inst_2 \models Q$ as required.
\end{proof}

\subsection*{Proof of Theorem \ref{decidability:cq:equivalence}}
Recall the statement:

\medskip

If $Q$ is a CQ and $\views$ is a collection of Datalog views, then  the problem of monotonic determinacy
of $Q$ over $\views$ is decidable in $\twoexp$.

\medskip

\begin{proof}
Let $Q' = \views(Q)$ and let $Q'$ inherit all answer variables $\vec{x}$ from $Q$.
Let $Q'' = (\Pi, \goal)$ where $\Pi$ is obtained by taking all rules defining $\views$
and adding the rule $\goal(\vec{x}) \datalogarrow Q'$

It is easy to see that
the following statements are equivalent:
\begin{itemize}
\item[(1)] $Q$ is monotonically determined by $\views$;
\item[(2)] $Q'$ is a CQ rewriting of $Q$ in terms of $\views$;
\item[(3)] for all $\inst$, $\inst \models Q$ iff $\views(\inst) \models Q' $;
\item[(4)] $Q''$ is equivalent to $Q$;
\item[(5)] $Q''$ is contained in $Q$.
\end{itemize}

Indeed, (1) implies (2) by the proof of Proposition~ \ref{prop:cqquery},
 and all other
implications between adjacent statements are trivial.
It remains to note that the containment (5) can be decided in $\twoexp$
by Theorem 5.12 of \cite{chaudhuri1997equivalence}.
\end{proof}

\subsection*{Proof of Theorem  \ref{thm:decidemondetrewritingmdlandcqs}}
Recall the statement:

\medskip

Suppose $Q$ is in Monadic Datalog,
and $\views$ is a collection of CQ and Frontier-guarded Datalog views.
Then there is an algorithm that decides if  $Q$ is monotonically determined by $\views$ in $\threeexp$.

\medskip

\begin{proof} 
In this proof the words ``of required size'' mean ``doubly-exponential in $K$'' where $K$ is some integer defined below,
$\C$ stands for the class of all CQ approximations of $Q$, $\Sigma_\views$ is the view signature
and $\Sigma_{\baseschema}$ is the initial signature.

As in the proof of Theorem \ref{thm:moncq-rewriting}, we
we can assume that all CQ views are connected.

We have to check whether $Q$ holds on all tests. As observed in the proof of Theorem~\ref{thm:moncq-rewriting}, there is an integer $K = O(|Q|^{|\views|})$
such that both the treewidth of the view images of CQ approximations of $Q$ and the treewidth of the CQ approximations of the views in $\views$ are at most $K$. 
Let $\mathbb{V} = \{ \F \restr \Sigma_\views \mid \F \restr \Sigma_{\baseschema}  \in \C, 
\tw(\F) \le K, \F \models \Pi_\views\}$.
As argued in the proof of Theorem~\ref{thm:moncq-rewriting}, $\mathbb{V}$ is $K$-regular and captured by an NTA $\automaton_\views$ of required size.

We follow the template of 
 Theorem ~\ref{thm:moncq-rewriting}
We will check the equivalent condition that $Q$ holds on each element of the class
$ETEST(Q, \views)$,  which consists of all instances $D'$ which can be obtained from an instance in $\mathbb{V}$ by applying inverses of view definitions
while keeping the atoms of the view signature. Note that the treewidth of all database instances in $ETEST(Q, \views)$ is also bounded by $K$. By Proposition~\ref{prop:cqappr:regular},
for each view  $(V,Q_V)$ there exists an automaton $\automaton'_{V}$ which for each atom $V(\vec c)$ at a node $n$ in $\T$ checks whether
$n$  has a descendant $n'$ such that $n'$ contains $\vec{c}$ and the subcode of $\T$ rooted at $n'$ is a code of some CQ approximation of $Q_V$.
The automaton $\automaton_{ETEST}$ defined as the product of $\automaton_\views$ and $\automaton'_{V}$ for all views $V$ in $\views$ (thus  accepting the intersection of these languages)
captures $ETEST(Q, \views)$.

By Proposition~\ref{prop:datalog:notaccept}, there is an NTA $\automaton''$ of required size 
which recognizes those codes which do not satisfy $Q$. 
So to check if $Q$ is monotonically determined by $\views$ we construct the intersection 
of $\automaton_{ETEST}$ and $\automaton''$ (which is of required size) and check if it is empty. The latter check is linear in the size of the automaton.
It should be clear that the time complexity of this procedure is doubly exponential in $K$, and so triply exponential in the size of the input.
\end{proof}

\newpage
\section*{Proofs for Section \ref{sec:lower}: lower bounds on monotonic determinacy}
\subsection*{Proof of Proposition \ref{prop:easyhardness}}
Recall the statement:

Monotonic determinacy is
\begin{compactitem}
\item $\np$-hard for CQ queries and views 
\item $\Pi_2^p$-hard for UCQ queries and UCQ views
\item $\twoexp$-hard for CQ queries and MDL views
\item $\twoexp$-hard for MDL queries and a fixed atomic view
\item undecidable for Datalog queries and a fixed atomic view 
\end{compactitem}

The first three bullet items  will follow from a reduction from Datalog equivalence:

\begin{lemma} Let $Q$ and $Q_V$ be arbitrary Datalog queries. Then $Q$ is monotonically determined by $\views = \{(V, Q_V)\}$ iff $Q$ and $Q_V$ are equivalent.
\end{lemma}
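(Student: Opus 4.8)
The plan is to prove the two directions separately, the backward direction being immediate and the forward direction carrying the content. For the easy direction ($\Leftarrow$), suppose $Q$ and $Q_V$ are equivalent. For the single view $V$ the hypothesis $\views(\inst_1)\subseteq\views(\inst_2)$ simply unwinds to $Q_V(\inst_1)\subseteq Q_V(\inst_2)$, since the only facts in a view image are $V$-facts and $V$ is interpreted as $Q_V$. Because $Q$ agrees with $Q_V$ on every instance, this is literally $Q(\inst_1)\subseteq Q(\inst_2)$, which is exactly the conclusion required for monotonic determinacy; equivalently, the identity map on the single view relation is the witnessing monotone function.

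For the forward direction ($\Rightarrow$), the first step is to observe that monotonic determinacy over the single view $\{(V,Q_V)\}$ forces the output of $Q$ to factor monotonically through the output of $Q_V$. Indeed, whenever $Q_V(\inst_1)=Q_V(\inst_2)$, applying the definition of monotonic determinacy to the pair $(\inst_1,\inst_2)$ and again to $(\inst_2,\inst_1)$ yields both $Q(\inst_1)\subseteq Q(\inst_2)$ and $Q(\inst_2)\subseteq Q(\inst_1)$, hence $Q(\inst_1)=Q(\inst_2)$. Thus there is a well-defined map $g$ on the (Boolean) view values with $Q(\inst)=g(Q_V(\inst))$ for every $\inst$, and the defining inequality makes $g$ monotone. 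Since $Q$ is Boolean, $g$ is a monotone function from $\{\false,\true\}$ to itself, so $g$ is either constant or the identity.

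The second step rules out the constant case using the mild non-degeneracy satisfied by the queries in the reductions (and by the standard Datalog convention): $Q$ is satisfiable and returns $\false$ on the empty instance, hence $Q$ is not constant. If $g$ were constant then $Q=g\circ Q_V$ would be constant, a contradiction; so $g$ is the identity and $Q(\inst)=Q_V(\inst)$ for all $\inst$, i.e.\ $Q\equiv Q_V$.

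The main subtlety to flag is precisely this degenerate case: if one allows $Q$ to be constant-$\true$ or constant-$\false$, then monotonic determinacy holds vacuously for \emph{every} view, yet equivalence fails unless $Q_V$ happens to be correspondingly constant; so the clean ``iff'' genuinely needs $Q$ non-constant, which I would state explicitly or derive from satisfiability together with falsity on the empty instance (automatic for the CQs, UCQs with at least one atom, and Datalog queries used in the three hardness reductions). As an alternative to the factoring argument, giving explicit witnesses, one can establish the two containments directly: there is no instance $\inst^*$ with $Q(\inst^*)=\true$ and $Q_V(\inst^*)=\false$, since applying determinacy to $(\inst^*,\emptyset)$ has premise $\emptyset\subseteq Q_V(\emptyset)$ but conclusion $\{()\}\subseteq Q(\emptyset)=\emptyset$; and there is no $\inst^*$ with $Q_V(\inst^*)=\true$ and $Q(\inst^*)=\false$, since applying determinacy to $(\inst_+,\inst^*)$ for a satisfying instance $\inst_+$ of $Q$ has premise $Q_V(\inst_+)\subseteq\{()\}$ but conclusion $\{()\}\subseteq\emptyset$. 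Either route extends verbatim to the non-Boolean case by arguing tuple-by-tuple over the common arity of $Q$ and $Q_V$.
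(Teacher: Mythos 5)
Your proof is correct (under the paper's standing assumption that $Q$ is Boolean), but it takes a genuinely different route from the paper's. The paper argues through CQ approximations and the canonical-test characterization of Lemma~\ref{lem:pipeline}: writing $Q=\bigvee_i Q^1_i$ and $Q_V=\bigvee_j Q^2_j$ with non-empty approximations, it observes that if $Q_V$ were false on some $\canondb(Q^1_i)$ then the test built on $Q^1_i$ would have no atoms and would fail $Q$, giving $Q\subseteq Q_V$; and that the test pairing a fixed approximation $Q^1_0$ of $Q$ with any $Q^2_j$ (replacing the fact $V()$ by a copy of $\canondb(Q^2_j)$) must succeed, giving $Q_V\subseteq Q$. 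Your explicit-witness variant is essentially the semantic shadow of these two tests --- the pair $(\inst^*,\emptyset)$ plays the role of the empty test and $(\inst_+,\inst^*)$ the role of the test built from a $Q_V$-approximation --- but you work straight from the definition of monotonic determinacy, bypassing the test machinery and approximations altogether; your factoring argument through a monotone $g:\{\false,\true\}\to\{\false,\true\}$ abstracts the same content one step further. What your route buys is that it is more elementary and, importantly, it surfaces the non-degeneracy the paper leaves implicit: as stated, ``arbitrary Datalog queries'' is literally false for constant queries (an unsatisfiable $Q$, or one true on the empty instance, is vacuously monotonically determined by any views), and the paper's proof silently assumes exactly your hypotheses --- satisfiability plus falsity on the empty instance --- via the phrases ``with non-empty $Q^1_i$ and $Q^2_j$'' and ``fix some CQ approximation $Q^1_0$''; these hold for the queries arising in the three hardness reductions, so your flag matches what the proof actually uses. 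What the paper's route buys is uniformity with the test-based framework of Section~\ref{sec:lower} and a smoother path beyond the Boolean case; indeed your closing claim that the argument extends ``verbatim'' tuple-by-tuple is the one glib spot, since the $(\inst^*,\emptyset)$ trick breaks when the answer tuple's elements must occur in the second instance, so the non-Boolean extension really goes through pointed instances and pointed tests, as the paper indicates, rather than a verbatim rerun.
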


\begin{proof} Let $Q = \bigvee_{i=0}^{\alpha} Q_i^1$ and $Q_V = \bigvee_{j = 0}^{\beta} Q_j^2$ with non-empty $Q_i^1$ and $Q_j^2$.

First we show that each $Q_i^1$ satisfies $Q_V$. Indeed, if $Q_V$ is not true on some $Q_i$, then there is a test built on $Q_i$ 
with no atoms. Clearly this test does not satisfy $Q$.

Then we show that each $Q_j^2$ satisfies $Q$. Fix some CQ approximation $Q^1_0$ of $Q$.
We claim that $(Q^1_0, Q^2_j)$ is a test for $Q$ and $\views$ for any $j \in \alpha$. Indeed, $Q_V$ evaluates to true on $Q^1_0$,
and then $V=1$ during the inverse step can be replaced by any $Q^2_j$. Thus $Q^2_j$ must satisfy $Q$.

\end{proof}

The first bullet item now follows from the $\np$-hardness of equivalence of CQs;
the second item follows from the $\Pi_2^p$ hardness of equivalence for UCQs \cite{sagivyann}, while
the third follows from the $\twoexp$-hardness of a CQ
and an MDL query  \cite{mdlcq}.

The results for fixed views follow from a reduction found in \cite{inverserules}:
\begin{lemma}
Let $Q_1$ and $Q_2$ be arbitrary Datalog queries. Consider the query 
$Q = Q_1 \land e \lor Q_2$ where $e$ is a fresh extensional predicate of arity $0$
and a set of views $\views$ which has views $P'$ for all extensional relations  $P$ occurring in
$Q$ except $e$. Then  $Q_1$ is contained in $Q_2$ iff $Q$ is monotonically 
determined by $\views$.
\end{lemma}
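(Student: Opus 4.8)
The plan is to establish both directions of the biconditional by exploiting the single key fact that the views $\views$ reveal exactly the non-$e$ part of an instance. Write $\Sigma_0$ for the schema over which $Q_1$ and $Q_2$ are defined; since $e$ is fresh we have $e\notin\Sigma_0$, and the base schema of $Q$ is $\Sigma_0\cup\{e\}$. Because $\views$ contains one atomic copying view $P'(\vec x):=P(\vec x)$ for each $P\in\Sigma_0$ but none for $e$, I would first record that $\views(\inst_1)\subseteq\views(\inst_2)$ holds iff $P^{\inst_1}\subseteq P^{\inst_2}$ for every $P\in\Sigma_0$, i.e.\ iff $\restrict{\inst_1}{\Sigma_0}\subseteq\restrict{\inst_2}{\Sigma_0}$; in particular the truth value of $e$ is invisible to the views. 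I will use throughout that every Datalog query is monotone, so $\inst\subseteq\inst'$ implies $\inst\models Q_i\Rightarrow\inst'\models Q_i$. I treat $Q_1,Q_2$ as Boolean, following the paper's convention.

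For the direction that monotonic determinacy implies containment, I would assume $Q$ is monotonically determined over $\views$ and fix any $\Sigma_0$-instance $D$ with $D\models Q_1$. The device is to compare the same base data under the two possible values of $e$: set $\inst_1:=D\cup\{e\}$ (with $e$ true) and $\inst_2:=D$ (with $e$ false). Since these agree on $\Sigma_0$ we have $\views(\inst_1)=\views(\inst_2)$, so in particular $\views(\inst_1)\subseteq\views(\inst_2)$. As $D\models Q_1$ and $Q_1$ does not mention $e$, we get $\inst_1\models Q_1\land e$, hence $\inst_1\models Q$. Monotonic determinacy then forces $\inst_2\models Q$; but $e$ is false in $\inst_2$, so the disjunct $Q_1\land e$ cannot hold there, and therefore $\inst_2\models Q_2$. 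Since $Q_2$ ignores $e$, this gives $D\models Q_2$, establishing $Q_1\subseteq Q_2$.

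For the converse I would assume $Q_1\subseteq Q_2$ and take $\inst_1,\inst_2$ with $\views(\inst_1)\subseteq\views(\inst_2)$ and $\inst_1\models Q$. By the opening observation, $\restrict{\inst_1}{\Sigma_0}\subseteq\restrict{\inst_2}{\Sigma_0}$. From $\inst_1\models Q$ we have either $\inst_1\models Q_1\land e$ or $\inst_1\models Q_2$; in the first case $\inst_1\models Q_1$ and then $\inst_1\models Q_2$ by the containment hypothesis, so in either case $\inst_1\models Q_2$. As $Q_2$ is a monotone Datalog query over $\Sigma_0$ and $\restrict{\inst_1}{\Sigma_0}\subseteq\restrict{\inst_2}{\Sigma_0}$, monotonicity yields $\inst_2\models Q_2$, whence $\inst_2\models Q$. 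This shows $Q$ is monotonically determined over $\views$.

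The argument is short, and there is no serious obstacle; the only point requiring care is the backward direction, where one must choose $\inst_1$ and $\inst_2$ with identical $\Sigma_0$-reducts but opposite $e$-values, so that their view images coincide while the $Q_1\land e$ disjunct is active only in the source instance. The extension from the Boolean case to arbitrary arities is routine, following the paper's stated conventions.
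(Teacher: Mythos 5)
Your proof is correct, but it takes a genuinely different route from the paper's. The paper proves this lemma through its test machinery: by Lemma~\ref{lem:pipeline}, monotonic determinacy holds iff every canonical test succeeds, and since the views are atomic copies of all extensional predicates except $e$, the tests for $Q$ are exactly the pairs $(Q_i^1\land e,\, Q_i^1)$ and $(Q_i^2, Q_i^2)$ for CQ approximations $Q_i^1, Q_i^2$ of $Q_1$ and $Q_2$ (the inverse-chase simply strips the $e$-fact). The forward direction then reduces to observing that all such tests pass when $Q_1\subseteq Q_2$, and the backward direction extracts, from the success of the test $(Q_i^1\land e,\, Q_i^1)$, that every approximation of $Q_1$ satisfies $Q_2$, which yields containment via homomorphism-preservation. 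You instead argue directly from the definition of monotonic determinacy, with no appeal to Lemma~\ref{lem:pipeline} or to CQ approximations: for one direction you toggle $e$ on a fixed $\Sigma_0$-instance $D$, comparing $\inst_1=D\cup\{e\}$ against $\inst_2=D$, whose view images coincide; for the other you use only that $\views(\inst_1)\subseteq\views(\inst_2)$ forces $\restrict{\inst_1}{\Sigma_0}\subseteq\restrict{\inst_2}{\Sigma_0}$ together with monotonicity of the Datalog query $Q_2$. Note that your instance pair $(D\cup\{e\},\,D)$ is precisely the semantic counterpart of the paper's test pair $(Q_i^1\land e,\,Q_i^1)$, instantiated at arbitrary instances rather than at canonical databases of approximations; so the central device (making $e$ invisible to the views) is shared, but your execution is more elementary and self-contained, while the paper's version fits the test characterization it develops and reuses throughout Section~\ref{sec:decide}. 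Both arguments handle the Boolean case with the routine extension to arbitrary arities, as the paper's conventions permit.
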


\begin{proof}
$(\Rightarrow)$ Note that the tests for $Q$ and $\views$ consist of all CQ approximations of $Q_1$ and $Q_2$. 
It follows that if $Q_1$ is contained in $Q_2$, then all tests pass.

$(\Leftarrow)$ We assume monotonic determinacy and show that $Q_1$ is contained in $Q_2$. 
Pick some CQ approximation $Q_i^1$ of $Q_1$.  Then it's easy to see that
$(Q_i^1\land e, Q_i^1)$ is a test for $Q$ and $\views$. By monotonic determinacy
it follows that $Q_i^1 \models Q$, and so either $Q_i^1 \models Q_1 \land e$ or $Q^1_i \models Q_2$.
The first option is impossible because $Q_i^1$ contains no $e$-atoms. Therefore, $Q^1_i \models Q_2$.
Thus $Q_1$ is contained in $Q_2$.
\end{proof}
The second to the last item now follows from  \cite{cosmadakis1988decidable}
and the last item from \cite{undeciddatalog}, noting that the lower bounds only require a single
extensional predicate.

\subsection*{Proof of Proposition \ref{prop:tpnotmondet}}
Recall the statement:

\medskip

$Q_{TP}$ is not monotonically determined by $\views_{TP}$ iff TP has a solution.

\medskip

We recall the definition of the query and views, giving names to the special views.

\begin{compactenum}
\item $Q_{\qstart} ~ \datalogarrow ~ A(x) \datalogwedge B(x)$
\item $A(x) ~ \datalogarrow ~ \xsucc(x,x') \datalogwedge A(x') \datalogwedge C(x')$
\item $A(x) ~ \datalogarrow ~ \xend(x)$
\item  $B(y) ~ \datalogarrow ~ \ysucc(y,y') \datalogwedge B(y') \datalogwedge D(y')$
\item  $B(y) ~ \datalogarrow ~ \yend(y)$
\item $Q_{\qhelper} ~ \datalogarrow ~ C(u) \datalogwedge \yproj(y,z) \datalogwedge \xproj(x,z)$
\item $Q_{\qhelper} ~ \datalogarrow ~ D(u) \datalogwedge \yproj(y,z) \datalogwedge \xproj(x,z)$\\
\item $Q_{\qverify} ~ \datalogarrow ~ \ha(z_1,z_2,y,x_1,x_2) \datalogwedge T_i(z_1) \datalogwedge T_j(z_2)$ \\
 $\mbox{ for all pairs}$ $(T_i, T_j)\notin HC$
\item  $Q_{\qverify} ~ \datalogarrow ~ \va(z_1,z_2,y_1,y_2,x) \datalogwedge T_i(z_1) \datalogwedge T_j(z_2)$ \\
$\mbox{ for all pairs}$ $(T_i, T_j)\notin VC$
\item  $Q_{\qverify} ~ \datalogarrow ~ \ysucc(o,y) \datalogwedge \ysucc(y,z) \datalogwedge \xsucc(o,x) \datalogwedge
\xproj(x,z) \datalogwedge T_i(z)$ \\
$\mbox{ for all }$ $T_i \notin IT$
\item $Q_{\qverify} ~ \datalogarrow ~ \yend(y) \datalogwedge \yproj(y,z) \datalogwedge T_i(z) \datalogwedge \xproj(x,z) \datalogwedge \xend(x)$ \\
$\mbox{ for all }$ $T_i \notin FT$
\end{compactenum}

The set of views $\views_{TP}$ consists of
\begin{compactitem}
\item[--] the \emph{grid-generating
view}
$$
\begin{array}{rcl}
S(x,y) & \datalogarrow & C(x) \datalogwedge D(y)\\
S(x,y) & \datalogarrow & \xproj(x,z) \datalogwedge T_i(z) \datalogwedge \yproj(y,z) \mbox{ for all } T_i \mbox{ in } Tiles;\\
\end{array}
$$
\item[--]
the \emph{atomic views} $V_\ysucc$, $V_\xsucc$, $V_\yend$, $V_\xend$ and $V_{T_i}$ for EDBs $\ysucc,\xsucc, \yend$, $\xend$ and each
$T_i$ in $Tiles$;
\item[--] the following \emph{special} views
$$
\begin{array}{crcl}
(SP1) & V^\qhelper_C(u,x,y,z) & \datalogarrow & C(u) \datalogwedge \xproj(x,z) \datalogwedge \yproj(y,z) \\
(SP2) & V^\qhelper_D(u,x,y,z) & \datalogarrow & D(u) \datalogwedge \xproj(x,z) \datalogwedge \yproj(y,z) \\
(SP3) &  V_{\ha}(z_1, z_2, y,x_1,x_2) & \datalogarrow & \ha (z_1, z_2, y,x_1,x_2)\\
(SP4) & V_{\va}(z_1, z_2, y_1,y_2,x) & \datalogarrow & \va (z_1, z_2, y_1,y_2,x)\\
(SP5) & V_{I}(o,x,y,z) & \datalogarrow & \xsucc(o,x)\datalogwedge \xproj(x,z) \datalogwedge \ysucc(o,y) \datalogwedge \yproj(y,z) \\
(SP6) & V_{F}(x,y,z) & \datalogarrow & \xproj(x,z) \datalogwedge \xend(x) \datalogwedge \yend(y) \datalogwedge \yproj(y,z) .
\end{array}
$$
\end{compactitem}

We are now ready to begin the proof of Proposition \ref{prop:tpnotmondet}.

\begin{proof}
Suppose that $T = (Q_i, \inst')$ is a test for $Q_{TP}$ and $\views_{TP}$. Following Gogacz and
Marcinkowski \cite{redspider},
we call $Q_i$ \emph{the Green instance} and $\inst'$ \emph{the Red instance} of the test.
We say that $T = (Q_i, \inst')$ is a \emph{main test} if its Green instance is generated from the $Q_{\qstart}$-atom.
Otherwise $T$ is said to be a \emph{side test}. Note that due to the choice of special and atomic views,
all side tests always pass. Also note that all special views are empty when applied to
an approximation of a  $Q_{\qstart}$-atom (see Figure~\ref{fig:unfold}, (a)).

$(\Rightarrow)$
Suppose that $Q_{TP}$ is not monotonically determined by $\views_{TP}$. Then there exists a test 
$T = (Q_i, \inst')$ for $Q_{TP}$ and $\views_{TP}$ that fails $Q_{TP}$. Note that $T$ can't be a side test.
Therefore $T$ must be a main test. Note that there are three kinds of main tests (see Figure~\ref{fig:unfold};
all tests are obtained from (b) by non-deterministic replacement of the $S$-atoms by their definitions):

1) a test in which the second rule of the $S$ view never fires.
In this case the Red instance contains the same $C$ and $D$
atoms as in the Green instance, and hence $Q_{\qstart}$ must
hold.

2) a test in which both rules of the $S$ view fire at least once. In this case, the Red
instance will contain both $C$ facts, $D$-facts, and also some $\xproj$-fact
that joins with some $\yproj$-fact, and thus using SP1-SP2 and $Q_{\qhelper}$ 
we see that $Q$ will hold on the Red instance

3) a test in the second rule of the $S$ view which  fires at least once,
but the first rule never fires. In this case 
the Red instance is isomorphic 
to a grid from the picture with some $T_i$-predicate at each point of the grid.

We claim that these $T_i$-predicates give rise to a correct tiling $\tau$. 
Indeed, as 8) and 9) do not set $Q_{\qverify}$ to $\true$ on $\inst'$, $\tau$ must respect horizontal
and vertical compatibility constraints. Similarly, due to  rules 10) and 11), $\tau$ should have a tile from $IT$ at $(1,1)$ and from $FT$ at $(n,m)$.

$(\Leftarrow)$ Suppose that there are integers $m$ and $n$ and
a tiling  of the $n \times m$ grid with a tile from $IT$ at $(1,1)$ and from $FT$ at $(n,m)$. Then this tiling
(when placed on the $n \times m$ grid in  Figure~\ref{figure:grid}) is $\inst'$ for some grid test of monotonically determinacy. Thus
$Q_{TP}$ is not monotonically determined by $\views_{TP}$.

\end{proof}

\newpage
\section*{Proofs for Section \ref{sec:nonrewrite}: non-rewritability results}
\subsection*{Proof of Fact \ref{fact:pebble-unravellings}}
Recall the statement:

\medskip

Let $k\geq 2$. Let $\inst$ be an instance and $U$ be any $k$-unravelling of $\inst$. Then the following hold:
\begin{compactenum}
\item $U\to \inst$ and $\inst \to_k U$.
\item For every instance $\inst'$, we have $\inst\to_k \inst'$ iff $U\to \inst'$.
\end{compactenum}

\medskip

For the first part, $U \to \inst$ by definition. To see $\inst \to_k U$, we form a strategy for the duplicator inductively,
preserving the invariant that the pebbles of the duplicator are contained in a single bag of the
tree decomposition.  The induction step is accomplished using the second property of
an unravelling.

We turn to the second part, fixing $\inst'$. 
If $U \to \inst'$ via some homomorphism $h$, we can apply
$h$ to the strategy witnessing $\inst \to_k U$ to see $\inst \to_k \inst'$.
Conversely, suppose $\inst \to_k \inst'$. Given $u \in U$  we know there is some some bag
of the tree decomposition containing $u$ with at most $k$ elements, and  $\Theta$ is a partial
isomorphism on this bag. Consider a play for Spoiler in the pebble game from $\inst$ to $\inst'$
going down the branch
of the tree decomposition to $u$.  In this play, once Spoiler moves a pebble off of an element,
he will never move back on to the element. Let $h(u)$ be the element in $\inst'$ corresponding
to $u$ in the response of the duplicator playing according to his winning strategy witnessing
$\inst  \to_k \inst'$. One can verify that $h(u)$ is a homomorphism. 

\subsection*{Proof of Lemma \ref{lemma:special-tiling}}
Recall the statement:

\medskip

There is a tiling instance $TP^*$ such that $\inst_{n,m}^{grid}$ can not be tiled with $TP^*$ for each $n,m \geq 1$
but for each $n,m \geq 3$ and each $k$ with $2 \leq k < \min\{n,m\}$ any  $k$-unravelling
of $\inst_{n,m}^{grid}$  can be tiled with ${TP^*}$.

\medskip

We can rephrase a tiling problem as a homomorphism problem.
For a tiling problem $TP=(Tiles, HC,VC,IT,FT)$, 
we denote by $\inst_{TP}$ the database instance over $\delta=\{\texttt H,V,I,F\}$ with domain $Tiles$ and 
facts ${\texttt H}(T,T')$ (resp. ${\texttt V}(T,T')$) for every $(T,T')\in HC$ (resp. $(T,T')\in VC$), 
and ${\texttt I}(T)$ (resp. ${\texttt F}(T)$) for every $T\in IT$ (resp. $T\in FT$). 
Then an instance can be tiled according to $TP$ exactly when it has a homomorphism
to $\inst_{TP}$.
We can thus rephrase the lemma as:

There is a tiling problem $TP^*$ such that $\inst_{n,m}^{grid}\not\to \inst_{TP^*}$ for each $n,m \geq 1$, 
but $\inst_{n,m}^{grid}\to_k \inst_{TP^*}$ for each $n,m\geq 3$ and each $k$ with $2 \leq k < \min\{n,m\}$. 

\medskip

Before going into the proof, we state a well-known characterization of winning strategies for the Duplicator in the existential pebble game:

\begin{fact}
\label{fact:win-duplicator}
Let $k\geq 2$ and let $\inst,\inst'$ be two instances over the same schema. 
The Duplicator has a winning strategy in the existential $k$-pebble game on $\inst$ and $\inst'$ if and only if there is a non-empty collection ${\mathcal H}$ of partial homomorphisms 
from $\inst$ to $\inst'$ with domain size $\leq k$ such that:
(1) if $f\in {\mathcal H}$ and $g\subseteq f$, then $g\in {\mathcal H}$, and (2) 
for each $f\in {\mathcal H}$ with domain size $<k$ and each $a\in \adom(\inst)$, there is $g\in {\mathcal H}$ with $f\subseteq g$ whose domain contains $a$. 
\end{fact}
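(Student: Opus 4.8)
The plan is to prove both directions of the biconditional by the standard correspondence between Duplicator strategies and ``winning families'' of partial homomorphisms, maintaining throughout the invariant that the partial map induced by the pebbles currently on the board is a member of ${\mathcal H}$. The two directions are symmetric in spirit: a family $\mathcal H$ \emph{is} a strategy in disguise, and a strategy \emph{generates} a family by collecting all reachable pebble configurations. I will treat a pebble configuration as inducing the partial map sending the $\inst$-element under ${\texttt p}_i$ to the $\inst'$-element under ${\texttt q}_i$ (over those $i$ for which both pebbles are on the board), which is a partial homomorphism precisely in the winning positions.

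For the ($\Leftarrow$) direction I would assume the family ${\mathcal H}$ is given and describe a Duplicator strategy preserving the invariant that, after every round, the induced map lies in ${\mathcal H}$. Initially no pebbles are placed, and since ${\mathcal H}$ is non-empty and downward closed by condition~(1), the empty map is in ${\mathcal H}$, so the invariant holds. If the Spoiler removes ${\texttt p}_i$, the new induced map is a restriction of the old one, hence in ${\mathcal H}$ by~(1), and the Duplicator removes ${\texttt q}_i$. If the Spoiler places ${\texttt p}_i$ on an element $a$, let $f'$ be the induced map of the board with the pair for ${\texttt p}_i$ deleted; then $f'\in{\mathcal H}$ by~(1) and $\size{\dom(f')}<k$. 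If $a\in\dom(f')$, the Duplicator plays ${\texttt q}_i$ onto $f'(a)$, leaving the induced map equal to $f'\cup\{(a,f'(a))\}=f'\in{\mathcal H}$. Otherwise condition~(2) applied to $f'$ and $a$ yields $g\in{\mathcal H}$ with $f'\subseteq g$ and $a\in\dom(g)$, and the Duplicator plays ${\texttt q}_i$ onto $g(a)$, so the induced map becomes $f'\cup\{(a,g(a))\}$, a restriction of $g$, hence in ${\mathcal H}$ by~(1). In every case the invariant is maintained, so the induced map stays a partial homomorphism and the Duplicator plays forever.

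For the ($\Rightarrow$) direction I would assume a winning strategy $\sigma$ and define ${\mathcal H}$ to be the set of all partial maps obtained by restricting, to an arbitrary subset of the pebbled elements, the induced map of some board configuration reachable in a play where the Duplicator follows $\sigma$. Non-emptiness and downward closure are immediate (the empty configuration is reachable, and we close under restriction by fiat), and since every $\sigma$-reachable configuration induces a partial homomorphism and restrictions preserve this, every member of ${\mathcal H}$ is a partial homomorphism of domain size $\le k$. For the forth condition~(2), given $f\in{\mathcal H}$ with $\size{\dom(f)}<k$ and $a\in\adom(\inst)$, I would first realize $f$ exactly: starting from a reachable configuration of which $f$ is a restriction, the Spoiler removes every pebble whose element is not in $\dom(f)$ (removals are answered while staying within $\sigma$), reaching a configuration whose induced map is exactly $f$; since $\size{\dom(f)}<k$ there is a free pebble ${\texttt p}_i$. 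The Spoiler then places ${\texttt p}_i$ on $a$, the Duplicator responds by $\sigma$ with some ${\texttt q}_i$ on $b$, and because $\sigma$ is winning the resulting reachable configuration induces a partial homomorphism $f\cup\{(a,b)\}\in{\mathcal H}$ extending $f$ with $a$ in its domain.

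The main obstacle is the bookkeeping around pebbles: one must handle the distinction between placing a pebble on a fresh element and moving an already-placed one, and the possibility that several pebbles occupy the same $\inst$-element. The key point is that the induced relation must always be a genuine partial \emph{function}, i.e. pebbles sharing an $\inst$-element must map to a common $\inst'$-element; this is guaranteed exactly by keeping the induced map inside ${\mathcal H}$, whose members are functions. The other delicate point, used in the forth step, is the claim that any $f\in{\mathcal H}$ --- a priori only a restriction of a reachable configuration --- can be realized as the \emph{exact} induced map of a reachable configuration with a spare pebble; this follows because pebble removals never falsify the partial-homomorphism condition and so keep the Duplicator on a $\sigma$-winning line.
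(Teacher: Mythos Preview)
The paper does not prove this statement: it is presented as a ``Fact'' --- a well-known characterization from the literature on existential pebble games (essentially Kolaitis--Vardi) --- and is used without proof. Your argument is correct and is exactly the standard proof of this folklore result; there is nothing to compare against.
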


\begin{proof}
Our proof is an adaptation of a construction from~\cite{atserias07:power}. 
It was shown in~\cite{atserias07:power} that if an instance $\inst$ has a core of treewidth strictly
bigger than $k$ with $k\geq 2$, then there exists 
an instance  $\inst^*$ such that $\inst \not\to \inst^*$ and $\inst \to_k \inst^*$. 
We could apply this result to each $\inst_{n,m}^{grid}$, where $n,m\geq 3$, and obtain $\inst^*_{n,m}$ such that $\inst_{n,m}^{grid}\not\to \inst^*_{n,m}$ and $\inst_{n,m}^{grid}\to_{k} \inst^*_{n,m}$, for $2\leq k<\min\{n,m\}$. 
By adapting the arguments in~\cite{atserias07:power}, we show that 
the family $\{\inst^*_{n,m}\}_{n,m\geq 3}$ can actually be collapsed into a 
single  instance $\inst_{TP^*}$ with the desired properties.

For $n,m\geq 1$, let $G_{n,m}$ be the $(n\times m)$-grid graph. That is, $\vertices(G_{n,m}):=\{(i,j): 1\leq i\leq n, 1\leq j\leq m\}$ and $\edges(G_{n,m}):=\{\{(i,j),(i',j')\}:|i-i'|+|j-j'|=1\}$. 
Observe that $G_{n,m}$ is precisely the Gaifman graph of the database instance
$\inst_{n,m}^{grid}$. 
Intuitively, a solution for our tiling problem on $G_{n,m}$ will describe a $0/1$ assignment to the edges of the grid $G_{n,m}$. 
In order to define $TP^*$, we consider the grid $G_{3,3}$.
Intuitively, we want to think of grid points within $G_{3,3}$ as ``grid point types'' that can be assigned to a grid point in some larger grid $G_{n,m}$. For example, the tile $(2,1)$ that lies in the center of the lower border represents the type of all elements that lie on the lower border of $G_{n,m}$, excluding the corner points. Our tiles will enhance each abstract grid point with a $0/1$ assignment to its incident edges.

For each vertex $u\in \vertices(G_{3,3})$, we denote by $d_u$ the degree of $u$ (note that $d_u\leq 4$) and fix an enumeration $e^u_1,\dots,e^u_{d_u}$ of all the edges in $G_{3,3}$ that are incident to $u$. 
The set of tiles $Tiles^*$ of $TP^*$ contains all the tuples $(u,b_1,\dots,b_{d_u})$ such that
\begin{enumerate}
\item $u\in \vertices(G_{3,3})$ and $b_1,\dots,b_{d_u}\in \{0,1\}$, 
\item $b_1+\cdots+b_{d_u}\equiv 0$ (mod $2$) if $u\neq (1,1)$, 
\item $b_1+\cdots+b_{d_u}\equiv 1$ (mod $2$) if $u=(1,1)$. 
\end{enumerate}
That is, we consider assignments where the number of edges set to $1$ is odd for the left-lower point but the number of edges set to $1$ is even elsewhere.

Let us denote $\pi_1:Tiles^*\to \vertices(G_{3,3})$ the first-coordinate projection. 
We define the set of initial and final tiles to be $IT^*:=\{t\in Tiles^*: \pi_1(t)=(1,1)\}$ and $FT^*:=\{t\in Tiles^*: \pi_1(t)=(3,3)\}$, respectively.

Our compatibility relation will ensure that the $0/1$ assignment to incident edges is consistent among adjacent nodes: 
if a grid point $n$ has the outgoing edge to its right set to $b\in\{0,1\}$ and $n'$ is the neighbor of $n$ to the right, then $n'$ has the incoming edge to its left set to $b$.

We first give the constraints for pairs of grid points that are assigned to distinct abstract grid points in $G_{3,3}$. 
For each edge $e=\{u,v\}=\{(i,j), (i+1,j)\}\in \edges(G_{3,3})$ with $1\leq i<3$ and $1\leq j\leq 3$, we add to the horizontal compatibility relation $HC^*$ the pair $((u,b_1,\dots,b_{d_u}),(v,b'_1,\dots,b'_{d_v}))$ iff 
$e=e^u_\ell=e^v_m$, for some $\ell,m$ and $b_\ell=b'_m$. 
Similarly, for each edge $e=\{u,v\}=\{(i,j), (i,j+1)\}\in \edges(G_{3,3})$ with $1\leq i\leq 3$ and $1\leq j<3$, we add to the vertical compatibility relation $VC^*$ the pair $((u,b_1,\dots,b_{d_u}),(v,b'_1,\dots,b'_{d_v}))$ iff 
$e=e^u_\ell=e^v_m$, for some $\ell,m$ and $b_\ell=b'_m$. 

We now give the consistency restrictions for pairs of grid points that are assigned the same abstract grid point. 
We add the following pairs to $HC^*$ and $VC^*$:
\begin{itemize}
\item For $u=(2,j)$ with $j\in \{1,3\}$, the pair $((u,b_1,\dots,b_{d_u}),(u,b'_1,\dots,b'_{d_u}))\in HC^*$ iff $e=\{(2,j),(3,j)\}$, $e'=\{(1,j),(2,j)\}$, $e=e^u_\ell$, $e'=e^u_m$, for some $\ell,m$, and $b_\ell=b'_m$, 
\item For $u=(i,2)$ with $i\in \{1,3\}$, the pair $((u,b_1,\dots,b_{d_u}),(u,b'_1,\dots,b'_{d_u}))\in VC^*$ iff $e=\{(i,2),(i,3)\}$, $e'=\{(i,1),(i,2)\}$, $e=e^u_\ell$, $e'=e^u_m$, for some $\ell,m$, and $b_\ell=b'_m$, 
\item For $u=(2,2)$, the pair $((u,b_1,\dots,b_{d_u}),(u,b'_1,\dots,b'_{d_u}))\in HC^*$ iff $e=\{(2,2),(3,2)\}$, $e'=\{(1,2),(2,2)\}$, $e=e^u_\ell$, $e'=e^u_m$, for some $\ell,m$, and $b_\ell=b'_m$; and 
the pair\\
 $((u,b_1,\dots,b_{d_u}),(u,b'_1,\dots,b'_{d_u}))\in VC^*$ iff $e=\{(2,2),(2,3)\}$, $e'=\{(2,1),(2,2)\}$, $e=e^u_\ell$, $e'=e^u_m$, for some $\ell,m$, and $b_\ell=b'_m$.
\end{itemize}

Let $n,m\geq 3$. We define a function $\Psi$ from $\vertices(G_{n,m})$ to $\vertices(G_{3,3})$ as follows. We let $\Psi((1,1))=(1,1)$, $\Psi((n,1))=(3,1)$, $\Psi((1,m))=(1,3)$ and $\Psi((n,m))=(3,3)$. 
For $1<i<n$ and $1<j<m$, we define $\Psi((i,j))=(2,2)$, $\Psi((1,j))=(1,2)$, $\Psi((n,j))=(3,2)$, $\Psi((i,1))=(2,1)$ and $\Psi((i,m))=(2,3)$. 
We can now enumerate incident edges of $a$ in $G_{n,m}$ according to the already-defined enumeration for $\Psi(a)$ in $G_{3,3}$. 
For each $a\in \vertices(G_{n,m})$, we define a bijection $\Delta_a$ from its incident edges in $G_{n,m}$ to the incident edges of $\Psi(a)$ in $G_{3,3}$ in the natural way: 
 if $e$ corresponds to the incident edge of $a$ to the 
``up'' direction in the grid $G_{n,m}$ then $\Delta_a(e)$ is also the incident edge of $\Psi(a)$ in the 
grid $G_{3,3}$ to the ``up'' direction; similarly for the ``right'', ``down'' and ``left" directions.
 Then for each $a\in \vertices(G_{n,m})$, we enumerate its incident edges as $e_1^a,\dots,e_{d_a}^a=\Delta^{-1}_a(e_1^{\Psi(a)}),\dots,\Delta^{-1}_a(e_{d_{\Psi(a)}}^{\Psi(a)})$, where 
$e_1^{\Psi(a)},\dots,e_{d_{\Psi(a)}}^{\Psi(a)}$ is the enumeration for $\Psi(a)$ already fixed in the construction of $TP^*$.

We now formalize the intuition that the parity and consistency conditions ensure that
a rectangular grid cannot be tiled:
\begin{claim}
$\inst_{n,m}^{grid}\not\to \inst_{TP^*}$, for every $n,m\geq 1$. 
\end{claim}

\begin{proof}
Note that $\inst_{n,m}^{grid}\not\to \inst_{TP^*}$ if $\min\{n,m\}\leq 2$. Towards a contradiction, suppose $\inst_{n,m}^{grid}\to \inst_{TP^*}$ for some $n,m\geq 3$, via a homomorphism $h$. 
By construction, we must have $\pi_1(h(a))=\Psi(a)$, for every $a$ in $\inst_{n,m}^{grid}$ and hence $h$ corresponds to a $0/1$ assignment of the edges of the Gaifman graph $G_{n,m}$ of $\inst_{n,m}^{grid}$. 
In particular, there exists a $0/1$ vector $(x_e)_{e\in E(G_{n,m})}$ such that for each $a\in \vertices(G_{n,m})$, we have $h(a)=(\Psi(a),x_{e_1^a},\dots,x_{e_{d_a}^a})$. 
Now we have 
\begin{align*}
\sum_{a\in \vertices(G_{n,m})} (x_{e_1^a}+\cdots+x_{e_{d_a}^a})=\\
(x_{e_1^{(1,1)}}+\cdots+x_{e_{d_{(1,1)}}^{(1,1)}})+\sum_{a\in \vertices(G_{n,m})\setminus \{(1,1)\}} (x_{e_1^a}+\cdots+x_{e_{d_a}^a})\\
&=1 \qquad (\text{mod $2$})
\end{align*}
But this is impossible as each edge $e\in \edges(G_{n,m})$ is counted exactly twice in $\sum_{a\in \vertices(G_{n,m})} (x_{e_1^a}+\cdots+x_{e_{d_a}^a})$; a contradiction. 
\renewcommand{\qedsymbol}{$\blacksquare$}
\end{proof}

While there is no total mapping from $\inst_{n,m}^{grid}$ to $\inst_{TP^*}$ that is a homomorphism, by considering partial mappings 
with domains that are not too large, we can easily satisfy the correct parity conditions, and hence we can define partial homomorphisms  from $\inst_{n,m}^{grid}$ to $\inst_{TP^*}$.
The next claim tells us that these partial homomorphisms can be chosen to be consistent.

 \begin{claim}
$\inst_{n,m}^{grid}\to_k \inst_{TP^*}$, for every $n,m\geq 3$ and $2\leq k<\min\{n,m\}$. 
\end{claim}
\begin{proof}
Let $P=(a_0,a_1,\dots, a_\ell)$ be a walk in $G_{n,m}$. For every edge $e\in \edges(G_{n,m})$, we define:
\begin{enumerate}
\item $x_e^P=1$ if $P$ visits $e$ an odd number of times.
\item $x_e^P=0$ if $P$ visits $e$ an even number of times.
\end{enumerate}
We also define $h^P(a):=(\Psi(a),x_{e_1^a}^P,\dots,x_{e_{d_a}^a}^P)$, for each $a\in \vertices(G_{n,m})$ (i.e., in the domain of $\inst_{n,m}^{grid}$).

Let ${\mathcal W}$ be the collection of all 
walks $P=(a_0,a_1,\dots, a_\ell)$ in $G_{n,m}$ with $a_0=(1,1)$ and $a_\ell\neq a_0$. 
We claim that for each $P=(a_0,a_1,\dots, a_\ell)\in {\mathcal W}$ 
and each $a\neq a_\ell$ in $\vertices(G_{n,m})$, the tuple $h^P(a)$ always belongs to the domain of $\inst_{TP^*}$.  
Note that 
$x^P_{e_1^a}+\cdots+x_{e_{d_a}^a}^P=|\{\text{$e$ in $P$}:\text{$e$ is incident to $a$}\}|$ (mod $2$). 
For $a\neq a_0$, we have $|\{\text{$e$ in $P$}:\text{$e$ is incident to $a$}\}|=2\cdot |\{i: \text{$0< i<\ell$ and $a_i=a$}\}|=0$ (mod $2$), 
and hence $h^P(a)=(\Psi(a),x_{e_1^a}^P,\dots,x_{e_{d_a}^a}^P)$ 
belongs to $\inst_{TP^*}$ (as $\Psi(a)\neq (1,1)$). 
On the other hand, for $a=a_0$, we have $|\{\text{$e$ in $P$}:\text{$e$ is incident to $a$}\}|=1+2\cdot |\{i: \text{$0< i<\ell$ and $a_i=a$}\}|=1$ (mod $2$), and hence $h^P(a)=(\Psi(a),x_{e_1^a}^P,\dots,x_{e_{d_a}^a}^P)$ belongs to $\inst_{TP^*}$ (as $\Psi(a)=(1,1)$). 

Thus we can define for each walk 
$P=(a_0,a_1,\dots, a_\ell)\in {\mathcal W}$ a partial mapping $h^P$ from $\inst_{n,m}^{grid}$ to $\inst_{TP^*}$ with domain $\vertices(G_{n,m})\setminus\{a_\ell\}$. 
By definition of $TP^*$ and since $h^P$ is defined from a $0/1$ vector $(x^P_e)_{e\in \edges(G_{n,m})}$, we have that $h^P$ is actually a partial homomorphism. 

We define a non-empty collection $\mathcal H$ of partial homomorphisms from $\inst_{n,m}^{grid}$ to $\inst_{TP^*}$ as follows. 
For $1\leq p\leq n$ and $1\leq q\leq m$, we 
denote by $C_{p,q}$ the \emph{$(p,q)$-cross} of $G_{n,m}$ defined as $C_{p,q}:=\{(p,j): 1\leq j\leq m\}\cup \{(i,q): 1\leq i\leq n\}$. 
For every non-empty subset $S\subseteq \vertices(G_{n,m})$ with $|S|\leq k$ (recall that $2\leq k<\min\{n,m\}$), and every 
walk $P=(a_0,\dots,a_\ell)\in {\mathcal W}$ such that there are $p,q$ with $a_\ell\in C_{p,q}$ and $C_{p,q}\cap S=\emptyset$, 
we add to $\mathcal H$ the restriction $h^P|_{S}$. 
We prove that $\mathcal H$ is a winning strategy for the Duplicator and 
then $\inst_{n,m}^{grid}\to_k \inst_{TP^*}$ as required. 
Condition (1) of Fact~\ref{fact:win-duplicator} holds by definition, so we focus on condition (2). 
Let $h^P|_{S}\in {\mathcal H}$ for some $S$ with $|S|<k$ and walk $P=(a_0,\dots,a_\ell)\in {\mathcal W}$ such that $a_\ell\in C_{p,q}$ and $C_{p,q}\cap S=\emptyset$ for some $p,q$. Let $a\in \vertices(G_{n,m})\setminus S$ and 
$S'=S\cup\{a\}$. Since $k<\min\{n,m\}$, there exist $p',q'$ such that $C_{p',q'}\cap S'=\emptyset$. Moreover, since $C_{p,q}$ is connected and $|C_{p,q}\cap C_{p',q'}|\geq 2$, there is a walk $P''=(a_\ell, a_{\ell+1}, \dots, a_{\ell+r})$ such that $a_{\ell+i}\in C_{p,q}$, for all $0\leq i\leq r$, $a_{\ell+r}\in C_{p',q'}$ and $a_{\ell+r}\neq a_0$. Let $P'=(a_0,\dots,a_{\ell+r})$ be the concatenation of $P$ and $P''$. 
Then $h^{P'}|_{S'}\in {\mathcal H}$. Finally, observe that $h^P(b)=h^{P'}(b)$, for every $b\in S$, since $x^P_e$ and $x^{P'}_e$ can only differ for edges $e=\{b',b''\}\subseteq C_{p,q}$ and $C_{p,q}\cap S=\emptyset$. 
It follows that $h^{P}|_S\subseteq h^{P'}|_{S'}$, and hence condition (2) holds.
\renewcommand{\qedsymbol}{$\blacksquare$}
\end{proof}

\end{proof}

\subsection*{Additional comments on non-Datalog-rewritable examples}
We mentioned in the conclusion of the paper that for the
example query $Q_{TP^*}$ in views produced in the proof of  Theorem \ref{thm:no-datalog-mdl-ucq}
there is a rewriting in a slightly larger language,
stratified Datalog. The details of stratified Datalog will not concern us here,
except that it includes positive Boolean combinations
of  Datalog queries and relational algebra queries. We will show
that the example has a rewriting that is such a Boolean combination.
We now explain this. In fact, what we show
is that for every tiling problem $TP$ for which rectangular grids can not be tiled,
the query $Q_{TP}$ from Theorem \ref{thm:undec} has a rewriting that is a positive
Boolean combination of Datalog  queries and relational algebra queries.
In particular, this show that $Q_{TP}$ always has a separator in $\ptime$.

Denote by $Q^*_{\qstart}$ the query obtained from $Q_{\qstart}$ by
replacing $C$ and $D$ by the first and second projections of $S$, respectively.
Let $Q^*_\qverify$ be obtained from $Q^\qverify$ by using the views.
That is, by replacing:
\begin{itemize}
\item CQ $\ha$ by view $V_{\ha}$ and similarly for $\va$,
\item relations $T_i$ by the corresponding atomic views 
\item rewriting rules corresponding to  the second to last bullet item
 as $V_I(o,x, y,z), V_{T_i}(z)$, and similarly rewriting rules corresponding
to the final bullet item using $V_F$.
\end{itemize}

Let $\producttest$ be a query that tests whether $S$ is the product of its
projections. $\producttest$ can be expressed in relational algebra, hence in stratified
Datalog.

Consider  the query $R$ formed by existentially quantifying
\begin{align*}
V^{\qhelper}_C \vee V^{\qhelper}_D \vee Q^*_\qverify \vee (Q^*_\qstart \wedge \producttest)
\end{align*}

Clearly $R$ is a positive Boolean combination of Datalog queries
and the relational algebra query $\producttest$. We claim that $R$ is a rewriting of $q$.

In one direction, suppose $Q$ returns true on $\inst$ and let $\vinst$ be the view
image.
We do a case analysis depending on which of the top-level disjuncts holds.
If $Q_{\qhelper}$ holds on $\inst$ then $V^{\qhelper}_C$ or $V^{\qhelper}_D$ is non-empty, and thus $R$ holds in $\vinst$.
If  $Q_{\qverify}$ holds on $\inst$ then $Q^*_\qverify$ holds on $\vinst$
and hence we conclude again that $R$ holds on $\vinst$.
Finally, suppose $Q_{\qstart}$ holds on $\inst$.
If $\producttest$ fails, we know one of $C$ or $D$ is empty. But then $Q_{\qstart}$ cannot hold,
a contradiction to our assumption. 
Thus $\producttest$ must hold. From this, it is easy to see that $Q^*_{\qstart}$ holds.
This completes the proof of this direction.

Conversely suppose that $R$ holds on the view image $\vinst$. 
Again we do a case analysis on the top-level disjuncts.
If $V^{\qhelper}_C$ or $V^{\qhelper}_D$ is nonempty on $\vinst$, then $Q_{\qhelper}$ holds on
$\inst$ and hence $Q$ holds on $\inst$.
If $Q^*_\qverify$ holds on $\vinst$, then $Q_\qverify$ holds on $\inst$, 
and again we conclude that $Q$ holds on $\inst$.
Finally, suppose $Q^*_\qstart \wedge \producttest$ holds on $\vinst$, and
suppose that none of the disjuncts of $Q$ hold.
Note that since $Q_{\qhelper}$ fails, $V^{\qhelper}_C$ and $V^{\qhelper}_D$ must
be empty. Thus we have two possibilities for $S$.
There is the ``projection case'', where either one of $C$ or $D$ is empty,  and all the $S$ atoms are generated by the
second rule. The alternative is the ``product case'', where both  $C$ and $D$ are both nonempty and
all the atoms of $S$ are generated by the first rule.

We claim that we must be in the ``product case'' for $S$ above.
If we are in the projection case, then
every pair must be associated with a tile. Further, since $Q_{\qverify}$ and $Q_{\qhelper}$ fail,
we have a tiling of a rectangular grid, contradicting the hypothesis that there is no tiling.
Since we have argued that we are in the product case, it follows that
 $Q_\qstart$ holds on $\inst$ and
thus $Q$ holds in $\inst$ as required.

\subsection*{Proof of Theorem \ref{thm:nocomputable}}
Recall the statement:

\medskip

There is no integer-valued function $F$ such that
for all $Q, \views$ such that $\views$ and $Q$ are in Datalog and
$Q$ is monotonically determined over $V$, there is a separator
of $Q$ over $\views$
that runs in time 
$F(\views(I))$.

\medskip

We now give  the proof of Theorem  \ref{thm:nocomputable}.
We assume the opposite, aiming for a contradiction.
We use the following fact, which is a consequence of the time
hierarchy theorem:

\medskip

For any computable
function $F$ there is a deterministic Turing machine $M_F$ which halts on all of its inputs, and such
that no Turing machine running in time $F$ can decide the same language as $M_F$.

\medskip

Fix such a machine $M$ for $F$. 

Let $\Sigma_{\tminput}$ be the input alphabet
of $M$, and $\Sigma_{M}$ be a suitable alphabet for encoding configurations of 
$M$. 

We consider a base signature with relations $\tmsucc(x,y), U_a(x): a \in \Sigma_{\tminput}$ for
the input signature of $M$ along with symbols $\tmsucc'(x,y), U'_a(x): a \in \Sigma_M$ for
the configuration signature of $M$.

A \emph{pre-run-string} is a string in the
regular language formed by intersecting
\[
\sigma_{\inputbegin}  ~ (\Sigma_{\tminput})^* ~ \sigma_{\inputend} (\Sigma_{M}^*  \configsep )^+  \sigma_{\runend}
\]
with a regular expression  enforcing that the last maximal segment
of $\Sigma_M$ strings that does not contain $\configsep$ encodes a halting state.
Above:
\begin{itemize}
\item $\sigma_{\inputbegin}$ is a marker designating the beginning of the input while
\item $\sigma_{\inputend}$ designates the end of the input;
\item  $\configsep$ is a marker indicating the separator between configurations, while
\item $\sigma_{\runend}$ 
marks the end of the run.
\end{itemize}
A \emph{well-shaped string} will consist of an initial letter with a special
symbol $\sigma_{\inputbegin}$ and ending with $\sigma_{\inputend}$, 
followed by
a  code for a run of $M$, ending with a special symbol $\sigma_{\runend}$. A string
is \emph{badly-shaped} if it is not well-shaped.
 It is easy to
see that a badly-shaped string $w$ has at least one of the following \emph{bad
properties}: $w$ is not a pre-run string,  $w$ contains a sub-string
$\configsep  c_i \configsep c_{i+1};$ where $c_{i+1}$ does not encode a next configuration after $c_i$,
 $w$ contains a string $\sigma_{\inputbegin} w_{in} \sigma_{\inputend} c_1 \configsep$ such that $c_1$ does
not encode initial configuration of $M$ with input $w_{in}$.

A \emph{pre-run instance} will be a relational encoding of a homomorphic image of a pre-run string using
the relations  $\tmsucc(x,y), U_a(x): a \in \Sigma_{\tminput}$ for the coding of the initial
segment, symbols $\tmsucc'(x,y), U'_a(x): a \in \Sigma_M$ for the remaining part of the run,
and additional symbols for the separators. That is, in the relational encoding we allow the same element
to represent different places in the string.
A \emph{well-shaped string instance} will be a relational encoding of a homomorphic image of a well-shaped
string, again using the relations $\tmsucc(x,y), U_a: a \in {\Sigma}_{\tminput}$ for
 the initial segment and the
primed copies for the remaining segments. We define a badly-shaped string instance analogously.

A standard argument shows 
\begin{proposition} \label{prop:badlyshaped} There
is a Datalog query whose approximations are (up to isomorphism)
exactly the badly-shaped string instances.
\end{proposition}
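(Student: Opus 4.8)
The plan is to construct a single Datalog query $\Pi_{\badlyshaped}$ over the base signature whose CQ approximations are, up to isomorphism, exactly the badly-shaped string instances. Recall from the characterization of badly-shapedness that a string is badly-shaped iff it fails to be a pre-run string, or it contains a local configuration-transition violation $\configsep c_i \configsep c_{i+1}\configsep$ with $c_{i+1}$ not a valid successor of $c_i$, or its first configuration is not the correct initial configuration for the given input. The key observation is that each of these ``bad properties'' is a \emph{local} or \emph{regular} condition: detecting a violation amounts to finding a bounded witness pattern somewhere along the chain, possibly after traversing a successor path of unbounded length. This is exactly what Datalog (with its recursive reachability) can express, and the CQ approximations of such a query unfold precisely into the concrete instances carrying such a witness.

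\textbf{Key steps.}

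First I would write, for each fixed bad property, a set of Datalog rules that generates it. The structural failures (not matching the skeleton regular expression $\sigma_{\inputbegin}(\Sigma_{\tminput})^*\sigma_{\inputend}(\Sigma_M^*\configsep)^+\sigma_{\runend}$, or the last maximal $\configsep$-free $\Sigma_M$-segment not encoding a halting state) are captured by an NFA over the string alphabet; since the alphabet is finite and the bad automaton is fixed, I can translate it into Datalog rules that walk along the $\tmsucc$/$\tmsucc'$ edges using an IDB predicate for each automaton state, firing the goal when an accepting (i.e.\ ``bad'') state is reached. Second, for the transition-violation property I would use the fact that the predicate encoding of configurations is over a finite alphabet and the transition function of $M$ is a fixed finite local relation: a violation is witnessed by a bounded-width pattern spanning two adjacent configurations, which is captured by a CQ-shaped rule body (or finitely many such rules, one per forbidden local pattern) anchored at the two $\configsep$ markers. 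Third, the initial-configuration property is handled identically by finitely many CQ-shaped rules comparing the input segment $\sigma_{\inputbegin}w_{in}\sigma_{\inputend}$ against the first configuration $c_1$. Taking the union of all these rule sets under one goal predicate $\goal_{\badlyshaped}$ yields $\Pi_{\badlyshaped}$.

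\textbf{Verifying the two inclusions.}

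Having defined the program, I would verify both directions of the ``up to isomorphism'' claim using the CQ-approximation machinery of the Proposition on CQ approximations. For the forward direction, every CQ approximation unfolds a finite derivation tree of the rules above, and by construction each such approximation's canonical database is (isomorphic to) a relational encoding of some badly-shaped string: the recursive successor-walk rules produce a chain of the appropriate shape and the local-violation rules plant exactly one bad witness. For the converse, given any badly-shaped string instance, I would identify which of the three bad properties it exhibits and read off a corresponding derivation, hence a CQ approximation whose canonical database is isomorphic to it. The main obstacle I anticipate is the bookkeeping needed to ensure that \emph{homomorphic images} of strings (where one element may represent several positions) are also captured, and conversely that no CQ approximation accidentally produces a \emph{well}-shaped instance; this requires care because the relational encoding deliberately allows such identifications. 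I would address this by designing the walking-IDB rules so that they only assert the edge and label atoms actually needed to witness badness, never forcing distinctness, so that every homomorphic collapse of a witnessing string remains a valid approximation and no spurious well-shaped pattern is forced.
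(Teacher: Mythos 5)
Your architecture is right for the \emph{regular} bad properties: translating a fixed automaton for the complement of the pre-run language into unary state-tracking IDBs that walk the $\tmsucc$/$\tmsucc'$ chain is exactly the standard move, and the paper (which offers no proof beyond calling the argument standard) clearly intends this for that part. The genuine gap is in your second and third steps. A violation of the transition property is \emph{not} ``witnessed by a bounded-width pattern spanning two adjacent configurations, anchored at the two $\configsep$ markers'': configurations have unbounded length, and the offending cell of $c_{i+1}$ sits at the \emph{same, unboundedly large offset} from its preceding separator as the relevant three-cell window of $c_i$ does from its own. ``Equal offset in two adjacent segments'' is not expressible by any CQ-shaped body, by any finite family of such bodies, or by the unary automaton-state IDBs you use elsewhere (those capture only regular properties of the string, and this correspondence is not regular). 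What is needed is a recursive \emph{binary} IDB, say $\kw{Sync}(x,y)$, whose base case anchors $x$ and $y$ just after consecutive separators and whose inductive rule advances both arguments along $\tmsucc'$ simultaneously, with the forbidden local window attached to a $\kw{Sync}$-related pair; the initial-configuration check (comparing $w_{in}$ between $\sigma_{\inputbegin}$ and $\sigma_{\inputend}$ against $c_1$ cell by cell) needs the same device. This is precisely the content of the remark the paper places immediately after the proposition: with alternately \emph{reversed} configurations a PDA (context-free path query) could do the matching, because equal-offset correspondence becomes nested, bracket-style correspondence; full, non-monadic Datalog is invoked exactly so that no alternation is needed. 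As written, your program recognizes only a regular set of violation patterns, so completeness fails: a badly-shaped string whose sole defect is a successor or initial-configuration mismatch at a large offset never arises as an approximation of your query.

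A secondary point: once the binary synchronizing IDBs are added, the concern in your last paragraph returns in a sharper form than you anticipate. The CQ approximations of a synchronized two-chain walk do not automatically unfold into string-shaped canonical databases: fresh existential variables in the base and inductive rules naturally produce two parallel paths sharing endpoints rather than a single labelled chain, and such a structure is not the encoding of a homomorphic image of any string (branching without return edges cannot occur in such an image). So the literal statement of the proposition --- approximations are, up to isomorphism, \emph{exactly} the badly-shaped string instances --- requires additional care in how the chain-generating and synchronizing atoms share variables, or a retreat to the property actually used in the proof of Theorem~\ref{thm:nocomputable}, namely that the query holds on an instance iff the instance contains a homomorphic image of a badly-shaped string. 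Your heuristic of ``only asserting the atoms needed to witness badness, never forcing distinctness'' addresses the direction about capturing homomorphic collapses, but not this one.
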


Note that if we had enforced that codings were \emph{alternating}, with every other configuration
reversed, then we could use 
a  PDA to detect bad properties on a string and a context-free path query 
to detect it on the encoding. With the power of general Datalog, no alternation
is needed.

Our views $\views$ will include:

\begin{compactitem}

\item the \emph{input views},  with one binary view  returning exactly $\tmsucc(x,y)$, and for
each $a \in \Sigma_{input}$ a unary view
returning $U_a(x)$.

\item a nullary view $V^{\badlyshaped}$ 
which returns $\true$ whenever the instance contains
a badly-shaped string instance. That is,
$V^{\badlyshaped}$ returns $\true$ when the input
contains the homomorphic image of a relational encoding of a string starting with the symbol 
$\sigma_\inputbegin$ and ending with the symbol $\sigma_{\inputend}$
which has one of the bad properties. By Proposition \ref{prop:badlyshaped},
a Datalog view with this property exists.

\item a  unary view $V^{\prerun}(x)$ which holds for $x$ if there
is a subinstance that is  a pre-run instance 
in which the occurrence of $\sigma_{\inputend}$ corresponds to $x$.
\end{compactitem}

Our query $Q$ will be the sentence obtained from 
$V^{\badlyshaped}$ disjoined with $Q^{\tmaccept}$, where
$Q^{\tmaccept}$ returns true exactly when we detect a relational encoding
of a pre-run string that
ends in an accept state.

We now argue that $Q$ is monotonically  determined over $\views$.

Consider instances $\inst_1$ and $\inst_2$ with $\views(\inst_1) \subseteq  \views(\inst_2)$ and $Q(\inst_1)$ being true.

$Q(\inst_1)$ could be true because $V^{\badlyshaped}$ holds,
in this case, $Q(\inst_2)$ also holds since $V^{\badlyshaped}$ is one of the views.
So we can assume that  $V^{\badlyshaped}$ does not
hold in $\inst_1$ or $\inst_2$,
since if it does hold then $\inst_2$ satisfies $Q$.

$Q(\inst_1)$ could also be true because $V^{\badlyshaped}$ fails
but $Q^{\tmaccept}$ holds. We know there is a relational encoding
of some string
\[
\sigma_\inputbegin w \sigma_\inputend w_0 \configsep w_1 \configsep \ldots \configsep w_n \sigma_{\runend}
\]
witnessing
that  $Q^{\tmaccept}$ holds  in $\inst_1$. Let $x$ be the element
corresponding to the label $\sigma_{\inputend}$ in this encoding.
Note that $V^{\prerun}$ must  hold of 
$x$  in $\inst_1$, hence in $\inst_2$.
The latter must be witnessed  via a 
relational encoding of some string
of the form 
\begin{align*}
\sigma_{\inputbegin} w'_0 \sigma_{\inputend}  w'_1 \configsep \ldots \configsep w'_k
\end{align*}
 with $\sigma_{\inputbegin} w'_0 \sigma_{\inputend}$
relationally encoded in the 
unprimed signature,  the $w'_i$ encoded in the  primed signature, with
the element labelled by $\sigma_{\inputend}$ corresponding to $x$.  Note
that by the definition of pre-run, $w'_k$ must include a halting state.

Since we have views for all of the input signature elements, and $\views(\inst_1) \subseteq  \views(\inst_2)$,
we  know that we also have an encoding of a string $\sigma_{\inputbegin} w_0 \sigma_{\inputend}$
in $\inst_2$, with the encoding done in the unprimed signature, with the

We now consider the string 
\[
s= \sigma_\inputbegin ~ w_0 ~ \sigma_{\inputend} ~ w'_1 \configsep \ldots \configsep w'_k ~ \sigma_{\runend}
\]
$s$ begins with the input string, and ends with a halting state. Note that
since a relational encoding of $\sigma_\inputbegin ~ w_0 ~ \sigma_{\inputend}$ lies in $\inst_1$,
the encoding of $s$ must lie in $\inst_2$, due to the input views.
Since
$V^{\badlyshaped}$ is false in $\inst_2$, we know that in $\inst_2$:

\begin{compactitem}
\item For every relational encoding of a string of form:

\[
\sigma_{\inputbegin} ~ w_0  ~ \sigma_{\inputend} ~ w'_1 \configsep
\]

with $\sigma_{\inputbegin} ~ w_0  ~ \sigma_{\inputend}$ encoded  in the unprimed signature
and $w'_1$ 
encodes a state  with
tape configuration $w_0$ and state the initial
state of $M$, under the transition relation of $M$.

\item  For every relational encoding a string of the form:

\[
w'_1 \configsep w'_2 \configsep
\]

with the encoding being in the primed signature, 
$w'_2$ must encode a state that is a successor in the transition
relation of $M$ of the state encoded by $w'_1$.

\end{compactitem}

From this we infer that 
$s$  is an encoding of a run of $M$ on $w_0$, ending at a halting state.

But since $M$ is deterministic, $s$ must be the same as
\[
\sigma_{\inputbegin} ~ w_0 ~ \sigma_{\inputend}  w_1 \configsep \ldots \configsep w_n \sigma_{\runend}
\]
which ends in an acceptance state.

Since a relational encoding of $s$ lies in $\inst_2$,
 we can conclude that  $Q$ holds in $\inst_2$. This completes the argument
for monotonic determinacy of $Q$ with respect to $\views$.

Now, suppose $Q$ has a separator $R$ that runs in time
$F$. Then $R$ will allow us to check in time $F$ whether
$M$ accepts or rejects on its input, a contradiction.
Thus we have completed the proof of Theorem \ref{thm:nocomputable}.

\end{document}